\documentclass[11pt]{article}
\usepackage[margin=1in]{geometry}
\usepackage{graphicx}
\usepackage{amsmath}
\usepackage{amsthm}
\usepackage{physics}
\usepackage{amssymb}
\usepackage{xcolor}
\usepackage{mathtools}
\usepackage{comment}
\usepackage{subcaption}
\usepackage{enumitem}
\usepackage[normalem]{ulem}
\usepackage{tabularx}
\usepackage{booktabs}
\usepackage{rotating}
\usepackage{hyperref}
\usepackage{longtable}
\usepackage[user]{zref}
\usepackage{array}
\usepackage{ragged2e}
\usepackage{authblk}

\hypersetup{
    colorlinks,
    linkcolor={red!50!black},
    citecolor={blue!50!black},
    urlcolor={blue!80!black}
}

\theoremstyle{plain}
\newtheorem*{theorem*}{Theorem}
\newtheorem{theorem}{Theorem}
\newtheorem{lemma}[theorem]{Lemma}
\newtheorem{proposition}[theorem]{Proposition}

\newtheorem*{problem}{Open problem}

\theoremstyle{definition}

\theoremstyle{remark}

\newtheorem*{claim}{Claim}
\newtheorem{remark}[theorem]{Remark}

\usepackage{dsfont}
\newcommand{\indicator}[1]{\mathds{1}_{#1}}
\DeclarePairedDelimiter{\set}{\{}{\}}
\DeclarePairedDelimiter{\mset}{\langle}{\rangle}
\DeclarePairedDelimiter{\cbraces}{\{}{\}}
\DeclarePairedDelimiter{\brackets}{[}{]}
\DeclarePairedDelimiter{\bars}{|}{|}
\DeclarePairedDelimiter{\ceil}{\lceil}{\rceil}
\DeclarePairedDelimiter{\floor}{\lfloor}{\rfloor}

\newcommand{\R}{\mathbb{R}}
\newcommand{\Z}{\mathbb{Z}}
\newcommand{\bA}{\mathbb{A}}
\newcommand{\bL}{\mathbb{L}}
\newcommand{\bq}{\mathbf{q}}
\newcommand{\bz}{\mathbf{z}}
\newcommand{\cB}{\mathcal{B}}
\newcommand{\cC}{\mathcal{C}}
\newcommand{\cD}{\mathcal{D}}

\newcommand{\cM}{\mathcal{M}}
\newcommand{\cT}{\mathcal{T}}
\newcommand{\cU}{\mathcal{U}}
\newcommand{\cX}{\mathcal{X}}
\newcommand{\cY}{\mathcal{Y}}

\newcommand{\fd}{\mathfrak{d}}
\newcommand{\fk}{\mathfrak{k}}
\newcommand{\fl}{\mathfrak{l}}
\newcommand{\fm}{\mathfrak{m}}
\newcommand{\fn}{\mathfrak{n}}
\newcommand{\fR}{\mathfrak{R}}
\newcommand{\fr}{\mathfrak{r}}

\newcommand{\fT}{\mathfrak{T}}
\newcommand{\fw}{\mathfrak{w}}
\newcommand{\rmd}{\mathrm{d}}
\newcommand{\eps}{\epsilon}
\newcommand{\vareps}{\varepsilon}
\newcommand{\ext}{\mathrm{ext}}
\newcommand{\intr}{\mathrm{int}}
\newcommand{\supp}[1]{\mathrm{supp}(#1)}
\newcommand{\bdin}[2][]{\partial^{\mathrm{in}}_{#1}#2}
\newcommand{\bdex}[2][]{\partial^{\mathrm{ex}}_{#1}#2}

\makeatletter

\newcounter{prefactorparam}
\newcounter{peierlsparam}
\newcounter{spatialparam}
\newcounter{lossparam}
\newcounter{freeparam}

\newcommand{\declareprefactor}[1]{
  \refstepcounter{prefactorparam}
  \label{param:prefactor:#1}
}
\newcommand{\declarePeierls}[1]{
  \refstepcounter{peierlsparam}
  \label{param:Peierls:#1}
}
\newcommand{\declarespatial}[1]{
  \refstepcounter{spatialparam}
  \label{param:spatial:#1}
}
\newcommand{\declareloss}[1]{
  \refstepcounter{lossparam}
  \label{param:loss:#1}
}
\newcommand{\declarefree}[1]{
  \refstepcounter{freeparam}
  \label{param:free:#1}
}

\newcommand{\prefactor}[1]{\phi_{\ref{param:prefactor:#1}}}
\newcommand{\Peierls}[1]{\kappa_{\ref{param:Peierls:#1}}}
\newcommand{\spatial}[1]{\lambda_{\ref{param:spatial:#1}}}
\newcommand{\loss}[1]{\eta_{\ref{param:loss:#1}}}
\newcommand{\free}[1]{\theta_{\ref{param:free:#1}}}

\makeatother

\newcounter{condrow}
\renewcommand{\thecondrow}{C\arabic{condrow}}
\newcommand{\condrowlabel}[1]{
  \leavevmode
  \refstepcounter{condrow}%
  \label{cond:#1}%
  \text{\thecondrow}
}

\let\emptyset\varnothing

\allowdisplaybreaks

\title{Rigorous bound on the aspect ratio for the formation of a nematic phase in hard rod and hard rectangle systems on $\Z^2$}
\author{Qidong He\thanks{Email address: \url{qh97@scarletmail.rutgers.edu}}}
\affil{Department of Mathematics, Rutgers University}
\date{}

\begin{document}

\maketitle

\begin{abstract}
    We prove the existence of a nematic phase in a model of $l\times w$ hard rectangles on the square lattice with two allowed orientations and a large aspect ratio $k:=l/w$.
    The proof is based on a two-scale cluster expansion method developed previously by Disertori--Giuliani for hard rods in 2D and Disertori--Giuliani--Jauslin for hard plates in 3D. 
    Our main contributions lie in explicitly tracking the constants and parameters and completing the arguments left implicit in these works. 
    Hence, the proof produces a sufficient set of quantitative conditions from which estimates for the required aspect ratio can be extracted.
    A non-optimized evaluation of these conditions yields the bound $k\ge 10^{72}$.
    Although it vastly overshoots the numerical prediction, $k_{\min}=7$, our result appears to be the first rigorous estimate of the aspect ratio required for the formation of a nematic phase in hard rectangle systems.
\end{abstract}


\section{Introduction}

Entropy-driven ordering in systems of anisotropic hard particles is one of the oldest and most fascinating subjects in equilibrium statistical mechanics. 
In 1949, Onsager~\cite{onsager1949effects} proposed a foundational theory aimed to explain how elongated particles in the continuum can spontaneously align with purely repulsive interactions: for sufficiently long rods at intermediate densities, the loss of orientational entropy due to alignment may be compensated by the gain in translational entropy from minimizing the \textit{excluded volume} in the system.
This excluded-volume effect provides a paradigmatic explanation of the isotropic-nematic transition in liquid crystals and has profoundly influenced the study of rod-like systems ever since~\cite{bricmont1984structure,disertori2013nematic,ghosh2007orientational}.

From a theoretical point of view, lattice models of hard rods and rectangles provide a convenient testing ground for the ordering mechanisms at play.
In these approximations of true continuum systems, the particles are typically confined to a discrete set of positions and orientations, converting the competition between order and entropy into the combinatorial problem of counting packings~\cite{dhar2021entropy,hadas2025columnar}.
At the same time, these models remain sufficiently complex to exhibit phenomena of physical interest, including the nematic phase in which the orientational symmetry is broken, and the columnar phase where, in addition, translational symmetry becomes partially broken~\cite{kundu2014phase,kundu2015phase}.

\subsection{Background}

By now, the physical picture for hard rods and rectangles in two dimensions is quite well-charted.
For rods of length $k$ (and unit width) on the square lattice, Ghosh--Dhar~\cite{ghosh2007orientational} argued that, for $k\ge 7$, increasing density should first lead to an isotropic-nematic transition and then a second, re-entrant transition from the nematic phase to a high-density disordered phase; in an independent numerical study, Matoz-Fernandez--Linares--Ramirez-Pastor~\cite{matoz2008critical} reported the same length threshold for the formation of a nematic phase in hard rod systems.
Subsequent work by Kundu--Rajesh--Dhar--Stilck~\cite{kundu2013nematic} introduced efficient Monte Carlo dynamics that better addressed the jamming problems of standard local moves, thereby giving strong numerical evidence for the three-phase scenario.
In particular, they studied the nematic-disordered transition in detail on the square and triangular lattices and found a continuous transition for $k=7$ on both lattices and a much larger characteristic length scale in the high-density disordered phase on the square lattice than on the triangular lattice.
More recently, Shah--Dhar--Rajesh~\cite{shah2022phase} argued that the second transition should in fact be first-order for large $k$ and presented numerical evidence for $k\ge 9$.

For hard rectangles, the phase structure is even more elaborate.
Kundu--Rajesh~\cite{kundu2014phase,kundu2015phase} showed numerically that hard rectangles of size $m\times mk$ on the square lattice, for both integer and non-integer \textit{aspect ratios} $k\ge 1$, can exhibit up to four phases---as the density increases, an isotropic phase, a nematic phase, a columnar phase, and a high-density \textit{sublattice} or disordered phase depending on the arithmetic of the side lengths. In particular, for $m=2,3$, the nematic phase appears only when $k\ge 7$, a threshold that thus constitutes a natural physical benchmark for the onset of nematic order on the lattice. 
Their subsequent asymptotic study~\cite{kundu2015asymptotic} showed that the critical packing fraction for the isotropic-nematic transition scales like $k^{-1}$ for large $k$, where the scaling coefficient is independent of $m$, while that for the nematic-columnar transition approaches a nontrivial limit below the maximum packing fraction.
Complementary surface-tension and high-activity perturbative analyses by Nath--Dhar--Rajesh~\cite{nath2016stability} and Nath--Kundu--Rajesh~\cite{nath2015high} explained the nature of the nontrivial limit and clarified the role of the sliding instability in the columnar phase.

In contrast, the mathematical literature on rod-like systems, and more broadly on models with sliding instabilities, remains remarkably sparse~\cite{mazel2025high}.
In a breakthrough in 2013, Disertori--Giuliani~\cite{disertori2013nematic} rigorously combined a two-scale coarse-graining scheme and the cluster expansion method to prove the existence of a nematic phase in systems of \textit{very} long hard rods on the square lattice with two allowed orientations.
Later, Disertori--Giuliani--Jauslin~\cite{disertori2020plate} extended the method to prove the existence of a uniaxial nematic phase in systems of anisotropic hard plates in the three-dimensional continuum with six allowed orientations.
With an added attractive interaction between the ends of rods to promote alignment, Jauslin--Lieb~\cite{jauslin2018nematic} and the author~\cite{he2025extended} proved the existence of a nematic phase in a system of interacting dimers (hard rods with $k=2$), introduced originally by Heilmann--Lieb~\cite{heilmann1972theory,heilmann1979lattice}, over different parameter regimes.
We also highlight a recent result by Hadas--Peled~\cite{hadas2025columnar} on the existence of a columnar phase for a system of $2\times 2$ hard squares on the square lattice, whose proof is of a markedly different flavor from~\cite{disertori2013nematic,disertori2020plate} due to its reliance on reflection-positivity methods which, in turn, limits its generalization to systems of larger particles.

Apart from their sparsity, one limitation of the existing rigorous results on rod-like systems is that they remain largely existential.
Specifically, the Disertori--Giuliani result~\cite{disertori2013nematic} proves nematic order for $zk\ll 1\ll zk^2$, where $z>0$ is the fugacity, but the scaling coefficients are not quantified.
Similarly, for the plates of dimensions $1\times k^\alpha\times k$, $\alpha\in(\frac{3}{4},1]$, the Disertori--Giuliani--Jauslin result~\cite{disertori2020plate} proves uniaxial nematic order for $zk^{3-\alpha}\ll 1\ll zk^{3\alpha}(\log k)^{-1}$, also without estimating the scaling coefficients.
Thus, there is a quantitative gap between the rigorous theory and the physical predictions, where one would like to say not only that nematic order exists for \textit{sufficiently} elongated (or anisotropic) particles, but also \textit{how} elongated they need to be.

\subsection{Method and main result}

The purpose of the present paper is to bridge this gap for two-dimensional hard rods and rectangles.
In particular, to our knowledge, hard rectangle systems have not been studied in this generality in the mathematical literature.

Our method is best understood as a quantitative hybrid of those of Disertori--Giuliani~\cite{disertori2013nematic} and Disertori--Giuliani--Jauslin~\cite{disertori2020plate}.
As in these works, for $l\times w$ rectangles, we partition the space into boxes of side length approximately $l/2$ and implement a coarse-graining scheme in which a ``spin'' is assigned to each box according to the orientation of the rectangles centered in it.
By construction, rectangles centered in the same box necessarily have the same orientation (neglecting a slight caveat related to empty boxes), so the procedure leads to an effective spin model with two ground states, corresponding to having all horizontal or all vertical rectangles, to which contour methods reminiscent of those used in the implementation of Pirogov--Sinai theory~\cite{zahradnik1984alternate} can be applied.
By working in the intermediate-density regime, the density is \textit{low} enough to allow the use of the cluster expansion of the uniformly oriented system to control the error from the coarse-graining, but \textit{high} enough to ensure that boxes are likely to have at least one rectangle centered in them so that the ``spin'' is unambiguously defined with high probability.

The necessity of combining the methods of both papers is due to the nature of the hard-core interaction in the uniformly oriented system, which is only effectively one-dimensional for rectangles of unit width (i.e., rods); see Rem.~\ref{rmk: non-one-dimensionality of the interaction between rectangles}.
Apart from this, our contribution lies in the explicit tracking of the constants and parameters and the completion of the arguments left implicit in~\cite{disertori2013nematic,disertori2020plate}, so the proof of the main theorem yields a sufficient set of quantitative conditions which can be checked and converted into a rigorous bound on the aspect ratio required for the formation of a nematic phase.
To our knowledge, the present paper is the first in the literature to rigorously produce such an estimate.
In simple terms, our main result is as follows.

\begin{theorem*}[Nematic order]
    \label{thm: simplified main}
    Hard rectangle models on $\Z^2$ with two allowed orientations and aspect ratios $k\ge 10^{72}$ exhibit a nematic phase at intermediate fugacities.
\end{theorem*}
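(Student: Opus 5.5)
The plan is to follow the two-scale scheme of Disertori--Giuliani and Disertori--Giuliani--Jauslin, tracking every constant explicitly. Fix $l\times w$ rectangles with $k=l/w$ and a fugacity $z$ in the window $zlw\,k^{-1}\ll 1\ll zlw\,k$. Concretely, I would take $z\,lw$ of order $k^{-1/2}$ up to logarithms, so that both small parameters are explicit powers of $k$.

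The first step is coarse-graining. Tile $\Z^2$ (or a finite box with horizontal boundary conditions) by square cells of side $\ell\approx l/2$. Two rectangles centered in the same cell with different orientations overlap. Hence every non-empty cell carries a well-defined spin $\sigma\in\{h,v\}$, and an empty cell carries a third value $0$.

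Next, I would rewrite the partition function as a sum over spin configurations. The weight of each configuration is a product of partition functions of \emph{uniformly oriented} rectangles restricted to same-spin regions, together with the constraints that cells of spin $h$ (respectively $v$) be non-empty. The uniformly oriented system is expanded by the cluster expansion at activity $z$. For rectangles (not rods), the hard-core interaction within one orientation is genuinely two-dimensional. So, as in the plate paper, I would verify a Koteck\'y--Preiss criterion with an explicit weight $e^{a|\gamma|}$. The input is an excluded area of order $lw$ per particle, which gives convergence once $z\,lw\,e^{a}\lesssim 1/e$. This yields explicit bounds on the pressure, $|\log Z_\Lambda-p|\Lambda|-\text{boundary}|\le C z\,l\,|\partial\Lambda|$. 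The key quantity is the free-energy cost of an $h/v$ interface: it is at least of order $z\,l^2$ per unit of cell boundary, compared with an entropy gain of order $z\,lw$. Per interface cell this gives a suppression factor $e^{-c\,zlw\,k}$.

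The second step is the contour representation. Define contours as connected sets of cells that are either empty or adjacent to a cell of differing spin. I would prove a Peierls bound of the form
\begin{equation}
w(\Gamma)\le \exp\bigl(-\kappa\,|\Gamma|\bigr),
\qquad
\kappa=\min\bigl\{c_1 z\,l^2/k\cdot k,\; c_2 z\,l^2\bigr\}-C_3\bigl(z\,lw+\log 3\bigr).
\end{equation}
Empty cells cost $e^{-z\ell^2}$ because a cell of area $\sim l^2/4$ is empty. Interfaces cost the excluded-area penalty described above. The error terms come from the cluster-expansion corrections at contour boundaries, and the $\log 3$ accounts for the three cell values.

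Then I would run a Pirogov--Sinai / Zahradn\'{\i}k argument. Because of the $h\leftrightarrow v$ symmetry, the truncated contour weights are equal for both phases, so one only needs a second Koteck\'y--Preiss check for the contour polymers. Its entropy is at most $\log(\text{const})$ per cell. This shows that $\langle\text{fraction of vertical rectangles}\rangle\le \varepsilon(k)<1/2$ with horizontal boundary conditions, uniformly in the volume, which is the claimed nematic order.

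The final and hardest step is bookkeeping. Each step produces inequalities among $z$, $l$, $w$, $k$ and numerical constants: the convergence radius of the uniform-orientation cluster expansion, the boundary-error constants, $\kappa>\kappa_{\min}$, and the empty-cell suppression. The main obstacle is the interface and boundary corrections in the first step, since sharp geometric constants are needed there and exponent loss accumulates. I would first try to collect everything into a finite list of conditions of the form $k^{-a}\,C\le 1$. I would choose $z\,lw=k^{-1/2}$ and solve for the smallest $k$ satisfying all of them. I expect the worst constraint to be the one that balances the cluster-expansion error against the $k$-dependent Peierls constant, and I anticipate that it forces a bound of order $k\ge 10^{72}$.
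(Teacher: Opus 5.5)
\textbf{Gap: the threshold $k\ge 10^{72}$ is never derived.} Your outline has the same architecture as the paper: cells of side $\ell=\lceil l/2\rceil$ on which the excluded-volume lemma forces one orientation, cluster expansion of the uniformly oriented gas, contours, and truncated weights plus the $h\leftrightarrow v$ symmetry. But the substance of the statement is the explicit threshold. You leave open $c_1,c_2,C_3$, the Koteck\'y--Preiss weight, the geometric constants of your contours and all closure conditions, and you announce $k\ge 10^{72}$ as an expectation. What you have is at best ``for $k$ sufficiently large.'' In the paper the number is an output. With $\eps=e(2w-1)(2l-1)z$ and $K=z\ell^2/1000$, every convergence and closure condition is checked for $K>150$ and $e^{-2K}<\eps<e^{-K}$. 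Since $z\ell^2/\eps=\ell^2/(e(2w-1)(2l-1))>k/(16e)$, this window is nonempty once $k>16\cdot 1000e\cdot 150e^{150}\approx 9\cdot 10^{71}$.

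\textbf{Your sketch also misplaces the binding constraint, so finishing your bookkeeping would not reach $10^{72}$.} Since $l\approx 2\ell$, rectangles anchored in cells two apart can overlap. With single-cell contours, the regions on both sides of a contour, and contours a couple of cells apart, therefore interact directly. The paper instead uses $6\ell$ smoothing squares, so that compatible contours carry compatible configurations. It then expands the residual multi-body interaction into interaction polymers $Y$, whose weights decay only like $2^{|Y|}\eps^{m(Y)-1}$ with $m(Y)\approx|Y|/5$, where $|Y|$ counts tiles. Consequences:
\begin{itemize}
\item The Peierls rate of the final polymer model is $\theta\log(1/\eps)$ with $\theta<\frac{1}{16}$ (the paper takes $\frac{3}{50}$), not a multiple of $z\ell^2$.
\item The losses are of order $(z\ell^2)\eps$ times combinatorial factors as large as $10^{35}$, not $C_3\,zlw$.
\item Convergence of the polymer expansion against the $64^{|X|}$ animal count forces $\log(1/\eps)$ above an absolute constant well over $100$. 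This is the source of the factor $e^{150}$.
\item Your choice $zlw\sim k^{-1/2}$ gives $\log(1/\eps)\approx\frac12\log k\approx 80$ at $k=10^{72}$, so it fails there. The right coupling is $zlw\sim(\log k)/k$, i.e.\ $\eps$ exponentially small in $z\ell^2$.
\item The lower edge of your window should read $zlw\ll 1$, not $zw^2\ll 1$.
\end{itemize}

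\textbf{Three further ingredients are missing.}
\begin{itemize}
\item With deterministic spins, the ordered regions carry non-emptiness constraints, so they are not plain $q$-oriented partition functions. The paper removes these constraints with the signed identity $\Xi(\Lambda\mid q)=\sum_{\psi}\sum_{R\sim\psi}(-1)^{|\{\psi=0\}\cap\Lambda|}z(R)$.
\item Symmetry alone does not close the truncation. You need a surface-order bound on $\Xi(\Lambda\mid -q)/\Xi(\Lambda\mid q)$, proved by induction on the order of $\Lambda$ using finite-volume corrections to the polymer weights.
\item The density bound $\rho_1\le(\frac14+e\eps)z$ for misaligned rectangles requires derivative estimates on the polymer weights.
\end{itemize}
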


\subsection{Open problems}

Many questions about hard rectangle systems still await rigorous answers.

From the point of view of Thm.~\ref{thm: simplified main}, the most immediate one is quantitative: the lower bound of $10^{72}$, while explicit, is not physically meaningful and leaves an enormous gap from the numerical prediction of $7$~\cite{ghosh2007orientational,matoz2008critical}.
It should not be too difficult to improve the bound in Thm.~\ref{thm: simplified main} by optimizing the bookkeeping and choices of parameters; once this is done, the remaining gap (which is most likely present) would give a concrete measure of the losses in the two-scale cluster expansion argument.
Closing the gap further would require a sharper way of isolating the entropic mechanism responsible for orientational order; in particular, Onsager's excluded volume effect may start to lose its efficacy at lower aspect ratios~\cite{frenkel1987onsager,xiao2013generalized}.

Several qualitative problems also remain completely open from a mathematical point of view.
The first one is to ascertain the existence of the remaining phases predicted in the physical literature, including the columnar phase at higher densities and the re-entrant disordered phase at the highest densities.
Setting aside the high-density disordered phase which is completely uncharted territory (see~\cite[Sec.~9]{disertori2013nematic} for a potentially simpler combinatorial problem about the residual entropy at close-packing), a proof of the nematic-columnar transition would already be a major advance.
It will require new methods for managing the delicate sliding instabilities that operate far outside the low-density regime in which the standard cluster expansion used here is effective.
Inspired by the discussions in~\cite[Sec.~2]{nath2015high}, we propose here a potentially more tractable problem:
\begin{problem}
    For a system of $l\times w$ hard rectangles on the square lattice in which all the rectangles are \emph{constrained} to be horizontally oriented, prove the existence of columnar order at all sufficiently high fugacities when $k=l/w\gg 1$.
\end{problem}

There are two main reasons we expect this problem to be more tractable.
First, by removing the orientational degree of freedom, there is no need to establish orientational order before addressing translational symmetry breaking.
Second, when $k\gg 1$, the close-packed configurations exhibit a strong entropic bias in favor of breaking translational symmetry in the direction normal to the long axes of the rectangles.
Indeed, every close-packed configuration of the model may be classified as either \textit{row-aligned}, meaning that all the rectangles have the same $y$-coordinate modulo $w$ and are free to slide within their respective rows, or \textit{column-aligned} with the analogous meaning.
The former family vastly outnumbers the latter when $k\gg 1$ on any torus with side divisible by both $l$ and $w$.
Moreover, it is the former that is associated with the expected columnar phase, in which one of the $w$ row (residue) classes is selected, and the remaining sliding degrees of freedom are consistent with the required translation invariance along the horizontal direction.
The closest rigorous precedent is the proof of columnar order for $2\times2$ hard squares in~\cite{hadas2025columnar}, though reflection positivity methods do not apply in this setting.
Pirogov--Sinai theory provides a more promising approach~\cite{alberici2016cluster,jauslin2018nematic}, though the combinatorics of contours can be treacherous to navigate in sliding models with only hard-core interactions.
Finally, the advanced Peierls-type arguments of~\cite{peled2017condition,peled2020long,peled2023rigidity}, developed for systems with positive residual entropy, may offer valuable guidance for controlling the entropy of defects in this system (which has zero residual entropy per site at close-packing).

Returning to the qualitative problems that remain open mathematically, the second one concerns the asymptotic range of fugacities in which nematic order manifests.
As in~\cite{disertori2013nematic}, the present proof works within the regime $zwl\ll1\ll zl^2$.
While the lower limit $z\gg l^{-2}$ is consistent with the asymptotic prediction in~\cite{kundu2015asymptotic} (using that $z\approx\rho$ at low densities), the upper limit is far from what is expected~\cite{kundu2015asymptotic,nath2016stability}.
Improving the latter will require new arguments that do not rely on the cluster expansion of the uniformly oriented system, which the current method uses extensively.

Finally, there is the long-standing open problem of proving orientational symmetry breaking in a model with continuous rotational symmetry. 
The problem is old enough to border on clich\'e, but it remains one of the central reasons for studying the more rigid lattice models, which provide a setting in which the entropic mechanism can be explicitly tested before one confronts the additional difficulties associated with continuous symmetries.

\section{Preliminaries}

\subsection{Model and main result}
\label{sec: model and main result}

In this subsection, we set up the formal definitions and notation for the hard rectangle model and state our main result in more precise terms than earlier.

\paragraph{Elementary objects} 
Let $\fw\ge 1$ and $\fl\ge\max\set{3,2\fw-1}$ be integers.
In this paper, a \textbf{rectangle} refers to an $\fl\times\fw$, axis-parallel rectangle in $\R^2$ with all four corners in $\Z^2$, oriented either horizontally or vertically.
The ratio $\fk:=\fl/\fw\ge 1$ is called the \textbf{aspect ratio} of the rectangle.
The \textbf{anchor} of a rectangle is the point $(x,y)\in\Z^2$ with minimum Euclidean distance to its center $(x_c,y_c)\in\R^2$ subject to the constraints $x\le x_c$ and $y\le y_c$. 
Here, we are interested in the regime $\fk\gg1$.

Equip $\Z^2$ with the $L_\infty$-distance, denoted by $\rmd$.
Introduce the \textit{mesoscopic} length scale $\ell:=\ceil{\fl/2}$ and the scaled square lattice $\bL:=\ell\Z^2$, equipped with the connectivity relation in which two vertices are \textbf{adjacent} if and only if their $\rmd$-distance equals the lattice spacing $\ell$.
Let $\rmd_\bL:=\rmd/\ell$ be the rescaled $\rmd$-distance on $\bL$.
Anchored at each vertex $v\in\bL$ is a \textbf{tile} $T_v:=v+[0,\ell)^2$.
We declare two tiles to be adjacent if their anchors are adjacent in $\bL$.
If $\Lambda$ is a union of tiles, then its \textbf{internal} (resp. \textbf{external}) $\bL$-\textbf{boundary}, denoted by $\bdin[\bL]{\Lambda}$ (resp. $\bdex[\bL]{\Lambda}$), is defined as the union of those tiles in $\Lambda$ which are adjacent to a tile in $\R^2\setminus\Lambda$ (resp. those in $\R^2\setminus\Lambda$ which are adjacent to a tile in $\Lambda$).
The preceding definitions extend naturally to scaled versions of $\bL$.
In particular, the scaled tiles of the lattice $6\bL$ are called \textbf{smoothing squares} and denoted by $S_u:=u+[0,6\ell)^2$, $u\in 6\bL$.
Lastly, let $\cT(\Lambda)$ be the set of \textit{finite} unions of tiles in $\Lambda\cup\bdex[\bL]{\Lambda}$, $\cT_\ast(\Lambda)\subset\cT(\Lambda)$ consist of the connected ones, and $\cB_\ast(\Lambda)$ be the power set of $\cT_\ast(\Lambda)$.

The following simple but crucial observation will be proved in Sec.~\ref{sec: one orientation per tile}.

\begin{lemma}[Excluded volume]
    \label{lem: one orientation per tile}
    Let $u,v\in\Z^2$.
    If $\rmd(u,v)\le\ell-1$, then two rectangles with different orientations anchored at $u$ and $v$ necessarily have overlapping interiors.
    In particular, if two rectangles anchored in the same tile have different orientations, then their interiors overlap.
\end{lemma}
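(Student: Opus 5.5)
The plan is a direct coordinate computation: write both rectangles explicitly in terms of their anchors, and reduce interior overlap to two one-dimensional interval conditions. Overlap of two axis-parallel rectangles is a product condition, so it will suffice to show that the projections of the two interiors onto each coordinate axis intersect.

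\emph{Step 1: locate a rectangle relative to its anchor.} Let $R$ be horizontal with lower-left corner $(a,b)\in\Z^2$, so $R=[a,a+\fl]\times[b,b+\fw]$ and its center is $(a+\fl/2,\,b+\fw/2)$. The anchor is the lattice point with $x\le x_c$, $y\le y_c$ closest to the center, which is $(\floor{x_c},\floor{y_c})$. Hence an anchor $u=(x,y)$ gives
\[
\operatorname{int}R=\bigl(x-\floor{\fl/2},\,x+\ceil{\fl/2}\bigr)\times\bigl(y-\floor{\fw/2},\,y+\ceil{\fw/2}\bigr).
\]
Similarly, a vertical rectangle anchored at $v=(x',y')$ has interior
\[
\bigl(x'-\floor{\fw/2},\,x'+\ceil{\fw/2}\bigr)\times\bigl(y'-\floor{\fl/2},\,y'+\ceil{\fl/2}\bigr).
\]
We use that $\floor{n/2}+\ceil{n/2}=n$.

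\emph{Step 2: compare the projections on each axis.} Two open intervals $(p,q)$ and $(r,s)$ with $p<q$ and $r<s$ intersect iff $p<s$ and $r<q$. For the $x$-projections this means
\[
x-\floor{\fl/2}<x'+\ceil{\fw/2}
\quad\text{and}\quad
x'-\floor{\fw/2}<x+\ceil{\fl/2}.
\]
We are given $|x-x'|\le\rmd(u,v)\le\ell-1=\ceil{\fl/2}-1$.
\begin{itemize}[label=$\circ$]
\item For the second inequality: $x'-x\le\ceil{\fl/2}-1<\ceil{\fl/2}+\floor{\fw/2}$.
\item For the first inequality we need $x-x'<\floor{\fl/2}+\ceil{\fw/2}$. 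This follows from $x-x'\le\ceil{\fl/2}-1\le\floor{\fl/2}$ together with $\ceil{\fw/2}\ge1$, which holds because $\fw\ge1$.
\end{itemize}
The $y$-projections are handled by the identical argument with the roles of $\fl$ and $\fw$ swapped between the two rectangles, using $|y-y'|\le\ceil{\fl/2}-1$. Hence the interiors overlap.

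\emph{Step 3: the tile statement.} If $u,v$ lie in the same tile $T_{v_0}=v_0+[0,\ell)^2$, then both coordinates differ by at most $\ell-1$, so $\rmd(u,v)\le\ell-1$, and the first part applies.

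The only delicate point is the floor/ceiling bookkeeping in Step 1, specifically getting the anchor convention right for odd $\fl$ or $\fw$. The asymmetric inequality in Step 2 is exactly where $\ell=\ceil{\fl/2}$ is tight: the slack available there is $\floor{\fl/2}+\ceil{\fw/2}-\ceil{\fl/2}+1\ge\ceil{\fw/2}\ge1$. I will double-check this against small cases, for example $\fw=1$ with $\fl$ odd. The hypothesis $\fl\ge2\fw-1$ does not appear to be needed for this lemma.
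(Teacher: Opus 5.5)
Your proof is correct and follows essentially the same route as the paper. The paper likewise writes out the coordinate conditions, in terms of $\lfloor \fl/2\rfloor$, $\lceil \fl/2\rceil$, $\lfloor \fw/2\rfloor$ and $\lceil \fw/2\rceil$, under which a horizontal and a vertical rectangle with given anchors have intersecting interiors, and then checks that $\ell-1=\lceil \fl/2\rceil-1$ lies within all four bounds using only $\fw\ge 1$; as you note, it never uses $\fl\ge 2\fw-1$.
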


\paragraph{Configuration spaces}

We think of a \textbf{configuration} of hard rectangles intuitively as a set of rectangles with pairwise disjoint interiors.
Equivalently, a configuration is represented by a function $\omega:\Z^2\to\set*{-1,0,1}$ such that the vertices $v$ with $\omega(v)\ne 0$ are the anchors of the rectangles, and the value of $\omega$ at such a vertex indicates the orientation of the rectangle, with $1$ (resp. $-1$) corresponding to a horizontal (resp. vertical) rectangle.
Given $q\in\set{-1,1}$ and $\Lambda\subset\R^2$, denote the set of $q$-oriented rectangles anchored in $\Lambda$ by $R^q(\Lambda)$, and the set of all rectangles anchored in $\Lambda$ by $R(\Lambda)$.

A \textbf{profile} is a function $\psi:\bL\to\set*{-1,0,1}$.
Denote by $\Psi$ the set of all profiles.

Let $\Lambda$ be a union of smoothing squares.
Denote by $\Omega(\Lambda)$ the set of configurations where every rectangle is anchored in $\Lambda$.
For $q\in\set*{-1,1}$, the set of profiles on $\Lambda$ with $q$-boundary conditions, denoted by $\Psi_\Lambda^q$, consists of those $\psi\in\Psi$ such that $\psi(v)=q$ whenever the enclosing smoothing square $S_u$ of the tile $T_v$ satisfies $\rmd_{6\bL}(u,\Lambda^c)\le 2$.
Given $\psi\in\Psi_\Lambda^q$, the set of configurations in $\Lambda$ which are \textbf{compatible} with $\psi$ is the subset $\Omega(\Lambda,\psi)\subset\Omega(\Lambda)$ consisting of those configurations such that, for every $v\in\Lambda\cap\bL$,
\begin{itemize}
    \item if $\psi(v)=0$, then no rectangle is anchored in $T_v$;
    \item if $\psi(v)=1$, then every rectangle anchored in $T_v$ is horizontal, even if vacuously;
    \item if $\psi(v)=-1$, then every rectangle anchored in $T_v$ is vertical, even if vacuously.
\end{itemize}
Furthermore, given $R_{\Lambda^c}\in\Omega(\Lambda^c)$, the set of configurations in $\Lambda$ compatible with both $\psi$ and $R_{\Lambda^c}$, denoted by $\Omega(\Lambda,\psi\mid R_{\Lambda^c})$, is the subset of configurations $R_{\Lambda}\in\Omega(\Lambda,\psi)$ such that $R_{\Lambda}\cup R_{\Lambda^c}$ is a (valid) configuration.

\paragraph{Partition functions}

Let $\bz:\Z^2\times\set{-1,1}\to[0,\infty)$ be a function and $\Lambda$ be a union of smoothing squares.
The \textbf{weight} of a configuration $R\in\Omega(\Lambda)$ at \textbf{fugacity} $\bz$ is defined as $\bz(R):=\prod_{(v,q)\in R}\bz(v,q)$; the notation $\bz(\cdot)$ will later be applied to \textit{multisets} of rectangles as well.
If $\fr_i$ is a fixed rectangle, we write $\partial_i:=\pdv*{\bz(\fr_i)}$ and $\cD_i:=\bz(\fr_i)\pdv*{\bz(\fr_i)}$.
For $q\in\set*{-1,1}$, the \textbf{partition function} in $\Lambda$ with $q$-boundary conditions is given by
\begin{equation}
    \label{eqn: q-magnetized partition function}
    \Xi_\bz(\Lambda\mid q):=\sum_{\substack{R\in\Omega(\Lambda)\\\exists\psi\in\Psi_\Lambda^q:R\sim\psi}}\bz(R).
\end{equation}
Let $\bq$ be the profile taking the constant value $q$ on $\bL$, and denote $\Omega^q(\Lambda):=\Omega(\Lambda,\bq)$.
We also introduce the \textbf{$q$-oriented partition function} in $\Lambda$ as
\begin{equation}
    \label{eqn: q-oriented partition function}
    \Xi_\bz^q(\Lambda):=\sum_{R\in\Omega^q(\Lambda)}\bz(R).
\end{equation}
Furthermore, given $R_{\Lambda^c}\in\Omega(\Lambda^c)$, the constrained partition functions $\Xi_\bz(\Lambda\mid q,R_{\Lambda^c})$ and $\Xi_\bz^q(\Lambda\mid R_{\Lambda^c})$ are defined analogously to~\eqref{eqn: q-magnetized partition function} and~\eqref{eqn: q-oriented partition function} by restricting the respective sums to the spaces $\Omega(\Lambda,\psi\mid R_{\Lambda^c})$ and $\Omega^q(\Lambda\mid R_{\Lambda^c})$.

\paragraph{Main result}

Given rectangles $\fr_1,\fr_2$, denote by $\fT_i$ the tile containing the anchor of $\fr_i$, and define $\fd_{12}$ as in Appx.~\ref{appx: derived quantities}.
The \textbf{one-} and \textbf{two-point correlation functions} of the hard rectangle model in $\Lambda$ with $q$-boundary conditions at fugacity $z$ are defined as
\begin{equation}
    \rho^q_{z,\Lambda;1}(\fr_1)
    :=\eval{\cD_1\log \Xi_\bz(\Lambda\mid q)}_z,
\end{equation}
\begin{equation}
    \rho^q_{z,\Lambda;2}(\fr_1,\fr_2)-\rho^q_{z,\Lambda;1}(\fr_1)\rho^q_{z,\Lambda;1}(\fr_2)
    :=\eval{\cD_1\cD_2\log \Xi_\bz(\Lambda\mid q)}_z.
\end{equation}
Our main result is as follows.

\begin{theorem}[Nematic order]
    \label{thm: main}
    Let $q\in\set{-1,1}$, $\fr_1,\fr_2$ be (compatible) rectangles such that $\fd_{12}\ge 6$, and $\Lambda$ be a finite union of smoothing squares.
    Suppose that $\fk\ge 10^{72}$.
    Then, with $W$ denoting the principal branch of the Lambert $W$-function, the open interval
    \begin{equation}
        \bigg(\max\set[\bigg]{\frac{500}{\ell^2}W\bigg(\frac{\ell^2}{500e(2\fw-1)(2\fl-1)}\bigg),\frac{150000}{\ell^2}}
        ,\frac{1000}{\ell^2}W\bigg(\frac{\ell^2}{1000e(2\fw-1)(2\fl-1)}\bigg)\bigg)
    \end{equation}
    is nonempty.
    Moreover, for every fugacity $z$ in this interval, the estimates
    \begin{equation}
        \abs{\rho^q_{z,\Lambda;1}(\fr_1)-\indicator{R^q(\Lambda)}(\fr_1)z}
        \le \indicator{R(\Lambda)}(\fr_1)\bigg(\frac{1}{4}+e\eps\bigg)z,
    \end{equation}
    \begin{equation}
        \abs{\rho^q_{z,\Lambda;2}(\fr_1,\fr_2)-\rho^q_{z,\Lambda;1}(\fr_1)\rho^q_{z,\Lambda;1}(\fr_2)}
        \le \indicator{R(\Lambda)}(\fr_1,\fr_2)\brackets[\Big]{e^3\eps^{\fd_{12}-2}+z^2 e^{-(z\ell^2/3000000)(\fd_{12}-2)}}
    \end{equation}
    hold uniformly in $q$, $\fr_1$, $\fr_2$, and $\Lambda$, where $\eps<e^{-150}$ is a small positive parameter.
    Consequently,
    \begin{equation}
        \liminf_{\Lambda\uparrow\Z^2}\rho^q_{z,\Lambda;1}(\fr_1)
        \ge \bigg(\frac{3}{4}-e\eps\bigg)z,
        \quad\text{if }\fr_1\in R^{q}(\Z^2),
    \end{equation}
    \begin{equation}
        \limsup_{\Lambda\uparrow\Z^2}\rho^q_{z,\Lambda;1}(\fr_1)
        \le \bigg(\frac{1}{4}+e\eps\bigg)z,
        \quad\text{if }\fr_1\in R^{-q}(\Z^2),
    \end{equation}
    which indicates the breaking of orientational symmetry, and
    \begin{equation}
        \limsup_{\Lambda\uparrow\Z^2}\abs{\rho^q_{z,\Lambda;2}(\fr_1,\fr_2)-\rho^q_{z,\Lambda;1}(\fr_1)\rho^q_{z,\Lambda;1}(\fr_2)}
        \le e^3\eps^{\fd_{12}-2}+z^2 e^{-(z\ell^2/3000000)(\fd_{12}-2)},
    \end{equation}
    which indicates the absence of translational order.
\end{theorem}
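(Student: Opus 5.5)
Theorem~\ref{thm: main} is the endpoint of the whole two-scale scheme, so I would prove it by assembling three layers, in the order of Disertori--Giuliani~\cite{disertori2013nematic} and Disertori--Giuliani--Jauslin~\cite{disertori2020plate}, while tracking every constant.

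\textbf{First layer: the uniformly oriented system.} For $\Xi^q_\bz(\Lambda\mid R_{\Lambda^c})$ I would run a cluster expansion with complex fugacities $|\bz|\le z$. The Koteck\'y--Preiss criterion requires bounding the number of same-orientation rectangles that overlap a given one by $(2\fw-1)(2\fl-1)$, so convergence needs roughly $z(2\fw-1)(2\fl-1)e\lesssim1$. The two-dimensional character of the overlap for $\fw>1$ (Rem.~\ref{rmk: non-one-dimensionality of the interaction between rectangles}) forces the plate-style counting of~\cite{disertori2020plate} rather than the rod-style one. This layer yields:
\begin{itemize}
    \item $\log\Xi^q_\bz(\Lambda)$ as a bulk term plus boundary corrections;
    \item decay of truncated correlations;
    \item the probability that a tile is empty, of order $e^{-cz\ell^2}$.
\end{itemize}
I would choose the tile size and constants so that this probability is bounded by the Lambert-$W$ expressions. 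These expressions come from solving $x e^{x}=\ell^2/(Ce(2\fw-1)(2\fl-1))$ with $x=z\ell^2/C$.

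\textbf{Second layer: the coarse-grained model.} Using Lemma~\ref{lem: one orientation per tile}, I would write $\Xi_\bz(\Lambda\mid q)$ as a sum over profiles $\psi\in\Psi^q_\Lambda$. After smoothing on the $6\bL$ scale, each profile is decomposed into contours $\Gamma$, namely regions where $\psi$ is not constantly $q$ or is $0$, together with boundary tiles. Each contour is given an activity equal to a ratio of constrained partition functions. I would then prove a Peierls bound $|w(\Gamma)|\le\eps^{|\Gamma|}$, obtained from two penalties:
\begin{itemize}
    \item \textbf{Empty tiles} cost $e^{-cz\ell^2}$ each.
    \item \textbf{Interfaces between $\pm$ tiles} cost a surface-tension factor. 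The cluster expansion of the restricted system loses volume of order $z\ell\fw$ per interface tile against the bulk free energy. Since $z\ell^2\gg1$, this gives a factor $e^{-cz\ell^2}$ per boundary tile.
\end{itemize}
The cluster expansion error terms must be smaller than this gain, which requires $z\fw\fl\ll1$. This is the main obstacle. It requires an explicit lower bound on the free-energy cost of an interface, obtained by comparing $\log\Xi^q$ on a restricted region against the full region. All cluster expansion remainders must then be bounded uniformly, with explicit constants, against the combinatorial entropy of contours (at most $C^{|\Gamma|}$ connected sets).

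\textbf{Third layer: verifying the conditions and deriving the estimates.} All requirements are collected into a finite list of inequalities on $\ell,\fw,z,\eps$. The bottleneck is the required ratio between the upper and lower fugacity endpoints, roughly $z\ell^2\ge150000$ together with $z\fw\fl\lesssim10^{-3}$. Because $W(x)\approx\log x$ grows only logarithmically, these are simultaneously satisfiable only when $\fk\ge10^{72}$. Nonemptiness of the interval follows from monotonicity of $W$. I would then evaluate the conditions numerically, using $\eps<e^{-150}$.

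Given convergent polymer expansions for both layers, the correlation estimates follow by differentiating $\log\Xi$ in $\bz(\fr_i)$:
\begin{itemize}
    \item \textbf{One-point function.} It equals the oriented one-point function (deviating from $z$ by at most $z/4$ because of overlaps) plus the contributions of contours containing $\fT_1$. These contributions are bounded by $e\eps z$, and opposite-orientation rectangles are suppressed the same way.
    \item \textbf{Two-point function.} Truncated correlations need clusters connecting $\fT_1$ to $\fT_2$. Contour clusters give $\eps^{\fd_{12}-2}$, and the oriented cluster expansion's exponential decay gives $z^2e^{-(z\ell^2/C)(\fd_{12}-2)}$.
\end{itemize}
The limits as $\Lambda\uparrow\Z^2$ then follow from the uniformity of these bounds in $\Lambda$.
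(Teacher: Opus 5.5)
Your three-layer architecture matches the paper's, but several load-bearing steps are either missing or wrong.

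\textbf{The Peierls layer.} The interface penalty as you state it does not exist. A loss of order $z\ell\fw$ per interface tile gives only $e^{-cz\ell\fw}$, and $z\ell\fw\le z\fw\fl\ll1$ is forced by convergence of the oriented expansion. The actual mechanism is Lemma~\ref{lem: one orientation per tile}. In a domino of tiles with opposite nonzero $\psi$, one of the two strips of width $\ceil{\ell/2}$ along the common edge must be free of anchors (or $J_1\cup J_2$ when $\ell$ is odd). This costs about $e^{-z\ell^2/2}$. The bound $n_0(\gamma)+\frac43 n_\pm(\gamma)\ge\abs{\supp{\gamma}\cap\bL}/324$ then turns it into a volume penalty of roughly $e^{-z\ell^2/864}$ per tile.

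More seriously, you treat the contour activity as local. In fact $\zeta^0_q(\gamma)$ contains the factors $\Xi_\bz(\intr_j(\gamma)\mid m_j(\gamma),R_\gamma)/\Xi_\bz(\intr_j(\gamma)\mid q)$. For $m_j(\gamma)=-q$ these are ratios with opposite boundary conditions on arbitrarily large regions. In addition, contours interact through the multi-body factor $e^{-W_0(\Gamma)}$. No Peierls bound is available until you know $\abs{\Xi_\bz(\Lambda\mid-q)/\Xi_\bz(\Lambda\mid q)}\le e^{\tau\abs{\bdex[\bL]{\Lambda}\cap\bL}/2}$ uniformly in $\Lambda$. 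Proving that requires the polymer expansion, which itself requires the Peierls bound. The paper breaks this circularity with three ingredients:
\begin{itemize}
\item induction on the order of regions using truncated activities (Lemma~\ref{lem: recovery of true weights});
\item bulk cancellation between $\pm q$ by symmetry;
\item resummation of $W_0$ into hard-core polymers built from contours and interaction sets $Y$.
\end{itemize}
Your phrase ``given convergent polymer expansions for both layers'' assumes exactly the content of Prop.~\ref{prop: main}.

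\textbf{The aspect-ratio bound.} The conditions you name, $z\ell^2\ge150000$ and $z\fw\fl\lesssim10^{-3}$, would only require $\fk\gtrsim10^{9}$. The $10^{72}$ arises because the closure conditions force $\eps=e(2\fw-1)(2\fl-1)z$ to be exponentially small in $K=z\ell^2/1000$. These conditions involve rate losses of order $10^{39}K\eps$, polymer rates $e^{-\kappa}\approx\eps^{3/50}$, and $9c_0(\cdot)\le\alpha$. The paper imposes $e^{-2K}<\eps<e^{-K}$ with $K>150$. Setting $x=\ell^2/(1000e(2\fw-1)(2\fl-1))$, the condition $\eps<e^{-K}$ is equivalent to $K<W(x)$, so $K>150$ needs $x>150e^{150}$. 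Using $\ell^2/((2\fw-1)(2\fl-1))>\fk/16$, this gives $\fk>16\cdot1000e\cdot150e^{150}\approx9.1\cdot10^{71}$. Nonemptiness of the interval also needs $W(2x)<2W(x)$, not just monotonicity of $W$.

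\textbf{The correlation bounds.} You have swapped the two error sources.
\begin{itemize}
\item The terms $e\eps z$ and $e^3\eps^{\fd_{12}-2}$ come from the oriented expansion (Lem.~\ref{itm: expansion of the first-order derivative of the log q-oriented partition function} and~\ref{itm: expansion of the second-order derivative of the log q-oriented partition function}), whose clusters decay by $\eps$ per rectangle. Overlaps there cost $e\eps z$, not $z/4$.
\item The terms $z/4$ and $z^2e^{-(z\ell^2/3000000)(\fd_{12}-2)}$ are the inductive bounds of Prop.~\ref{prop: main} on $\cD_i\log(\Xi_\bz(\Lambda\mid q)/\Xi^q_\bz(\Lambda))$, with the parameters of Prop.~\ref{prop: explicit choices of parameters}.
\end{itemize}
Your assignment would require the contour/polymer part to decay like $\eps^{\fd_{12}-2}$. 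The expansion cannot deliver this. Polymer weights are only bounded by $e^{-\kappa\abs{X\cap\bL}}$ with $e^{-\kappa}\approx\eps^{\theta}$, $\theta<\frac1{16}$, and Ueltschi's criterion caps the separation weight at $\alpha<\kappa/2$. Hence the polymer part decays at a rate that is only a small fraction of $\log(1/\eps)$ per unit of $\fd_{12}$. Likewise, the polymer contribution to $\rho_1$ is only shown to be $\lesssim ze^{-0.37K}$. This exceeds $e\eps z$ when $\eps$ is near $e^{-2K}$, so your claim that contours contribute at most $e\eps z$ is not established either.
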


\subsection{Cluster expansion tools}
\label{sec: cluster expansion tools}

In this subsection, we review Ueltschi's formulation~\cite{ueltschi2004cluster} of the cluster expansion method in the context of abstract polymer models with hard-core interactions.
Here and in the remainder of the paper, we will use $\set{\cdot}$, $\mset{\cdot}$, and $(\cdot)$ to denote sets, multisets, and ordered tuples, respectively.

Let $\bA$ be a finite or countable set whose elements are called \textbf{polymers}.
The weight of a polymer is specified by a function $w:\bA\to\R$.
Let $\sim$ be an \textit{irreflexive} and \textit{symmetric} relation on $\bA$ called \textbf{compatibility}.
The (formal) \textbf{polymer partition function} is defined as
\begin{equation}
    \Xi:=\sum_{L\subset\bA}\brackets[\Bigg]{\prod_{A\in L}w(A)}\brackets[\Bigg]{\prod_{\set*{A,A'}\subset L}\indicator{A\sim A'}}.
\end{equation}

A finite multi-subset $M$ of $\bA$ is a \textbf{cluster} if it cannot be partitioned into two nonempty sub-multisets $M_1,M_2$ such that every polymer in $M_1$ is compatible with every polymer in $M_2$.
Denote by $\cM(\bA)$ the collection of all clusters.
For $n\ge 1$, let $G_n$ denote the complete graph on $n$ vertices, and write $G\subset G_n$ if $G$ is a spanning subgraph of $G_n$.
Define the function $\varphi:\cM(\bA)\to\R$ by
\begin{equation}
    \varphi(M):=\prod_{A\in\bA}\frac{1}{n_A(M)!}\cdot\sum_{\substack{G\subset G_n\\\text{connected}}}\prod_{\set*{i,j}\in E(G)}(-\indicator{A_i\not\sim A_j})
\end{equation}
for $M=\mset*{A_1,\dots,A_n}$, where $n_A(M)$ denotes the multiplicity of $A$ in $M$.

\begin{theorem}[{Cluster expansion,~\cite{ueltschi2004cluster}}]
    \label{thm: Ueltschi criterion}
    Suppose that there exist functions $a,b:\bA\to[0,\infty)$ such that
    \begin{equation}
        \label{eqn: Ueltschi criterion}
        \sum_{A\not\sim A_\ast}e^{a(A)+b(A)}\abs{w(A)}\le a(A_\ast)\quad\text{for every }A_\ast\in\bA.
    \end{equation}
    Then, for all $A_1\in\bA$,
    \begin{equation}
        \label{eqn: cluster bound with one anchor}
        \sum_{M\ni A_1}\abs{\varphi(M)}n_{A_1}(M)\prod_{A\in M\setminus\mset*{A_1}}\left[e^{b(A)}\abs{w(A)}\right]\le e^{a(A_1)},
    \end{equation}
    \begin{equation}
        \label{eqn: cluster bound with two identical anchors}
        \sum_{M\supset\mset*{A_1,A_1}}\abs{\varphi(M)}n_{A_1}(M)(n_{A_1}(M)-1)\prod_{A\in M\setminus\mset*{A_1,A_1}}\left[e^{b(A)}\abs{w(A)}\right]
        \le 2e^{3a(A_1)}.
    \end{equation}
    Moreover, for all distinct $A_1,A_2\in\bA$,
    \begin{equation}
        \label{eqn: cluster bound with two distinct anchors}
        \sum_{M\supset\mset*{A_1,A_2}}\abs{\varphi(M)}n_{A_1}(M)n_{A_2}(M)\prod_{A\in M\setminus\mset*{A_1,A_2}}\left[e^{b(A)}\abs{w(A)}\right]
        \le e^{\frac{3}{2}[a(A_1)+a(A_2)]}(1+\indicator{A_1\not\sim A_2}).
    \end{equation}
    If, in addition to~\eqref{eqn: Ueltschi criterion}, it holds that
    \begin{equation}
        \sum_{A\in\bA}e^{a(A)}\abs{w(A)}<\infty,
    \end{equation}
    then the polymer partition function $\Xi$ is well-defined, $\Xi>0$, and its logarithm admits the absolutely convergent series expansion
    \begin{equation}
        \log\Xi=\sum_{M\in\cM(\bA)}\varphi(M)\prod_{A\in M}w(A).
    \end{equation}
\end{theorem}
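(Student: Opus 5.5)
The plan is to reduce everything to the classical tree-graph bound and then estimate sums over trees by induction. The key tool is Penrose's tree-graph inequality: for $M=\mset*{A_1,\dots,A_n}$,
\begin{equation*}
    \abs[\Big]{\sum_{\substack{G\subset G_n\\\text{connected}}}\prod_{\set*{i,j}\in E(G)}(-\indicator{A_i\not\sim A_j})}
    \le \#\set*{\text{spanning trees of the incompatibility graph on }\set*{1,\dots,n}}.
\end{equation*}
Combined with the $1/\prod_A n_A(M)!$ in $\varphi(M)$, it turns each left-hand side in the statement into a sum over labelled trees whose vertices carry polymers, with adjacent vertices incompatible. Dividing out the symmetry factor converts the sum over clusters with distinguished copies of $A_1$ (which is what $n_{A_1}(M)$ counts) into a sum over such trees rooted at a vertex labelled $A_1$, with the usual $1/(\text{children})!$ factors.

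For \eqref{eqn: cluster bound with one anchor}, let $F_N(A_\ast)$ be the sum over rooted trees of depth at most $N$ with root labelled $A_\ast$, each non-root vertex $A$ weighted by $e^{b(A)}\abs{w(A)}$. I will prove $F_N(A_\ast)\le e^{a(A_\ast)}$ by induction on $N$. The children of the root form an unordered multiset of polymers incompatible with $A_\ast$, each heading a subtree of depth at most $N-1$. Hence
\begin{equation*}
    F_N(A_\ast)\le\exp\Big(\sum_{A\not\sim A_\ast}e^{b(A)}\abs{w(A)}F_{N-1}(A)\Big)
    \le\exp\Big(\sum_{A\not\sim A_\ast}e^{a(A)+b(A)}\abs{w(A)}\Big)
    \le e^{a(A_\ast)},
\end{equation*}
where the last step is \eqref{eqn: Ueltschi criterion}. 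Letting $N\to\infty$ gives the bound.

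If also $\sum_A e^{a(A)}\abs{w(A)}<\infty$, summing the one-anchor bound over $A_1$ (with weight $\abs{w(A_1)}$, using $b\ge0$ and $n_{A_1}(M)\ge1$) shows $\sum_M\abs{\varphi(M)}\prod_{A\in M}\abs{w(A)}<\infty$. The identity $\log\Xi=\sum_M\varphi(M)\prod w$ is then the standard formal identity (expanding $\log$ of the finite-volume partition function and taking limits over finite subsets of $\bA$), now justified by absolute convergence. Positivity of $\Xi$ follows since it is the limit of $\exp$ of convergent real series.

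For the two-anchor bounds I would use an analyticity/Cauchy-estimate argument rather than direct tree surgery, because that is what naturally produces the exponents $3a$ and $\tfrac32(a_1+a_2)$. Replace the weights by $w_\zeta(A)=w(A)\,(1+\zeta_1\indicator{A=A_1}+\zeta_2\indicator{A=A_2})$ with complex $\zeta_i$. Check that \eqref{eqn: Ueltschi criterion} still holds for $\abs{w_\zeta}$ with a modified $a$: keep $b$ and shrink the admissible $\abs{\zeta_i}$ to a radius $r_i$ comparable to $e^{-a(A_i)/2}$-type quantities, absorbing the extra contribution of $A_i$ into an enlarged $a'(A)=a(A)+\delta$. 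Then the one-anchor quantity, viewed as a function of $\zeta_1$ (respectively $\zeta_2$), is analytic and bounded by $e^{a'}$ on the polydisc. Differentiating with respect to $w(A_1)$ (respectively $w(A_2)$) exactly inserts the factors $n_{A_1}(M)-1$ or $n_{A_2}(M)$, and Cauchy's estimate yields the claimed bounds after optimizing the radius. The term $(1+\indicator{A_1\not\sim A_2})$ comes from the single extra tree in which $A_1,A_2$ are directly adjacent.

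I expect the two-anchor step to be the main obstacle. The delicate point is choosing the perturbation radius so that the modified criterion survives while the loss is exactly $e^{2a}$ (identical anchors) or $e^{\frac12(a_1+a_2)}$ (distinct anchors). If the Cauchy route gives worse constants, my fallback is direct tree surgery. Root the tree at $A_1$, follow the unique path to the second marked vertex, and bound the hanging subtrees by $F_\infty\le e^{a}$. The path itself is then controlled using \eqref{eqn: Ueltschi criterion} once more, with the factor $\tfrac12$ from symmetrizing the roles of $A_1$ and $A_2$.
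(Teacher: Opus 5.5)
Your one-anchor argument (tree-graph inequality, then induction on depth) and your derivation of the convergent expansion of $\log\Xi$ are the standard ones. The paper does not reprove the theorem; it cites Ueltschi. That part is fine. The two-anchor part does not work.

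\textbf{The perturbation step fails.} Replacing $a$ by $a+\delta$ multiplies the left side of the criterion by $e^{\delta}$. Since that left side may equal $a(A_\ast)$, you would need $a(A_\ast)(e^{\delta}-1)\le\delta$ even before adding the new $A_i$ term. This is false for every $\delta>0$ once $a(A_\ast)\ge 1$, which includes the paper's own choice $a\equiv 1$ for rectangles.

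\textbf{No tree-graph argument can reach the stated bounds.} The hypothesis may have no slack. Take $\bA=\{A\}$, $\abs{w(A)}=e^{-1}$, $a(A)=1$, $b(A)=0$. The criterion holds with equality, and for any weight $x>e^{-1}$ there is no $a'$ with $xe^{a'}\le a'$. So your Cauchy/convexity route has radius zero. The tree majorant of the identical-anchor sum is $\sum_{n\ge2}\frac{n^{n-2}}{n!}n(n-1)e^{-(n-2)}=\infty$, since the terms behave like $e^{2}/\sqrt{2\pi n}$. The true value is finite: $\abs{\varphi}=1/n$ for $n$ copies of $A$, so the sum equals $\sum_{n\ge2}(n-1)e^{-(n-2)}=(1-e^{-1})^{-2}$. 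The same divergence occurs for distinct anchors with two mutually incompatible polymers of weight $(2e)^{-1}$. So your tree-surgery fallback fails too: anything it produces dominates an infinite majorant.

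\textbf{The missing idea is the exact sign structure.} For hard-core polymers, Penrose's theorem gives $(-1)^{\abs{M}-1}\varphi(M)\ge 0$. Set $x_A:=e^{b(A)}\abs{w(A)}$, write $\Xi_{\bA'}$ for the partition function of a subfamily $\bA'$, and let $N[A]$ be the set of polymers incompatible with $A$ (including $A$).
\begin{itemize}
\item Then $\Phi(x):=\sum_M\abs{\varphi(M)}\prod_{A\in M}x_A=-\log\Xi(-x)$. Your three left-hand sides are $\partial_{x_{A_1}}\Phi$, $\partial^2_{x_{A_1}}\Phi$ and $\partial_{x_{A_1}}\partial_{x_{A_2}}\Phi$.
\item Work with a finite family and at $sx$, $s<1$. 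The weights $-sx$ satisfy the criterion with the pair $(a,0)$, so the convergence part gives $\Xi_{\bA'}(-sx)>0$ for every subfamily $\bA'$.
\item Since $\Xi$ is affine in each weight, $\Xi(-x)=\Xi_{\bA\setminus\{A\}}(-x)-x_A\,\Xi_{\bA\setminus N[A]}(-x)$.
\item Hence $\rho_A:=\partial_{x_A}\Phi=\Xi_{\bA\setminus N[A]}(-x)/\Xi(-x)$, and $\rho_A\le e^{a(A)}$ by your one-anchor bound.
\item Differentiating gives $\partial_{x_A}\rho_A=\rho_A^2$.
\item For distinct anchors, $\partial_{x_{A_2}}\rho_{A_1}=\rho_{A_1}\rho_{A_2}-\indicator{A_1\sim A_2}\,\Xi_{\bA\setminus(N[A_1]\cup N[A_2])}(-x)/\Xi(-x)\le\rho_{A_1}\rho_{A_2}$.
\end{itemize}
This yields the bounds $e^{2a(A_1)}$ and $e^{a(A_1)+a(A_2)}$, which are stronger than those stated. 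Monotone convergence then removes the restrictions $s<1$ and $\bA$ finite.
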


\begin{remark}
    In~\cite{ueltschi2004cluster}, a cluster is an ordered tuple of polymers rather than a multiset thereof.
    However, the formulation of the theorem given here is easily shown to be equivalent to the original. 
\end{remark}

\subsubsection{Application to the $q$-oriented partition function}
\label{sec: cluster expansion of the q-oriented partition function}

In this sub-subsection, we apply the cluster expansion method to the $q$-oriented partition function.
Here, we assume that there exist $z,\nu>0$ and an integer $n\ge 1$ such that the fugacity $\bz$ takes values in $[0,z]$ for all but at most $n$ inputs, for which it takes values in $(z,e^{\nu/n}z)$.

Let $\Lambda$ be a \textit{finite} union of smoothing squares, $q\in\set*{-1,1}$, and $R_{\Lambda^c}\in\Omega(\Lambda^c)$.
The constrained $q$-oriented partition function $\Xi^q_\bz(\Lambda\mid R_{\Lambda^c})$ can be viewed as a polymer partition function with $\bA$ the set of $q$-oriented rectangles anchored in $\Lambda$, the relation $\sim$ prescribing disjoint interiors, and the weight given by $w(r):=\indicator{r\sim R_{\Lambda^c}}\bz(r)$.
Provided that
\begin{equation}
    \label{eqn: convergence condition for expansion of q-oriented partition function}
    e^{1+\nu}(2\fw-1)(2\fl-1)z\le 1,
\end{equation}
we find that the condition~\eqref{eqn: Ueltschi criterion} is satisfied with $a(A)\equiv 1$ and $b(A)\equiv b$, where $b\ge 0$ is such that $e^{1+\nu+b}(2\fw-1)(2\fl-1)z=1$.
Hence, we have the expansion
\begin{equation}
    \log \Xi_\bz^q(\Lambda\mid R_{\Lambda^c})
    =\sum_{\substack{M\in\cM^q(\Lambda)\\M\sim R_{\Lambda^c}}}\varphi(M)\bz(M),
\end{equation}
where, as in the rest of the paper, $\cM^q(\Lambda)$ denotes the collection of clusters of $q$-oriented rectangles anchored in $\Lambda$.
Moreover, introducing the decay factor $\epsilon:=e^{-b}=e^{1+\nu}(2\fw-1)(2\fl-1)z$, we have the tail estimate
\begin{equation}
    \label{eqn: rectangle cluster tail estimate}
    \sum_{\substack{M\in\cM^q(\Z^2)\\M\ni \fr_1\\\abs{M}\ge n}}\abs{\varphi(M)}n_{\fr_1}(M)\bz(M\setminus\mset{\fr_1})
    \le e\eps^{n-1},\quad\text{for every }n\ge 1,
\end{equation}
as a consequence of~\eqref{eqn: cluster bound with one anchor}.
From here on, we will assume that the condition~\eqref{eqn: convergence condition for expansion of q-oriented partition function} is satisfied.

\begin{lemma}
    \label{lem: lower bound on the size of clusters with anchors in distant tiles}
    If $T_1,T_2$ are distinct tiles and $M\in\cM^q(\Z^2)$ contains rectangles anchored in both $T_1$ and $T_2$, then $\abs{M}\ge 1+\ceil{\rmd_{\bL}(T_1,T_2)/2}$.
\end{lemma}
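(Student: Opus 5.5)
The idea is to follow a chain of pairwise overlapping rectangles inside $M$ from $T_1$ to $T_2$. We bound how far apart the anchors of two overlapping $q$-oriented rectangles can be, and compare the total displacement along the chain with the distance between the tiles.

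\emph{Step 1 (path in the cluster).} Since $M$ is a cluster, its incompatibility graph is connected. Here the vertices of this graph are the elements of $M$ counted with multiplicity, and edges join pairs with overlapping interiors. Otherwise $M$ could be split into two mutually compatible sub-multisets, contradicting the definition of a cluster. Pick $\fr\in M$ anchored at $u\in T_1$ and $\fr'\in M$ anchored at $v\in T_2$. There is then a path $\fr=\fr^{(0)},\fr^{(1)},\dots,\fr^{(p)}=\fr'$ of successively overlapping rectangles with $p\le\abs{M}-1$. Overlapping rectangles cannot coincide as multiset entries without being incompatible, but this causes no trouble: a repeated rectangle overlaps itself since $\sim$ is irreflexive, and in any case only the path length matters.

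\emph{Step 2 (one-step displacement).} Two $q$-oriented $\fl\times\fw$ rectangles with overlapping interiors have anchors whose coordinate differences are at most $\fl-1$ along the long axis and at most $\fw-1$ along the short axis. Hence the $\rmd$-distance between their anchors is at most $\fl-1\le 2\ell-1$, using $\ell=\ceil{\fl/2}$. By the triangle inequality, $\rmd(u,v)\le p(2\ell-1)$.

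\emph{Step 3 (tile separation).} Let $D:=\rmd_\bL(T_1,T_2)\ge1$. The tile anchors then differ by $D\ell$ in some coordinate. Since $u$ and $v$ each lie within $[0,\ell-1]$ of their tile anchors in every coordinate, we get $\rmd(u,v)\ge D\ell-(\ell-1)=(D-1)\ell+1$.

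\emph{Step 4 (arithmetic).} Combining Steps 2 and 3 gives $p(2\ell-1)\ge(D-1)\ell+1$. Suppose $2p\le D-1$. Then $p(2\ell-1)=2p\ell-p\le(D-1)\ell-p<(D-1)\ell+1$, a contradiction. Hence $2p\ge D$, so $p\ge\ceil{D/2}$ and $\abs{M}\ge1+p\ge1+\ceil{D/2}$.

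The only delicate point is Step 2: one must check that the overlap condition gives the bound $\fl-1$ rather than $\fl$, and that it is compared with $2\ell-1$ rather than $2\ell$. The strict slack of $\ell$ per step is exactly what makes the boundary case $2p=D-1$ fail in Step 4.
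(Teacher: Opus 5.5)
Your proof is correct and is essentially the paper's argument: a chain of pairwise overlapping rectangles through the cluster, combined with the per-step bound $\abs{x_1-x_2}\le\fl-1\le 2\ell$. The paper phrases this as each step moving at most $2$ in $\rmd_\bL$ between the tiles containing the anchors, whereas you sum anchor displacements and convert to tile distance at the end via Step 3. One small correction to your closing remark: the extra slack is not actually needed, since a per-step bound of $2\ell$ already gives $2p\ell\ge(D-1)\ell+1$ and hence $2p\ge D$, which is consistent with the paper using only $\fl-1\le 2\ell$.
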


\begin{proof}
    We prove the lemma for $q=1$ (horizontal rectangles); the case $q=-1$ is analogous.
    It is easy to check that the interiors of two horizontal rectangles anchored respectively at $(x_1,y_1)$ and $(x_2,y_2)$ intersect if and only if
    \begin{equation}
        \abs{x_1-x_2}\le \fl-1,\quad\abs{y_1-y_2}\le \fw-1.
    \end{equation}
    Since $\fl-1\le 2\ell$, the maximum $\rmd$-distance between tiles containing the anchors of two overlapping horizontal rectangles is $2$.
    Hence, $\abs{M}\ge 1+\rmd_{\bL}(T_1,T_2)/2$. 
    The proof is complete after rounding up the RHS to the nearest integer.
\end{proof}

Below, we record some additional implications of Thm.~\ref{thm: Ueltschi criterion} for the $q$-oriented partition function which will be used in the subsequent sections.

\begin{lemma}
    Let $\Lambda\subset\Z^2$ be bounded.
    The following estimates hold.
    \begin{enumerate}[label=(\roman*),ref=\thelemma(\roman*)]
        \item \label{itm: expansion of the log q-oriented partition function}
        $\bars[\big]{\log\Xi_\bz^q(\Lambda)-\sum_{r\in R^q(\Lambda)}\bz(r)}
        \le e^{1+\nu}z\eps\abs{\Lambda}$.
        \item \label{itm: expansion of the first-order derivative of the log q-oriented partition function} If $\fr_1$ is a rectangle, then
        $\bars*{\partial_1\log\Xi_\bz^q(\Lambda)-\indicator{R^q(\Lambda)}(\fr_1)}\le \indicator{R^q(\Lambda)}(\fr_1)e\eps$.
        \item \label{itm: expansion of the second-order derivative of the log q-oriented partition function} If $\fr_1,\fr_2$ are compatible rectangles, then
        $\bars*{\partial_1\partial_2\log\Xi_\bz^q(\Lambda)}
        \le \indicator{R^q(\Lambda)}(\fr_1,\fr_2)e^3\eps^{\fd_{12}-2}$.
    \end{enumerate}
\end{lemma}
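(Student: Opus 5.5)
The plan is to differentiate the cluster expansion of $\log\Xi_\bz^q(\Lambda)$, valid under~\eqref{eqn: convergence condition for expansion of q-oriented partition function}, term by term, and then bound the remainders using Thm.~\ref{thm: Ueltschi criterion} in the form~\eqref{eqn: rectangle cluster tail estimate} together with Lem.~\ref{lem: lower bound on the size of clusters with anchors in distant tiles}. Take $R_{\Lambda^c}=\emptyset$. The expansion reads
\[
\log\Xi_\bz^q(\Lambda)=\sum_{M\in\cM^q(\Lambda)}\varphi(M)\bz(M),
\]
and its absolute convergence, including of the differentiated series, follows from~\eqref{eqn: cluster bound with one anchor} and~\eqref{eqn: cluster bound with two distinct anchors}. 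The weights $e^{b}$ leave room for this. Consequently $\partial_i$ may be applied term by term, and $\partial_i\bz(M)=n_{\fr_i}(M)\bz(M\setminus\mset{\fr_i})$.

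For (i), I split the sum into clusters with $\abs{M}=1$ and clusters with $\abs{M}\ge 2$. The singletons have $\varphi=1$, so they give exactly $\sum_{r\in R^q(\Lambda)}\bz(r)$. For the remaining clusters, I pick a distinguished polymer by writing $1=\sum_{r\in M}n_r(M)/\abs{M}\le\sum_{r}n_r(M)/2$. This gives
\[
\sum_{\abs{M}\ge2}\abs{\varphi(M)}\bz(M)\le\frac12\sum_{r\in R^q(\Lambda)}\bz(r)\sum_{M\ni r,\abs{M}\ge2}\abs{\varphi(M)}n_r(M)\bz(M\setminus\mset{r}).
\]
By~\eqref{eqn: rectangle cluster tail estimate} with $n=2$, this is at most $\tfrac12\abs{\Lambda}e^{\nu/n}z\cdot e\eps\le e^{1+\nu}z\eps\abs{\Lambda}$. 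Here I bound $\bz(r)\le e^{\nu}z$, and the number of $q$-oriented rectangles anchored in $\Lambda$ is $\abs{\Lambda}$. I could keep the factor $\tfrac12$ or drop it.

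For (ii), if $\fr_1\notin R^q(\Lambda)$ the derivative vanishes. Otherwise, the term $M=\mset{\fr_1}$ contributes $1$. The remaining terms are $\sum_{M\ni\fr_1,\abs{M}\ge2}\varphi(M)n_{\fr_1}(M)\bz(M\setminus\mset{\fr_1})$, which is bounded by $e\eps$ directly from~\eqref{eqn: rectangle cluster tail estimate} with $n=2$.

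For (iii), we get $\partial_1\partial_2\log\Xi=\sum_{M\supset\mset{\fr_1,\fr_2}}\varphi(M)n_{\fr_1}(M)n_{\fr_2}(M)\bz(M\setminus\mset{\fr_1,\fr_2})$, which is nonzero only when both rectangles lie in $R^q(\Lambda)$. Compatibility gives $\fr_1\ne\fr_2$. Let $D$ be the number of tiles separating their anchors. Lem.~\ref{lem: lower bound on the size of clusters with anchors in distant tiles} shows $\abs{M}\ge N:=1+\ceil{D/2}$, and I expect $\fd_{12}$ from Appx.~\ref{appx: derived quantities} to be defined so that $N\ge\fd_{12}-1$, or similar. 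Since compatible rectangles are not adjacent in the polymer sense, $M\setminus\mset{\fr_1,\fr_2}$ is nonempty.

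To extract $\eps^{N-2}$, I use the weight-shifting trick. By Thm.~\ref{thm: Ueltschi criterion} with $a\equiv1$ and $b$, each polymer in $M\setminus\mset{\fr_1,\fr_2}$ carries a spare factor $e^{-b}=\eps$ relative to the bounded sum~\eqref{eqn: cluster bound with two distinct anchors}. Hence
\[
\sum_{\abs{M}\ge N}\abs{\varphi(M)}n_1n_2\bz(M\setminus\mset{\fr_1,\fr_2})\le\eps^{N-2}\sum_{M}\abs{\varphi(M)}n_1n_2\prod_{A}e^{b}\bz(A)\le\eps^{N-2}e^{3}(1+0),
\]
using $e^{\frac32(1+1)}=e^3$ and $\fr_1\sim\fr_2$.

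The main obstacle is purely bookkeeping. I need to match $N-2$ with the exponent $\fd_{12}-2$ through the definition of $\fd_{12}$. I also need to handle the factor $e^{\nu/n}$ from the enlarged fugacities, which is already absorbed in the definition of $\eps$ and $b$.
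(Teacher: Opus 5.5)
Your proposal is correct and follows the paper's own proof: for (i) and (ii) you isolate the singleton clusters and apply the tail estimate~\eqref{eqn: rectangle cluster tail estimate}, and for (iii) you combine Lem.~\ref{lem: lower bound on the size of clusters with anchors in distant tiles} with~\eqref{eqn: cluster bound with two distinct anchors}, extracting $\eps^{|M|-2}$ as you describe. The bookkeeping you flag works out: the paper defines $\fd_{12}=\lceil\rmd_\bL(\fT_1,\fT_2)/2\rceil+1$, so your $N$ equals $\fd_{12}$ exactly, which is what the exponent $\fd_{12}-2$ requires (a bound of only $N\ge\fd_{12}-1$ would have lost a factor of $\eps$).
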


\begin{proof}[Proof of~\ref{itm: expansion of the log q-oriented partition function}]
    Isolating the one-rectangle clusters in the cluster expansion yields
    \begin{equation}
        \log \Xi_\bz^q(\Lambda)=\sum_{r\in R^q(\Lambda)}\bz(r)+\sum_{\substack{M\in\cM^q(\Lambda)\\\abs{M}\ge 2}}\varphi(M)\bz(M),
    \end{equation}
    where
    \begin{equation}
        \sum_{\substack{M\in\cM^q(\Lambda)\\\abs{M}\ge 2}}\abs{\varphi(M)}\bz(M)
        \le\sum_{r\in R^q(\Lambda)}\bz(r)\sum_{\substack{M\in\cM^q(\Z^2)\\M\ni r\\\abs{M}\ge 2}}\abs{\varphi(M)}\bz(M\setminus\mset{r})
        \le e^{1+\nu}z\eps\abs{\Lambda}
    \end{equation}
    by the tail estimate~\eqref{eqn: rectangle cluster tail estimate}.
\end{proof}

\begin{proof}[Proof of~\ref{itm: expansion of the first-order derivative of the log q-oriented partition function}]
    Differentiating the cluster expansion term-by-term yields
    \begin{equation}
        \partial_1\log\Xi^q_\bz(\Lambda)
        =\indicator{R^q(\Lambda)}(\fr_1)\brackets[\Bigg]{1+\sum_{\substack{M\in\cM^{q}(\Lambda)\\M\ni \fr_1\\\abs{M}\ge 2}}\varphi(M)n_{\fr_1}(M)\bz(M\setminus\mset{\fr_1})}.
    \end{equation}
    The remaining series is bounded using~\eqref{eqn: rectangle cluster tail estimate}.
\end{proof}

\begin{proof}[Proof of~\ref{itm: expansion of the second-order derivative of the log q-oriented partition function}]
    Differentiating the cluster expansion term-by-term yields
    \begin{equation}
        \partial_1\partial_2\log\Xi^q_\bz(\Lambda)
        =\indicator{R^q(\Lambda)}(\fr_1,\fr_2)\sum_{\substack{M\in\cM^{q}(\Lambda)\\M\supset\mset{\fr_1,\fr_2}\\\abs{M}\ge\fd_{12}}}\varphi(M)n_{\fr_1}(M)n_{\fr_2}(M)\bz(M\setminus\mset{\fr_1,\fr_2}),
    \end{equation}
    where the constraint $\abs{M}\ge\fd_{12}$ follows from Lem.~\ref{lem: lower bound on the size of clusters with anchors in distant tiles}.
    The remaining series is bounded using~\eqref{eqn: cluster bound with two distinct anchors}.
\end{proof}

\subsection{Proof of Lemma~\ref{lem: one orientation per tile}}
\label{sec: one orientation per tile}

Let $r_h$ be a horizontal rectangle anchored at $(x_h,y_h)\in\Z^2$.
It is easy to check that the interior of a vertical rectangle $r_v$ intersects that of $r_h$ if and only if its anchor $(x_v,y_v)\in\Z^2$ satisfies
\begin{align}
    -\floor[\bigg]{\frac{\fl}{2}}-\ceil[\bigg]{\frac{\fw}{2}}+1&{}\le x_v-x_h\le \ceil[\bigg]{\frac{\fl}{2}}+\floor[\bigg]{\frac{\fw}{2}}-1, \\
    -\ceil[\bigg]{\frac{\fl}{2}}-\floor[\bigg]{\frac{\fw}{2}}+1&{}\le y_v-y_h\le \floor[\bigg]{\frac{\fl}{2}}+\ceil[\bigg]{\frac{\fw}{2}}-1.
\end{align}
Since
\begin{equation}
    \ell-1
    =\ceil[\bigg]{\frac{\fl}{2}}-1
    \le\min\set[\bigg]{\ceil[\bigg]{\frac{\fl}{2}}+\floor[\bigg]{\frac{\fw}{2}}-1,\floor[\bigg]{\frac{\fl}{2}}+\ceil[\bigg]{\frac{\fw}{2}}-1},
\end{equation}
if $\max\set{\abs{x_h-x_v},\abs{y_h-y_v}}\le\ell-1$, then $r_h$ and $r_v$ have intersecting interiors; in particular, this holds if $(x_h,y_h)$ and $(x_v,y_v)$ belong to the same tile.

\section{Contour and polymer representations}
\label{sec: contour and polymer representations}

In this section, we write the partition function of the hard rectangle model as a polymer partition function which will be studied via the cluster expansion method in Sec.~\ref{sec: peierls estimates for the polymer model}.
We achieve this via a two-scale coarse-graining scheme similar to that in~\cite{disertori2013nematic,disertori2020plate} by examining the homogeneity of the \textit{orientational} profiles with which the configurations are compatible, keeping in mind that typical configurations should be mostly compatible with one of the two constant profiles $\bq$, $q\in\set*{-1,1}$, in the nematic regime.

\subsection{Contours}
\label{sec: contours}

In this subsection, we describe the construction of contours for a given profile and a compatible configuration.

For $\psi\in\Psi$ and $q\in\set*{-1,1}$, a smoothing square $S$ is \textbf{$q$-correct} (with respect to $\psi$) if $\psi(v)=q$ for every tile $T_v$ contained in $S$ or an adjacent smoothing square; the smoothing square $S$ is \textbf{incorrect} if it is not $q$-correct for either $q$.

Let $\Lambda$ be a finite union of smoothing squares, $q\in\set*{-1,1}$, $\psi\in\Psi_\Lambda^q$, and $R\in\Omega(\Lambda,\psi)$.
A \textbf{contour} associated to the pair $(\psi,R)$ is a triple $\gamma=(\supp{\gamma},\psi_\gamma,R_\gamma)$, where $\supp{\gamma}$, called its \textbf{support}, is a (\textit{maximal}) connected component of incorrect smoothing squares, $\psi_\gamma$ is the restriction of $\psi$ to the set $\set*{v\in\Lambda\cap\bL\mid T_v\subset\supp{\gamma}}$, and $R_\gamma\subset R$ is the sub-configuration of rectangles anchored in $\supp{\gamma}$.
The connected components of $\R^2\setminus\supp{\gamma}$ are denoted by $\ext(\gamma),\intr_1(\gamma),\dots,\intr_{h_\gamma}(\gamma)$, where $\ext(\gamma)$ is the unique unbounded component, and $h_\gamma\ge 0$.
For each connected component $C$ of the internal $6\bL$-boundary of $\supp{\gamma}$, the restricted spin configuration $\psi_\gamma$ is constant on the set of anchors of tiles contained in $C$~\cite{timar2013boundary}.
In particular, if $C$ is adjacent to $\ext(\gamma)$, then this constant value is called the \textbf{type} of $\gamma$; and if $C$ is adjacent to $\intr_i(\gamma)$, this value is called the \textbf{type} of $\intr_i(\gamma)$ and denoted by $m_i(\gamma)$.
We write $c(\gamma):=\supp{\gamma}\cup_{j=1}^{h_\gamma}\intr_j(\gamma)$.

Let $C(\Lambda,q)$ denote the set of contours of type $q$ associated with any spin configuration $\psi\in\Psi_{\Lambda}^q$ and a compatible configuration $R\in\Omega(\Lambda,\psi)$.
Two contours $\gamma,\gamma'\in C(\Lambda,q)$ are \textbf{compatible} if their supports are not adjacent.
Moreover, $\gamma$ is said to be \textbf{external} to $\gamma'$ if $\supp{\gamma'}\subset\ext(\gamma)$.
Let $\cC(\Lambda,q)$ denote the collection of all sets of pairwise compatible contours from $C(\Lambda,q)$, and $\cC_\ext(\Lambda,q)\subset\cC(\Lambda,q)$ denote the subcollection of sets of mutually external contours.
Observe that the lattice spacing of $6\bL$ ensures that the configurations associated with compatible contours are compatible.

\subsection{Contour representation}
\label{sec: contour representation}

In this subsection, we express the partition function of the hard rectangle model as the partition function of a contour model with multi-body interactions.

We introduce the following notation.
Given $\Gamma\in\cC(\Lambda,q)$, we write $\supp{\Gamma}:=\cup_{\gamma\in\Gamma}\supp{\gamma}$ and $R_\Gamma^\ext:=\cup_{\gamma\in\Gamma}(R_\gamma\cap\bdin[6\bL]{c(\gamma)})$.
If $\Gamma\in\cC_\ext(\Lambda,q)$, we also write $\ext(\Gamma):=\cap_{\gamma\in\Gamma}\ext(\gamma)$.

We start by writing~\eqref{eqn: q-magnetized partition function} as a nested sum over profiles in $\Psi_\Lambda^q$ and their compatible configurations.
A complication that arises is that a tile with no rectangle anchored in it is compatible with every possible profile value.
To avoid overcounting, we associate a multiplicative factor of $-1$ to each instance of the profile value $0$~\cite[(3.5)]{disertori2013nematic}, which yields the identity
\begin{equation}
    \label{eqn: q-magnetized partition function rewritten}
    \Xi_\bz(\Lambda\mid q)=\sum_{\psi\in\Psi_{\Lambda}^q}\sum_{R\in\Omega(\Lambda,\psi)}(-1)^{\abs{\set*{\psi=0}\cap\Lambda}}\bz(R).
\end{equation}

Next, we isolate the external contours by writing
\begin{equation}
    \Xi_\bz(\Lambda\mid q)
    =\sum_{\Gamma\in\cC_\ext(\Lambda,q)}\Xi_\bz^q(\ext(\Gamma)\cap\Lambda\mid R_\Gamma^\ext)\prod_{\gamma\in\Gamma}\brackets[\Bigg]{(-1)^{\abs{\set*{\psi_\gamma=0}}}\bz(R_\gamma)\prod_{j=1}^{h_\gamma}\Xi_\bz(\intr_j(\gamma)\mid m_j(\gamma),R_\gamma)}.
\end{equation}
After inserting trivial identities, we obtain
\begin{equation}
    \label{eqn: q-magnetized partition function external contour representation}
    \frac{\Xi_\bz(\Lambda\mid q)}{\Xi_\bz^q(\Lambda)}
    =\sum_{\Gamma\in\cC_\ext(\Lambda,q)}\prod_{\gamma\in\Gamma}\brackets[\Bigg]{\zeta_q^0(\gamma;\bz)\prod_{j=1}^{h_\gamma}\frac{\Xi_\bz(\intr_j(\gamma)\mid q)}{\Xi_\bz^q(\intr_j(\gamma))}}e^{-W_0^\ext(\Gamma;\bz)},
\end{equation}
where 
\begin{equation}
    \label{eqn: bare weight of a contour}
    \zeta_q^0(\gamma;\bz)
    :=\frac{(-1)^{\abs{\set*{\psi_\gamma=0}}}\bz(R_\gamma)}{\Xi_\bz^q(\supp{\gamma})}\prod_{j=1}^{h_\gamma}\frac{\Xi_\bz(\intr_j(\gamma)\mid m_j(\gamma),R_\gamma)}{\Xi_\bz(\intr_j(\gamma)\mid q)}
\end{equation}
is the bare weight of the contour $\gamma$ at fugacity $\bz$, and
\begin{equation}
    e^{-W_0^\ext(\Gamma;\bz)}
    :=\frac{\Xi_\bz^q(\ext(\Gamma)\cap\Lambda\mid R_\Gamma^\ext)\prod_{\gamma\in\Gamma}\brackets[\big]{\Xi_\bz^q(\supp{\gamma})\prod_{j=1}^{h_\gamma}\Xi_\bz^q(\intr_j(\gamma))}}{\Xi_\bz^q(\Lambda)}.
\end{equation}

By iterating~\eqref{eqn: q-magnetized partition function external contour representation}, we obtain the advertised contour representation:
\begin{equation}
    \label{eqn: contour representation}
    \frac{\Xi_\bz(\Lambda\mid q)}{\Xi_\bz^q(\Lambda)}
    =\sum_{\Gamma\in\cC(\Lambda,q)}\brackets[\Bigg]{\prod_{\gamma\in\Gamma}\zeta_q^0(\gamma;\bz)}e^{-W_0(\Gamma;\bz)},
\end{equation}
where
\begin{equation}
    \label{eqn: multi-body interaction of contours}
    e^{-W_0(\Gamma;\bz)}:=\frac{\Xi_\bz^q(\Lambda\setminus\supp{\Gamma}\mid R_\Gamma^\ext)\prod_{\gamma\in\Gamma}\Xi_\bz^q(\supp{\gamma})}{\Xi_\bz^q(\Lambda)}
\end{equation}
is a multi-body interaction.

\subsection{Polymer representation}
\label{sec: polymer representation}

In this subsection, we rewrite the contour representation derived in Sec.~\ref{sec: contour representation} as a polymer partition function with hard-core interactions by evaluating the multi-body interaction~\eqref{eqn: multi-body interaction of contours} using the cluster expansion method.

Given $\Gamma\in\cC(\Lambda,q)$, let $P_\Lambda(\Gamma)$ denote the partition of $\Lambda$ into the supports $\supp{\gamma}$ for $\gamma\in\Gamma$ and the connected components of $\Lambda\setminus\supp{\Gamma}$.
Write
\begin{equation}
    e^{-W_0(\Gamma;\bz)}
    =\frac{\Xi_\bz^q(\Lambda\setminus\supp{\Gamma})\prod_{\gamma\in\Gamma}\Xi_\bz^q(\supp{\gamma})}{\Xi_\bz^q(\Lambda)}
    \cdot\frac{\Xi_\bz^q(\Lambda\setminus\supp{\Gamma}\mid R_\Gamma^\ext)}{\Xi_\bz^q(\Lambda\setminus\supp{\Gamma})}.
\end{equation}
Using Thm.~\ref{thm: Ueltschi criterion}, we expand
\begin{equation}
    \label{eqn: patching q-oriented regions}
    \begin{multlined}
        \log\frac{\Xi_\bz^q(\Lambda\setminus\supp{\Gamma})\prod_{\gamma\in\Gamma}\Xi_\bz^q(\supp{\gamma})}{\Xi_\bz^q(\Lambda)}
        \\
        =\sum_{M\in\cM^q(\Lambda\setminus\supp{\Gamma})}\varphi(M)\bz(M)
        +\sum_{\gamma\in\Gamma}\sum_{M\in\cM^q(\supp{\gamma})}\varphi(M)\bz(M)
        -\sum_{M\in\cM^q(\Lambda)}\varphi(M)\bz(M),
    \end{multlined}
\end{equation}
and
\begin{equation}
    \label{eqn: compatibility with rectangles in the outer layers}
    \log\frac{\Xi_\bz^q(\Lambda\setminus\supp{\Gamma}\mid R_\Gamma^\ext)}{\Xi_\bz^q(\Lambda\setminus\supp{\Gamma})}
    =\sum_{\substack{M\in\cM^q(\Lambda\setminus\supp{\Gamma})\\M\sim R_\Gamma^\ext}}\varphi(M)\bz(M)
    -\sum_{M\in\cM^q(\Lambda\setminus\supp{\Gamma})}\varphi(M)\bz(M).
\end{equation}
After cancellation, the only clusters that contribute to~\eqref{eqn: patching q-oriented regions} and~\eqref{eqn: compatibility with rectangles in the outer layers} are, respectively, those in $\cM^q(\Lambda)$ containing at least two rectangles anchored in different components of $P_\Lambda(\Gamma)$, and those confined within a single component of $\Lambda\setminus\supp{\Gamma}$ and incompatible with $R_\Gamma^\ext$.
Together, the contributing clusters are those $M\in\cM^q(\Lambda)$ for which there exists $\gamma\in\Gamma$ such that at least one of the following conditions holds:
\begin{enumerate}
    \item There exist rectangles $r_1,r_2\in M$ such that $r_1$ is anchored in $\supp{\gamma}$ and $r_2$ is anchored in $\Z^2\setminus\supp{\gamma}$.
    \item There exist rectangles $r\in M$ anchored in $\ext(\gamma)$ and $r'\in R_\gamma$ such that $r\not\sim r'$.
\end{enumerate}
Let $\chi_\gamma(M)$ be the indicator function that the pair $(M,\gamma)$ satisfies at least one of the above conditions.
Then,
\begin{equation}
    e^{-W_0(\Gamma;\bz)}=\exp\cbraces[\Bigg]{-\sum_{M\in\cM^q(\Lambda)}\varphi(M)\brackets[\Bigg]{1-\prod_{\gamma\in\Gamma}(1-\chi_\gamma(M))}\bz(M)}.
\end{equation}
After expanding the product and reorganizing, we rewrite the contour representation~\eqref{eqn: contour representation} as
\begin{equation}
    \frac{\Xi_\bz(\Lambda\mid q)}{\Xi_\bz^q(\Lambda)}
    =\sum_{\Gamma\in\cC(\Lambda,q)}\brackets[\Bigg]{\prod_{\gamma\in\Gamma}\zeta_q^{(\Lambda)}(\gamma;\bz)}e^{-W(\Gamma;\bz)},
\end{equation}
where
\begin{equation}
    \label{eqn: finite-volume contour weight}
    \zeta_q^{(\Lambda)}(\gamma;\bz)=\zeta_q^0(\gamma;\bz)\exp\cbraces[\Bigg]{-\sum_{M\in\cM^q(\Lambda)}\varphi(M)\chi_\gamma(M)\bz(M)},
\end{equation}
and
\begin{equation}
    \label{eqn: interaction-polymer factor between contours before partitioning clusters by minimal tile unions}
    e^{-W(\Gamma;\bz)}=\exp\cbraces[\Bigg]{\sum_{M\in\cM^q(\Lambda)}\varphi(M)\brackets[\Bigg]{\sum_{\substack{\Gamma'\subset\Gamma\\\abs{\Gamma'}\ge 2}}(-1)^{\abs{\Gamma'}}\prod_{\gamma\in\Gamma'} \chi_{\gamma}(M)}\bz(M)}.
\end{equation}

Given $Y\in\cT_\ast(\Lambda)$ and $M\in\cM^q(\Lambda)$, let $\pi_{Y}(M)$ be the indicator function that $Y$ is the smallest set in $\cT_\ast(\Lambda)$ (with respect to set containment) containing the \textit{interiors} of all the rectangles in $M$.
Inserting the identity $1=\sum_{Y\in\cT_\ast(\Lambda)}\pi_{Y}(M)$ into~\eqref{eqn: interaction-polymer factor between contours before partitioning clusters by minimal tile unions} and defining
\begin{equation}
    \label{eqn: finite-volume cluster interaction potential}
    V_\Gamma^{(\Lambda)}(Y;\bz):=\sum_{M\in\cM^q(\Lambda)}\varphi(M)\brackets[\Bigg]{\sum_{\substack{\Gamma'\subset\Gamma\\\abs{\Gamma'}\ge 2}}(-1)^{\abs{\Gamma'}}\prod_{\gamma\in\Gamma'} \chi_{\gamma}(M)}\pi_{Y}(M)\bz(M),
\end{equation}
we get that 
\begin{equation}
    e^{-W(\Gamma;\bz)}
    =\prod_{Y\in\cT_\ast(\Lambda)}\brackets*{\left(e^{V_\Gamma^{(\Lambda)}(Y;\bz)}-1\right)+1}
    =\sum_{\cY\in\cB_\ast(\Lambda)}\prod_{Y\in\cY}\left(e^{V_\Gamma^{(\Lambda)}(Y;\bz)}-1\right),
\end{equation}
and so
\begin{equation}
    \frac{\Xi_\bz(\Lambda\mid q)}{\Xi_\bz^q(\Lambda)}
    =\sum_{\Gamma\in\cC(\Lambda,q)}
    \brackets[\Bigg]{\prod_{\gamma\in\Gamma}\zeta_q^{(\Lambda)}(\gamma;\bz)}
    \brackets[\Bigg]{\sum_{\cY\in\cB_\ast(\Lambda)}\prod_{Y\in\cY}\left(e^{V_\Gamma^{(\Lambda)}(Y;\bz)}-1\right)}.
\end{equation}

To arrive at a polymer representation, we group together those $\Gamma\in\cC(\Lambda,q)$ and $\cY\in\cB_\ast(\Lambda)$ which share the same, \textit{connected} union of supports, $\supp{\Gamma}\cup\bigcup_{Y\in\cY}Y=X$.
Thus, a \textbf{polymer} is an element $X\in\cT_\ast(\Lambda)$ with weight given by
\begin{equation}
    \label{eqn: finite-volume polymer weight}
    K_q^{(\Lambda)}(X;\bz)=\sum_{\substack{X_1,X_2\in\cT(\Lambda)\\X_1\cup X_2=X}}
    \sum_{\substack{\Gamma\in\cC(\Lambda,q)\\\supp{\Gamma}=X_1}}
    \brackets[\Bigg]{\prod_{\substack{\gamma\in\Gamma}}\zeta_q^{(\Lambda)}(\gamma;\bz)}
    \brackets[\Bigg]{\sum_{\substack{\cY\in\cB_\ast(\Lambda)\\\cup_{Y\in\cY}Y=X_2}}\prod_{Y\in\cY}\left(e^{V_\Gamma^{(\Lambda)}(Y;\bz)}-1\right)},
\end{equation}
and two polymers are \textbf{compatible} if they are disconnected.
Thus, we obtain the advertised polymer representation:
\begin{equation}
    \label{eqn: polymer representation}
    \frac{\Xi_\bz(\Lambda\mid q)}{\Xi_\bz^q(\Lambda)}
    =\sum_{\cX\subset\cT_\ast(\Lambda)}\brackets[\Bigg]{\prod_{X\in\cX}K_q^{(\Lambda)}(X;\bz)}\brackets[\Bigg]{\prod_{\set*{X,X'}\subset\cX}\indicator{X\sim X'}}.
\end{equation}

\begin{remark}
    \label{rmk: non-one-dimensionality of the interaction between rectangles}
    Except when $\fw=1$, the interaction between rectangles within the $q$-oriented partition function is \textit{not} one-dimensional in the sense that it does \textit{not} factorize into a product of $q$-oriented partition functions over independent one-dimensional segments.
    To unify the treatment of all integer values of $\fw$, we have opted to adapt the general strategy of~\cite{disertori2020plate}, which is not tailored to the one-dimensional case like the one in~\cite{disertori2013nematic}.
\end{remark}

\section{Peierls estimates for the polymer model}
\label{sec: peierls estimates for the polymer model}

In this section, we apply the cluster expansion method to the polymer model derived in Sec.~\ref{sec: contour and polymer representations}.
For the expansion to converge, the polymer weights must satisfy a \textit{Peierls condition}, namely, they must be exponentially small in the volume of the polymer.
To this end, the recursive structure of the polymer representation, which traces back to the definition of the bare contour weight~\eqref{eqn: bare weight of a contour}, makes it natural to pursue an inductive study of the ratios of partition functions appearing in~\eqref{eqn: bare weight of a contour}, as is standard in applications of Pirogov--Sinai theory.
The main result of this section is as follows.

\begin{proposition}
    \label{prop: main}
    Suppose that there exist $z,\nu>0$ and an integer $n\ge 1$ such that $\bz$ takes values in $(e^{-\nu/n}z,e^{\nu/n}z)$, and $\bz\ne z$ for no more than $n$ inputs.
    For suitable $(\fw,\fl,z)$ (see Appx.~\ref{appx: convergence and closure conditions} for the list of conditions), there exist parameters $\tau,\prefactor{eqn: inductive bound on the first-order derivative of the log polymer partition function},\prefactor{eqn: inductive bound on the second-order derivative of the log polymer partition function},\spatial{eqn: inductive bound on the second-order derivative of the log polymer partition function}>0$ such that the following holds.
    Let $q\in\set{-1,1}$, $\fr_1,\fr_2$ be (compatible) rectangles such that $\fd_{12}\ge 6$, $\Lambda$ be a finite union of smoothing squares, and $\Delta$ be such that $\Lambda\setminus\bdin[2\bL]{\Lambda}\subset\Delta\subset\Lambda$. 
    Write $\bz_\Delta:=\indicator{R(\Delta)}\bz$.
    Then, the following estimates hold:
    \begin{equation}
        \label{eqn: inductive bound on the log flipping term}
        \abs{\frac{\Xi_{\bz_\Delta}(\Lambda\mid -q)}{\Xi_{\bz_\Delta}(\Lambda\mid q)}}
        \le e^{\tau\bars{\bdex[\bL]{\Lambda}\cap\bL}/2},
    \end{equation}
    \declareprefactor{eqn: inductive bound on the first-order derivative of the log polymer partition function}
    \begin{equation}
        \label{eqn: inductive bound on the first-order derivative of the log polymer partition function}
        \abs{\cD_1\log\frac{\Xi_{\bz_\Delta}(\Lambda\mid q)}{\Xi_{\bz_\Delta}^q(\Lambda)}}
        \le \indicator{R(\Delta)}(\fr_1)\prefactor{eqn: inductive bound on the first-order derivative of the log polymer partition function},
    \end{equation}
    and
    \declareprefactor{eqn: inductive bound on the second-order derivative of the log polymer partition function}
    \declarespatial{eqn: inductive bound on the second-order derivative of the log polymer partition function}
    \begin{equation}
        \label{eqn: inductive bound on the second-order derivative of the log polymer partition function}
        \abs{\cD_1\cD_2\log\frac{\Xi_{\bz_\Delta}(\Lambda\mid q)}{\Xi_{\bz_\Delta}^q(\Lambda)}}
        \le \indicator{R(\Delta)}(\fr_1,\fr_2)\prefactor{eqn: inductive bound on the second-order derivative of the log polymer partition function}e^{-\spatial{eqn: inductive bound on the second-order derivative of the log polymer partition function}(\fd_{12}-2)}.
    \end{equation}
\end{proposition}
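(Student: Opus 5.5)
The three estimates of Prop.~\ref{prop: main} are proved together by induction on the size of $\Lambda$, measured by the number of smoothing squares it contains, in the spirit of Pirogov--Sinai theory. The induction hypothesis is that all three estimates hold for every strictly smaller finite union of smoothing squares $\Lambda'$, in particular for the interiors $\intr_j(\gamma)$ of any contour $\gamma\in C(\Lambda,q)$, with the same constants $\tau$, $\prefactor{eqn: inductive bound on the first-order derivative of the log polymer partition function}$, $\prefactor{eqn: inductive bound on the second-order derivative of the log polymer partition function}$, $\spatial{eqn: inductive bound on the second-order derivative of the log polymer partition function}$, and with arbitrary admissible $\Delta'$. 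The base case consists of volumes too small to contain a contour whose profile is not identically $q$, where the polymer representation~\eqref{eqn: polymer representation} is trivial. The induction step follows three stages.

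\textbf{Stage 1 (Peierls bound).} For every contour $\gamma$ of type $q$, I aim to prove $\abs{\zeta_q^{(\Lambda)}(\gamma;\bz_\Delta)}\le e^{-\kappa\abs{\supp{\gamma}\cap 6\bL}}$ with $\kappa$ large. This bound has three ingredients.
\begin{itemize}
    \item The ratios $\Xi(\intr_j\mid m_j,R_\gamma)/\Xi(\intr_j\mid q)$ in~\eqref{eqn: bare weight of a contour}. When $m_j=q$, the boundary constraint costs at most $e^{O(z\ell^2)}$ per boundary tile, by the tail estimate~\eqref{eqn: rectangle cluster tail estimate}. When $m_j=-q$, I use the induction hypothesis~\eqref{eqn: inductive bound on the log flipping term}, paying $e^{\tau\abs{\bdex[\bL]{\intr_j}}/2}$, which is charged to $\abs{\supp{\gamma}}$.
    \item The correction factor $\exp\{-\sum_M\varphi(M)\chi_\gamma(M)\bz(M)\}$ in~\eqref{eqn: finite-volume contour weight}. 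This is bounded by $e^{C z\ell^2\abs{\supp{\gamma}\cap\bL}}$ using Lem.~\ref{itm: expansion of the log q-oriented partition function} together with~\eqref{eqn: cluster bound with one anchor}.
    \item The core estimate, on $\sum_{\psi_\gamma,R_\gamma}\bz(R_\gamma)/\Xi^q(\supp{\gamma})$. Every incorrect smoothing square contains either a tile with $\psi=0$, or two adjacent tiles of opposite orientation.
    \begin{itemize}
        \item An empty tile yields a factor $e^{-z\ell^2(1-O(\eps))}$ relative to $\Xi^q$, by Lem.~\ref{itm: expansion of the log q-oriented partition function}. This is where the lower bound $z\gtrsim 150000/\ell^2$ enters.
        \item A pair of opposite-orientation tiles is handled via Lem.~\ref{lem: one orientation per tile}: the excluded volume means the $-q$ configuration near an interface loses roughly a fraction of $z\ell^2$ in entropy compared with $\Xi^q$. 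Concretely, I bound the sum over configurations in mixed regions by products of $\Xi^{\pm}$ over single tiles and compare these with $\Xi^q$ using Lem.~\ref{itm: expansion of the log q-oriented partition function}. The non-one-dimensionality noted in Rem.~\ref{rmk: non-one-dimensionality of the interaction between rectangles} requires me to use the cluster expansion rather than explicit 1D formulas.
    \end{itemize}
\end{itemize}
Summing over the $3^{36}$ profile choices per smoothing square, the net gain is $e^{-c z\ell^2}$ per smoothing square, which beats all entropy factors once $z\ell^2$ is large.

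\textbf{Stage 2 (polymer weights).} I bound $V^{(\Lambda)}_\Gamma(Y)$ in~\eqref{eqn: finite-volume cluster interaction potential} by $\eps^{c\,\abs{Y\cap\bL}}$ for $Y\ne\emptyset$. A cluster $M$ with $\pi_Y(M)=1$ spans $Y$, so Lem.~\ref{lem: lower bound on the size of clusters with anchors in distant tiles} gives $\abs{M}\gtrsim \abs{Y\cap\bL}/2$, and~\eqref{eqn: rectangle cluster tail estimate} yields the decay. Combining this with Stage~1 gives $\abs{K_q^{(\Lambda)}(X)}\le e^{-\kappa'\abs{X\cap\bL}}$. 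The Ueltschi criterion~\eqref{eqn: Ueltschi criterion} is then verified with $a(X)\propto\abs{X\cap\bL}$, using a standard count of connected sets of tiles (at most $(e\cdot 8)^n$ lattice animals of size $n$ containing a given tile). Thm.~\ref{thm: Ueltschi criterion} then yields an absolutely convergent $\log$ of the ratio.

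\textbf{Stage 3 (the three estimates).}
\begin{itemize}
    \item \eqref{eqn: inductive bound on the log flipping term}: write $\Xi(\Lambda\mid -q)/\Xi(\Lambda\mid q)$ as a ratio of $\Xi^{-q}(\Lambda)/\Xi^q(\Lambda)$ and two polymer partition functions. By symmetry, the bulk terms of $\log\Xi^{\pm q}$ cancel up to boundary effects coming from $\Delta$ and the perturbed fugacities, contributing $O(z\ell^2)$ per boundary tile plus $\nu$. The polymer logarithms are bounded by $\sum_X\abs{K}e^{a}$, whose bulk parts again cancel by symmetry, so only boundary-touching polymers contribute. Choosing $\tau$ larger than these constants closes the induction.
    \item \eqref{eqn: inductive bound on the first-order derivative of the log polymer partition function}: apply $\cD_1$ termwise, using Cauchy estimates in $\bz(\fr_1)$ on a polydisc of radius $e^{\nu/n}$. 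This is why the statement allows perturbed fugacities.
    \item \eqref{eqn: inductive bound on the second-order derivative of the log polymer partition function}: here only clusters of polymers connecting $\fT_1$ and $\fT_2$ contribute, and their total size is at least $\fd_{12}-2$. This gives exponential decay via~\eqref{eqn: cluster bound with two distinct anchors}, with $\spatial{eqn: inductive bound on the second-order derivative of the log polymer partition function}\approx\min(\kappa',\log\eps^{-1})$.
\end{itemize}

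The main obstacle is Stage~1's core estimate, that is, quantifying the free-energy cost of an interface or empty tile against $\Xi^q(\supp{\gamma})$ while ensuring this cost dominates both the inductive boundary factors $e^{\tau\abs{\partial}/2}$ and the combinatorial entropy. This self-consistency, $\tau$ needing to be both large enough for Stage~3 and small relative to $\kappa$, generates the explicit conditions (Appx.~\ref{appx: convergence and closure conditions}) and ultimately the requirement $\fk\ge10^{72}$.
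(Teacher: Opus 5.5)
Stages 1--2 follow the paper's outline, apart from a point noted at the end. The cancellation for $\log\Xi^{\pm q}$ in Stage~3 is also fine, because the rectangle cluster expansion converges unconditionally in infinite volume. The gap is the analogous claim for the polymer part, that its ``bulk parts cancel by symmetry, so only boundary-touching polymers contribute''.

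First, the finite-volume weights $K_q^{(\Lambda)}(X;z_\Delta)$ of polymers far from the boundary are neither translation invariant nor symmetric in $q$. They depend on $\Lambda$ and $\Delta$ through the sums over $\cM^q(\Lambda)$ in $\zeta_q^{(\Lambda)}$ and $V_\Gamma^{(\Lambda)}$. You must therefore compare them with $\Lambda$-independent weights and show the corrections decay like $e^{-\lambda\rmd_\bL(X,\bdin[2\bL]{\Lambda})}$ (Prop.~\ref{itm: bound on the finite-volume correction to the polymer weight}); only then do they sum to a boundary term.

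Second, and more seriously, the symmetric bulk quantity is a sum over all clusters of polymers in $\Z^2$ through a given tile. It therefore needs a Peierls bound for $K_q(X)$ with $X$ arbitrarily large. Such weights contain interior ratios of regions larger than $\Lambda$, and an induction on volume says nothing about them. This is exactly the circularity the paper breaks with truncated weights:
\begin{itemize}
\item each interior ratio in~\eqref{eqn: truncated bare weight of a contour} is capped at $e^{\tau\bars{\bdex[\bL]{\intr_j(\gamma)}\cap\bL}}$, so truncated infinite-volume weights obey the Peierls bound unconditionally;
\item the induction runs over the order of the region rather than its size;
\item Lem.~\ref{lem: recovery of true weights} shows the truncation is inactive in $\Lambda$.
\end{itemize}
Without this device, or a substitute that never uses weights of polymers larger than $\Lambda$, your closure of~\eqref{eqn: inductive bound on the log flipping term} is not justified.

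The derivative bounds have a second gap. Cauchy estimates need analyticity and bounds on a complex polydisc. Your induction hypothesis, like the proposition, only concerns real fugacities in $(e^{-\nu/n}z,e^{\nu/n}z)$, so Stages 1--3 would have to be redone for complex $\bz$. Even then, a disc of radius $(e^{\nu/n}-1)z$ gives a Cauchy constant of order $n/\nu$, which is useless since the application lets $\nu\to0$ with $n$ arbitrary.

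The paper never uses Cauchy estimates. It uses the real window only to pass to constant fugacity, for example in the Taylor step~\eqref{eqn: bound on the error of fugacity homogenization in the log flipping term}, and it differentiates explicitly. The quantity $\cD_1\log\zeta_q^{(\Lambda)}(\gamma;\bz_\Delta)$ contains derivatives of the interior ratios, and these are controlled by the derivative bounds on lower-order regions (Lems.~\ref{lem: bound on the first-order derivative of the contour weight} and~\ref{lem: bound on the second-order derivative of the contour weight}). That is why the three estimates are proved jointly: within the inductive step the derivative bounds are established first and then fed into the flip bound.

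Relatedly, $K_q(X)$ depends on $\bz(\fr_1)$ whenever $\fr_1\in R(c(X))$. So in~\eqref{eqn: inductive bound on the second-order derivative of the log polymer partition function} polymers surrounding $\fT_1,\fT_2$ contribute, not only clusters connecting them. Their decay comes from trading half the Peierls penalty via $\abs{X\cap\bL}\ge 8\rmd_\bL(\fT_1,X)$.

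Finally, in Stage~1 your concrete recipe for opposite-orientation tiles, bounding by products of single-tile $\Xi^{\pm}$, gains nothing, since $\Xi_z^{1}(T)=\Xi_z^{-1}(T)$ by the diagonal reflection. The gain comes from applying Lem.~\ref{lem: one orientation per tile} to the half-tiles adjacent to the common edge of a domino. This forces one of them to be empty, at a cost of roughly $e^{-z\ell^2/2}$. It must be combined with a partition of $\supp{\gamma}$ into empty tiles and disjoint such dominoes.
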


\subsection{Setup for the inductive proof}

\subsubsection{Truncated weights}
\label{sec: truncated weights}

The main tool for studying the ratio of partition functions in~\eqref{eqn: inductive bound on the log flipping term} will be the cluster expansion of the polymer model in $\Lambda$.
This creates a \textit{catch-22 situation} where, to verify the Peierls condition for the polymer weights, we need control of the ratio of partition functions, but to control the ratio of partition functions, we need the Peierls condition for the polymer weights to know that the cluster expansion method applies~\cite[Sec.~10.4.2]{presutti2009scaling}.
To break this circularity, we follow the approach of~\cite[Sec.~10.5.1]{presutti2009scaling} and introduce \textit{truncated} weights, as follows.

If $\Lambda$ is a finite union of smoothing squares, define the \textbf{order} of $\Lambda$ to be the largest integer $n\ge 0$ for which there exist $\set{\gamma_1,\dots,\gamma_n}\subset C(\Lambda,q)$ such that $\supp{\gamma_{i+1}}\subset c(\gamma_i)$ for $1\le i\le n-1$.
Define the \textbf{order} of a contour $\gamma\in C(\Z^2,q)$ to be the integer $n\ge 1$ such that $n-1$ is the maximal order of $\intr_j(\gamma)$ for $1\le j\le h_\gamma$; in particular, the order of $\gamma$ is set to $1$ if $h_\gamma=0$.

Recalling~\eqref{eqn: bare weight of a contour} and~\eqref{eqn: polymer representation}, we define the truncated bare contour weight by
\begin{equation}
    \label{eqn: truncated bare weight of a contour}
    \begin{multlined}
        \hat{\zeta}_q^0(\gamma;\bz):=\frac{(-1)^{\abs{\set*{\psi_\gamma=0}}}\bz(R_\gamma)}{\Xi_\bz^q(\supp{\gamma})}
        \\
        \times \prod_{j=1}^{h_\gamma}\min\set[\Bigg]{\frac{\Xi_\bz(\intr_j(\gamma)\mid m_j(\gamma),R_\gamma)}{\Xi_\bz(\intr_j(\gamma)\mid m_j(\gamma))}\frac{\hat{\Xi}_\bz(\intr_j(\gamma)\mid m_j(\gamma))}{\hat{\Xi}_\bz(\intr_j(\gamma)\mid q)},e^{\tau\bars{\bdex[\bL]{\intr_j(\gamma)}\cap\bL}}},
    \end{multlined}
\end{equation}
and the partition function in $\Lambda$ with $q$-boundary conditions and truncated polymer weights by
\begin{equation}
    \label{eqn: truncated partition function}
    \hat{\Xi}_\bz(\Lambda\mid q):=\Xi_\bz^q(\Lambda)\sum_{\cX\subset\cT_\ast(\Lambda)}\brackets[\Bigg]{\prod_{X\in\cX}\hat{K}_q^{(\Lambda)}(X;\bz)}\brackets[\Bigg]{\prod_{\set*{X,X'}\subset\cX}\indicator{X\sim X'}},
\end{equation}
where $\tau>0$ is a parameter to be chosen, and the truncated polymer weight $\hat{K}_q^{(\Lambda)}$ is constructed by inserting $\hat{\zeta}_q^0$ into~\eqref{eqn: finite-volume contour weight} to obtain the truncated contour weight $\hat{\zeta}_q^{(\Lambda)}$ and then inserting the latter into~\eqref{eqn: finite-volume polymer weight}.
The definitions~\eqref{eqn: truncated bare weight of a contour} and~\eqref{eqn: truncated partition function} should be understood \textit{inductively} in the order of the contours and regions~\cite[Thm.~10.5.1.2]{presutti2009scaling}: regions of order $0$ do not contain any contours, so $\hat{\Xi}_\bz=\Xi_\bz$, while regions of order $n\ge 1$ contain contours whose interiors have order at most $n-1$, which have been defined by induction.

We note the following simple criterion for the recovery of the true weights from the truncated weights, which can also be proved by exploiting the inductive structure of the above definitions; see~\cite[Thm.~10.5.2.1]{presutti2009scaling} for details.

\begin{lemma}[Recovery of true weights]
    \label{lem: recovery of true weights}
    Let $n\ge 0$.
    Suppose that, for each $q\in\set{-1,1}$ and all regions $\Delta$ of order at most $n$, it holds that
    \begin{equation}
        \label{eqn: recovery condition}
        \frac{\hat{\Xi}_\bz(\Delta\mid -q)}{\hat{\Xi}_\bz(\Delta\mid q)}\le e^{\tau\bars{\bdex[\bL]{\Delta}\cap\bL}/2}.
    \end{equation}
    Then, for each $q\in\set{-1,1}$ and all regions $\Lambda$ and contours $\gamma\in C(\Z^2,q)$ of order $n+1$, the truncation is inactive in the sense that $\hat{\Xi}_\bz(\Lambda\mid q)=\Xi_\bz(\Lambda\mid q)$ and $\hat{\zeta}_q^0(\gamma;\bz)=\zeta_q^0(\gamma;\bz)$.
\end{lemma}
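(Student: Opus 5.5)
The plan is to argue by strong induction on the order, following the scheme of~\cite[Thm.~10.5.2.1]{presutti2009scaling}, and to prove a slightly stronger statement along the way: under the hypothesis, for every $m\le n+1$, every region $\Lambda$ of order $m$ satisfies $\hat{\Xi}_\bz(\Lambda\mid q)=\Xi_\bz(\Lambda\mid q)$, and every contour of order $m$ satisfies $\hat{\zeta}_q^0(\gamma;\bz)=\zeta_q^0(\gamma;\bz)$. The base case $m=0$ for regions is built into the definition, since a region of order $0$ contains no contours and hence $\hat{\Xi}_\bz=\Xi_\bz$.

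The first step is the contour statement. Let $\gamma$ have order $m\le n+1$. By definition each interior $\intr_j(\gamma)$ then has order at most $m-1\le n$, so by induction $\hat{\Xi}_\bz(\intr_j(\gamma)\mid q')=\Xi_\bz(\intr_j(\gamma)\mid q')$ for both $q'$. In the $j$-th factor of~\eqref{eqn: truncated bare weight of a contour}, the first entry of the minimum becomes
\begin{equation*}
    \frac{\Xi_\bz(\intr_j(\gamma)\mid m_j(\gamma),R_\gamma)}{\Xi_\bz(\intr_j(\gamma)\mid m_j(\gamma))}\cdot\frac{\hat{\Xi}_\bz(\intr_j(\gamma)\mid m_j(\gamma))}{\hat{\Xi}_\bz(\intr_j(\gamma)\mid q)}.
\end{equation*}
I then bound the two ratios separately. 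The second ratio is either $1$ (when $m_j=q$) or, by~\eqref{eqn: recovery condition} applied to $\Delta=\intr_j(\gamma)$, at most $e^{\tau\bars{\bdex[\bL]{\intr_j(\gamma)}\cap\bL}/2}$. For the first ratio, note that $\Xi_\bz(\cdot\mid m_j,R_\gamma)$ is a sub-sum of $\Xi_\bz(\cdot\mid m_j)$ with nonnegative terms in the original definition~\eqref{eqn: q-magnetized partition function}, where configurations rather than signed profiles are summed. Hence it lies in $[0,1]$. The product is therefore at most $e^{\tau|\cdot|/2}\le e^{\tau|\cdot|}$, so the minimum selects the first entry. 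Rewriting the product with the equal true partition functions gives exactly the factor in~\eqref{eqn: bare weight of a contour}, and so $\hat{\zeta}_q^0=\zeta_q^0$.

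The second step is the region statement. For a region $\Lambda$ of order $m\le n+1$, every contour in $C(\Lambda,q)$ has order at most $m$: a chain of nested contours inside some $\intr_j(\gamma)$ extends by $\gamma$ to a chain in $\Lambda$. By the first step all their bare weights agree with the true ones. Since $\hat{\zeta}_q^{(\Lambda)}$ and $\hat K_q^{(\Lambda)}$ are obtained from $\hat\zeta_q^0$ by the same formulas~\eqref{eqn: finite-volume contour weight} and~\eqref{eqn: finite-volume polymer weight}, which do not involve truncation, we get $\hat K_q^{(\Lambda)}=K_q^{(\Lambda)}$. Comparing~\eqref{eqn: truncated partition function} with the polymer representation~\eqref{eqn: polymer representation} then yields $\hat{\Xi}_\bz(\Lambda\mid q)=\Xi_\bz(\Lambda\mid q)$.

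I expect the main obstacle to be the ordering bookkeeping rather than any estimate. The order of a contour is defined from its interiors, while the order of a region is defined from chains, and the induction has to be arranged so that the hypothesis, which applies only to regions of order $\le n$, is invoked only on interiors, and region equality at order $m$ uses contour equality at order $\le m$. A further point to check is positivity of the ratio $\Xi(\cdot\mid m_j,R_\gamma)/\Xi(\cdot\mid m_j)$. This is where~\eqref{eqn: q-magnetized partition function}, as opposed to the signed rewriting~\eqref{eqn: q-magnetized partition function rewritten}, is used, and I would confirm it for general fugacities $\bz\ge0$.
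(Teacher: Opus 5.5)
Your proposal is correct and follows the route the paper intends. The paper gives no proof of its own and defers to the inductive argument of \cite[Thm.~10.5.2.1]{presutti2009scaling}, which is the strong induction on the order that you carry out: the bound $\Xi_\bz(\cdot\mid m_j,R_\gamma)\le\Xi_\bz(\cdot\mid m_j)$ together with the hypothesis on the interiors shows that the minimum in \eqref{eqn: truncated bare weight of a contour} selects its first entry, and the resulting equality of bare weights then carries over unchanged to $\hat K_q^{(\Lambda)}$ and $\hat{\Xi}_\bz(\Lambda\mid q)$.
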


Hence, our goal has shifted from proving~\eqref{eqn: inductive bound on the log flipping term} to verifying the recovery condition in Lem.~\ref{lem: recovery of true weights}.

\subsubsection{Infinite-volume weights and potentials}

In the study of the ratio of partition functions in~\eqref{eqn: recovery condition}, it will be convenient to work with polymer weights that do not depend on the specific choice of the region $\Lambda$.
Hence, in analogy to~\eqref{eqn: finite-volume contour weight}, \eqref{eqn: finite-volume cluster interaction potential}, and~\eqref{eqn: finite-volume polymer weight}, we define the infinite-volume contour weight, contour-interaction potential, and polymer weight as
\begin{equation}
    \label{eqn: infinite-volume contour weight}
    \zeta_q(\gamma;\bz):=\zeta_q^0(\gamma;\bz)\exp\cbraces[\Bigg]{-\sum_{M\in\cM^q(\Z^2)}\varphi(M)\chi_\gamma(M)\bz(M)},
\end{equation}
\begin{equation}
    V_\Gamma(Y;\bz):=\sum_{M\in\cM^q(\Z^2)}\varphi(M)\brackets[\Bigg]{\sum_{\substack{\Gamma'\subset\Gamma\\\abs{\Gamma'}\ge 2}}(-1)^{\abs{\Gamma'}}\prod_{\gamma\in\Gamma'} \chi_{\gamma}(M)}\pi_{Y}(M)\bz(M),
\end{equation}
and
\begin{equation}
    \label{eqn: infinite-volume polymer weight}
    K_q(X;\bz):=\sum_{\substack{X_1,X_2\in\cT(\Z^2)\\X_1\cup X_2=X}}
    \sum_{\substack{\Gamma\in\cC(\Z^2,q)\\\supp{\Gamma}=X_1}}
    \brackets[\Bigg]{\prod_{\substack{\gamma\in\Gamma}}\zeta_q(\gamma;\bz)}
    \brackets[\Bigg]{\sum_{\substack{\cY\in\cB_\ast(\Z^2)\\\cup_{Y\in\cY}Y=X_2}}\prod_{Y\in\cY}\left(e^{V_\Gamma(Y;\bz)}-1\right)},
\end{equation}
by replacing each instance of $\Lambda$ by $\Z^2$ in the respective finite-volume definitions.
Truncated infinite-volume contour weights $\hat{\zeta}_q$ and polymer weights $\hat{K}_q$ are defined as in Sec.~\ref{sec: truncated weights} by inserting~\eqref{eqn: truncated bare weight of a contour} into~\eqref{eqn: infinite-volume contour weight} and~\eqref{eqn: infinite-volume polymer weight}, respectively.

\subsubsection{Auxiliary estimates}
\label{sec: auxiliary estimates}

Here, we collect some elementary estimates that will assist in the inductive proof.
Specifically, Lem.~\ref{lem: lower bounds on the m's} will enter in Sec.~\ref{sec: bounds on the interaction-polymer factors}, Lem.~\ref{lem: covering sums} in Sec.~\ref{sec: bounds on the polymer weights}, and Lem.~\ref{lem: polymer sums} will be applied at several places in both Secs.~\ref{sec: bounds on weights and potentials} and~\ref{sec: proof of the main proposition}.

\begin{lemma}
    \label{lem: lower bounds on the m's}
    Let $Y\in\cT_\ast(\Z^2)$.
    \begin{enumerate}[label=(\roman*),ref=\thelemma(\roman*)]
        \item \label{itm: lower bound on m(delta)-1} 
        \declarefree{itm: lower bound on m(delta)-1}
        For $\free{itm: lower bound on m(delta)-1}\in(0,\frac{1}{16})$, $m(Y)-1\ge\free{itm: lower bound on m(delta)-1}\abs{Y\cap\bL}+1+\{(\frac{1}{5}-\free{itm: lower bound on m(delta)-1})\abs{Y\cap\bL}-\frac{11}{5}\}_+$.
        \item \label{itm: lower bound on m_12(delta)-2} 
        \declarefree{itm: lower bound on m_12(delta)-2} 
        Suppose that $\fd_{12}\ge 6$. 
        For $\free{itm: lower bound on m_12(delta)-2}\in(0,\frac{2}{31})$,
        \begin{equation}
            \fm_{12}(Y)-2\ge \free{itm: lower bound on m_12(delta)-2}\abs{Y\cap\bL}+2+\bigg\{\bigg(\frac{1}{5}-\free{itm: lower bound on m_12(delta)-2}\bigg)\abs{Y\cap\bL}-\frac{21}{5}\bigg\}_+.
        \end{equation}
        \item \label{itm: lower bound on m_Gamma^(Lambda)(delta)-1} 
        \declarefree{itm: lower bound on m_Gamma^(Lambda)(delta)-1}
        Let $\Lambda$ be a finite union of smoothing squares and $\Gamma\in\cC(\Lambda,q)$.
        Suppose that $Y\in\cT_\ast(\Lambda)$.
        For $\free{itm: lower bound on m_Gamma^(Lambda)(delta)-1}\in(0,\frac{1}{16})$,
        \begin{equation}
            \begin{multlined}
                m_\Gamma^{(\Lambda)}(Y)-1
                \ge \free{itm: lower bound on m_Gamma^(Lambda)(delta)-1}\abs{Y\cap\bL}
                +\frac{1}{6}(1-16\free{itm: lower bound on m_Gamma^(Lambda)(delta)-1})\rmd_\bL(\supp{\Gamma},\bdin[2\bL]{\Lambda})
                +1
                \\
                +\frac{1}{15}(2+\free{itm: lower bound on m_Gamma^(Lambda)(delta)-1})\{\abs{Y\cap\bL}-16\}_+.
            \end{multlined}
        \end{equation}
    \end{enumerate}
\end{lemma}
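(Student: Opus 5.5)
The plan is to reduce all three items to a single geometric covering estimate, and then to a short piecewise-linear case analysis. I recall that $m(Y)$, $\fm_{12}(Y)$ and $m_\Gamma^{(\Lambda)}(Y)$ denote the minimal cardinality of a cluster $M\in\cM^q$ with $\pi_Y(M)=1$, subject to the respective additional constraint. For $m$ this is nontriviality of the cluster. For $\fm_{12}$ it is containing $\fr_1,\fr_2$. For $m_\Gamma^{(\Lambda)}$ it is $\chi_\gamma(M)=1$ for at least two $\gamma\in\Gamma$, or a cluster reaching from $\supp{\Gamma}$ to the boundary layer. Each quantity is therefore bounded below by two independent mechanisms. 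One is a \emph{baseline} bound coming from the constraint itself. The other is a \emph{volume} bound coming from the fact that $M$ must cover $Y$.

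\textbf{Step 1 (covering estimate).} A single $q$-oriented rectangle has interior meeting at most $3\times 2$ tiles, because $\fl-1\le 2\ell$ and $\fw\le \ell$ (recall $\fl\ge 2\fw-1$). Two overlapping rectangles share at least one tile. Since $M$ is a cluster, I will order its rectangles along a spanning tree of the incompatibility graph, so each new rectangle adds at most $5$ new tiles. By minimality of $Y$ in $\pi_Y$, this gives $\abs{Y\cap\bL}\le 5\abs{M}+1$, i.e. $\abs{M}-1\ge \frac15\abs{Y\cap\bL}-\frac15$. This accounts for the slope $\frac15$ in all three statements.

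\textbf{Step 2 (baselines).} For (i), I will check from the definition of the relevant clusters that $\abs{M}-1\ge 1+c$ for a small-$Y$ regime. For (ii), $\fd_{12}\ge 6$ together with Lem.~\ref{lem: lower bound on the size of clusters with anchors in distant tiles} gives $\abs{M}\ge 1+\ceil{\fd_{12}/2}$, hence $\fm_{12}-2\ge 2$ plus a contribution from $\fd_{12}$. For (iii), I will use the same lemma applied to two contour supports, which are at $6\bL$-distance at least $2$, or to a support and $\bdin[2\bL]{\Lambda}$. This produces the additive term $\frac16\rmd_\bL(\supp{\Gamma},\bdin[2\bL]{\Lambda})$: a chain of overlapping rectangles advances at most $2$ tiles per step. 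In addition, the chain realizing this distance contributes tiles to $Y$.

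\textbf{Step 3 (interpolation).} Each target inequality has the form
\[
\text{baseline}+\theta\abs{Y\cap\bL}+\{(\tfrac15-\theta)\abs{Y\cap\bL}-c\}_+ ,
\]
and I will verify it by splitting according to the sign of the bracket. When the bracket is positive, the right-hand side equals $\frac15\abs{Y\cap\bL}-\frac{c}{5}\cdot 5+\dots$, which is exactly the volume bound of Step 1 after matching constants ($\frac{11}{5}$ and $\frac{21}{5}$ come from the $-\frac15$ and the baselines). When it is nonpositive, $\abs{Y\cap\bL}\le c/(\frac15-\theta)$, and I will need baseline $+\,\theta\abs{Y\cap\bL}$ to be dominated by the better of the two bounds. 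For this I will use the convex combination $\lambda\cdot(\text{volume})+(1-\lambda)\cdot(\text{baseline})$. The admissible range of $\theta$, namely $\theta<\frac1{16}$ or $\theta<\frac2{31}$, should emerge precisely as the condition that this combination works at the crossover point $\abs{Y\cap\bL}\approx 16$, as signalled by $\{\abs{Y\cap\bL}-16\}_+$ in (iii).

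\textbf{Main obstacle.} I expect the main obstacle to be (iii). There one must simultaneously account for the distance to the boundary layer and for the volume of $Y$, without double counting the tiles of the connecting chain. I would first handle it by splitting $M$ into a chain witnessing the distance and the remaining rectangles, and applying Step 1 separately to each piece. I would then take the weighted combination with weights $\frac16(1-16\theta)$ and $\frac1{15}(2+\theta)$, and check the constants.
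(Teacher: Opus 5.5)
There is a genuine gap, and it starts with what the symbols mean. In the paper, $m$, $\fm_{12}$ and $m_\Gamma^{(\Lambda)}$ are not minimal cluster cardinalities. They are the explicit quantities of Appx.~\ref{appx: derived quantities}:
\[
m(Y)=\max\{3,\lceil (y-1)/5\rceil\},\quad \fm_{12}(Y)=\max\{\fd_{12},m(Y)\},\quad m_\Gamma^{(\Lambda)}(Y)=\max\{\lceil d/2\rceil,m(Y)\},
\]
where $y:=\abs{Y\cap\bL}$ and $d:=\rmd_\bL(\supp{\Gamma},\bdin[2\bL]{\Lambda})$. The geometry you want to redo in Steps 1--2 is proved elsewhere, in Lem.~\ref{lem: contour-interaction potential auxiliary lemma} and Prop.~\ref{itm: bound on the finite-volume correction to the contour-interaction potential}; your Step 1 is essentially the covering argument used there. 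The lemma itself is pure arithmetic.

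More seriously, your reconstructed baselines are too weak in the small-$y$ regime. In (i), ``nontriviality'' only gives $\abs{M}\ge 2$, i.e.\ $m-1\ge 1$, while the right-hand side is $\theta y+1>1$; under that reading (i) would fail. You need $m(Y)\ge 3$, which geometrically comes from $M$ reaching the neighbourhoods of two distinct, well-separated contours. In (ii), Lem.~\ref{lem: lower bound on the size of clusters with anchors in distant tiles} gives $\abs{M}\ge\fd_{12}$ by the very definition of $\fd_{12}$, not $1+\lceil\fd_{12}/2\rceil$. Your version yields only $\fm_{12}-2\ge 2$, which loses to $\theta y+2$; the needed baseline is $\fm_{12}-2\ge\fd_{12}-2\ge 4$. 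Also, Step 1 gives $\abs{M}-1\ge (y-6)/5$, not $(y-1)/5$.

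Step 3, where the actual proof lives, is only sketched, and your guess about where the thresholds on $\theta$ come from is off. In (i) and (ii), if the bracket is nonpositive then $y\le 11/(1-5\theta)$, resp.\ $y\le 21/(1-5\theta)$. Hence $\theta y\le 11\theta/(1-5\theta)\le 1$ precisely because $\theta\le\frac1{16}$, resp.\ $\theta y\le 21\theta/(1-5\theta)<2$ precisely because $\theta<\frac2{31}$. So the right-hand side is at most $2\le m(Y)-1$, resp.\ less than $4\le\fd_{12}-2$. If the bracket is positive, the right-hand side collapses to $\frac{y-1}{5}-1$, resp.\ $\frac{y-1}{5}-2$. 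These are dominated by $m(Y)-1$, resp.\ $m(Y)-2\le\fm_{12}(Y)-2$. No interpolation and no crossover at $y\approx 16$ is involved.

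In (iii), the ``double counting'' obstacle does not exist. Since $m_\Gamma^{(\Lambda)}$ is a maximum, you never need an additive bound or a splitting of $M$ into a chain plus a remainder. For $y\le 16$, the right-hand side is at most $16\theta+(1-16\theta)\frac d6+1\le\max\{1,\frac d6\}+1\le\lceil d/2\rceil-1$, using $d\ge 5$. For $y>16$, it equals
\[
\frac{1-16\theta}{3}\cdot\frac{d-2}{2}+\frac{2+16\theta}{3}\cdot\frac{y-6}{5}\le\max\Big\{\Big\lceil\frac d2\Big\rceil,\Big\lceil\frac{y-1}{5}\Big\rceil\Big\}-1 .
\]
The weights to use are $\frac{1-16\theta}{3}$ and $\frac{2+16\theta}{3}$. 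They sum to $1$, which is what lets a maximum suffice, and $\theta<\frac1{16}$ is exactly their nonnegativity. The coefficients $\frac16(1-16\theta)$ and $\frac1{15}(2+\theta)$ in the statement are the output of this rewriting, not the weights.
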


\begin{proof}[Proof of~\ref{itm: lower bound on m(delta)-1}]
    If the positive part vanishes, then the RHS is bounded above by $2$ by the assumption that $0<\free{itm: lower bound on m(delta)-1}<\frac{1}{16}$.
    Otherwise, the RHS is bounded by $\frac{1}{5}\abs{Y\cap\bL}-\frac{6}{5}$.
\end{proof}

\begin{proof}[Proof of~\ref{itm: lower bound on m_12(delta)-2}]
    If the positive part vanishes, then the RHS is bounded above by $\frac{21\free{itm: lower bound on m_12(delta)-2}/5}{1/5-\free{itm: lower bound on m_12(delta)-2}}+2<4\le\fd_{12}-2$
    by the assumptions that $\fd_{12}\ge 6$ and $0<\free{itm: lower bound on m_12(delta)-2}<\frac{2}{31}$.
    Otherwise, the RHS is bounded by $\frac{1}{5}\abs{Y\cap\bL}-\frac{11}{5}\le m(Y)-2$.
\end{proof}

\begin{proof}[Proof of~\ref{itm: lower bound on m_Gamma^(Lambda)(delta)-1}]
    Introduce the shorthand notation $y:=\abs{Y\cap\bL}$ and $d:=\rmd_\bL(\supp{\Gamma},\bdin[2\bL]{\Lambda})$, and note that $d\ge 5$.
    If the positive part vanishes, then $y\le 16$, and the RHS is bounded above by
    \begin{equation}
        16\free{itm: lower bound on m_Gamma^(Lambda)(delta)-1}+(1-16\free{itm: lower bound on m_Gamma^(Lambda)(delta)-1})\frac{d}{6}+1
        \le \max\set[\bigg]{1,\frac{d}{6}}+1
        \le \ceil[\bigg]{\frac{d}{2}}-1.
    \end{equation}
    Otherwise, we rewrite the RHS as
    \begin{equation}
        \frac{1-16\free{itm: lower bound on m_Gamma^(Lambda)(delta)-1}}{3}\cdot \frac{d-2}{2}+\frac{2+16\free{itm: lower bound on m_Gamma^(Lambda)(delta)-1}}{3}\cdot\frac{y-6}{5}
        \le\max\set[\bigg]{\frac{d-2}{2},\frac{y-6}{5}}
        \le\max\set[\bigg]{\ceil[\bigg]{\frac{d}{2}},\ceil[\bigg]{\frac{y-1}{5}}}-1,
    \end{equation}
    as required.
\end{proof}

\begin{lemma}
    \label{lem: covering sums}
    Let $X\in\cT(\Z^2)$ and $\vareps_1,\vareps_2>0$.
    Let $X_1,X_2$ be summation variables over $\cT(\Z^2)$.
    \begin{enumerate}[label=(\roman*),ref=\thelemma(\roman*)]
        \item \label{itm: covering sum zeroth moment}
        $\sum_{X_1\cup X_2=X}\vareps_1^{\abs{X_1\cap\bL}}\vareps_2^{\abs{X_2\cap\bL}}=(\vareps_1+\vareps_2+\vareps_1\vareps_2)^{\abs{X\cap\bL}}$.
        \item \label{itm: covering sum first moment} $\sum_{X_1\cup X_2=X}\abs{X_1\cap\bL}\vareps_1^{\abs{X_1\cap\bL}}\vareps_2^{\abs{X_2\cap\bL}}=\abs{X\cap\bL}\vareps_1(1+\vareps_2)(\vareps_1+\vareps_2+\vareps_1\vareps_2)^{\abs{X\cap\bL}-1}$.
        \item \label{itm: covering sum second moment} Suppose that $\vareps_1+\vareps_2+\vareps_1\vareps_2\le 1$.
        Then,
        \begin{equation}
            \sum_{X_1\cup X_2=X}\abs{X_1\cap\bL}^2\vareps_1^{\abs{X_1\cap\bL}}\vareps_2^{\abs{X_2\cap\bL}}\le\abs{X\cap\bL}^2\vareps_1(1+\vareps_2)(\vareps_1+\vareps_2+\vareps_1\vareps_2)^{\abs{X\cap\bL}-2}.
        \end{equation}
    \end{enumerate}
\end{lemma}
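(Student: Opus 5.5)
The plan is to reduce all three statements to a product structure over the vertices of $X\cap\bL$. The summation is over pairs $(X_1,X_2)$ of finite tile unions with $X_1\cup X_2=X$. Since elements of $\cT(\Z^2)$ are finite unions of tiles, such a pair is determined by the subsets $A_i:=X_i\cap\bL$ of $V:=X\cap\bL$, subject to $A_1\cup A_2=V$ and nothing else. Each vertex $v\in V$ therefore independently lies in exactly one of three classes: $A_1\setminus A_2$, $A_2\setminus A_1$, or $A_1\cap A_2$. These classes contribute the factors $\vareps_1$, $\vareps_2$ and $\vareps_1\vareps_2$ to $\vareps_1^{\abs{A_1}}\vareps_2^{\abs{A_2}}$, respectively.

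For (i), I introduce a generating variable $t$ marking membership in $X_1$. Factoring over $V$ gives
\begin{equation*}
F(t):=\sum_{X_1\cup X_2=X}t^{\abs{X_1\cap\bL}}\vareps_1^{\abs{X_1\cap\bL}}\vareps_2^{\abs{X_2\cap\bL}}=\big(t\vareps_1+\vareps_2+t\vareps_1\vareps_2\big)^{N},
\end{equation*}
with $N:=\abs{X\cap\bL}$. Setting $t=1$ yields (i).

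For (ii), I apply $t\,\partial_t$ at $t=1$. Writing $s:=\vareps_1+\vareps_2+\vareps_1\vareps_2$, this gives $N\vareps_1(1+\vareps_2)s^{N-1}$, which is exactly the claimed identity.

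For (iii), I apply $(t\,\partial_t)^2$ at $t=1$. With $a:=\vareps_1(1+\vareps_2)$, this gives the exact value $Na\,s^{N-1}+N(N-1)a^2s^{N-2}$. It then remains to bound this by $N^2as^{N-2}$, which is equivalent to
\begin{equation*}
s+(N-1)a\le N.
\end{equation*}
This follows from $a\le s\le 1$, where $s\le 1$ is the hypothesis and $a\le s$ holds because $s-a=\vareps_2\ge 0$. Indeed, $s+(N-1)a\le Ns\le N$. The case $N=0$ is trivial since both sides vanish; for $N=1$ the negative exponent is harmless because $s>0$.

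The only point requiring care is the bijection between covering pairs and three-colourings of $V$. I will check this against the definition of $\cT$: one must confirm that $X_1$ and $X_2$ may range over arbitrary (possibly empty or disconnected) tile unions. They are only required to be finite, and they are automatically contained in $X$. Once this is settled, the rest is the routine generating-function computation above.
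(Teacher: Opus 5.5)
Your proof is correct and takes essentially the same route as the paper. Both arguments rest on the independent three-way choice for each tile of $X$ (covered by $X_1$ only, by $X_2$ only, or by both), and both bound the exact second moment $Nas^{N-1}+N(N-1)a^2s^{N-2}$ through $s+(N-1)a\le N$. The only difference is cosmetic: you extract moments by applying $t\partial_t$ to $(t\vareps_1+\vareps_2+t\vareps_1\vareps_2)^N$, while the paper expands $\abs{X_1\cap\bL}$ and $\abs{X_1\cap\bL}^2$ into sums of indicators. You also make explicit the step $a\le s$ (since $s-a=\vareps_2$), which the paper leaves implicit when it bounds the last factor by $\abs{X\cap\bL}$.
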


\begin{proof}[Proof of~\ref{itm: covering sum zeroth moment}]
    Each tile in $X$ may be covered by only $X_1$, only $X_2$, or both, which correspond to the contributions $\vareps_1$, $\vareps_2$, and $\vareps_1\vareps_2$.
    Multiplying over all tiles in $X$ completes the proof.
\end{proof}

\begin{proof}[Proof of~\ref{itm: covering sum first moment}]
    Inserting the identity $\abs{X_1\cap\bL}=\sum_{v\in X\cap\bL}\indicator{X_1}(v)$ into the LHS, we get
    \begin{equation}
        \sum_{v\in X\cap\bL}\sum_{\substack{X_1\cup X_2=X\\X_1\ni v}}\vareps_1^{\abs{X_1\cap\bL}}\vareps_2^{\abs{X_2\cap\bL}}.
    \end{equation}
    To evaluate the inner sum, observe that $T_v$ may be covered by only $X_1$ or by both $X_1$ and $X_2$, which correspond to the contributions $\vareps_1$ and $\vareps_1\vareps_2$, respectively.
    The remaining tiles in $X$ are treated as in the proof of Lem.~\ref{itm: covering sum zeroth moment}.
\end{proof}

\begin{proof}[Proof of~\ref{itm: covering sum second moment}]
    Inserting the identity 
    \begin{equation}
        \abs{X_1\cap\bL}^2
        =\sum_{v\in X\cap\bL}\indicator{X_1}(v)+2\sum_{\set{v,u}\subset X\cap\bL}\indicator{X_1}(v,u)
    \end{equation}
    into the LHS and following the same reasoning as in the proof of Lem.~\ref{itm: covering sum first moment}, we find that the LHS is equal to
    \begin{equation}
        \abs{X\cap\bL}\vareps_1(1+\vareps_2)(\vareps_1+\vareps_2+\vareps_1\vareps_2)^{\abs{X\cap\bL}-2}\brackets[\Big]{(\vareps_1+\vareps_2+\vareps_1\vareps_2)+(\abs{X\cap\bL}-1)\vareps_1(1+\vareps_2)}.
    \end{equation}
    Using the assumption that $\vareps_1+\vareps_2+\vareps_1\vareps_2\le 1$, we bound the last factor by $\abs{X\cap\bL}$, which completes the proof.
\end{proof}

\begin{lemma}
    \label{lem: polymer sums}
    Let $v\in\bL$, $Y\in\cT(\Z^2)$, $\delta,\xi\in[0,1)$, $a,b,C>0$, and $n,k\ge 0$ be integers.
    Let $X,X_1,X_2$ be summation variables over $\cT_\ast(\Z^2)$.
    Except in~\ref{itm: incompatible polymer sum with piecewise linear decay}, we assume that $\vareps\in[0,\frac{1}{64})$.
    \begin{enumerate}[label=(\roman*),ref=\thelemma(\roman*)]
        \item \label{itm: anchored polymer sum} $\sum_{X\supset T_v}\abs{X\cap\bL}^n \vareps^{\abs{X\cap\bL}}\le c_n(\vareps)$.
        \item \label{itm: incompatible polymer sum} $\sum_{X\not\sim Y}\abs{X\cap\bL}^n \vareps^{\abs{X\cap\bL}}\le 9c_n(\vareps)\abs{Y\cap\bL}$.
        \item \label{itm: polymer sum with distance decay from one point} $\sum_{X}\abs{X\cap\bL}^n \vareps^{\abs{X\cap\bL}}\delta^{\{\ceil{\rmd_\bL(X,T_u)/2}-k\}_+}\le c_n(\vareps)s_k(\delta)$.
        \item \label{itm: polymer sum with distance decay from region} $\sum_{X}\abs{X\cap\bL}^n \vareps^{\abs{X\cap\bL}}\delta^{\rmd_\bL(X,Y)}\le c_n(\vareps)s(\delta)\abs{Y\cap\bL}$.
        \item \label{itm: polymer sum with combined distance decay from two points} 
        \declarefree{itm: polymer sum with combined distance decay from two points}
        For $\free{itm: polymer sum with combined distance decay from two points}\in(0,1)$,
        \begin{equation}
            \sum_{X}\abs{X\cap\bL}^n \vareps^{\abs{X\cap\bL}}\delta^{\fn_{12}(X)-2}\le \delta^{(1-\free{itm: polymer sum with combined distance decay from two points})(\fd_{12}-2)}c_n(\vareps)s_2(\delta^{\free{itm: polymer sum with combined distance decay from two points}}).
        \end{equation}
        \item \label{itm: polymer sum with separate distance decay from two points} 
        \declarefree{itm: polymer sum with separate distance decay from two points} 
        For $\free{itm: polymer sum with separate distance decay from two points}\in(0,1)$ such that $\vareps^{\free{itm: polymer sum with separate distance decay from two points}}<\frac{1}{64}$,
        \begin{equation}
            \sum_{X}\abs{X\cap\bL}^n \vareps^{\abs{X\cap\bL}}\delta^{\fn_1(X)+\fn_2(X)-2}\le \max\set{\vareps,\delta}^{(1-\free{itm: polymer sum with separate distance decay from two points})(\fd_{12}-2)}c_n(\vareps^{\free{itm: polymer sum with separate distance decay from two points}})s_1(\delta^{\free{itm: polymer sum with separate distance decay from two points}}).
        \end{equation}
        \item \label{itm: double polymer sum with distance decay from two points} 
        \declarefree{itm: double polymer sum with distance decay from two points} 
        For $\free{itm: double polymer sum with distance decay from two points}\in(0,1)$ such that $\vareps^{\free{itm: double polymer sum with distance decay from two points}}<\frac{1}{64}$, 
        \begin{equation}
            \begin{multlined}
                \sum_{X_1,X_2}\xi^{\rmd_\bL(X_1,X_2)}\prod_{i=1,2}\abs{X_i\cap\bL}^n \vareps^{\abs{X_i\cap\bL}}\delta^{\fn_i(X_i)-1}
                \\
                \le \max\set{\vareps,\delta,\xi}^{(1-\free{itm: double polymer sum with distance decay from two points})(\fd_{12}-2)}c_n(\vareps^{\free{itm: double polymer sum with distance decay from two points}})^2 s_1(\delta^{\free{itm: double polymer sum with distance decay from two points}})^2.
            \end{multlined}
        \end{equation}
        \item \label{itm: incompatible polymer sum with piecewise linear decay} Suppose that $C\vareps^a\in[0,\frac{1}{64})$.
        Then, $\sum_{X\not\sim Y}C^{\abs{X\cap\bL}}\vareps^{\{a\abs{X\cap\bL}-b\}_+}\le c(\vareps;a,b,C)\abs{Y\cap\bL}$.
    \end{enumerate}
\end{lemma}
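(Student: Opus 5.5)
The plan is to reduce every item to the standard lattice-animal bound. The key counting fact is that the number of connected unions of tiles $X\in\cT_\ast(\Z^2)$ containing a fixed tile with $\abs{X\cap\bL}=m$ is at most $64^{m-1}$. This is the usual bound: $X$ is a connected set of $m$ vertices in $\bL$ with nearest-neighbour adjacency, so it is bounded by the count of spanning trees/walks of length $2(m-1)$ in a degree-$4$ graph, giving $4^{2(m-1)}=16^{m-1}\le 64^{m-1}$. The margin is harmless, and $64$ is exactly the threshold in the hypotheses.

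For (i) we then have
\[
\sum_{X\supset T_v}\abs{X\cap\bL}^n\vareps^{\abs{X\cap\bL}}\le\sum_{m\ge1}m^n 64^{m-1}\vareps^m=:c_n(\vareps),
\]
which converges for $\vareps<\frac1{64}$. I take this series as the definition of $c_n$; a closed form, which will be needed later for numerics, comes from polylogarithm-type sums.

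For (ii), $X\not\sim Y$ means $X$ touches one of the at most $9\abs{Y\cap\bL}$ tiles in $Y\cup\bdex[\bL]{Y}$ (a tile and its $8$ neighbours in the $L_\infty$ sense). A union bound over these tiles and (i) give the claim.

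For (iii) and (iv), I decompose according to $j:=\rmd_\bL(X,T_u)$ (respectively $\rmd_\bL(X,Y)$). The number of tiles at distance exactly $j$ from $T_u$ is at most $8j$ (or $1$ when $j=0$). Pinning $X$ at such a tile and applying (i) yields $c_n(\vareps)\sum_j \#\{\text{tiles at distance } j\}\,\delta^{\{\lceil j/2\rceil-k\}_+}=:c_n(\vareps)s_k(\delta)$. For a region $Y$, the tiles at distance $j$ from $Y$ number at most $8j\abs{Y\cap\bL}$ (and at most $\abs{Y\cap\bL}$ when $j=0$), giving $s(\delta)$.

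For (v)--(vii) the quantities $\fn_{12}(X)$, $\fn_i(X)$ come from Appx.~\ref{appx: derived quantities}. I will use only the triangle-type property they are designed to have: $\fn_{12}(X)-2\ge\fd_{12}-2$ together with $\fn_{12}(X)-2\ge\lceil\rmd_\bL(X,\fT_1)/2\rceil-2$, and $\fn_1(X)+\fn_2(X)-2+\abs{X\cap\bL}\gtrsim\fd_{12}-2$. For (v), I split $\delta^{\fn_{12}-2}=\delta^{(1-\theta)(\fn_{12}-2)}\delta^{\theta(\fn_{12}-2)}$. The first factor is bounded by $\delta^{(1-\theta)(\fd_{12}-2)}$, and the second is handled by (iii) with $k=2$ and $\delta^\theta$. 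For (vi) and (vii) I do the same split, but the loss $(1-\theta)(\fd_{12}-2)$ is taken out of the combined exponent of $\max\{\vareps,\delta,\xi\}$. The justification is that a polymer (or a pair of polymers joined by a gap $\rmd_\bL(X_1,X_2)$) connecting $\fT_1$ to $\fT_2$ must pay either in size or in distance, each unit costing at least a factor of $\max$. The remaining $\theta$-fraction is summed with $\vareps^\theta$ and $\delta^\theta$ via (i) and (iii). In (vii), I first fix $X_1$, sum over $X_2$ with the $\xi$ gap weight as in (iv), and then sum over $X_1$.

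For (viii), write $m=\abs{X\cap\bL}$. The term $\vareps^{\{am-b\}_+}$ is at most $\vareps^{am-b}$ when $am\ge b$ and $1$ otherwise. Pinning as in (ii), I get
\[
9\abs{Y\cap\bL}\Bigl[\sum_{m<b/a}(64C)^{m}+\vareps^{-b}\sum_{m\ge b/a}(64C\vareps^a)^m\Bigr],
\]
and this bracket defines $c(\vareps;a,b,C)$. It is finite because $C\vareps^a<\frac1{64}$, and notably no condition on $\vareps$ itself is needed.

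The main obstacle will be (vi)--(vii): extracting the full $(1-\theta)(\fd_{12}-2)$ decay requires checking precisely how $\fn_i$ and $\fd_{12}$ are defined, and in particular that the gap and the sizes, counted in the rescaled metric with the factor-$2$ rounding, really add up to at least $\fd_{12}-2$.
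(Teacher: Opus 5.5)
Your plan for (vii) does not yield the stated bound. If you fix $X_1$ and sum $X_2$ against the gap weight as in (iv), you pick up a factor $s(\xi^{\theta})\abs{X_1\cap\bL}$. The remaining $X_1$-sum then becomes a $c_{n+1}$-type sum, and you end up with roughly $c_{n+1}(\vareps^\theta)\,c_n(\vareps^\theta)\,s_1(\delta^\theta)\,s(\xi^\theta)$ times the extracted factor. This is not dominated by the right-hand side of (vii): as $\xi\uparrow1$, $s(\xi^\theta)$ diverges, while $c_n(\vareps^\theta)^2s_1(\delta^\theta)^2$ does not involve $\xi$ at all.

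The gap weight is only needed to pay for the separation inside the extracted factor. The triangle-and-ceiling argument along the chain $\fT_1\to X_1\to X_2\to\fT_2$ gives $\fd_{12}-2\le[\fn_1(X_1)-1]+\abs{X_1\cap\bL}+\rmd_\bL(X_1,X_2)+\abs{X_2\cap\bL}+[\fn_2(X_2)-1]$. Extract $\max\{\vareps,\delta,\xi\}^{(1-\theta)(\fd_{12}-2)}$, then simply bound the leftover $\xi^{\theta\rmd_\bL(X_1,X_2)}$ by $1$. The double sum now factorizes. Each factor $\sum_{X_i}\abs{X_i\cap\bL}^n\vareps^{\theta\abs{X_i\cap\bL}}\delta^{\theta[\fn_i(X_i)-1]}$ is localized near $\fT_i$ by its own $\delta$-weight, because $\fn_i(X_i)-1=\{\lceil\rmd_\bL(X_i,\fT_i)/2\rceil-1\}_+$. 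Item (iii) with $k=1$ bounds each factor by $c_n(\vareps^\theta)s_1(\delta^\theta)$, which gives exactly the claimed estimate. This is what the paper does.

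The rest of your proposal follows the paper's route: walk counting for polymers through a fixed tile, a union bound over the at most $9\abs{Y\cap\bL}$ tiles of $Y\cup\partial^{\mathrm{ex}}_{\bL}Y$, pinning at a closest tile, and the $\theta$-splitting in (v). Two corrections are needed there.

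First, adjacency in $\bL$ is the $L_\infty$ one, as you yourself use in (ii), so your degree-$4$ justification does not apply. Walks of length $2(m-1)$ on the degree-$8$ graph give exactly $8^{2(m-1)}=64^{m-1}$, so your bound stands but there is no slack.

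Second, in (vi) the ``$\gtrsim$'' must be an exact inequality, since no constant can be absorbed into the exponent $(1-\theta)(\fd_{12}-2)$. The exact inequality is $\fd_{12}-2\le[\fn_1(X)-1]+\abs{X\cap\bL}+[\fn_2(X)-1]$. It follows from $\rmd_\bL(\fT_i,X)\le2\fn_i(X)$, the bound $\abs{X\cap\bL}-1$ on the $L_\infty$-diameter of a connected polymer, and subadditivity of the ceiling. After that, you drop $\delta^{\theta[\fn_2(X)-1]}$ and apply (iii) with $k=1$.
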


\begin{proof}[Proof of Lems.~\ref{itm: anchored polymer sum},~\ref{itm: incompatible polymer sum} and~\ref{itm: incompatible polymer sum with piecewise linear decay}]
    The proofs of these estimates all follow the same strategy: view a polymer as a connected subgraph of $\bL$, and bound the number of polymers $X$ with $\abs{X\cap\bL}=n$ and $X\supset T_v$, for a suitably chosen $v\in\bL$, by the number of walks on $\bL$ of length $2n$ starting from $v$~\cite[Lem.~3.38]{friedli2017statistical}.
    Lem.~\ref{itm: anchored polymer sum} is immediate by this argument. 
    Lems.~\ref{itm: incompatible polymer sum} and~\ref{itm: incompatible polymer sum with piecewise linear decay} follow similarly by replacing the constraint $X\not\sim Y$ with the one that $X$ intersects $Y\cup\partial^{\text{ex}}_\bL Y$ and applying a union bound.
\end{proof}

\begin{proof}[Proof of~\ref{itm: polymer sum with distance decay from one point}]
    Let $T_X^{(u)}$ denote the lexicographically minimum tile in $X$ with minimum $\rmd_\bL$-distance to $T_u$.
    We rewrite and bound the LHS by
    \begin{equation}
        \label{eqn: proof of polymer sum with distance decay from one point}
        \begin{multlined}
            \sum_{v\in\bL}\delta^{\{\ceil{\rmd_\bL(T_v,T_u)/2}-k\}_+}\sum_{\substack{X\in\cT_\ast(\Z^2)\\T_X^{(u)}=T_v}}\abs{X\cap\bL}^n \vareps^{\abs{X\cap\bL}}
            \\
            \le \sum_{v\in\bL}\delta^{\{\ceil{\rmd_\bL(T_v,T_u)/2}-k\}_+}\sum_{\substack{X\in\cT_\ast(\Z^2)\\X\supset T_v}}\abs{X\cap\bL}^n \vareps^{\abs{X\cap\bL}}
        \le s_k(\delta)c_n(\vareps),
        \end{multlined}
    \end{equation}
    having used Lem.~\ref{itm: anchored polymer sum} in the last step.
\end{proof}
    
\begin{proof}[Proof of~\ref{itm: polymer sum with distance decay from region}]
    Let $T_X^{(Y)}$ denote the lexicographically minimum tile in $X$ with minimum $\rmd_\bL$-distance to $Y$.
    We rewrite and bound the LHS by
    \begin{equation}
        \begin{multlined}
            \sum_{v\in\bL}\delta^{\rmd_\bL(T_v,Y)}\sum_{\substack{X\in\cT_\ast(\Z^2)\\T_X^{(Y)}=T_v}}\abs{X\cap\bL}^n\vareps^{\abs{X\cap\bL}}
            \le c_n(\vareps)\sum_{v\in\bL}\delta^{\rmd_\bL(T_v,Y)}
            \\
            \le c_n(\vareps)\sum_{v\in\bL}\sum_{u\in Y\cap\bL}\delta^{\rmd_\bL(T_v,T_u)}
            =c_n(\vareps)s(\delta)\abs{Y\cap\bL},
        \end{multlined}
    \end{equation}
    having replaced the constraint $T_X^{(Y)}=T_v$ by $X\supset T_v$ as in~\eqref{eqn: proof of polymer sum with distance decay from one point} in the first inequality.
\end{proof}

\begin{proof}[Proof of~\ref{itm: polymer sum with combined distance decay from two points}]
    We use
    \begin{equation}
        \fn_{12}(X)-2
        \ge(1-\free{itm: polymer sum with combined distance decay from two points})(\fd_{12}-2)+\free{itm: polymer sum with combined distance decay from two points}\{\fn_1(X)-2\}_+
        =(1-\free{itm: polymer sum with combined distance decay from two points})(\fd_{12}-2)+\free{itm: polymer sum with combined distance decay from two points}\{\ceil{\rmd_\bL(X,\fT_1)/2}-2\}_+
    \end{equation}
    and apply Lem.~\ref{itm: polymer sum with distance decay from one point}.
\end{proof}

\begin{proof}[Proof of~\ref{itm: polymer sum with separate distance decay from two points}]
    Combining the bound $\rmd_\bL(\fT_i,X)\le 2\fn_i(X)$ and the geometric inequality
    \begin{equation}
        \rmd_\bL(\fT_1,\fT_2)\le\rmd_\bL(\fT_1,X)+(\abs{X\cap\bL}-1)+\rmd_\bL(X,\fT_2),
    \end{equation}
    we get, using the subadditivity of the ceiling function,
    \begin{equation}
        \label{eqn: proof of polymer sum with separate distance decay from two points}
        \begin{multlined}
            \fd_{12}-2
            =\ceil{\rmd_\bL(\fT_1,\fT_2)/2}-1\le[\fn_1(X)-1]+\ceil{(\abs{X\cap\bL}-1)/2}+1+[\fn_2(X)-1]
            \\
            \le[\fn_1(X)-1]+\abs{X\cap\bL}+[\fn_2(X)-1].
        \end{multlined}
    \end{equation}
    Subsequently, bounding
    \begin{equation}
        \vareps^{\abs{X\cap\bL}}\delta^{\fn_1(X)+\fn_2(X)-2}
        \le \max\set{\vareps,\delta}^{(1-\free{itm: polymer sum with separate distance decay from two points})(\fd_{12}-2)} \vareps^{\free{itm: polymer sum with separate distance decay from two points}\abs{X\cap\bL}}\delta^{\free{itm: polymer sum with separate distance decay from two points}[\fn_1(X)-1]}
    \end{equation}
    after discarding a factor of $\delta^{\free{itm: polymer sum with separate distance decay from two points}[\fn_2(X)-1]}$, we can apply Lem.~\ref{itm: polymer sum with distance decay from one point} to complete the proof.
\end{proof}

\begin{proof}[Proof of~\ref{itm: double polymer sum with distance decay from two points}]
    Using the geometric inequality
    \begin{equation}
        \rmd_\bL(\fT_1,\fT_2)\le\rmd_\bL(\fT_1,X_1)+(\abs{X_1\cap\bL}-1)+\rmd_\bL(X_1,X_2)+(\abs{X_2\cap\bL}-1)+\rmd_\bL(X_2,\fT_2)
    \end{equation}
    and following the same argument in~\eqref{eqn: proof of polymer sum with separate distance decay from two points}, we get that
    \begin{equation}
        \fd_{12}-2\le [\fn_1(X_1)-1]+\abs{X_1\cap\bL}+\rmd_\bL(X_1,X_2)+\abs{X_2\cap\bL}+[\fn_2(X_2)-1].
    \end{equation}
    Subsequently, bounding
    \begin{equation}
        \xi^{\rmd_\bL(X_1,X_2)}\prod_{i=1,2}\vareps^{\abs{X_i\cap\bL}}\delta^{\fn_i(X_i)-1}
        \le \max\set{\vareps,\delta,\xi}^{(1-\free{itm: double polymer sum with distance decay from two points})(\fd_{12}-2)}\prod_{i=1,2}\vareps^{\free{itm: double polymer sum with distance decay from two points}\abs{X_i\cap\bL}}\delta^{\free{itm: double polymer sum with distance decay from two points}[\fn_i(X_i)-1]}
    \end{equation}
    and applying Lem.~\ref{itm: polymer sum with distance decay from one point} completes the proof.
\end{proof}

\begin{remark}
    \label{rem: reuse of free parameters}
    Each application of an item of Lems.~\ref{lem: lower bounds on the m's} and~\ref{lem: polymer sums} that involves a free parameter ($\theta$) creates an independent instance of that parameter. 
    However, for simplicity, we will not track the separate instances but rather reuse the same value of the parameter across all applications.
    Such a simplification is unlikely to be optimal but is sufficient for our purposes.
\end{remark}

\subsection{Estimates for the weights and potentials}
\label{sec: bounds on weights and potentials}

In this subsection, we derive estimates for the individual components of the polymer weight.
The analysis proceeds component by component, each treated in a separate sub-subsection.
For each component, we first bound its finite- and infinite-volume versions, then the finite-volume correction, and finally the first- and second-order derivatives.
Throughout, we work under the setup of Prop.~\ref{prop: main}.
Moreover, we assume that, in each application of the auxiliary estimates in Sec.~\ref{sec: auxiliary estimates}, the corresponding assumption is verified by the choices of the relevant parameters.

\subsubsection{Estimates for the contour self-potential}

\begin{proposition}
    The following estimates hold for the contour self-potential.
    Except in~\ref{itm: bounds on the finite- and infinite-volume contour self-potentials}, we assume that $\gamma\in C(\Lambda,q)$.
    \begin{enumerate}[label=(\alph*),ref=\theproposition(\alph*)]
        \item \label{itm: bounds on the finite- and infinite-volume contour self-potentials} For $\gamma\in C(\Lambda,q)$, the finite-volume contour self-potential satisfies
        \begin{equation}
            \label{eqn: bound on the finite-volume contour self-potential}
            -\sum_{M\in\cM^q(\Lambda)}\varphi(M)\chi_\gamma(M)\bz_\Delta(M)
            \le 3e^{1+\nu}(z\ell^2)\eps\abs{\supp{\gamma}\cap\bL}.
        \end{equation}
        For $\gamma\in C(\Z^2,q)$, the infinite-volume contour self-potential satisfies
        \begin{equation}
            \label{eqn: bound on the infinite-volume contour self-potential}
            -\sum_{M\in\cM^q(\Z^2)}\varphi(M)\chi_\gamma(M)\bz(M)
            \le 3e^{1+\nu}(z\ell^2)\eps\abs{\supp{\gamma}\cap\bL}.
        \end{equation}
        \item \label{itm: bound on the finite-volume correction to the contour self-potential} The finite-volume correction to the contour self-potential satisfies
        \begin{equation}
            \label{eqn: bound on the finite-volume correction to the contour self-potential}
            \sum_{M\in\cM^q(\Z^2)\setminus\cM^q(\Lambda\setminus\bdin[2\bL]{\Lambda})}\abs{\varphi(M)}\chi_\gamma(M)\bz(M)
            \le 3e^{1+\nu}(z\ell^2)\eps^{\ceil{\rmd_\bL(\supp{\gamma},\bdin[2\bL]{\Lambda})/2}-1}\abs{\supp{\gamma}\cap\bL}.
        \end{equation}
        \item \label{itm: bound on the first-order derivative of the finite-volume contour self-potential} The first-order derivative of the finite-volume contour self-potential satisfies
        \begin{equation}
            \label{eqn: bound on the first-order derivative of the finite-volume contour self-potential}
            \bars[\Bigg]{\cD_1\sum_{M\in\cM^q(\Lambda)}\varphi(M)\chi_\gamma(M)\bz_\Delta(M)}
            \le \indicator{R^q(\Delta)}(\fr_1)e^{1+\nu/n} z\eps^{\fn_1(\gamma)-1}.
        \end{equation}
        \item \label{itm: bound on the second-order derivative of the finite-volume contour self-potential} The second-order derivative of the finite-volume contour self-potential satisfies
        \begin{equation}
            \label{eqn: bound on the second-order derivative of the finite-volume contour self-potential}
            \bars[\Bigg]{\cD_1\cD_2\sum_{M\in\cM^q(\Lambda)}\varphi(M)\chi_\gamma(M)\bz_\Delta(M)}
            \le \indicator{R^q(\Delta)}(\fr_1,\fr_2)e^{3+2\nu/n}z^2\eps^{\fn_{12}(\gamma)-2}.
        \end{equation}
    \end{enumerate}
\end{proposition}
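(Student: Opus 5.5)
The proof reduces each item to the cluster bounds of Thm.~\ref{thm: Ueltschi criterion} for the $q$-oriented gas (with $a\equiv 1$, $\eps=e^{-b}$), after localizing the clusters with $\chi_\gamma(M)=1$ near $\supp{\gamma}$.

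\textbf{Localization.} Any cluster with $\chi_\gamma(M)=1$ contains a rectangle $r$ whose anchor lies within $L_\infty$-distance of order $\fl$ of $\supp{\gamma}$. Under condition~1, take the rectangle of $M$ anchored in $\supp{\gamma}$. Under condition~2, $r$ must overlap a rectangle of $R_\gamma$ anchored in $\bdin[6\bL]{c(\gamma)}$, which forces the anchor of $r$ to lie in $\supp{\gamma}$ or in a thin neighbourhood (within two tiles) of it.

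\textbf{Item (a).} I root every such cluster at one such rectangle $r$ and bound by a union bound over roots:
\[
\sum_M\abs{\varphi(M)}\chi_\gamma(M)\bz(M)\le\sum_{r}\bz(r)\sum_{M\ni r,\ \abs{M}\ge2}\abs{\varphi(M)}n_r(M)\bz(M\setminus\mset{r}).
\]
The point is that the cluster has size at least $2$. For condition~1 this is immediate, since $M$ has rectangles in two regions. For condition~2 with $\abs{M}=1$, one has $\varphi(M)=1$, so such singleton terms enter with a sign in the self-potential. The statement bounds $-\sum\varphi\chi\bz$ only from above, so I would split the sum: singletons contribute $-\sum_r\bz(r)\le0$ and can be dropped, while all terms with $\abs{M}\ge2$ are bounded in absolute value. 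The inner sum is then at most $e\eps$ by~\eqref{eqn: rectangle cluster tail estimate}, and $\bz(r)\le e^{\nu}z$ (using $\nu/n\le\nu$). The number of admissible anchors is at most $3\ell^2\abs{\supp{\gamma}\cap\bL}$: each tile of $\supp{\gamma}$ has $\ell^2$ anchors, and the boundary neighbourhood is absorbed into a constant $3$, since supports are unions of $6\ell$ smoothing squares, so a neighbourhood of bounded width is comparable to the support. This constant-$3$ count is the step I expect to need the most care. The finite-volume and infinite-volume versions are identical, since $\bz_\Delta\le\bz$.

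\textbf{Item (b).} A cluster of $\cM^q(\Z^2)$ not contained in $\Lambda\setminus\bdin[2\bL]{\Lambda}$ has a rectangle anchored outside this set and one anchored near $\supp{\gamma}$. Lem.~\ref{lem: lower bound on the size of clusters with anchors in distant tiles} then gives $\abs{M}\ge\ceil{\rmd_\bL(\supp{\gamma},\bdin[2\bL]{\Lambda})/2}$, up to a shift absorbed by the exponent. The same rooted tail estimate~\eqref{eqn: rectangle cluster tail estimate} then yields the factor $e\eps^{\abs{M}-1}$, and hence the stated power of $\eps$.

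\textbf{Items (c) and (d).} Differentiating term by term gives $\cD_1$ acting on $\bz_\Delta(M)$ as $n_{\fr_1}(M)\bz_\Delta(M)$, nonzero only if $\fr_1\in R^q(\Delta)$. I root at $\fr_1$ and obtain
\[
\bz(\fr_1)\sum_{M\ni\fr_1}\abs{\varphi(M)}n_{\fr_1}(M)\chi_\gamma(M)\bz(M\setminus\mset{\fr_1}),\qquad \bz(\fr_1)\le e^{\nu/n}z.
\]
Here $\chi_\gamma(M)=1$ forces $M$ to reach from $\fT_1$ to the vicinity of $\supp{\gamma}$. With $\fn_1(\gamma)$ defined in Appx.~\ref{appx: derived quantities} as the corresponding half-distance count, Lem.~\ref{lem: lower bound on the size of clusters with anchors in distant tiles} gives $\abs{M}\ge\fn_1(\gamma)$, so~\eqref{eqn: rectangle cluster tail estimate} gives the bound $e\eps^{\fn_1(\gamma)-1}$.

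For (d) I root at both $\fr_1$ and $\fr_2$ and use $\bz(\fr_1)\bz(\fr_2)\le e^{2\nu/n}z^2$. The size is forced to be at least $\fn_{12}(\gamma)$, since $M$ must connect $\fT_1$, $\fT_2$ and $\supp{\gamma}$. Applying~\eqref{eqn: cluster bound with two distinct anchors} with the reserve $e^{b}$ per extra polymer, and bounding $\big(1+\indicator{\fr_1\not\sim\fr_2}\big)\le e$ (they are compatible), gives $e^3\eps^{\fn_{12}(\gamma)-2}$. The case $\fr_1=\fr_2$ would use~\eqref{eqn: cluster bound with two identical anchors} but is excluded by compatibility.

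The main obstacle is the geometric bookkeeping: the precise neighbourhood count giving the factor $3$ in (a), and matching the cluster-size lower bounds with the exact definitions of $\fn_1$ and $\fn_{12}$ in the appendix.
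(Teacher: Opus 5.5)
Your proposal is correct and follows essentially the same route as the paper: you discard the nonpositive singleton terms, localize every cluster with $\chi_\gamma(M)=1$ to one containing a rectangle anchored in $\supp{\gamma}\cup\bdex[2\bL]{\supp{\gamma}}$, and apply the tail estimate~\eqref{eqn: rectangle cluster tail estimate} or the two-anchor bound~\eqref{eqn: cluster bound with two distinct anchors} with the cardinality constraints from Lem.~\ref{lem: lower bound on the size of clusters with anchors in distant tiles}; the paper gets the constant $3$ from the fact that a smoothing square together with its external $2\bL$-boundary has $10^2<3\cdot 6^2$ tiles. One small slip: in (d) the factor $1+\indicator{\fr_1\not\sim\fr_2}$ is exactly $1$ because $\fr_1,\fr_2$ are compatible, and that is what yields $e^3$, whereas bounding it by $e$ would give $e^4$.
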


\begin{proof}[Proof of~\ref{itm: bounds on the finite- and infinite-volume contour self-potentials}]
    Starting with the finite-volume contour self-potential, we first discard the nonpositive contribution from single-rectangle clusters ($\abs{M}=1$) to bound
    \begin{equation}
        -\sum_{M\in\cM^q(\Lambda)}\varphi(M)\chi_\gamma(M)\bz_\Delta(M)
        \le\sum_{\substack{M\in\cM^q(\Lambda)\\\abs{M}\ge2}}\abs{\varphi(M)}\chi_\gamma(M)\bz(M).
    \end{equation}
    For the remaining clusters, the condition $\chi_\gamma(M)=1$ forces $M$ to contain either a rectangle anchored in $\supp{\gamma}$ or one anchored sufficiently close to $\supp{\gamma}$ to be incompatible with $R_\gamma$.
    In either case, this rectangle is anchored in $\supp{\gamma}\cup\bdex[2\bL]{\supp{\gamma}}$.
    After substituting $\indicator{M\cap R^q(\supp{\gamma}\cup\bdex[2\bL]{\supp{\gamma}})\ne\emptyset}$ for $\chi_\gamma(M)$, an application of the tail estimate~\eqref{eqn: rectangle cluster tail estimate} yields
    \begin{equation}
        \sum_{\substack{M\in\cM^q(\Lambda)\\\abs{M}\ge2}}\abs{\varphi(M)}\chi_\gamma(M)\bz(M)
        \le \sum_{\substack{M\in\cM^q(\Lambda)\\\abs{M}\ge2\\M\cap R^q(\supp{\gamma}\cup\bdex[2\bL]{\supp{\gamma}})\ne\emptyset}}\abs{\varphi(M)}\bz(M)
        \le 3e^{1+\nu}z\eps\abs{\supp{\gamma}},
    \end{equation}
    where we used the geometric bound
    \begin{equation}
        \abs{\supp{\gamma}\cup\bdex[2\bL]{\supp{\gamma}}}\le3\abs{\supp{\gamma}}
    \end{equation}
    in the last inequality, noting that the external $2\bL$-boundary of each smoothing square consists of $10^2-6^2<2\cdot 6^2$ tiles.
    The same argument applies to the infinite-volume contour self-potential.
\end{proof}

\begin{proof}[Proof of~\ref{itm: bound on the finite-volume correction to the contour self-potential}]
    The bound~\eqref{eqn: bound on the finite-volume correction to the contour self-potential} follows analogously. 
    The condition $\chi_\gamma(M)=1$ restricts the sum to clusters intersecting $R^q(\supp{\gamma}\cup\bdex[2\bL]{\supp{\gamma}})$.
    Combined with the condition $M\in\cM^q(\Z^2)\setminus\cM^q(\Lambda\setminus\bdin[2\bL]{\Lambda})$, an application of Lem.~\ref{lem: lower bound on the size of clusters with anchors in distant tiles} yields the constraint
    \begin{equation}
       \abs{M}
       \ge\ceil{\rmd_\bL(\bdex[2\bL]{\supp{\gamma}},\bdin[2\bL]{\Lambda})/2}+1
       =\ceil{\rmd_\bL(\supp{\gamma},\bdin[2\bL]{\Lambda})/2}.
    \end{equation}
    Applying~\eqref{eqn: rectangle cluster tail estimate} with this constraint and bounding $\abs{\supp{\gamma}\cup\bdex[2\bL]{\supp{\gamma}}}$ as before yields~\eqref{eqn: bound on the finite-volume correction to the contour self-potential}.
\end{proof}

\begin{proof}[Proof of~\ref{itm: bound on the first-order derivative of the finite-volume contour self-potential}]
    After differentiating term-by-term and applying the triangle inequality, it suffices to bound
    \begin{equation}
        \indicator{R^q(\Delta)}(\fr_1)\sum_{\substack{M\in\cM^q(\Delta)\\M\ni \fr_1}}\abs{\varphi(M)}\chi_\gamma(M)n_{\fr_1}(M)\bz(M). 
    \end{equation}
    We extract a cardinality constraint on the contributing clusters $M$ from the conditions $\chi_\gamma(M)=1$ and $M\ni\fr_1$, as follows.
    If $\fr_1\in R^q(\supp{\gamma}\cup\bdex[2\bL]{\supp{\gamma}})$, then we simply retain that $\abs{M}\ge 1$.
    Otherwise, by Lem.~\ref{lem: lower bound on the size of clusters with anchors in distant tiles}, we have that 
    \begin{equation}
        \abs{M}
        \ge\ceil{\rmd_\bL(\fT_1,\bdex[2\bL]{\supp{\gamma}})/2}+1
        =\ceil{\rmd_\bL(\fT_1,\supp{\gamma})/2}.
    \end{equation}
    Combining the two cases, we have that $\abs{M}\ge\fn_1(\gamma)$.
    Applying the tail estimate~\eqref{eqn: rectangle cluster tail estimate} with this constraint yields~\eqref{eqn: bound on the first-order derivative of the finite-volume contour self-potential}.
\end{proof}

\begin{proof}[Proof of~\ref{itm: bound on the second-order derivative of the finite-volume contour self-potential}]
    After differentiating term-by-term and applying the triangle inequality, it suffices to bound
    \begin{equation}
        \indicator{R^q(\Delta)}(\fr_1,\fr_2)\sum_{\substack{M\in\cM^q(\Delta)\\M\supset\mset{\fr_1,\fr_2}}}\abs{\varphi(M)}\chi_\gamma(M)n_{\fr_1}(M)n_{\fr_2}(M)\bz(M). 
    \end{equation}
    As before, the conditions that $M$ contains $\fr_1,\fr_2$ and $\chi_\gamma(M)=1$ imply the constraint $\abs{M}\ge \fn_{12}(\gamma)$.
    Applying~\eqref{eqn: cluster bound with two distinct anchors} with this constraint yields~\eqref{eqn: bound on the second-order derivative of the finite-volume contour self-potential}.
\end{proof}

\subsubsection{Estimates for the contour-interaction potential}

\begin{lemma}
    \label{lem: contour-interaction potential auxiliary lemma}
    Let $\Gamma\in\cC(\Z^2,q)$ and $Y\in\cT_\ast(\Z^2)$.
    Let $T$ be an arbitrary tile in $Y$.
    For all $M\in\cM^q(\Z^2)$,
    \begin{equation}
        \label{eqn: contour-interaction potential auxiliary bound}
        \pi_{Y}(M)\sum_{\substack{\Gamma'\subset\Gamma\\\abs{\Gamma'}\ge 2}}\prod_{\gamma\in\Gamma'} \chi_{\gamma}(M)
        \le \indicator{\rmd_\bL(Y,\supp{\Gamma})\le1}\indicator{M\cap R^q(T\cup\bdex[\bL]{T})\ne\emptyset}\indicator{\abs{M}\ge m(Y)}2^{\abs{Y\cap\bL}}.
    \end{equation}
\end{lemma}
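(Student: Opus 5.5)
The plan is to assume that the left-hand side of \eqref{eqn: contour-interaction potential auxiliary bound} is nonzero and then derive each factor on the right-hand side separately. If it is nonzero, then $\pi_Y(M)=1$, and there are at least two distinct contours $\gamma_1,\gamma_2\in\Gamma$ with $\chi_{\gamma_i}(M)=1$. Let $\Gamma_M:=\set{\gamma\in\Gamma\mid\chi_\gamma(M)=1}$. The sum over $\Gamma'$ is then at most $2^{\abs{\Gamma_M}}$. So the factor $2^{\abs{Y\cap\bL}}$ will follow once we know $\abs{\Gamma_M}\le\abs{Y\cap\bL}$.

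The basic geometric input is that a rectangle has length $\fl\le 2\ell$. Hence its interior lies within $\rmd$-distance $\ell$ of its anchor, and so within the union of its anchor tile and the $\bL$-neighbours of that tile (recall that adjacency on $\bL$ is $L_\infty$-adjacency).

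\emph{Step 1: the indicator $\indicator{\rmd_\bL(Y,\supp{\Gamma})\le1}$ and the bound on $\abs{\Gamma_M}$.} Let $\gamma\in\Gamma_M$ and split according to which condition in the definition of $\chi_\gamma$ holds.
\begin{itemize}
\item Under condition 1, some $r\in M$ is anchored in $\supp{\gamma}$. Its interior meets its anchor tile, which is contained in $Y$, so $Y$ meets $\supp{\gamma}$.
\item Under condition 2, some $r\in M$ overlaps some $r'\in R_\gamma$. The overlap point lies in the interior of $r'$, hence in a tile adjacent to (or equal to) the anchor tile of $r'$, which is in $\supp{\gamma}$. That tile is also in $Y$.
\end{itemize}
In both cases $Y$ contains a tile at $\rmd_\bL$-distance at most $1$ from $\supp{\gamma}$, which gives the indicator. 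Distinct contours in $\Gamma$ have non-adjacent supports, which are unions of smoothing squares and so are separated by at least $6\ell$. Therefore the tiles chosen for distinct $\gamma\in\Gamma_M$ are distinct, and $\abs{\Gamma_M}\le\abs{Y\cap\bL}$.

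\emph{Step 2: the indicator $\indicator{M\cap R^q(T\cup\bdex[\bL]{T})\ne\emptyset}$.} By minimality of $Y$, every tile $T\subset Y$ meets the interior of some $r\in M$; otherwise removing $T$ would give a smaller set. The set removed must stay connected, and I would check this by noting that the union of tiles meeting the interiors of a cluster is already connected, because overlapping rectangles share a tile. The anchor of this $r$ is within $\rmd$-distance $\ell$ of a point of $T$. So its anchor tile is $T$ or an $\bL$-neighbour of $T$.

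\emph{Step 3: the indicator $\indicator{\abs{M}\ge m(Y)}$.} This is the step I expect to be the main obstacle, since it must match the precise definition of $m(Y)$ in Appx.~\ref{appx: derived quantities}. Judging from the proof of Lem.~\ref{itm: lower bound on m(delta)-1}, that definition is a maximum of a constant and a term of the form $\ceil{(\abs{Y\cap\bL}-1)/5}$. I would establish two lower bounds.
\begin{itemize}
\item \emph{Volume bound.} The interior of one rectangle meets at most $3\times 2=6$ tiles: length at most $2\ell$ and width at most $\ell$, using $\fl\ge 2\fw-1$. A cluster is connected through overlaps, so ordering $M$ along a spanning tree, each new rectangle adds at most $5$ new tiles. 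This gives $\abs{Y\cap\bL}\le 5\abs{M}+1$.
\item \emph{Separation bound.} By Step 1, $M$ contains rectangles anchored in $\supp{\gamma_i}\cup\bdex[2\bL]{\supp{\gamma_i}}$ for $i=1,2$. Their anchor tiles are at $\rmd_\bL$-distance at least $6-4=2$ from each other, or more depending on how one counts the separation of the supports. Lem.~\ref{lem: lower bound on the size of clusters with anchors in distant tiles} then gives a constant lower bound on $\abs{M}$.
\end{itemize}
Combining the two bounds yields $\abs{M}\ge m(Y)$. I would carefully reconcile the constants here with the appendix definition, in particular whether the constant term is $2$ or $3$.
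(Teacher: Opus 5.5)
Your four steps are the same as in the paper's proof: the indicator from $\chi_\gamma$; at most one contour near each tile, giving $2^{\abs{Y\cap\bL}}$; minimality of $Y$, giving a rectangle anchored in $T\cup\bdex[\bL]{T}$; and the count of six tiles for the first rectangle and at most five new ones for each later one. Steps 1 and 2 and the volume bound are fine. The gap is the constant in Step 3, which you leave open and which your count gets wrong.

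The appendix defines $m(Y)=\max\set{3,\ceil{(\abs{Y\cap\bL}-1)/5}}$, and the $3$ matters downstream: Lem.~\ref{itm: lower bound on m(delta)-1} needs $m(Y)-1\ge 2$ when $\abs{Y\cap\bL}$ is small. Anchor tiles at $\rmd_\bL$-distance at least $2$ only give $\abs{M}\ge 1+\ceil{2/2}=2$ through Lem.~\ref{lem: lower bound on the size of clusters with anchors in distant tiles}. So, as written, you do not get $\abs{M}\ge m(Y)$ when $\abs{Y\cap\bL}\le 11$.

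The fix uses the fact you already stated in Step 1. Non-adjacent unions of $6\bL$-aligned smoothing squares leave a full smoothing square of empty space between them. Hence tile anchors in $\supp{\gamma_1}$ and $\supp{\gamma_2}$ differ by at least $12\ell-5\ell=7\ell$, i.e.\ $\rmd_\bL(\supp{\gamma_1},\supp{\gamma_2})\ge 7$. Your ``$6$'' confuses the width of the gap with the distance between anchors. (The paper's text also writes $\ge 6$ here, but its conclusion $\abs{M}\ge 3$ really needs $7$.)

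The rectangles $r_i\in M$ are anchored within $\rmd_\bL$-distance $2$ of $\supp{\gamma_i}$. So their anchor tiles are at distance at least $7-2-2=3$, and Lem.~\ref{lem: lower bound on the size of clusters with anchors in distant tiles} gives $\abs{M}\ge 1+\ceil{3/2}=3$. Combined with your volume bound $\abs{M}\ge\ceil{(\abs{Y\cap\bL}-1)/5}$, this yields $\abs{M}\ge m(Y)$.

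Also, the existence of such $r_i$ does not follow from Step 1 as you stated it, though it is immediate:
\begin{itemize}
\item under condition 1, take the rectangle anchored in $\supp{\gamma_i}$;
\item under condition 2, a rectangle overlapping $r'\in R_{\gamma_i}$ has anchor tile within distance $2$ of that of $r'$, by the overlap criterion in the proof of Lem.~\ref{lem: lower bound on the size of clusters with anchors in distant tiles};
\item alternatively, apply your Step 2 to the tile of $Y$ within distance $1$ of $\supp{\gamma_i}$, which is essentially how the paper phrases it.
\end{itemize}
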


\begin{proof}
    Suppose that $\pi_{Y}(M)=1$; otherwise there is nothing to prove.
    Observe that if $M$ satisfies $\pi_{Y}(M)=1$, then $M$ necessarily intersects $R^q(T\cup\bdex[\bL]{T})$; moreover, under this condition, $\chi_\gamma(M)=1$ only if $\rmd_\bL(Y,\supp{\gamma})\le1$.
    We use this observation to derive a cardinality constraint on $M$, as follows.
    The conditions $\pi_{Y}(M)=1$ and $\prod_{\gamma\in\Gamma'}\indicator{\rmd_\bL(Y,\supp{\gamma})\le1}=1$ for some subset $\Gamma'\subset\Gamma$ with $\abs{\Gamma'}\ge 2$ imply two geometric constraints on $M$.
    First, given distinct contours $\gamma_1,\gamma_2\in\Gamma'$, it holds that $\rmd_\bL(\supp{\gamma_1},\supp{\gamma_2})\ge 6$.
    Since $\rmd_\bL(Y,\supp{\gamma_i})\le1$, $M$ must contain rectangles $r_i$, $i=1,2$, anchored respectively in $\supp{\gamma_i}\cup\bdex[2\bL]{\supp{\gamma_i}}$.
    Using Lem.~\ref{lem: lower bound on the size of clusters with anchors in distant tiles}, we get that $\abs{M}\ge 3$.
    Second, observe that $\fl\le2\ell$ and, by the assumption that $\fl\ge 2\fw-1$ (see Sec.~\ref{sec: model and main result}), $\fw=\ceil{(2\fw-1)/2}\le\ceil{\fl/2}=\ell$.
    Hence, upon labeling the rectangles in $M$ such that $r_i\not\sim\mset{r_1,\dots,r_{i-1}}$ for all $i$, we have that $r_1$ overlaps with at most six tiles, and each subsequent rectangle overlaps with at most five \textit{new} tiles (which do not overlap with any of its predecessors), so $\abs{Y\cap\bL}\le 6+5(\abs{M}-1)$.
    It follows that $\abs{M}\ge m(Y)$.
    
    Now that all the indicator functions on the RHS of~\eqref{eqn: contour-interaction potential auxiliary bound} are accounted for, we bound
    \begin{equation}
        \sum_{\substack{\Gamma'\subset\Gamma\\\abs{\Gamma'}\ge 2}}\prod_{\gamma\in\Gamma'}\indicator{\rmd_\bL(Y,\supp{\gamma})\le1}
        \le 2^{\abs{\set*{\gamma\in\Gamma\mid\rmd_\bL(Y,\supp{\gamma})\le1}}}
        \le 2^{\abs{Y\cap\bL}},
    \end{equation}
    using that each tile is adjacent to at most one contour in $\Gamma$.
\end{proof}

\begin{proposition}
    The following estimates hold for the contour-interaction potential.
    Except in~\ref{itm: bound on the finite- and infinite-volume contour-interaction potentials}, we assume that $\Gamma\in\cC(\Lambda,q)$ and $Y\in\cT_\ast(\Lambda)$.
    \begin{enumerate}[label=(\alph*),ref=\theproposition(\alph*)]
        \item \label{itm: bound on the finite- and infinite-volume contour-interaction potentials} For $\Gamma\in\cC(\Lambda,q)$ and $Y\in\cT_\ast(\Lambda)$, the finite-volume contour-interaction potential satisfies
        \begin{equation}
            \label{eqn: bound on the finite-volume contour-interaction potential}
            \abs{V_\Gamma^{(\Lambda)}(Y;\bz_\Delta)}
            \le \indicator{\rmd_\bL(Y,\supp{\Gamma})\le1}9e^{1+\nu}(z\ell^2) 2^{\abs{Y\cap\bL}}\eps^{m(Y)-1}.
        \end{equation}
        For $\Gamma\in\cC(\Z^2,q)$ and $Y\in\cT_\ast(\Z^2)$, the infinite-volume contour-interaction potential satisfies
        \begin{equation}
            \label{eqn: bound on the infinite-volume contour-interaction potential}
            \abs{V_\Gamma(Y;\bz)}
            \le \indicator{\rmd_\bL(Y,\supp{\Gamma})\le1}9e^{1+\nu}(z\ell^2) 2^{\abs{Y\cap\bL}}\eps^{m(Y)-1}.
        \end{equation}
        \item \label{itm: bound on the finite-volume correction to the contour-interaction potential} The finite-volume correction to the contour-interaction potential satisfies
        \begin{equation}
            \label{eqn: bound on the finite-volume correction to the contour-interaction potential}
            \abs{V_\Gamma^{(\Lambda)}(Y;\bz_\Delta)-V_\Gamma(Y;\bz)}
            \le \indicator{\rmd_\bL(Y,\supp{\Gamma})\le1}9e^{1+\nu}(z\ell^2) 2^{\abs{Y\cap\bL}}\eps^{m_\Gamma^{(\Lambda)}(Y)-1}.
        \end{equation}
        \item \label{itm: bound on the first-order derivative of the finite-volume contour-interaction potential} The first-order derivative of the finite-volume contour-interaction potential satisfies
        \begin{equation}
            \label{eqn: bound on the first-order derivative of the finite-volume contour-interaction potential}
            \abs{\cD_1V_\Gamma^{(\Lambda)}(Y;\bz_\Delta)}
            \le \indicator{R^q(\Delta)\cap R(Y)}(\fr_1)\indicator{\rmd_\bL(Y,\supp{\Gamma})\le1}2^{\abs{Y\cap\bL}}e^{1+\nu}z\eps^{m(Y)-1}.
        \end{equation}
        \item \label{itm: bound on the second-order derivative of the finite-volume contour-interaction potential} The second-order derivative of the finite-volume contour-interaction potential satisfies
        \begin{equation}
            \label{eqn: bound on the second-order derivative of the finite-volume contour-interaction potential}
            \abs{\cD_1\cD_2V_\Gamma^{(\Lambda)}(Y;\bz_\Delta)}
            \le \indicator{R^q(\Delta)\cap R(Y)}(\fr_1,\fr_2)\indicator{\rmd_\bL(Y,\supp{\Gamma})\le1}2^{\abs{Y\cap\bL}}e^{3+\nu}z^2\eps^{\fm_{12}(Y)-2}.
        \end{equation}
    \end{enumerate}
\end{proposition}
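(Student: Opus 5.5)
The plan is to derive all four items from Lem.~\ref{lem: contour-interaction potential auxiliary lemma} combined with the cluster bounds of Thm.~\ref{thm: Ueltschi criterion}, namely the tail estimate~\eqref{eqn: rectangle cluster tail estimate} and~\eqref{eqn: cluster bound with two distinct anchors}. The template is the one used for the contour self-potential: replace the combinatorial bracket in $V_\Gamma$ by the indicator bound of the lemma, then sum over clusters through an anchor rectangle.

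\textbf{Item (a).} Fix a tile $T\subset Y$ and apply the triangle inequality together with~\eqref{eqn: contour-interaction potential auxiliary bound}. This gives
\begin{equation*}
\abs{V_\Gamma^{(\Lambda)}(Y;\bz_\Delta)}\le \indicator{\rmd_\bL(Y,\supp{\Gamma})\le1}2^{\abs{Y\cap\bL}}\sum_{\substack{M\in\cM^q(\Z^2)\\ M\cap R^q(T\cup\bdex[\bL]{T})\neq\emptyset\\ \abs{M}\ge m(Y)}}\abs{\varphi(M)}\bz(M).
\end{equation*}
I then bound the sum by summing over the anchor $r\in R^q(T\cup\bdex[\bL]{T})$. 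The set $T\cup\bdex[\bL]{T}$ consists of $9$ tiles, so it contains at most $9\ell^2$ anchors, each of weight at most $e^{1+\nu}z$ (using $\bz\le e^{\nu/n}z\le e^\nu z$). For each anchor, \eqref{eqn: rectangle cluster tail estimate} bounds the remaining sum by $e\eps^{m(Y)-1}$. Together these give $9e^{1+\nu}(z\ell^2)\cdot 2^{\abs{Y\cap\bL}}\eps^{m(Y)-1}$, up to the factor of $e$, which I expect to be absorbed by the paper's convention. The infinite-volume bound is identical. The replacement of $\bz_\Delta$ by $\bz$ is harmless because $\bz_\Delta\le\bz$ pointwise and every quantity summed is nonnegative.

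\textbf{Item (b).} The difference $V_\Gamma^{(\Lambda)}(Y;\bz_\Delta)-V_\Gamma(Y;\bz)$ cancels exactly on clusters that lie in $\cM^q(\Lambda\setminus\bdin[2\bL]{\Lambda})$. On such clusters $\bz_\Delta=\bz$, since $\Delta\supset\Lambda\setminus\bdin[2\bL]{\Lambda}$. The surviving clusters therefore contain a rectangle anchored in $\bdin[2\bL]{\Lambda}$ or outside $\Lambda$, as well as a rectangle anchored near $T$. I apply Lem.~\ref{lem: lower bound on the size of clusters with anchors in distant tiles} to get $\abs{M}\ge 1+\ceil{\rmd_\bL(Y,\bdin[2\bL]{\Lambda})/2}$, up to the neighbourhood offsets. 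Combined with $\abs{M}\ge m(Y)$, this should give precisely $\abs{M}\ge m^{(\Lambda)}_\Gamma(Y)$, as defined in the appendix, using $\rmd_\bL(Y,\supp{\Gamma})\le 1$ to trade distance from $Y$ for distance from $\supp{\Gamma}$. The rest of the argument proceeds as in (a).

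\textbf{Items (c) and (d).} I differentiate term by term. $\cD_1$ acting on $\bz_\Delta(M)$ produces $n_{\fr_1}(M)\bz_\Delta(M)$ and vanishes unless $\fr_1\in R^q(\Delta)$. The condition $\pi_Y(M)=1$ forces $\fr_1\in R(Y)$. Now $\fr_1$ itself serves as the anchor, so no $9\ell^2$ counting is needed. Applying~\eqref{eqn: rectangle cluster tail estimate} with the constraint $\abs{M}\ge m(Y)$ gives $\bz(\fr_1)e\eps^{m(Y)-1}\le e^{1+\nu}z\eps^{m(Y)-1}$, multiplied by $2^{\abs{Y\cap\bL}}$ and the distance indicator. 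For (d) I use~\eqref{eqn: cluster bound with two distinct anchors}, which costs at most $2e^3$ by that theorem. The cardinality constraint becomes $\abs{M}\ge\fm_{12}(Y)$: $M$ must contain $\fr_1$ and $\fr_2$, and $Y$ must cover both, so $\abs{M}\ge\fd_{12}$ as well as $\abs{M}\ge m(Y)$. I extract $\eps^{\fm_{12}(Y)-2}$ from the clusters beyond the two anchors by the usual trick of inflating weights by $e^{b}=\eps^{-1}$. The prefactor comes from $\bz(\fr_1)\bz(\fr_2)\le e^{2\nu/n}z^2$.

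The main obstacle is (b): matching exactly the definition of $m^{(\Lambda)}_\Gamma(Y)$ and checking the geometric step from boundary distance to cluster size. In the other items the constants ($9$, $e^{3+\nu}$) must be checked against the precise forms of~\eqref{eqn: cluster bound with one anchor} and~\eqref{eqn: cluster bound with two distinct anchors}. For (b) I would first try writing the relevant distance through $\supp{\Gamma}$ and the triangle inequality, losing at most one or two lattice units, and absorb that loss into the $-1$ offsets of the definition.
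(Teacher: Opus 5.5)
Your treatment of (a), (c) and (d) follows the paper's proof. You bound the bracket by Lem.~\ref{lem: contour-interaction potential auxiliary lemma}, then apply~\eqref{eqn: rectangle cluster tail estimate}, summing over the $9\ell^2$ anchors in $T\cup\bdex[\bL]{T}$ or anchoring at $\fr_1$, and~\eqref{eqn: cluster bound with two distinct anchors}. A few bookkeeping corrections:
\begin{itemize}
\item In (a) there is no leftover factor of $e$. Each anchor has weight at most $e^{\nu/n}z\le e^{\nu}z$, and the tail estimate supplies exactly one $e$, which gives $9e^{1+\nu}(z\ell^2)$.
\item In (d) the factor $1+\indicator{\fr_1\not\sim\fr_2}$ equals $1$ because $\fr_1,\fr_2$ are assumed compatible, so the cost is $e^3$, not $2e^3$.
\item Also in (d), $\bz(\fr_1)\bz(\fr_2)\le e^{\nu}z^2$ because at most $n$ inputs exceed $z$.
\item The constraint $\abs{M}\ge\fd_{12}$ comes from Lem.~\ref{lem: lower bound on the size of clusters with anchors in distant tiles} applied to $\fr_1,\fr_2\in M$, not from $Y$ covering them.
\end{itemize}

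The genuine gap is the cardinality bound in (b). The estimate $\abs{M}\ge 1+\ceil{\rmd_\bL(Y,\bdin[2\bL]{\Lambda})/2}$ carries no information. Every rectangle of a cluster with $\pi_Y(M)=1$ is anchored in $Y$, and each surviving cluster has a rectangle anchored outside $\Lambda\setminus\bdin[2\bL]{\Lambda}$. So $Y$ itself meets $\bdin[2\bL]{\Lambda}\cup\bdex[\bL]{\Lambda}$, and $\rmd_\bL(Y,\bdin[2\bL]{\Lambda})\le 1$.

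Nor can you use $\rmd_\bL(Y,\supp{\Gamma})\le 1$ to pass from $Y$ to $\supp{\Gamma}$ with an $O(1)$ loss. That condition only says that \emph{some} tile of $Y$ is close to $\supp{\Gamma}$, while $Y$ may stretch all the way to the boundary layer. The triangle inequality through $Y$ therefore loses the diameter of $Y$. The most you could recover via $\abs{M}\ge m(Y)$ is roughly $\rmd_\bL(\supp{\Gamma},\bdin[2\bL]{\Lambda})/5$, not the $\ceil{\cdot/2}$ built into $m_\Gamma^{(\Lambda)}(Y)$.

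The missing idea is to exploit the freedom in the auxiliary lemma, which holds for an \emph{arbitrary} tile $T\subset Y$. Choose $T$ with $\rmd_\bL(T,\supp{\Gamma})\le 1$. Every surviving $M$ then contains a rectangle anchored in $T\cup\bdex[\bL]{T}$ and one anchored outside $\Lambda\setminus\bdin[2\bL]{\Lambda}$. Lem.~\ref{lem: lower bound on the size of clusters with anchors in distant tiles} then gives
\[
\abs{M}\ge\ceil{\rmd_\bL(T\cup\bdex[\bL]{T},\bdin[2\bL]{\Lambda})/2}+1\ge\ceil{\rmd_\bL(\supp{\Gamma},\bdin[2\bL]{\Lambda})/2}.
\]
This works because $T$ is a single tile, so the triangle inequality costs only $\rmd_\bL(T,\supp{\Gamma})+1\le 2$, which the $+1$ absorbs exactly. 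Combined with $\abs{M}\ge m(Y)$, this is $\abs{M}\ge m_\Gamma^{(\Lambda)}(Y)$. The tail estimate over the $9\ell^2$ anchors in $T\cup\bdex[\bL]{T}$, for this particular $T$, then finishes (b).
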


\begin{proof}[Proof of~\ref{itm: bound on the finite- and infinite-volume contour-interaction potentials}]
    We first prove the bound~\eqref{eqn: bound on the finite-volume contour-interaction potential} for the finite-volume contour-interaction potential.
    Let $T$ be an arbitrary tile contained in $Y$.
    Recalling the definition of $V_\Gamma^{(\Lambda)}(Y;\bz_\Delta)$ from~\eqref{eqn: finite-volume cluster interaction potential}, we use Lem.~\ref{lem: contour-interaction potential auxiliary lemma} to bound
    \begin{equation}
        \abs{V_\Gamma^{(\Lambda)}(Y;\bz_\Delta)}
        \le\indicator{\rmd_\bL(Y,\supp{\Gamma})\le1}2^{\abs{Y\cap\bL}}\sum_{\substack{M\in\cM^q(\Lambda)\\M\cap R^q(T\cup\bdex[\bL]{T})\ne\emptyset\\\abs{M}\ge m(Y)}}\abs{\varphi(M)}\bz(M).
    \end{equation}
    Applying the tail estimate~\eqref{eqn: rectangle cluster tail estimate} yields~\eqref{eqn: bound on the finite-volume contour-interaction potential} for the finite-volume contour-interaction potential.
    The same argument applies to the infinite-volume contour-interaction potential.
\end{proof}

\begin{proof}[Proof of~\ref{itm: bound on the finite-volume correction to the contour-interaction potential}]
    Under the assumption that $\rmd_\bL(Y,\supp{\Gamma})\le1$, let $T$ be a tile contained in $Y$ realizing this distance.
    Using Lem.~\ref{lem: contour-interaction potential auxiliary lemma}, we estimate the difference
    \begin{equation}
        \abs{V_\Gamma^{(\Lambda)}(Y;\bz_\Delta)-V_\Gamma(Y;\bz)}
        \le\indicator{\rmd_\bL(Y,\supp{\Gamma})\le1}2^{\abs{Y\cap\bL}}\sum_{\substack{M\in\cM^q(\Z^2)\setminus\cM^q(\Lambda\setminus\bdin[2\bL]{\Lambda})\\M\cap R^q(T\cup\bdex[\bL]{T})\ne\emptyset\\\abs{M}\ge m(Y)}}\abs{\varphi(M)}\bz(M).
    \end{equation}
    We now sharpen the constraint on $\abs{M}$.
    Note that the condition $M\in\cM^q(\Z^2)\setminus\cM^q(\Lambda\setminus\bdin[2\bL]{\Lambda})$ dictates that $M$ contain at least one rectangle anchored outside $\Lambda\setminus\bdin[2\bL]{\Lambda}$.
    By Lem.~\ref{lem: lower bound on the size of clusters with anchors in distant tiles}, we have that
    \begin{equation}
        \abs{M}
        \ge\ceil{\rmd_\bL(T\cup\bdex[\bL]{T},\bdin[2\bL]{\Lambda})/2}+1
        \ge\ceil{\rmd_\bL(\supp{\Gamma},\bdin[2\bL]{\Lambda})/2},
    \end{equation}
    having recalled the choice of $T$ in the last inequality.
    Applying the tail estimate~\eqref{eqn: rectangle cluster tail estimate} with the new constraint $\abs{M}\ge m_\Gamma^{(\Lambda)}(Y)$ yields~\eqref{eqn: bound on the finite-volume correction to the contour-interaction potential}.
\end{proof}

\begin{proof}[Proof of~\ref{itm: bound on the first-order derivative of the finite-volume contour-interaction potential}]
    Differentiating term-by-term and applying Lem.~\ref{lem: contour-interaction potential auxiliary lemma}, we bound
    \begin{equation}
        \bars*{\cD_1V_\Gamma^{(\Lambda)}(Y;\bz_\Delta)}
        \le\indicator{R^q(\Delta)\cap R(Y)}(\fr_1)\indicator{\rmd_\bL(Y,\supp{\Gamma})\le1}2^{\abs{Y\cap\bL}}\sum_{\substack{M\in\cM^q(\Delta)\\M\ni \fr_1\\\abs{M}\ge m(Y)}}\abs{\varphi(M)}n_{\fr_1}(M)\bz(M).
    \end{equation}
    Applying~\eqref{eqn: rectangle cluster tail estimate} to the series yields~\eqref{eqn: bound on the first-order derivative of the finite-volume contour-interaction potential}.
\end{proof}

\begin{proof}[Proof of~\ref{itm: bound on the second-order derivative of the finite-volume contour-interaction potential}]
    Differentiating term-by-term and applying Lem.~\ref{lem: contour-interaction potential auxiliary lemma}, we bound
    \begin{equation}
        \begin{multlined}
            \bars{\cD_1\cD_2V_\Gamma^{(\Lambda)}(Y;\bz_\Delta)}
            \\
            \le \indicator{R^q(\Delta)\cap R(Y)}(\fr_1,\fr_2)\indicator{\rmd_\bL(Y,\supp{\Gamma})\le1}2^{\abs{Y\cap\bL}}\sum_{\substack{M\in\cM^q(\Delta)\\M\supset\mset{\fr_1,\fr_2}\\\abs{M}\ge m(Y)}}\abs{\varphi(M)}n_{\fr_1}(M)n_{\fr_2}(M)\bz(M).
        \end{multlined}
    \end{equation}
    Since $M\ni\fr_1,\fr_2$, we can use Lem.~\ref{lem: lower bound on the size of clusters with anchors in distant tiles} to sharpen the cardinality constraint to $\abs{M}\ge \fm_{12}(Y)$.
    Applying~\eqref{eqn: cluster bound with two distinct anchors} to the remaining series yields~\eqref{eqn: bound on the second-order derivative of the finite-volume contour-interaction potential}.
\end{proof}

\subsubsection{Estimates for the interaction-polymer factor}
\label{sec: bounds on the interaction-polymer factors}

\begin{lemma}
    \label{lem: interaction-polymer factor auxiliary computation}
    For $X_2\in\cT(\Z_2)$, define $\cD(X_2):=\set{\cY\in\cB_\ast(\Z^2)\mid\cup_{Y\in\cY}Y=X_2}$.
    Suppose $0<\xi<1$.
    Let $a,h:\cT_\ast(\Z^2)\to[0,\infty)$ be summable functions.
    Then, for any nonnegative functions $b_1,\dots,b_r$ on $\cT_\ast(\Z^2)$ with $r\in\set{0,1,2}$, it holds that
    \begin{equation}
        \begin{multlined}
            \sum_{\cY\in\cD(X_2)}\brackets[\Bigg]{\prod_{Y\in\cY}e^{h(Y)}}\sum_{\substack{Y_1,\dots,Y_r\in\cY\\\text{distinct}}}\brackets[\Bigg]{\prod_{i=1}^r\xi^{\abs{Y_i\cap\bL}}b_i(Y_i)}\brackets[\Bigg]{\prod_{Y\in\cY\setminus\set{Y_1,\dots,Y_r}}\xi^{\abs{Y\cap\bL}}a(Y)}
            \\
            \le \xi^{\abs{X_2\cap\bL}}\exp\cbraces[\Bigg]{\sum_{Y\in\cT_\ast(\Z^2)}a(Y)+\sum_{Y\in\cT_\ast(\Z^2)}h(Y)}\prod_{i=1}^r\sum_{Y\in\cT_\ast(\Z^2)}b_i(Y).
        \end{multlined}
    \end{equation}
\end{lemma}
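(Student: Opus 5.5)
The plan has three stages. First, extract the factor $\xi^{\abs{X_2\cap\bL}}$. Then drop the covering constraint. Finally, bound the resulting unconstrained sum by an exponential.

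For the first stage, take any $\cY\in\cD(X_2)$. The sets in $\cY$ cover $X_2$, so
\begin{equation}
\sum_{Y\in\cY}\abs{Y\cap\bL}\ge\abs{X_2\cap\bL}.
\end{equation}
Since $0<\xi<1$, this gives $\prod_{Y\in\cY}\xi^{\abs{Y\cap\bL}}\le\xi^{\abs{X_2\cap\bL}}$. Each $Y\in\cY$ carries exactly one factor $\xi^{\abs{Y\cap\bL}}$: either as one of the $Y_i$ or in the product over $\cY\setminus\set{Y_1,\dots,Y_r}$. So this whole product can be pulled out as $\xi^{\abs{X_2\cap\bL}}$. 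What remains is bounded by
\begin{equation}
\sum_{\cY\in\cD(X_2)}\brackets[\Bigg]{\prod_{Y\in\cY}e^{h(Y)}}\sum_{\substack{Y_1,\dots,Y_r\in\cY\\\text{distinct}}}\prod_{i=1}^r b_i(Y_i)\prod_{Y\in\cY\setminus\set{Y_1,\dots,Y_r}}a(Y).
\end{equation}
Every term is nonnegative, so we may enlarge the outer sum from $\cD(X_2)$ to all finite subsets $\cY$ of $\cT_\ast(\Z^2)$. Call this enlarged sum $S$; it is an upper bound for the expression above.

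For the second stage, reorganize $S$ by first choosing the ordered distinct tuple $(Y_1,\dots,Y_r)$ and then the remaining set $\cY'=\cY\setminus\set{Y_1,\dots,Y_r}$, which ranges over the finite subsets of $\cT_\ast(\Z^2)$ not containing any $Y_i$. Each pair $(\cY,(Y_1,\dots,Y_r))$ corresponds to exactly one such choice, so this is an equality. Then drop the disjointness and distinctness constraints, which only increases the sum by positivity. The result factorizes:
\begin{equation}
S\le\prod_{i=1}^r\brackets[\Big]{\sum_{Y}b_i(Y)e^{h(Y)}}\cdot\sum_{\cY'}\prod_{Y\in\cY'}a(Y)e^{h(Y)}.
\end{equation}
The last sum is $\prod_{Y}(1+a(Y)e^{h(Y)})\le\exp\set{\sum_Y a(Y)e^{h(Y)}}$.

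The main obstacle is the stray factors $e^{h(Y_i)}$ and $e^{h(Y)}$, which have to be removed to reach the stated form; I would control them crudely. For the marked $Y_i$, bound $e^{h(Y_i)}\le\prod_{Y}e^{h(Y)}=\exp\set{\sum_Y h(Y)}$, and use this single global factor for all the $e^h$ weights together. Concretely, in the expansion $\prod_Y(1+a(Y)e^{h(Y)})$, every subset of $\cT_\ast(\Z^2)$ carries at most $\prod_{Y\in\cY}e^{h(Y)}\le e^{\sum_Y h(Y)}$, because $h\ge0$. So I would pull $e^{\sum_Y h}$ out before dropping the constraints in the second stage. That is, first bound $\prod_{Y\in\cY}e^{h(Y)}\le\exp\set{\sum_{Y\in\cT_\ast(\Z^2)}h(Y)}$ uniformly in $\cY$; this is legitimate because $h$ is summable and nonnegative. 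After that, the factorization above runs with $a$ and $b_i$ alone, and gives
\begin{equation}
\prod_i\sum_Y b_i(Y)\cdot\prod_Y(1+a(Y))\le\exp\set{\textstyle\sum_Y a(Y)}\prod_i\sum_Y b_i(Y).
\end{equation}
Multiplying by $\xi^{\abs{X_2\cap\bL}}$ and $e^{\sum h}$ gives the claim. The cases $r=0,1,2$ are handled uniformly; for $r=0$ the inner sum is the single empty tuple.
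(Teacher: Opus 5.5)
Your proposal is correct and follows the same route as the paper: extract $\xi^{\abs{X_2\cap\bL}}$ via the covering constraint, bound $\prod_{Y\in\cY}e^{h(Y)}$ uniformly by $\exp\{\sum_{Y\in\cT_\ast(\Z^2)}h(Y)\}$ using $h\ge 0$, then decouple the marked $Y_i$ and bound the remaining sum over sets by $\prod_Y(1+a(Y))\le e^{\sum_Y a(Y)}$. Your explicit bijection $(\cY,(Y_1,\dots,Y_r))\mapsto((Y_1,\dots,Y_r),\cY\setminus\{Y_1,\dots,Y_r\})$ justifies the decoupling step more cleanly than the paper's phrasing; in a write-up, present only your final version (pulling out $e^{\sum h}$ first) and drop the detour through the weights $a e^{h}$ and $b_i e^{h}$.
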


\begin{proof}
    Using the covering constraint $\cup_{Y\in\cY}Y=X_2$, we may bound the powers of $\xi$ by an independent factor of $\xi^{\abs{X_2\cap\bL}}$.
    Extending the product of $e^{h(Y)}$ to all of $\cT_\ast(\Z^2)$ yields an independent factor of $\exp\cbraces{\sum_{Y\in\cT_\ast(\Z^2)}h(Y)}$.
    After extending the remaining product over $Y$ to all of $\cY$, we can bound the sum of products of the $b_i$ by extending the summation to $Y_1,\dots,Y_r\in\cT_\ast(\Z^2)$ and dropping the distinctness constraint, which yields the independent factor $\prod_{i=1}^r\sum_{Y\in\cT_\ast(\Z^2)}b_i(Y)$.
    It remains to bound
    \begin{equation}
        \sum_{\cY\in\cD(X_2)}\prod_{Y\in\cY}a(Y)
        \le \prod_{Y\in\cT_\ast(\Z^2)}[1+a(Y)]
        \le \prod_{Y\in\cT_\ast(\Z^2)}e^{a(Y)}
        =\exp\cbraces[\Bigg]{\sum_{Y\in\cT_\ast(\Z^2)}a(Y)},
    \end{equation}
    as required.
\end{proof}

\begin{proposition}
    \label{prop: bounds on the interaction-polymer factors}
    The following estimates hold for the interaction-polymer factor.
    Except in~\ref{itm: bound on the finite- and infinite-volume interaction-polymer factors}, we assume that $\Gamma\in\cC(\Lambda,q)$ and $X_2\in\cT(\Lambda)$.
    Throughout, we write $X_1:=\supp{\Gamma}$.
    \begin{enumerate}[label=(\alph*),ref=\theproposition(\alph*)]
        \item \label{itm: bound on the finite- and infinite-volume interaction-polymer factors} 
        \declareloss{eqn: bound on the finite-volume interaction-polymer factor} 
        \declarePeierls{eqn: bound on the finite-volume interaction-polymer factor} 
        For $\Gamma\in\cC(\Lambda,q)$ and $X_2\in\cT(\Lambda)$, the finite-volume interaction-polymer factor satisfies
        \begin{equation}
            \label{eqn: bound on the finite-volume interaction-polymer factor}
            \sum_{\substack{\cY\in\cB_\ast(\Lambda)\\\cup_{Y\in\cY}Y=X_2}}\prod_{Y\in\cY}\abs{e^{V_\Gamma^{(\Lambda)}(Y;\bz_\Delta)}-1}
            \le e^{\loss{eqn: bound on the finite-volume interaction-polymer factor}\abs{X_1\cap\bL}}e^{-\Peierls{eqn: bound on the finite-volume interaction-polymer factor}\abs{X_2\cap\bL}}.
        \end{equation}
        For $\Gamma\in\cC(\Z^2,q)$ and $X_2\in\cT(\Z^2)$, the infinite-volume interaction-polymer factor satisfies
        \begin{equation}
            \label{eqn: bound on the infinite-volume interaction-polymer factor}
            \sum_{\substack{\cY\in\cB_\ast(\Z^2)\\\cup_{Y\in\cY}Y=X_2}}\prod_{Y\in\cY}\abs{e^{V_\Gamma(Y;\bz)}-1}
            \le e^{\loss{eqn: bound on the finite-volume interaction-polymer factor}\abs{X_1\cap\bL}}e^{-\Peierls{eqn: bound on the finite-volume interaction-polymer factor}\abs{X_2\cap\bL}}.
        \end{equation}
        \item \label{itm: bound on the finite-volume correction to the interaction-polymer factors} 
        \declareprefactor{eqn: bound on the finite-volume correction to the interaction-polymer factors} 
        \declareloss{eqn: bound on the finite-volume correction to the interaction-polymer factors} 
        \declarePeierls{eqn: bound on the finite-volume correction to the interaction-polymer factors} 
        \declarespatial{eqn: bound on the finite-volume correction to the interaction-polymer factors} 
        The finite-volume correction to the interaction-polymer factor satisfies
        \begin{equation}
            \label{eqn: bound on the finite-volume correction to the interaction-polymer factors}
            \begin{multlined}
                \sum_{\substack{\cY\in\cB_\ast(\Lambda)\\\cup_{Y\in\cY}Y=X_2}}\bars[\Bigg]{\prod_{Y\in\cY}\left(e^{V_\Gamma^{(\Lambda)}(Y;\bz_\Delta)}-1\right)-\prod_{Y\in\cY}\left(e^{V_\Gamma(Y;\bz)}-1\right)}
                \\
                \le \prefactor{eqn: bound on the finite-volume correction to the interaction-polymer factors}\abs{X_1\cap\bL}e^{\loss{eqn: bound on the finite-volume correction to the interaction-polymer factors}\abs{X_1\cap\bL}}e^{-\Peierls{eqn: bound on the finite-volume correction to the interaction-polymer factors}\abs{X_2\cap\bL}}e^{-\spatial{eqn: bound on the finite-volume correction to the interaction-polymer factors}\rmd_\bL(X_1,\bdin[2\bL]{\Lambda})}.
            \end{multlined}
        \end{equation}
        \item \label{itm: bound on the first-order derivative of the finite-volume interaction-polymer factor} 
        \declareprefactor{eqn: bound on the first-order derivative of the finite-volume interaction-polymer factor}
        The first-order derivative of the finite-volume interaction-polymer factor satisfies
        \begin{equation}
            \label{eqn: bound on the first-order derivative of the finite-volume interaction-polymer factor}
            \bars[\Bigg]{\cD_1\sum_{\substack{\cY\in\cB_\ast(\Lambda)\\\cup_{Y\in\cY}Y=X_2}}\prod_{Y\in\cY}\left(e^{V_\Gamma^{(\Lambda)}(Y;\bz_\Delta)}-1\right)}
            \le\indicator{R^q(\Delta)\cap R(X_2)}(\fr_1)\prefactor{eqn: bound on the first-order derivative of the finite-volume interaction-polymer factor}\abs{X_1\cap\bL}e^{\loss{eqn: bound on the finite-volume interaction-polymer factor}\abs{X_1\cap\bL}}e^{-\Peierls{eqn: bound on the finite-volume interaction-polymer factor}\abs{X_2\cap\bL}}.
        \end{equation}
        \item \label{itm: bound on the second-order derivative of the finite-volume interaction-polymer factor} 
        \declareprefactor{eqn: bound on the second-order derivative of the finite-volume interaction-polymer factor}
		\declarePeierls{eqn: bound on the second-order derivative of the finite-volume interaction-polymer factor}
        The second-order derivative of the finite-volume interaction-polymer factor satisfies
        \begin{equation}
            \label{eqn: bound on the second-order derivative of the finite-volume interaction-polymer factor}
            \begin{multlined}
                \bars[\Bigg]{\cD_1\cD_2\sum_{\substack{\cY\in\cB_\ast(\Lambda)\\\cup_{Y\in\cY}Y=X_2}}\prod_{Y\in\cY}\left(e^{V_\Gamma^{(\Lambda)}(Y;\bz_\Delta)}-1\right)}
                \\
                \le \indicator{R^q(\Delta)\cap R(X_2)}(\fr_1,\fr_2)\prefactor{eqn: bound on the second-order derivative of the finite-volume interaction-polymer factor}\abs{X_1\cap\bL}^2 e^{\loss{eqn: bound on the finite-volume interaction-polymer factor}\abs{X_1\cap\bL}}e^{-\Peierls{eqn: bound on the second-order derivative of the finite-volume interaction-polymer factor}\abs{X_2\cap\bL}}.
            \end{multlined}
        \end{equation}
    \end{enumerate}    
\end{proposition}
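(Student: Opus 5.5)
The plan is to reduce every item to Lem.~\ref{lem: interaction-polymer factor auxiliary computation}, after bounding each factor $e^{V}-1$ pointwise by the potential estimates of the previous proposition. For item (a), use $\abs{e^{V}-1}\le\abs{V}e^{\abs{V}}$ together with \eqref{eqn: bound on the finite-volume contour-interaction potential}. This gives, for each $Y$,
\[
\abs{e^{V_\Gamma^{(\Lambda)}(Y;\bz_\Delta)}-1}\le \indicator{\rmd_\bL(Y,X_1)\le1}\,9e^{1+\nu}(z\ell^2)2^{\abs{Y\cap\bL}}\eps^{m(Y)-1}e^{h(Y)},
\]
where $h(Y)$ is the same bound without the indicator (it is tiny, since $\eps<e^{-150}$). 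Next, apply Lem.~\ref{itm: lower bound on m(delta)-1} to write $\eps^{m(Y)-1}\le \eps^{\free{itm: lower bound on m(delta)-1}\abs{Y\cap\bL}}\cdot\eps\cdot\eps^{\{(\frac15-\free{itm: lower bound on m(delta)-1})\abs{Y\cap\bL}-\frac{11}{5}\}_+}$. Then set $\xi:=2\eps^{\free{itm: lower bound on m(delta)-1}/2}$, splitting off half of the linear decay, so that $\xi^{\abs{Y\cap\bL}}$ is factored out of each $Y$. Lem.~\ref{lem: interaction-polymer factor auxiliary computation} with $r=0$ then yields $\xi^{\abs{X_2\cap\bL}}\exp\{\sum_Y a(Y)+\sum_Y h(Y)\}$. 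Here $a(Y)$ carries the indicator $\rmd_\bL(Y,X_1)\le1$, and summing it by Lem.~\ref{itm: incompatible polymer sum with piecewise linear decay} (with $C$ absorbing the remaining $\eps^{\free{}/2}$ per tile) gives a constant times $\abs{X_1\cap\bL}$. That constant defines $\loss{eqn: bound on the finite-volume interaction-polymer factor}$, and $\Peierls{eqn: bound on the finite-volume interaction-polymer factor}:=-\log\xi$.

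The main obstacle is that $\sum_Y h(Y)$ over all of $\cT_\ast(\Z^2)$ diverges, since $h$ lacks the indicator. I would avoid this by not extending the $e^{h}$ product to all polymers. Instead I would use $e^{\abs{V}}\le e^{h(Y)}$ with $h(Y)\le 9e^{1+\nu}(z\ell^2)2^{\abs{Y\cap\bL}}\eps^{m(Y)-1}\le 1$ to bound $e^{h(Y)}$ by a constant factor per $Y$, absorbed into $a$ and $b_i$. Alternatively, keep the indicator inside $h$ and sum it via Lem.~\ref{itm: incompatible polymer sum with piecewise linear decay}, which also gives $O(\abs{X_1\cap\bL})$ into the loss exponent. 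The infinite-volume bound is identical.

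For item (b), telescope the difference of products as
\[
\sum_{Y_1\in\cY}\Big[\prod_{Y\prec Y_1}(e^{V}-1)\Big](e^{V^{(\Lambda)}(Y_1)}-e^{V(Y_1)})\Big[\prod_{Y\succ Y_1}(e^{V^{(\Lambda)}}-1)\Big].
\]
The middle factor is at most $\abs{V^{(\Lambda)}-V}e^{\max\abs V}$, so by \eqref{eqn: bound on the finite-volume correction to the contour-interaction potential} it carries $\eps^{m_\Gamma^{(\Lambda)}(Y_1)-1}$. Lem.~\ref{itm: lower bound on m_Gamma^(Lambda)(delta)-1} extracts $\eps^{\frac16(1-16\free{})\rmd_\bL(X_1,\bdin[2\bL]{\Lambda})}$, which gives $\spatial{eqn: bound on the finite-volume correction to the interaction-polymer factors}$. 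Lem.~\ref{lem: interaction-polymer factor auxiliary computation} with $r=1$ and $b_1$ carrying the indicator then gives $\sum b_1\lesssim\abs{X_1\cap\bL}$ from Lem.~\ref{itm: incompatible polymer sum with piecewise linear decay}, which explains the prefactor $\abs{X_1\cap\bL}$. Some of the $\xi$ decay is sacrificed to absorb the $\abs{Y\cap\bL}$-dependence, so $\Peierls{eqn: bound on the finite-volume correction to the interaction-polymer factors}$ is slightly smaller.

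For items (c) and (d), differentiate with the Leibniz rule. $\cD_1\prod(e^{V}-1)=\sum_{Y_1}e^{V(Y_1)}\cD_1V(Y_1)\prod_{Y\ne Y_1}(e^V-1)$, and $\cD_1\cD_2$ produces terms with two distinct $Y_1,Y_2$ and a diagonal term involving $\cD_1\cD_2V+\cD_1V\,\cD_2V$. Insert \eqref{eqn: bound on the first-order derivative of the finite-volume contour-interaction potential} and \eqref{eqn: bound on the second-order derivative of the finite-volume contour-interaction potential}; these carry the indicators $\fr_i\in R(Y)$ and, via $\fm_{12}$ and Lem.~\ref{itm: lower bound on m_12(delta)-2}, the needed decay. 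Then apply Lem.~\ref{lem: interaction-polymer factor auxiliary computation} with $r=1$ or $r=2$. Each $\sum b_i$ is bounded by Lem.~\ref{itm: incompatible polymer sum with piecewise linear decay} by $\abs{X_1\cap\bL}$, which produces the factors $\abs{X_1\cap\bL}$ and $\abs{X_1\cap\bL}^2$. The overall indicator $\fr_i\in R(X_2)$ follows because $Y_i\subset X_2$.
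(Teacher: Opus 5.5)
Your proposal is correct and is essentially the paper's proof: pointwise bounds via $\abs{e^x-1}\le\abs{x}e^{\abs{x}}$ (a difference-of-products estimate for (b), the Leibniz rule for (c)--(d)) combined with the contour-interaction potential estimates, then extraction of linear decay through the lower bounds on $m$, $\fm_{12}$ and $m_\Gamma^{(\Lambda)}$, the auxiliary covering lemma with $r=0,1,2$, and finally the piecewise-linear incompatible polymer sum for the indicator-restricted remainders. The ``divergence'' you flag is not a real obstacle, because the bound on $\abs{V}$ already carries $\indicator{\rmd_\bL(Y,X_1)\le1}$, so your second fix (keeping this indicator in $h$) is exactly what the paper does; likewise there is no need to move $2^{\abs{Y\cap\bL}}$ into $\xi$ or to sacrifice decay in (b): keeping $2^{\abs{Y\cap\bL}}$ as $C=2$ in the piecewise-linear sum gives the sharper rates $e^{-\kappa}=\eps^{\theta_1}$ in (a), and $\max\{\eps^{\theta_1},\eps^{\theta_3}\}$, $\max\{\eps^{\theta_1},\eps^{\theta_2}\}$ in (b) and (d), which are the rates recorded in the appendix and used in the later numerics.
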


\begin{proof}[Proof of~\ref{itm: bound on the finite- and infinite-volume interaction-polymer factors}]
    We first establish the bound~\eqref{eqn: bound on the finite-volume interaction-polymer factor} for the finite-volume interaction-polymer factors.
    By applying the elementary inequality $\abs{e^x-1}\le e^{\abs{x}}\abs{x}$ to the factor $\bars{e^{V_\Gamma^{(\Lambda)}(Y;\bz_\Delta)}-1}$, inserting the bound on $\bars{V_\Gamma^{(\Lambda)}(Y;\bz_\Delta)}$ from Prop.~\ref{itm: bound on the finite- and infinite-volume contour-interaction potentials}, and applying Lem.~\ref{itm: lower bound on m(delta)-1}, we may use Lem.~\ref{lem: interaction-polymer factor auxiliary computation} (with $r=0$) to bound
    \begin{equation}
        \begin{multlined}
            \sum_{\substack{\cY\in\cB_\ast(\Lambda)\\\cup_{Y\in\cY}Y=X_2}}\prod_{Y\in\cY}\abs{e^{V_\Gamma^{(\Lambda)}(Y;\bz_\Delta)}-1}
            \\
            \le \eps^{\free{itm: lower bound on m(delta)-1}\abs{X_2\cap\bL}}\exp\cbraces[\Bigg]{18e^{1+\nu}(z\ell^2) \sum_{\substack{Y\in\cT_\ast(\Z^2)\\\rmd_\bL(Y,X_1)\le1}}2^{\abs{Y\cap\bL}}\eps^{1+\{(\frac{1}{5}-\free{itm: lower bound on m(delta)-1})\abs{Y\cap\bL}-\frac{11}{5}\}_+}}.
        \end{multlined}
    \end{equation}
    Applying Lem.~\ref{itm: incompatible polymer sum with piecewise linear decay} completes the proof for the finite-volume interaction-polymer factor.
    The same proof applies to the infinite-volume interaction-polymer factor.
\end{proof}

\begin{claim}
    \label{clm: difference of products bound}
    For any finite index set $I$ and real numbers $(x_i)_{i\in I}$ and $(y_i)_{i\in I}$,
    \begin{equation}
        \bars[\Bigg]{\prod_{i\in I}(e^{x_i}-1)-\prod_{i\in I}(e^{y_i}-1)}
        \le\prod_{i\in I}e^{2\max\set{\abs{x_i},\abs{y_i}}}\cdot\sum_{i\in I}\abs{x_i-y_i}\prod_{\substack{j\in I\\j\ne i}}\max\set*{\abs{x_j},\abs{y_j}}.
    \end{equation}
\end{claim}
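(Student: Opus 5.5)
The plan is to use the standard telescoping identity for differences of products, then bound each factor with elementary exponential inequalities. Write $a_i:=e^{x_i}-1$, $b_i:=e^{y_i}-1$, and order $I=\set{1,\dots,N}$. Then
\begin{equation*}
    \prod_{i\in I}a_i-\prod_{i\in I}b_i=\sum_{i=1}^N\brackets[\Bigg]{\prod_{j<i}b_j}(a_i-b_i)\brackets[\Bigg]{\prod_{j>i}a_j}.
\end{equation*}
Applying the triangle inequality reduces the claim to bounding, for each $i$, the three kinds of factors separately, with $M_j:=\max\set{\abs{x_j},\abs{y_j}}$.

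For the factors with $j\ne i$, I will use the inequality $\abs{e^x-1}\le e^{\abs{x}}\abs{x}$, already used in the proof of Prop.~\ref{itm: bound on the finite- and infinite-volume interaction-polymer factors}. It gives $\abs{a_j},\abs{b_j}\le e^{M_j}M_j$. For the distinguished factor, the mean value theorem gives $\abs{e^{x_i}-e^{y_i}}\le e^{\max\set{x_i,y_i}}\abs{x_i-y_i}\le e^{M_i}\abs{x_i-y_i}$. Multiplying, the $i$-th summand is at most
\begin{equation*}
    \prod_{j\in I}e^{M_j}\cdot\abs{x_i-y_i}\prod_{j\ne i}M_j.
\end{equation*}
Since $e^{M_j}\le e^{2M_j}$, this is already stronger than the stated bound. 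Summing over $i$ completes the argument; the factor $2$ in the exponent is simply slack.

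None of the steps is a real obstacle. The only point needing care is that the telescoping mixes $a$'s and $b$'s, so each non-distinguished factor must be bounded by the maximum over $x_j$ and $y_j$. Choosing $M_j$ as that maximum handles this uniformly. The case $I=\emptyset$ is trivial, since both sides vanish: the empty products equal $1$ and the empty sum equals $0$.
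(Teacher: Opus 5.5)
Your proof is correct and complete. The paper states this claim without proof, and your argument (telescoping, the bound $|e^{x}-1|\le e^{|x|}|x|$ for the non-distinguished factors, and the mean value theorem for the distinguished one) is the natural elementary proof it leaves implicit.

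You in fact obtain the sharper prefactor $\prod_{j\in I}e^{\max\{|x_j|,|y_j|\}}$, so the factor $2$ in the exponent is slack. In the paper's only use of the claim, the proof of Prop.~\ref{itm: bound on the finite-volume correction to the interaction-polymer factors}, that factor $2$ is what raises the constant in the exponent from $18$ (as in part (a)) to $27$. Your version would allow $18$ there as well.
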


\begin{proof}[Proof of~\ref{itm: bound on the finite-volume correction to the interaction-polymer factors}] 
    Applying Claim~\ref{clm: difference of products bound} to the difference of products in the LHS of~\eqref{eqn: bound on the finite-volume correction to the interaction-polymer factors}, inserting the bounds from Props.~\ref{itm: bound on the finite- and infinite-volume contour-interaction potentials},~\ref{itm: bound on the finite-volume correction to the contour-interaction potential}, Lems.~\ref{itm: lower bound on m(delta)-1},~\ref{itm: lower bound on m_Gamma^(Lambda)(delta)-1},
    and applying Lem.~\ref{lem: interaction-polymer factor auxiliary computation} (with $r=1$) yields
    \begin{align*}
        {}&\sum_{\substack{\cY\in\cB_\ast(\Lambda)\\\cup_{Y\in\cY}Y=X_2}}\bars[\Bigg]{\prod_{Y\in\cY}\left(e^{V_\Gamma^{(\Lambda)}(Y;\bz_\Delta)}-1\right)-\prod_{Y\in\cY}\left(e^{V_\Gamma(Y;\bz)}-1\right)}
        \\
        \le {}&
        \eps^{\min\set{\free{itm: lower bound on m(delta)-1},\free{itm: lower bound on m_Gamma^(Lambda)(delta)-1}}\abs{X_2\cap\bL}}\eps^{\frac{1}{6}(1-16\free{itm: lower bound on m_Gamma^(Lambda)(delta)-1})\rmd_\bL(X_1,\bdin[2\bL]{\Lambda})}
        \\
        \times{}&
        \exp\cbraces[\Bigg]{27e^{1+\nu}(z\ell^2)\sum_{\substack{Y\in\cT_\ast(\Z^2)\\\rmd_\bL(Y,X_1)\le1}}2^{\abs{Y\cap\bL}}\eps^{1+\{(\frac{1}{5}-\free{itm: lower bound on m(delta)-1})\abs{Y\cap\bL}-\frac{11}{5}\}_+}} 
        \\
        \times{}&\sum_{\substack{Y\in\cT_\ast(X_2)\\\rmd_\bL(Y,X_1)\le1}}9e^{1+\nu}(z\ell^2)2^{\abs{Y\cap\bL}}\eps^{1+\frac{1}{15}(2+\free{itm: lower bound on m_Gamma^(Lambda)(delta)-1})\{\abs{Y\cap\bL}-16\}_+}.
        \tag{\stepcounter{equation}\theequation}
    \end{align*}
    Applying Lem.~\ref{itm: incompatible polymer sum with piecewise linear decay} completes the proof of~\eqref{eqn: bound on the finite-volume correction to the interaction-polymer factors}.
\end{proof}

\begin{proof}[Proof of~\ref{itm: bound on the first-order derivative of the finite-volume interaction-polymer factor}]
    Applying Props.~\ref{itm: bound on the finite- and infinite-volume contour-interaction potentials},~\ref{itm: bound on the first-order derivative of the finite-volume contour-interaction potential} and Lem.~\ref{itm: lower bound on m(delta)-1}, we bound using Lem.~\ref{lem: interaction-polymer factor auxiliary computation} (with $r=1$)
    \begin{align*}
        {}&\sum_{\substack{\cY\in\cB_\ast(\Lambda)\\\cup_{Y\in\cY}Y=X_2}}\prod_{Y\in\cY}e^{\abs{V^{(\Lambda)}_\Gamma(Y_1;\bz_\Delta)}}\sum_{Y_1\in\cY}\abs{\cD_1V^{(\Lambda)}_\Gamma(Y_1;\bz_\Delta)}\prod_{Y\in\cY\setminus\set{Y_1}}\abs{V_\Gamma^{(\Lambda)}(Y;\bz_\Delta)}
        \\
        \le {}& 
        \indicator{R^q(\Delta)}(\fr_1)\eps^{\free{itm: lower bound on m(delta)-1}\abs{X_2\cap\bL}}\exp\cbraces[\Bigg]{18e^{1+\nu}(z\ell^2) \sum_{\substack{Y\in\cT_\ast(\Z^2)\\\rmd_\bL(Y,X_1)\le1}}2^{\abs{Y\cap\bL}}\eps^{1+\{(\frac{1}{5}-\free{itm: lower bound on m(delta)-1})\abs{Y\cap\bL}-\frac{11}{5}\}_+}}
        \\
        \times{}& \sum_{\substack{Y\in\cT_\ast(\Z^2)\\\rmd_\bL(Y,X_1)\le1}}\indicator{R(Y)}(\fr_1)2^{\abs{Y\cap\bL}}e^{1+\nu}z\eps^{1+\{(\frac{1}{5}-\free{itm: lower bound on m(delta)-1})\abs{Y\cap\bL}-\frac{11}{5}\}_+}.
        \tag{\stepcounter{equation}\theequation}
    \end{align*}
    Bounding $\indicator{R(Y)}(\fr_1)\le\indicator{R(X_2)}(\fr_1)$ and applying Lem.~\ref{itm: incompatible polymer sum with piecewise linear decay} completes the proof of~\eqref{eqn: bound on the first-order derivative of the finite-volume interaction-polymer factor}.
\end{proof}

\begin{proof}[Proof of~\ref{itm: bound on the second-order derivative of the finite-volume interaction-polymer factor}]
    The second-order derivative is equal to the sum of the following quantities:
    \begin{equation}
        \begin{multlined}
            \sum_{\substack{\cY\in\cB_\ast(\Lambda)\\\cup_{Y\in\cY}Y=X_2}}\sum_{Y_1\in\cY}\brackets[\Bigg]{\cD_1\cD_2V_\Gamma^{(\Lambda)}(Y_1;\bz_\Delta)+\prod_{i=1,2}\cD_i V_\Gamma^{(\Lambda)}(Y_1;\bz_\Delta)}e^{V_\Gamma^{(\Lambda)}(Y_1;\bz_\Delta)}\prod_{Y\in\cY\setminus\set{Y_1}}\left(e^{V_\Gamma^{(\Lambda)}(Y;\bz_\Delta)}-1\right)
            \\
            +\sum_{\substack{\cY\in\cB_\ast(\Lambda)\\\cup_{Y\in\cY}Y=X_2}}\sum_{Y_1\in\cY}
            \sum_{Y_2\in\cY\setminus\set{Y_1}}
            \brackets[\Bigg]{\prod_{i=1,2}\cD_i V_\Gamma^{(\Lambda)}(Y_i;\bz_\Delta)e^{V_\Gamma^{(\Lambda)}(Y_i;\bz_\Delta)}}
            \brackets[\Bigg]{\prod_{Y\in\cY\setminus\set{Y_1,Y_2}}\left(e^{V_\Gamma^{(\Lambda)}(Y;\bz_\Delta)}-1\right)}.
        \end{multlined}
    \end{equation}
    Applying Props.~\ref{itm: bound on the finite- and infinite-volume contour-interaction potentials},~\ref{itm: bound on the first-order derivative of the finite-volume contour-interaction potential} and~\ref{itm: bound on the second-order derivative of the finite-volume contour-interaction potential} and Lems.~\ref{itm: lower bound on m(delta)-1} and~\ref{itm: lower bound on m_12(delta)-2}, we bound the above in absolute value using Lem.~\ref{lem: interaction-polymer factor auxiliary computation} (with $r=1,2$) by
    \begin{align*}
        {}&\indicator{R^q(\Delta)\cap R(X_2)}(\fr_1,\fr_2)e^{2\nu}z^2
        \eps^{\min\set{\free{itm: lower bound on m(delta)-1},\free{itm: lower bound on m_12(delta)-2}}\abs{X_2\cap\bL}}
        \\
        {}&\times \exp\cbraces[\Bigg]{18e^{1+\nu}(z\ell^2) \sum_{\substack{Y\in\cT_\ast(\Z^2)\\\rmd_\bL(Y,X_1)\le1}}2^{\abs{Y\cap\bL}}\eps^{1+\{(\frac{1}{5}-\free{itm: lower bound on m(delta)-1})\abs{Y\cap\bL}-\frac{11}{5}\}_+}}
        \\
        {}&\times\Bigg\{
        \sum_{\substack{Y_1\in\cT_\ast(\Z^2)\\\rmd_\bL(Y_1,X_1)\le1}}
        \left[e^3 2^{\abs{Y_1\cap\bL}}\eps^{2+\{(\frac{1}{5}-\free{itm: lower bound on m_12(delta)-2})\abs{Y_1\cap\bL}-\frac{21}{5}\}_+}
        +e^2 2^{2\abs{Y_1\cap\bL}}\eps^{2+2\{(\frac{1}{5}-\free{itm: lower bound on m(delta)-1})\abs{Y_1\cap\bL}-\frac{11}{5}\}_+}\right]
        \\
        {}&+\brackets[\Bigg]{\sum_{\substack{Y_1\in\cT_\ast(\Z^2)\\\rmd_\bL(Y_1,X_1)\le1}}2^{\abs{Y_1\cap\bL}}e\eps^{1+\{(\frac{1}{5}-\free{itm: lower bound on m(delta)-1})\abs{Y_1\cap\bL}-\frac{11}{5}\}_+}}^2
        \Bigg\}.
        \tag{\stepcounter{equation}\theequation}
    \end{align*}
    Applying Lem.~\ref{itm: incompatible polymer sum with piecewise linear decay} completes the proof of~\eqref{eqn: bound on the second-order derivative of the finite-volume interaction-polymer factor}.
\end{proof}

\subsubsection{Estimates for the contour weight}

\begin{lemma}
    \label{lem: Peierls bound for contour weight}
    Suppose that $\Peierls{eqn: Peierls bound for contour weight}>0$.
    Let $\gamma\in C(\Z^2,q)$ and $\fR\in\Omega(\Z^2)$ be finite.
    Then,
    \declarePeierls{eqn: Peierls bound for contour weight}
    \begin{equation}
        \label{eqn: Peierls bound for contour weight}
        \frac{1}{\Xi_\bz^q(\supp{\gamma})}\sum_{\substack{R_\gamma\in\Omega(\supp{\gamma},\psi_\gamma)\\R_\gamma\supset\fR}}\bz(R_\gamma)
        \le z^{\abs{\fR}}e^{-\Peierls{eqn: Peierls bound for contour weight}\abs{\supp{\gamma}\cap\bL}}.
    \end{equation}
\end{lemma}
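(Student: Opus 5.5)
The strategy is to reduce the estimate to a tile-by-tile product comparison, exploiting that every smoothing square in $\supp{\gamma}$ is incorrect. This forces a definite fraction of tiles of $\supp{\gamma}$ to either carry $\psi_\gamma=0$ or sit next to a tile of the opposite orientation.

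\emph{Step 1 (denominator).} I would bound the denominator from below using Lem.~\ref{itm: expansion of the log q-oriented partition function}:
\[
\log\Xi_\bz^q(\supp{\gamma})\ge\sum_{r\in R^q(\supp{\gamma})}\bz(r)-e^{1+\nu}z\eps\abs{\supp{\gamma}}.
\]
Each tile anchors about $\ell^2$ rectangles of orientation $q$, each with fugacity at least $e^{-\nu/n}z$. So $\Xi_\bz^q(\supp{\gamma})\ge\exp\{(1-O(\eps))e^{-\nu}z\ell^2\abs{\supp{\gamma}\cap\bL}\}$.

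\emph{Step 2 (numerator).} For the numerator, I would first use $\bz(R_\gamma)=\bz(\fR)\bz(R_\gamma\setminus\fR)\le (e^{\nu/n}z)^{\abs{\fR}}\bz(R_\gamma\setminus\fR)$, absorbing the harmless $e^{\nu}$ into the constants or the definition. I would then drop the constraint $R_\gamma\supset\fR$, which only enlarges the sum, and bound $\sum_{R\in\Omega(\supp{\gamma},\psi_\gamma)}\bz(R)$. Tiles with $\psi_\gamma=0$ contribute the factor $1$. On the remaining tiles I would split $\supp{\gamma}$ into its $q$-tiles and $(-q)$-tiles; each set separately forms a $q$- or $(-q)$-oriented hard-rectangle gas. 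Dropping the interaction between the two orientations is an upper bound, so the numerator is at most
\[
\Xi^q_\bz(\{\psi_\gamma=q\})\,\Xi^{-q}_\bz(\{\psi_\gamma=-q\}).
\]
Applying Lem.~\ref{itm: expansion of the log q-oriented partition function} again to each factor bounds it by $\exp\{(1+O(\eps))e^{\nu}z\ell^2\cdot\#\text{tiles}\}$. Consequently the ratio is at most $\exp\{-c\,z\ell^2 N_0+O(\eps+\nu)z\ell^2\abs{\supp{\gamma}\cap\bL}\}$, where $N_0$ is the number of zero tiles.

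\emph{Step 3 (gain).} This step is the key one, and I expect it to be the main obstacle. The bound from Step 2 gains only on zero tiles; it gains nothing on tiles where the orientation switches. To gain there, I would use Lem.~\ref{lem: one orientation per tile}, sharpened to adjacent tiles: across an interface between adjacent tiles $T_u$ (value $q$) and $T_v$ (value $-q$), rectangles anchored within distance $\ell-1$ of one another across the interface are mutually exclusive. A localized computation should show that the pair $T_u\cup T_v$ then has weight at most $\exp\{(2-c')z\ell^2\}$ times the $q$-oriented weight. To make this rigorous, I would select a family of pairwise well-separated interface pairs or zero tiles, with separation at least $3$ tiles so that the local cluster expansions decouple via Lem.~\ref{lem: lower bound on the size of clusters with anchors in distant tiles}, and factorize the partition function over them.

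\emph{Step 4 (counting).} Finally I would count the selected witnesses. Every incorrect smoothing square has, within its $3\times3$ block of smoothing squares, either a zero tile or a sign change. A greedy selection over $6\bL$ therefore yields at least $c''\abs{\supp{\gamma}\cap\bL}/36$ disjoint witnesses. This gives $\Peierls{eqn: Peierls bound for contour weight}=c\,z\ell^2-O(\eps+\nu)z\ell^2$, which is positive because $z\ell^2\ge150000$ by the lower limit of the fugacity interval. The difficulty of the whole plan lies in keeping the explicit constants of Step 3 favourable, namely in quantifying the excluded volume between mixed-orientation tiles against the cluster-expansion errors.
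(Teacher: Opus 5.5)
Your architecture matches the paper's: delete $\fR$ to extract $z^{\abs{\fR}}$, decouple $\supp{\gamma}$ into local pieces, gain about $e^{-z\ell^2}$ on each empty tile and a comparable factor on each pair of adjacent opposite-orientation tiles, and count these witnesses through the incorrectness of the smoothing squares.

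The genuine gap is Step~3, which you only assert. It carries the entire estimate whenever the contour has no empty tile, for example a contour separating a $q$-region from a $(-q)$-region. Then $N_0=0$ and Step~2 gives no decay at all. Pairwise exclusion of opposite-orientation rectangles anchored within $\rmd$-distance $\ell-1$ does not by itself produce a factor $e^{-c'z\ell^2}$. What is missing is a way to convert it into the loss of a macroscopic set of anchor sites.

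The paper does this with a half-strip dichotomy. Take a domino $Y=Y_1\cup Y_2$ with $q_1:=\psi_\gamma(Y_1)=-\psi_\gamma(Y_2)=:-q_2$, and let $I_i\subset Y_i$ be the half of $Y_i$ adjacent to the common edge. For even $\ell$, any point of $I_1$ and any point of $I_2$ are at $\rmd$-distance at most $\ell-1$. So by Lem.~\ref{lem: one orientation per tile}, every compatible configuration leaves $I_1$ or $I_2$ free of anchors, hence
\[
\sum_{R_Y\in\Omega(Y,\psi_\gamma)}z^{\abs{R_Y}}\le\Xi_z^{q_1}(Y_1\setminus I_1)\,\Xi_z^{q_2}(Y_2)+\Xi_z^{q_1}(Y_1)\,\Xi_z^{q_2}(Y_2\setminus I_2).
\]
Lem.~\ref{itm: expansion of the log q-oriented partition function} then bounds each ratio $\Xi_z^{q_i}(Y_i\setminus I_i)/\Xi_z^{q_i}(Y_i)$ by $e^{-z\ell^2/2+O(\eps z\ell^2)}$. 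For odd $\ell$ the two halves together have width $\ell+1$, so the paper adds a third term with shorter strips $J_1,J_2$; strips of widths $\floor{\ell/2}$ and $\ceil{\ell/2}$ would also work. Combined with $\Xi_z^q(Y)\ge e^{-2ez\eps\abs{Y}}\Xi_z^{q_1}(Y_1)\Xi_z^{q_2}(Y_2)$, this gives the domino factor $3e^{-(1-16e\eps)z\ell^2/2}$.

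Three smaller points.

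First, separated witnesses are unnecessary. Dropping the hard-core interaction across the pieces of any partition $P$ of $\supp{\gamma}$ into tiles and dominoes bounds the numerator by a product over $P$. Lem.~\ref{itm: expansion of the log q-oriented partition function}, applied to $\supp{\gamma}$ and to each piece, gives $\Xi_z^q(\supp{\gamma})\ge e^{-2ez\eps\abs{\supp{\gamma}}}\prod_{Y\in P}\Xi_z^q(Y)$. So a maximal family of disjoint opposite-sign dominoes suffices, and the $3\times3$-block count gives $\abs{\supp{\gamma}\cap\bL}/36\le 9n_0+12n_\pm$. The witnesses found near $S$ do lie in $\supp{\gamma}$, because their smoothing squares are incorrect and equal or adjacent to $S$.

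Second, the lemma is stated with the explicit constant $\Peierls{eqn: Peierls bound for contour weight}$, whose only $\nu$-dependence is $-2\nu$. To obtain it, pass to constant fugacity using that $\bz\neq z$ on at most $n$ rectangles, at total cost $e^{2\nu}$. Bounding every fugacity below by $e^{-\nu/n}z$ instead costs order $\nu z\ell^2$ per tile.

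Third, positivity of $\Peierls{eqn: Peierls bound for contour weight}$ is a hypothesis of the lemma, so no appeal to $z\ell^2\ge150000$ is needed.
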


\begin{proof}
    We start with two simplifications.
    First, recalling the assumption on the range of $\bz$ from the statement of Prop.~\ref{prop: main}, we pass to constant fugacity by bounding 
    \begin{equation}
        \frac{1}{\Xi_\bz^q(\supp{\gamma})}\sum_{\substack{R_\gamma\in\Omega(\supp{\gamma},\psi_\gamma)\\R_\gamma\supset\fR}}\bz(R_\gamma)
        \le \frac{e^{2\nu}}{\Xi_z^q(\supp{\gamma})}\sum_{\substack{R_\gamma\in\Omega(\supp{\gamma},\psi_\gamma)\\R_\gamma\supset\fR}} z^{\bars{R_\gamma}}.
    \end{equation}
    Second, we remove the constraint $R_\Gamma\supset\fR$ via the following observation: If $R_\gamma\in\Omega(\supp{\gamma},\psi_\gamma)$ and $R_\gamma\supset\fR$, then $R_\gamma\setminus\fR\in\Omega(\supp{\gamma},\psi_\gamma)$.
    Hence,
    \begin{equation}
        \sum_{\substack{R_\gamma\in\Omega(\supp{\gamma},\psi_\gamma)\\R_\gamma\supset\fR}}z^{\bars{R_\gamma}}
        \le z^{\bars{\fR}} \sum_{\substack{R_\gamma\in\Omega(\supp{\gamma},\psi_\gamma)\\R_\gamma\cap\fR=\emptyset}}z^{\bars{R_\gamma}}
        \le z^{\bars{\fR}} \sum_{R_\gamma\in\Omega(\supp{\gamma},\psi_\gamma)}z^{\bars{R_\gamma}}.
    \end{equation}

    To proceed further, we rely on the following claim.

    \begin{claim}
        There exists a partition $P$ of $\supp{\gamma}$ into tiles and unions of pairs of \textit{contiguous} tiles (i.e., dominoes) with the following properties:
        \begin{enumerate}
            \item Every tile with $\psi_\gamma=0$ is its own tile in $P$. \label{itm: partition property empty tiles}
            \item Every domino in $P$ consists of tiles with opposite, nonzero $\psi_\gamma$. \label{itm: partition property dominoes}
            \item Every domino in $\supp{\gamma}$ consisting of tiles with opposite, nonzero $\psi_\gamma$ intersects a domino in $P$. \label{itm: partition property maximality}
        \end{enumerate}
        Define $n_0(\gamma)$ as the number of tiles in $\supp{\gamma}$ with $\psi_\gamma=0$ and $n_\pm(\gamma)$ as the \textit{maximum} number of dominoes in $P$ among all such partitions $P$.
        Then, $n_0(\gamma)+\frac{4}{3}n_\pm(\gamma)\ge \abs{\supp{\gamma}\cap\bL}/324$.
    \end{claim}

    \begin{proof}[Proof of the claim]
        We construct such a partition $P$ as follows.
        First, we add to $P$ as many \textit{disjoint} dominoes consisting of tiles with opposite, nonzero $\psi_\gamma$ as possible.
        Then, we add to $P$ all the remaining tiles in $\supp{\gamma}$ as singletons.
        By construction, $P$ satisfies all the required properties.

        It remains to verify the bound on $n_0(\gamma)+\frac{4}{3}n_\pm(\gamma)$.
        Let $(\psi,R)$ be a profile-configuration pair from which $\gamma$ was constructed and $S$ be a smoothing square in $\supp{\gamma}$.
        If any tile $T'\subset S\cup\bdex[6\bL]{S}$ has $\psi=0$, then its enclosing smoothing square $S'$ is incorrect with respect to $(\psi,R)$, but $\rmd_{6\bL}(S',S)\le 1$, so $S'\subset\supp{\gamma}$.
        Otherwise, no tile in $S\cup\bdex[6\bL]{S}$ has $\psi=0$.
        In this case, for $S$ to be incorrect with respect to $(\psi,R)$, there must exist a domino contained in $S\cup\bdex[6\bL]{S}$ consisting of tiles with opposite, nonzero $\psi$.
        By Itm.~\ref{itm: partition property maximality}, this domino intersects a domino in $P$.
        The latter domino thus intersects $S\cup\bdex[6\bL]{S}$, and any smoothing square $S'\subset S\cup\bdex[6\bL]{S}$ it intersects is incorrect with respect to $(\psi,R)$ and hence contained in $\supp{\gamma}$.
        Now, since each tile belongs to $S\cup\bdex[6\bL]{S}$ for at most $9$ smoothing squares $S$, each domino intersects $S\cup\bdex[6\bL]{S}$ for at most $12$ smoothing squares $S$, and each smoothing square contains $36$ tiles, we conclude that
        \begin{equation}
            \abs{\supp{\gamma}\cap\bL}/36\le 9n_0(\gamma)+12n_\pm(\gamma),
        \end{equation}
        from which the desired bound follows.
    \end{proof}

    Let $P$ be a partition of $\supp{\gamma}$ guaranteed by the above claim which contains $n_\pm(\gamma)$ dominoes.
    We decouple the sets in $P$ by neglecting the hard-core interaction across their boundaries:
    \begin{equation}
        \sum_{R_\gamma\in\Omega(\supp{\gamma},\psi_\gamma)}z^{\bars{R_\gamma}}
        \le\prod_{Y\in P}\sum_{R_Y\in\Omega(Y,\psi_\gamma)}z^{\bars{R_Y}}.
    \end{equation}
    On the other hand, by Lem.~\ref{itm: expansion of the log q-oriented partition function}, we get that
    \begin{equation}
        \Xi_z^q(\supp{\gamma})\ge e^{-2ez\eps\abs{\supp{\gamma}}}\prod_{Y\in P}\Xi_z^q(Y).
    \end{equation}
    Hence,
    \begin{equation}
        \label{eqn: Peierls bound for contour weight after decoupling}
        \frac{1}{\Xi_z^q(\supp{\gamma})}\sum_{R_\gamma\in\Omega(\supp{\gamma},\psi_\gamma)}z^{\bars{R_\gamma}}
        \le e^{2ez\eps\abs{\supp{\gamma}}}\prod_{Y\in P}\frac{1}{\Xi_z^q(Y)}\sum_{R_Y\in\Omega(Y,\psi_\gamma)}z^{\bars{R_Y}}.
    \end{equation}

    Suppose first that $Y$ is a single tile.
    If $\psi_\gamma(Y)\ne 0$, then
    \begin{equation}
        \label{eqn: Peierls bound for contour weight - ground state tile}
        \sum_{R_Y\in\Omega(Y,\psi_\gamma)}z^{\bars{R_Y}}=\Xi_z^{\psi_\gamma(Y)}(Y)=\Xi_z^q(Y).
    \end{equation}
    Otherwise, $\psi_\gamma(Y)=0$, and, using Lem.~\ref{itm: expansion of the log q-oriented partition function}, we have that
    \begin{equation}
        \label{eqn: Peierls bound for contour weight - empty tile}
        \frac{1}{\Xi_z^q(Y)}\sum_{R_Y\in\Omega(Y,\psi_\gamma)}z^{\bars{R_Y}}
        =\frac{1}{\Xi_z^q(Y)}
        \le e^{-(1-e\eps)(z\ell^2)}.
    \end{equation}

    Suppose now that $Y$ is a domino.
    Without loss of generality, we may assume that the bottomleft corner of $Y$ is at the origin.
    Let $t\in\set{x,y}$ be the coordinate direction normal to the common edge of the two tiles that make up $Y$, which we denote by $Y_1$ and $Y_2$, ordered in the direction of increasing $t$.
    Write $q_i:=\psi_\gamma(Y_i)$ for $i=1,2$.
    Define
    \begin{equation}
        I_1:=Y_1\cap\set[\bigg]{t\ge\ell-\ceil[\bigg]{\frac{\ell}{2}}},\quad
        I_2:=Y_2\cap\set[\bigg]{t\le\ell+\ceil[\bigg]{\frac{\ell}{2}}-1},
    \end{equation}
    and, as will be useful when $\ell$ is odd,
    \begin{equation}
        J_1:=Y_1\cap\set[\bigg]{t\ge\ell-\ceil[\bigg]{\frac{\ell}{2}}+1},\quad
        J_2:=Y_2\cap\set[\bigg]{t\le\ell+\ceil[\bigg]{\frac{\ell}{2}}-2}.
    \end{equation}
    Denote the coordinate direction perpendicular to $t$ by $t'$.

    \begin{claim}[Excluded volume effect]
        If $R_Y\in\Omega(Y,\psi_\gamma)$, then $I_1$ is empty of anchors (i.e., $R_Y\cap R(I_1)=\emptyset$), or $I_2$ is empty of anchors, or, when $\ell$ is odd, $J_1\cup J_2$ is empty of anchors.
    \end{claim}

    \begin{proof}[Proof of the claim]
        When $\ell$ is even, the $t$-coordinates of a vertex in $I_1$ and a vertex in $I_2$ differ by at most $\ell-1$ by construction, while the dimensions of the tiles ensure that their $t'$-coordinates, again, differ by at most $\ell-1$.
        By Lem.~\ref{lem: one orientation per tile}, either $I_1$ or $I_2$ must be empty of anchors.

        When $\ell$ is odd, suppose that $I_1$ and $I_2$ both contain anchors of $R_Y$.
        Noting that the $t$-coordinates of a vertex in $I_1$ (resp. $I_2$) and a vertex in $J_2$ (resp. $J_1$) differ by at most $\ell-1$, we have that $J_1\cup J_2$ is empty of anchors by Lem.~\ref{lem: one orientation per tile}.
    \end{proof}

    Decoupling $Y_1$ and $Y_2$ as before yields
    \begin{equation}
        \Xi_z^q(Y)\ge e^{-2ez\eps\abs{Y}}\Xi_z^{q_1}(Y_1)\Xi_z^{q_2}(Y_2).
    \end{equation}
    On the other hand, by the claim, we have that
    \begin{equation}
        \sum_{R_Y\in\Omega(Y,\psi_\gamma)}z^{\bars{R_Y}}
        \le \Xi_z^{q_1}(Y_1\setminus I_1)\Xi_z^{q_2}(Y_2)
        +\Xi_z^{q_1}(Y_1)\Xi_z^{q_2}(Y_2\setminus I_2)
        +\indicator{\ell\text{ odd}} \Xi_z^{q_1}(Y_1\setminus J_1)\Xi_z^{q_2}(Y_2\setminus J_2).
    \end{equation}
    Note that $\bars{R^{q_i}(Y_i\setminus I_i)}=\ell\floor{\ell/2}$ and, when $\ell$ is odd, $\bars{R^{q_i}(Y_i\setminus J_i)}=\ell(\floor{\ell/2}+1)=\ell(\ell+1)/2$, uniformly in $i=1,2$ and $q_i\in\set{-1,1}$.
    Hence, by Lem.~\ref{itm: expansion of the log q-oriented partition function}, we get that
    \begin{equation}
        \max\set*{
            \frac{\Xi_z^{q_1}(Y_1\setminus I_1)}{\Xi_z^{q_1}(Y_1)},
            \frac{\Xi_z^{q_2}(Y_2\setminus I_2)}{\Xi_z^{q_2}(Y_2)}
        }
        \le \frac{e^{(1+e\eps)z\ell\floor{\ell/2}}}{e^{-(1-e\eps)z\ell^2}}
        \le e^{-(z\ell^2)/2+3e(z\ell^2)\eps/2},
    \end{equation}
    and, when $\ell$ is odd,
    \begin{equation}
        \max\set*{
            \frac{\Xi_z^{q_1}(Y_1\setminus J_1)}{\Xi_z^{q_1}(Y_1)},
            \frac{\Xi_z^{q_2}(Y_2\setminus J_2)}{\Xi_z^{q_2}(Y_2)}
        }
        \le \frac{e^{(1+e\eps)z\ell(\ell+1)/2}}{e^{-(1-e\eps)z\ell^2}}
        = e^{-z\ell(\ell-1)/2+ez\ell(3\ell+1)\eps/2}.
    \end{equation}
    Therefore, using that $\ell(\ell-1)\ge\ell^2/2$ for $\fl\ge 3$ (see Sec.~\ref{sec: model and main result}) and bounding $3\ell+1\le 4\ell$, we get
    \begin{align}
        {}&\frac{1}{\Xi_z^q(Y)}\sum_{R_Y\in\Omega(Y,\psi_\gamma)}z^{\bars{R_Y}}
        \nonumber
        \\
        \le{}& e^{2ez\eps\abs{Y}}\brackets[\bigg]{
            \frac{\Xi_z^{q_1}(Y_1\setminus I_1)}{\Xi_z^{q_1}(Y_1)}
            +\frac{\Xi_z^{q_2}(Y_2\setminus I_2)}{\Xi_z^{q_2}(Y_2)}
            +\indicator{\ell\text{ odd}}\frac{\Xi_z^{q_1}(Y_1\setminus J_1)\Xi_z^{q_2}(Y_2\setminus J_2)}{\Xi_z^{q_1}(Y_1)\Xi_z^{q_2}(Y_2)}
        }
        \nonumber
        \\
        \le{}& e^{4e(z\ell^2)\eps}\brackets[\big]{
            2e^{-(z\ell^2)/2+3e(z\ell^2)\eps/2}
            +e^{-z\ell(\ell-1)+ez\ell(3\ell+1)\eps}
        }
        \nonumber
        \\
        \le{}& 3e^{-(1-16e\eps)(z\ell^2)/2}.
        \label{eqn: Peierls bound for contour weight - domino}
    \end{align}

    Inserting the bounds~\eqref{eqn: Peierls bound for contour weight - ground state tile}, \eqref{eqn: Peierls bound for contour weight - empty tile}, and~\eqref{eqn: Peierls bound for contour weight - domino} into~\eqref{eqn: Peierls bound for contour weight after decoupling}, we get that
    \begin{equation}
        \begin{multlined}
            \frac{1}{\Xi_z^q(\supp{\gamma})}\sum_{R_\gamma\in\Omega(\supp{\gamma},\psi_\gamma)}z^{\bars{R_\gamma}}
            \\
            \le e^{2e(z\ell^2)\eps\abs{\supp{\gamma}\cap\bL}}
            \brackets[\Big]{e^{-(1-e\eps)(z\ell^2)}}^{n_0(\gamma)}
            \brackets[\Big]{3^{3/4}e^{-3(1-16e\eps)(z\ell^2)/8}}^{\frac{4}{3}n_{\pm}(\gamma)}.
        \end{multlined}
    \end{equation}
    The second bracketed quantity dominates the first.
    Recalling the assumption that $\Peierls{eqn: Peierls bound for contour weight}>0$ and using that $n_0(\gamma)+\frac{4}{3}n_\pm(\gamma)\ge \abs{\supp{\gamma}\cap\bL}/324$, we conclude the proof of~\eqref{eqn: Peierls bound for contour weight}.
\end{proof}

Recall the notation $c(\gamma)$ introduced in Sec.~\ref{sec: contours} for the support of a contour $\gamma$. 
In the following, we extend its definition to all (bounded) sets $X_1\in\cT(\Z^2)$ by noting that $\R^2\setminus X_1$ has a unique unbounded component $\ext(X_1)$, and defining $c(X_1):=\R^2\setminus\ext(X_1)$.

\begin{lemma}
    \label{lem: bound on the first-order derivative of the contour weight}
    Suppose that for all $\Omega$ of order less than that of $\Lambda$ and all $\Omega\setminus\bdin[2\bL]{\Omega}\subset\Pi\subset\Omega$, with $\bz_\Pi:=\indicator{R(\Pi)}\bz$, it holds that 
    \begin{equation}
        \label{eqn: inductive condition for the bound on the first-order derivative of the contour weight}
        \bars*{\cD_1\log\frac{\Xi_{\bz_\Pi}(\Omega\mid q)}{\Xi_{\bz_\Pi}^q(\Omega)}}\le \indicator{\Pi}(\fr_1)\prefactor{eqn: inductive bound on the first-order derivative of the log polymer partition function}.
    \end{equation}  
    Then, for all $\Gamma\in\cC(\Lambda,q)$, writing $X_1:=\supp{\Gamma}$, the first-order derivative of the log contour weight satisfies
    \declareprefactor{eqn: bound on the first-order derivative of the contour weight - localization}
    \declareprefactor{eqn: bound on the first-order derivative of the contour weight - decay}
    \begin{equation}
        \label{eqn: bound on the first-order derivative of the contour weight}
        \bars[\Bigg]{\sum_{\gamma\in\Gamma}\cD_1\log\zeta_q^{(\Lambda)}(\gamma;\bz_\Delta)}
        \le \indicator{R_\Gamma}(\fr_1)
        +\brackets[\Big]{\indicator{R(c(X_1))}(\fr_1)\prefactor{eqn: bound on the first-order derivative of the contour weight - localization}
        +\indicator{R^q(\Delta)}(\fr_1)\prefactor{eqn: bound on the first-order derivative of the contour weight - decay}\eps^{\fn_1(X_1)-1}}\abs{X_1\cap\bL}.
    \end{equation}
\end{lemma}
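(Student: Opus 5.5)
We take logarithms and differentiate the finite-volume contour weight $\zeta_q^{(\Lambda)}(\gamma;\bz_\Delta)$ factor by factor, using \eqref{eqn: finite-volume contour weight} and \eqref{eqn: bare weight of a contour}. For each $\gamma\in\Gamma$,
\begin{equation*}
    \log\zeta_q^{(\Lambda)}(\gamma;\bz_\Delta)
    =\log\bars{\bz_\Delta(R_\gamma)}-\log\Xi^q_{\bz_\Delta}(\supp{\gamma})
    +\sum_{j=1}^{h_\gamma}\log\frac{\Xi_{\bz_\Delta}(\intr_j(\gamma)\mid m_j(\gamma),R_\gamma)}{\Xi_{\bz_\Delta}(\intr_j(\gamma)\mid q)}
    -\sum_{M\in\cM^q(\Lambda)}\varphi(M)\chi_\gamma(M)\bz_\Delta(M).
\end{equation*}
The sign $(-1)^{\abs{\set{\psi_\gamma=0}}}$ is constant and contributes nothing.

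We then treat the four terms in order.

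\textbf{First term.} $\cD_1\log\bz_\Delta(R_\gamma)=n_{\fr_1}(R_\gamma)\in\set{0,1}$. Summed over the disjoint $R_\gamma$, $\gamma\in\Gamma$, it gives exactly $\indicator{R_\Gamma}(\fr_1)$.

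\textbf{Second term.} Lem.~\ref{itm: expansion of the first-order derivative of the log q-oriented partition function} gives $\bars{\cD_1\log\Xi^q_{\bz_\Delta}(\supp{\gamma})}\le\indicator{R^q(\supp{\gamma})}(\fr_1)(1+e\eps)e^{\nu/n}z$. This is bounded by $\indicator{R(c(X_1))}(\fr_1)$ times a constant, which is at most a constant times $\abs{X_1\cap\bL}$.

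\textbf{Fourth term.} Prop.~\ref{itm: bound on the first-order derivative of the finite-volume contour self-potential} gives $\indicator{R^q(\Delta)}(\fr_1)e^{1+\nu/n}z\eps^{\fn_1(\gamma)-1}$. Summing over $\gamma$ and using $\fn_1(\gamma)\ge\fn_1(X_1)$ and $\abs{\Gamma}\le\abs{X_1\cap\bL}$ yields the decay term with prefactor $\phi\sim e^{1+\nu}z$.

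\textbf{Third term (main obstacle).} We write each ratio as
\begin{equation*}
    \frac{\Xi(\intr_j\mid m_j,R_\gamma)}{\Xi(\intr_j\mid m_j)}
    \cdot\frac{\Xi(\intr_j\mid m_j)}{\Xi^{m_j}(\intr_j)}
    \cdot\frac{\Xi^{m_j}(\intr_j)}{\Xi^q(\intr_j)}
    \cdot\frac{\Xi^q(\intr_j)}{\Xi(\intr_j\mid q)} .
\end{equation*}

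\emph{Middle two factors.} The polymer-ratio factors $\Xi(\cdot\mid m)/\Xi^m(\cdot)$ have log-derivatives controlled by the inductive hypothesis \eqref{eqn: inductive condition for the bound on the first-order derivative of the contour weight}, since $\intr_j(\gamma)$ has order less than $\Lambda$. The oriented ratio is handled again by Lem.~\ref{itm: expansion of the first-order derivative of the log q-oriented partition function}. All of these are nonzero only if $\fr_1$ is anchored in $\intr_j(\gamma)\subset c(X_1)$, and they give an $O(\phi_1+1)$ contribution.

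\emph{Boundary factor.} The factor $\Xi(\cdot\mid m_j,R_\gamma)/\Xi(\cdot\mid m_j)$ only involves $\bz$ restricted away from the rectangles near $R_\gamma$. The cleanest route is to view $\Xi_{\bz_\Delta}(\intr_j\mid m_j,R_\gamma)$ as $\Xi_{\bz_\Pi}(\intr_j\mid m_j)$ with fugacity zeroed on rectangles incompatible with $R_\gamma$. These rectangles lie within the outer $2\bL$-layer, because the $m_j$-correct boundary layer of width two smoothing squares guarantees this. So $\Pi$ satisfies $\Omega\setminus\bdin[2\bL]{\Omega}\subset\Pi\subset\Omega$. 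This is exactly why the inductive hypothesis is stated for general $\Pi$.

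\emph{Hardest step.} Checking this zeroing identification against the precise $\psi$-compatibility constraints is the step we expect to be the main obstacle. The issue is that the $m_j$-boundary profile must ensure that the excluded rectangles are the only effect of $R_\gamma$. Once it is established, each interior contributes $\indicator{R(\intr_j(\gamma))}(\fr_1)\cdot O(\phi_1+1)$. Since the interiors are disjoint, at most one $j$ (over all $\gamma$) contributes.

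Collecting all terms gives \eqref{eqn: bound on the first-order derivative of the contour weight}, with the localization prefactor of order $2\phi_1+3$ and the decay prefactor of order $e^{1+\nu}z$.
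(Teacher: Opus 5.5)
Your decomposition is the paper's: your first two terms are its $A^1_q$, your fourth is $A^3_q$, and you bound them the same way. The interior term, however, has two problems.

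\textbf{The boundary factor.} The inductive hypothesis only controls $\cD_1\log[\Xi_{\bz_\Pi}(\Omega\mid m)/\Xi^m_{\bz_\Pi}(\Omega)]$, a polymer partition function divided by the oriented one \emph{at the same fugacity}. Your boundary factor $\Xi_{\bz_\Pi}(\intr_j(\gamma)\mid m_j)/\Xi_{\bz}(\intr_j(\gamma)\mid m_j)$ compares two different fugacities. So the zeroing identification alone does not let you apply the hypothesis to it. You must also insert $\Xi^{m_j}_{\bz_\Pi}(\intr_j(\gamma))=\Xi^{m_j}_{\bz}(\intr_j(\gamma)\mid R_\gamma)$ and $\Xi^{m_j}_{\bz}(\intr_j(\gamma))$. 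Once you do, your second factor cancels and you arrive at the paper's split
\[
\frac{\Xi(\intr_j\mid m_j,R_\gamma)}{\Xi^{m_j}(\intr_j\mid R_\gamma)}
\cdot\frac{\Xi^{m_j}(\intr_j\mid R_\gamma)}{\Xi^{q}(\intr_j)}
\cdot\frac{\Xi^{q}(\intr_j)}{\Xi(\intr_j\mid q)}.
\]
This uses the hypothesis twice: once with $\Pi$ equal to $\intr_j(\gamma)$ minus the anchors excluded by $R_\gamma$, and once with $\Pi=\intr_j(\gamma)$. It also uses Lem.~\ref{itm: expansion of the first-order derivative of the log q-oriented partition function} twice. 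The result is $2\phi_1+2(1+e\eps)e^{\nu/n}z$ per contour, exactly the value the appendix assigns to the localization prefactor.

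Bounding your four factors separately instead uses the hypothesis four times and gives about $4\phi_1+4(1+e\eps)e^{\nu/n}z$. That exceeds the appendix value and would break the later estimate $<3z$ made with $\phi_1=z/4$. Note also that the oriented contributions are $O(z)$, not $O(1)$ as your ``$2\phi_1+3$'' suggests.

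\textbf{Nested contours.} Your claim that at most one interior, over all $\gamma\in\Gamma$, contributes is false. $\cC(\Lambda,q)$ consists of \emph{pairwise compatible} contours, which may be nested; only $\cC_\ext(\Lambda,q)$ imposes mutual externality. Hence $\fr_1$ can be anchored in $\intr(\gamma)$ for many $\gamma\in\Gamma$ at once. The correct accounting bounds the per-contour contribution uniformly and multiplies by $\abs{\Gamma}\le\abs{X_1\cap\bL}$. That is exactly why the localization term in the statement carries the factor $\abs{X_1\cap\bL}$. The inequality survives, but not for the reason you give.

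A minor point: the rectangles excluded by $R_\gamma$ lie in $\bdin[2\bL]{\intr_j(\gamma)}$ because all overlap ranges are below $2\ell$, not because of the profile's boundary layer. That layer only ensures that just $m_j$-oriented rectangles can be anchored there, so zeroing fugacities by anchor location is harmless.
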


\begin{proof}
    For $\gamma\in C(\Lambda,q)$, we write $\cD_1\log\zeta_q^{(\Lambda)}(\gamma;\bz_\Delta)$ as the sum of the following quantities:
    \begin{align}
        A_q^{1}(\gamma;\bz){}&:=\cD_1\log\frac{\bz(R_\gamma)}{\Xi_{\bz}^q(\supp{\gamma})},
        \\
        A_q^{2}(\gamma;\bz){}&:=\sum_{j=1}^{h_\gamma}\cD_1\log\frac{\Xi_{\bz}(\intr_j(\gamma)\mid m_j(\gamma),R_\gamma)}{\Xi_{\bz}(\intr_j(\gamma)\mid q)},
        \\
        A_q^{3}(\gamma;\bz_\Delta){}&:=-\cD_1\sum_{M\in\cM^q(\Lambda)}\varphi(M)\chi_\gamma(M)\bz_\Delta(M),
    \end{align}
    noting that we can freely interchange $\bz$ with $\bz_\Delta$ in the first two.
    Using Lem.~\ref{itm: expansion of the first-order derivative of the log q-oriented partition function}, we bound
    \begin{equation}
        \bars*{A^1_q(\gamma;\bz)}\le \indicator{R_\gamma}(\fr_1)+\indicator{R^q(\supp{\gamma})}(\fr_1)(1+e\eps)e^{\nu/n} z.
    \end{equation}
    Using the assumption~\eqref{eqn: inductive condition for the bound on the first-order derivative of the contour weight}, and denoting by $V(R_\gamma)$ the excluded volume due to $R_\gamma$, we bound
    \begin{equation}
        \begin{multlined}
            \bars*{A^2_q(\gamma;\bz)}
            \le\sum_{j=1}^{h_\gamma}\Bigg[
            \bars*{\cD_1\log\frac{\Xi_\bz(\intr_j(\gamma)\mid m_j(\gamma),R_\gamma)}{\Xi_\bz^{m_j(\gamma)}(\intr_j(\gamma)\mid R_\gamma)}}
            +\bars*{\cD_1\log\frac{\Xi_\bz(\intr_j(\gamma)\mid q)}{\Xi_\bz^q(\intr_j(\gamma))}}
            \\
            +\bars[\Bigg]{\cD_1\log\frac{\Xi_\bz^{m_j(\gamma)}(\intr_j(\gamma)\mid R_\gamma)}{\Xi_\bz^q(\intr_j(\gamma))}}\Bigg]
            \le \indicator{R(\intr(\gamma))}(\fr_1)2\brackets[\Big]{\prefactor{eqn: inductive bound on the first-order derivative of the log polymer partition function}+(1+e\eps)e^{\nu/n}z}.
        \end{multlined}
    \end{equation}
    We also use Prop.~\ref{itm: bound on the first-order derivative of the finite-volume contour self-potential} to bound $\abs{A^3_q(\gamma;\bz_\Delta)}\le \indicator{R^q(\Delta)}(\fr_1)e^{1+\nu/n}z\eps^{\fn_1(\gamma)-1}$.
    Combining these estimates and bounding $\abs{\Gamma}\le\abs{X_1\cap\bL}$ yields~\eqref{eqn: bound on the first-order derivative of the contour weight}.
\end{proof}

\begin{lemma}
    \label{lem: bound on the second-order derivative of the contour weight}
    Suppose that for all $\Omega$ of order less than that of $\Lambda$ and all $\Omega\setminus\bdin[2\bL]{\Omega}\subset\Pi\subset\Omega$, with $\bz_\Pi:=\indicator{R(\Pi)}\bz$, it holds that 
    \begin{equation}
        \label{eqn: inductive condition for the bound on the second-order derivative of the contour weight}
        \bars*{\cD_1\cD_2\log\frac{\Xi_{\bz_\Pi}(\Omega\mid q)}{\Xi_{\bz_\Pi}^q(\Omega)}}\le \indicator{R(\Pi)}(\fr_1,\fr_2)\prefactor{eqn: inductive bound on the second-order derivative of the log polymer partition function}e^{-\spatial{eqn: inductive bound on the second-order derivative of the log polymer partition function}(\fd_{12}-2)}.
    \end{equation}  
    Then, for all $\Gamma\in\cC(\Lambda,q)$, writing $X_1:=\supp{\Gamma}$, the second-order derivative of the log contour weight satisfies
    \declareprefactor{eqn: bound on the second-order derivative of the contour weight}
    \begin{equation}
        \label{eqn: bound on the second-order derivative of the contour weight}
            \bars[\Bigg]{\sum_{\gamma\in\Gamma}\cD_1\cD_2\log\zeta_q^{(\Lambda)}(\gamma;\bz_\Delta)}
            \le \prefactor{eqn: bound on the second-order derivative of the contour weight}\abs{X_1\cap\bL}\brackets*{\indicator{R(c(X_1))}(\fr_1,\fr_2)
            +\indicator{R^q(\Delta)}(\fr_1,\fr_2)\eps^{\fn_{12}(X_1)-2}}.
    \end{equation}
\end{lemma}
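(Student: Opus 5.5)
The plan is to mirror the proof of Lem.~\ref{lem: bound on the first-order derivative of the contour weight}. For each $\gamma\in\Gamma$ we split $\cD_1\cD_2\log\zeta_q^{(\Lambda)}(\gamma;\bz_\Delta)$ into three pieces $B^1_q,B^2_q,B^3_q$. These are obtained by applying $\cD_1\cD_2$ to the same three summands as $A^1_q,A^2_q,A^3_q$: the bare factor $\log[\bz(R_\gamma)/\Xi^q_\bz(\supp{\gamma})]$, the interior ratios $\sum_j\log[\Xi_\bz(\intr_j(\gamma)\mid m_j(\gamma),R_\gamma)/\Xi_\bz(\intr_j(\gamma)\mid q)]$, and the self-potential $-\sum_M\varphi(M)\chi_\gamma(M)\bz_\Delta(M)$.

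\emph{The bare factor $B^1_q$.} Here $\log\bz(R_\gamma)$ is a sum of logarithms of single fugacities, so $\cD_1\cD_2\log\bz(R_\gamma)=0$ for distinct $\fr_1,\fr_2$. Thus $B^1_q=-\cD_1\cD_2\log\Xi^q_\bz(\supp{\gamma})$. By Lem.~\ref{itm: expansion of the second-order derivative of the log q-oriented partition function}, after converting $\partial_i$ to $\cD_i$ (a factor $\bz(\fr_1)\bz(\fr_2)\le e^{2\nu/n}z^2$), this is at most $\indicator{R^q(\supp{\gamma})}(\fr_1,\fr_2)e^{3+2\nu/n}z^2\eps^{\fd_{12}-2}$. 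Since $\supp{\gamma}\subset c(X_1)$, this is absorbed into the $\indicator{R(c(X_1))}(\fr_1,\fr_2)$ term.

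\emph{The interior ratios $B^2_q$.} As for $A^2_q$, we write each ratio as a product of three ratios: $\Xi_\bz(\intr_j\mid m_j,R_\gamma)/\Xi^{m_j}_\bz(\intr_j\mid R_\gamma)$, then $\Xi^{m_j}_\bz(\intr_j\mid R_\gamma)/\Xi^q_\bz(\intr_j)$, then the inverse of $\Xi_\bz(\intr_j\mid q)/\Xi^q_\bz(\intr_j)$. The regions $\intr_j(\gamma)$ have strictly smaller order than $\Lambda$, so the inductive hypothesis~\eqref{eqn: inductive condition for the bound on the second-order derivative of the contour weight} controls the two polymer ratios. This requires writing the constrained partition function $\Xi_\bz(\intr_j\mid m_j,R_\gamma)$ as one with fugacity $\bz_\Pi$, where $\Pi\subset\intr_j$ removes the rectangles incompatible with $R_\gamma$. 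Since $R_\gamma$ lives in the $6\bL$-smoothed support, the excluded region lies within $\bdin[2\bL]{\intr_j}$, so the hypothesis $\Omega\setminus\bdin[2\bL]{\Omega}\subset\Pi$ holds. The middle ratio is a difference of logarithms of oriented partition functions, and Lem.~\ref{itm: expansion of the second-order derivative of the log q-oriented partition function} applies to each. Every contribution carries $\indicator{R(\intr(\gamma))}(\fr_1,\fr_2)$ and is bounded by a constant of the form $2\prefactor{eqn: inductive bound on the second-order derivative of the log polymer partition function}+2e^{3+2\nu/n}z^2$, using $e^{-\spatial{eqn: inductive bound on the second-order derivative of the log polymer partition function}(\fd_{12}-2)}\le 1$. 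The factor of two is not sharp, but that does not matter.

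\emph{The self-potential $B^3_q$.} Prop.~\ref{itm: bound on the second-order derivative of the finite-volume contour self-potential} gives directly $\indicator{R^q(\Delta)}(\fr_1,\fr_2)e^{3+2\nu/n}z^2\eps^{\fn_{12}(\gamma)-2}$.

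\emph{Summation over $\Gamma$.} We add the three bounds over $\gamma\in\Gamma$, use $\abs{\Gamma}\le\abs{X_1\cap\bL}$ and $\intr(\gamma)\cup\supp{\gamma}\subset c(X_1)$, and take $\prefactor{eqn: bound on the second-order derivative of the contour weight}$ to be the maximum of the collected constants. The step I expect to be the main obstacle is passing from $\eps^{\fn_{12}(\gamma)-2}$ to $\eps^{\fn_{12}(X_1)-2}$ in the self-potential terms. This needs the monotonicity $\fn_{12}(\gamma)\ge\fn_{12}(X_1)$, which should follow from the definition in Appx.~\ref{appx: derived quantities} because $\supp{\gamma}\subset X_1$ and distances to a larger set are smaller. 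If the definition of $\fn_{12}$ is not monotone in this way, I would instead keep the per-contour decay and only bound the number of contours by $\abs{X_1\cap\bL}$. A secondary subtlety is verifying that the constrained interior functions fit the form $\bz_\Pi$ required by the inductive hypothesis, which I handle via the support geometry as described for $B^2_q$.
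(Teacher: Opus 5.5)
Your proposal is correct and is essentially the paper's proof: the same split into $B^1_q,B^2_q,B^3_q$, Lem.~\ref{itm: expansion of the second-order derivative of the log q-oriented partition function} for the bare factor and the middle oriented ratio, the inductive hypothesis for the two polymer ratios, Prop.~\ref{itm: bound on the second-order derivative of the finite-volume contour self-potential} for the self-potential, and finally $\abs{\Gamma}\le\abs{X_1\cap\bL}$ with all $\fd_{12}$-dependent factors bounded by $1$ (the paper's constant is $2\prefactor{eqn: inductive bound on the second-order derivative of the log polymer partition function}+3e^{3+2\nu/n}z^2$). The step you flagged as the main obstacle is harmless: $\supp{\gamma}\subset X_1$ gives $\rmd_\bL(\fT_i,\supp{\gamma})\ge\rmd_\bL(\fT_i,X_1)$, so $\fn_{12}(\gamma)\ge\fn_{12}(X_1)$ follows directly from the definitions and no fallback is needed.
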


\begin{proof}
    Write $\cD_1\cD_2\log\zeta_q^{(\Lambda)}(\gamma;\bz_\Delta)$ as the sum of the following quantities:
    \begin{align}
        B_q^{1}(\gamma;\bz){}&:=\cD_1\cD_2\log\frac{\bz(R_\gamma)}{\Xi_\bz^q(\supp{\gamma})},
        \\
        B_q^{2}(\gamma;\bz){}&:=\sum_{j=1}^{h_\gamma}\cD_1\cD_2\log\frac{\Xi_\bz(\intr_j(\gamma)\mid m_j(\gamma),R_\gamma)}{\Xi_\bz(\intr_j(\gamma)\mid q)},
        \\
        B_q^{3}(\gamma;\bz_\Delta){}&:=-\cD_1\cD_2\sum_{M\in\cM^q(\Lambda)}\varphi(M)\chi_\gamma(M)\bz_\Delta(M),
    \end{align}
    noting, as in the proof of Lem.~\ref{lem: bound on the first-order derivative of the contour weight}, that we can freely interchange $\bz$ with $\bz_\Delta$ in the first two.
    Using Lem.~\ref{itm: expansion of the second-order derivative of the log q-oriented partition function}, we bound
    \begin{equation}
        \bars*{B^1_q(\gamma;\bz)}\le \indicator{R^q(\supp{\gamma})}(\fr_1,\fr_2)e^{3+2\nu/n}z^2\eps^{\fd_{12}-2}.
    \end{equation}
    Denoting by $V(R_\gamma)$ the excluded volume due to $R_\gamma$, we bound using the assumption~\eqref{eqn: inductive condition for the bound on the second-order derivative of the contour weight} and Lem.~\ref{itm: expansion of the second-order derivative of the log q-oriented partition function}
    \begin{equation}
        \begin{multlined}
            \bars*{B^2_q(\gamma;\bz)}
            \le\sum_{j=1}^{h_\gamma}\brackets[\Bigg]{
            \bars*{\cD_1\cD_2\log\frac{\Xi_\bz(\intr_j(\gamma)\mid m_j(\gamma),R_\gamma)}{\Xi_\bz^{m_j(\gamma)}(\intr_j(\gamma)\mid R_\gamma)}}
            +\bars*{\cD_1\cD_2\log\frac{\Xi_\bz(\intr_j(\gamma)\mid q)}{\Xi_\bz^q(\intr_j(\gamma))}}
            \\
            +\bars[\Bigg]{\cD_1\cD_2\log\frac{\Xi_\bz^{m_j(\gamma)}(\intr_j(\gamma)\mid R_\gamma)}{\Xi_\bz^q(\intr_j(\gamma))}}}
            \le \indicator{R(\intr(\gamma))}(\fr_1,\fr_2)2\brackets[\Big]{\prefactor{eqn: inductive bound on the second-order derivative of the log polymer partition function}e^{-\spatial{eqn: inductive bound on the second-order derivative of the log polymer partition function}(\fd_{12}-2)}+e^{3+2\nu/n}z^2\eps^{\fd_{12}-2}}.
        \end{multlined}
    \end{equation}
    We also use Prop.~\ref{itm: bound on the second-order derivative of the finite-volume contour self-potential} to bound $\abs{B^3_q(\gamma;\bz_\Delta)}\le \indicator{R^q(\Delta)}(\fr_1,\fr_2)e^{3+2\nu/n}z^2\eps^{\fn_{12}(\gamma)-2}$.
    Combining these estimates, bounding $\abs{\Gamma}\le\abs{X_1\cap\bL}$, and bounding all factors involving $\fd_{12}-2$ by $1$ yields~\eqref{eqn: bound on the second-order derivative of the contour weight}.
\end{proof}

\begin{proposition}
    \label{prop: bounds on the contour-polymer factors}
    The following estimates hold for the contour-polymer factor.
    Except in~\ref{itm: bound on the finite- and infinite-volume contour-polymer factors}, we assume that $X_1\in\cT(\Lambda)$.
    \begin{enumerate}[label=(\alph*),ref=\theproposition(\alph*)]
        \item \label{itm: bound on the finite- and infinite-volume contour-polymer factors} Let $\fR\in\Omega(\Z^2)$ be finite.
        For $X_1\in\cT(\Lambda)$, the finite-volume truncated contour-polymer factor satisfies
        \declarePeierls{eqn: bound on the truncated finite-volume contour-polymer factor}
        \begin{equation}
            \label{eqn: bound on the truncated finite-volume contour-polymer factor}
            \sum_{\substack{\Gamma\in\cC(\Lambda,q)\\\supp{\Gamma}=X_1\\R_\Gamma\supset\fR}}\prod_{\substack{\gamma\in\Gamma}}\bars*{\hat{\zeta}_q^{(\Lambda)}(\gamma;\bz_\Delta)}
            \le \indicator{R(X_1)}(\fR)z^{\abs{\fR}}e^{-\Peierls{eqn: bound on the truncated finite-volume contour-polymer factor}\abs{X_1\cap\bL}}.
        \end{equation}
        For $X_1\in\cT(\Z^2)$, the infinite-volume truncated contour-polymer factor satisfies
        \begin{equation}
            \label{eqn: bound on the truncated infinite-volume contour-polymer factor}
            \sum_{\substack{\Gamma\in\cC(\Z^2,q)\\\supp{\Gamma}=X_1}}\prod_{\substack{\gamma\in\Gamma}}\bars*{\hat{\zeta}_q(\gamma;\bz)}
            \le e^{-\Peierls{eqn: bound on the truncated finite-volume contour-polymer factor}\abs{X_1\cap\bL}}.
        \end{equation}
        \item \label{itm: bound on the finite-volume correction to the contour-polymer factor} The finite-volume correction to the truncated contour-polymer factor satisfies
        \declareprefactor{eqn: bound on the finite-volume correction to the contour-polymer factor}
		\declarePeierls{eqn: bound on the finite-volume correction to the contour-polymer factor}
        \begin{equation}
            \label{eqn: bound on the finite-volume correction to the contour-polymer factor}
            \begin{multlined}
                \bars[\Bigg]{\sum_{\substack{\Gamma\in\cC(\Lambda,q)\\\supp{\Gamma}=X_1}}\brackets[\Bigg]{\prod_{\substack{\gamma\in\Gamma}}\hat{\zeta}_q^{(\Lambda)}(\gamma;\bz_\Delta)-\prod_{\substack{\gamma\in\Gamma}}\hat{\zeta}_q(\gamma;\bz)}}
                \\
                \le \indicator{X_1\subset\Lambda\setminus\bdin[6\bL]{\Lambda}}
                \prefactor{eqn: bound on the finite-volume correction to the contour-polymer factor}\abs{X_1\cap\bL}
                e^{-\Peierls{eqn: bound on the finite-volume correction to the contour-polymer factor}\abs{X_1\cap\bL}}
                \eps^{\ceil{\rmd_\bL(X_1,\bdin[2\bL]{\Lambda})/2}-1}.
            \end{multlined}
        \end{equation}
        \item \label{itm: bound on the first-order derivative of the finite-volume contour-polymer factor} If~\eqref{eqn: recovery condition} and~\eqref{eqn: inductive condition for the bound on the first-order derivative of the contour weight} hold, then the first-order derivative of the true finite-volume contour-polymer factor satisfies
        \declareprefactor{eqn: bound on the first-order derivative of the finite-volume contour-polymer factor}
		\declarePeierls{eqn: bound on the first-order derivative of the finite-volume contour-polymer factor}
		\declarespatial{eqn: bound on the first-order derivative of the finite-volume contour-polymer factor}
        \begin{equation}
            \label{eqn: bound on the first-order derivative of the finite-volume contour-polymer factor}
            \bars[\Bigg]{\cD_1\sum_{\substack{\Gamma\in\cC(\Lambda,q)\\\supp{\Gamma}=X_1}}\prod_{\substack{\gamma\in\Gamma}}\zeta_q^{(\Lambda)}(\gamma;\bz_\Delta)}
            \le\indicator{R(\Delta)}(\fr_1)\prefactor{eqn: bound on the first-order derivative of the finite-volume contour-polymer factor}\abs{X_1\cap\bL}e^{-\Peierls{eqn: bound on the first-order derivative of the finite-volume contour-polymer factor}\abs{X_1\cap\bL}}e^{-\spatial{eqn: bound on the first-order derivative of the finite-volume contour-polymer factor}[\fn_1(X_1)-1]}.
        \end{equation}
        \item \label{itm: bound on the second-order derivative of the finite-volume contour-polymer factor} If~\eqref{eqn: recovery condition},~\eqref{eqn: inductive condition for the bound on the first-order derivative of the contour weight}, and~\eqref{eqn: inductive condition for the bound on the second-order derivative of the contour weight} hold, then the second-order derivative of the true finite-volume contour-polymer factor satisfies
        \declareprefactor{eqn: bound on the second-order derivative of the finite-volume contour-polymer factor}
		\declarespatial{eqn: bound on the second-order derivative of the finite-volume contour-polymer factor}
        \begin{equation}
            \label{eqn: bound on the second-order derivative of the finite-volume contour-polymer factor}
            \begin{multlined}
                \bars[\Bigg]{\cD_1\cD_2\sum_{\substack{\Gamma\in\cC(\Lambda,q)\\\supp{\Gamma}=X_1}}\prod_{\substack{\gamma\in\Gamma}}\zeta_q^{(\Lambda)}(\gamma;\bz_\Delta)}
                \le\indicator{R(\Delta)}(\fr_1,\fr_2)
                \\
                \times \prefactor{eqn: bound on the second-order derivative of the finite-volume contour-polymer factor}\abs{X_1\cap\bL}^2 e^{-\Peierls{eqn: bound on the first-order derivative of the finite-volume contour-polymer factor}\abs{X_1\cap\bL}}\brackets[\Big]{e^{-\spatial{eqn: bound on the second-order derivative of the finite-volume contour-polymer factor}[\fn_{12}(X_1)-2]}+e^{-\spatial{eqn: bound on the second-order derivative of the finite-volume contour-polymer factor}[\fn_1(X_1)-1]}e^{-\spatial{eqn: bound on the second-order derivative of the finite-volume contour-polymer factor}[\fn_2(X_1)-1]}}.
            \end{multlined}
        \end{equation}
    \end{enumerate}
\end{proposition}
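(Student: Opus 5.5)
We prove the four items in the order (a), (b), (c), (d). Each proof starts from the factorization $\zeta_q^{(\Lambda)}(\gamma;\bz_\Delta)=\zeta_q^0(\gamma;\bz)\exp\{-\sum_M\varphi(M)\chi_\gamma(M)\bz_\Delta(M)\}$ and handles the bare weight and the self-potential separately.

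For (a), the bare truncated weight $\hat\zeta_q^0(\gamma;\bz_\Delta)$ is bounded by three factors:
\begin{itemize}
    \item Lem.~\ref{lem: Peierls bound for contour weight} for $\bz(R_\gamma)/\Xi^q_\bz(\supp{\gamma})$, with the constraint $R_\Gamma\supset\fR$ distributed over the contours $\gamma$ that contain elements of $\fR$. This produces $\indicator{R(X_1)}(\fR)z^{\abs{\fR}}$.
    \item The truncation, which bounds each interior ratio by $e^{\tau\abs{\bdex[\bL]{\intr_j(\gamma)}\cap\bL}}$. Since $\bdex[\bL]{\intr_j(\gamma)}\subset\supp{\gamma}$, the product over $j$ is at most $e^{\tau\abs{\supp{\gamma}\cap\bL}}$.
    \item The self-potential, by \eqref{eqn: bound on the finite-volume contour self-potential}, which costs $e^{3e^{1+\nu}(z\ell^2)\eps\abs{\supp{\gamma}\cap\bL}}$.
\end{itemize}
Multiplying over $\gamma\in\Gamma$ then requires a sum over all $\Gamma$ with $\supp{\Gamma}=X_1$, that is, over the profiles $\psi_\gamma$ on the tiles of $X_1$. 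There are at most $3^{\abs{X_1\cap\bL}}$ of them, and the splitting of $X_1$ into supports is determined by $X_1$. Lem.~\ref{lem: Peierls bound for contour weight} already sums over $R_\gamma$. So $\Peierls{eqn: bound on the truncated finite-volume contour-polymer factor}=\Peierls{eqn: Peierls bound for contour weight}-\tau-\log 3-3e^{1+\nu}(z\ell^2)\eps$, and it is positive under the conditions of Appx.~\ref{appx: convergence and closure conditions}. The infinite-volume bound \eqref{eqn: bound on the truncated infinite-volume contour-polymer factor} follows the same way from \eqref{eqn: bound on the infinite-volume contour self-potential} with $\fR=\emptyset$.

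For (b), both products share the same bare factors $\hat\zeta^0_q$, which do not depend on $\Lambda$ once $\bz_\Delta=\bz$ on the relevant rectangles. The difference therefore comes only from the self-potential exponentials and from the change $\bz\to\bz_\Delta$.
\begin{itemize}
    \item If $X_1$ meets $\bdin[6\bL]{\Lambda}$, the boundary conditions force $\psi=q$ there, so no such contour exists. This gives the indicator.
    \item Otherwise, we use the telescoping bound $\abs{\prod e^{a_\gamma}-\prod e^{b_\gamma}}\le e^{\sum\max(\abs{a_\gamma},\abs{b_\gamma})}\sum_\gamma\abs{a_\gamma-b_\gamma}$.
    \item Each difference $\abs{a_\gamma-b_\gamma}$ involves only clusters not contained in $\Lambda\setminus\bdin[2\bL]{\Lambda}$, since $\bz_\Delta=\bz$ inside. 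It is controlled by \eqref{eqn: bound on the finite-volume correction to the contour self-potential}, which yields $\eps^{\ceil{\rmd_\bL(X_1,\bdin[2\bL]{\Lambda})/2}-1}$ and a factor $\abs{X_1\cap\bL}$.
\end{itemize}
The remaining sum over profiles and configurations is handled exactly as in (a).

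For (c) and (d), the recovery condition together with Lem.~\ref{lem: recovery of true weights} lets us replace $\hat\zeta$ by $\zeta$, so (a) applies to the true weights. We write $\cD_1\prod_\gamma\zeta_q^{(\Lambda)}=\prod_\gamma\zeta_q^{(\Lambda)}\cdot\sum_\gamma\cD_1\log\zeta_q^{(\Lambda)}$ and apply Lem.~\ref{lem: bound on the first-order derivative of the contour weight}. This bound has three parts:
\begin{itemize}
    \item The term $\indicator{R_\Gamma}(\fr_1)$: we apply (a) with $\fR=\{\fr_1\}$, which gives a factor $z$ and forces $\fr_1\in R(X_1)$, hence $\fn_1(X_1)=1$.
    \item The term $\indicator{R(c(X_1))}(\fr_1)\prefactor{eqn: bound on the first-order derivative of the contour weight - localization}$: here $\fT_1\subset c(X_1)$. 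Since $\abs{X_1\cap\bL}$ dominates the diameter of $c(X_1)$, $\fn_1(X_1)-1\le\abs{X_1\cap\bL}$. We sacrifice part of the Peierls constant to produce $e^{-\spatial{eqn: bound on the first-order derivative of the finite-volume contour-polymer factor}[\fn_1(X_1)-1]}$.
    \item The $\eps^{\fn_1(X_1)-1}$ term: it decays on its own.
\end{itemize}
This gives \eqref{eqn: bound on the first-order derivative of the finite-volume contour-polymer factor} with $\Peierls{eqn: bound on the first-order derivative of the finite-volume contour-polymer factor}=\Peierls{eqn: bound on the truncated finite-volume contour-polymer factor}-\spatial{eqn: bound on the first-order derivative of the finite-volume contour-polymer factor}$. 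For (d), we use $\cD_1\cD_2\prod\zeta=\prod\zeta\,[\sum\cD_1\cD_2\log\zeta+(\sum\cD_1\log\zeta)(\sum\cD_2\log\zeta)]$. The first summand is handled by Lem.~\ref{lem: bound on the second-order derivative of the contour weight}, yielding the $\fn_{12}$ decay. The cross product is expanded into the nine combinations of the three terms of Lem.~\ref{lem: bound on the first-order derivative of the contour weight}, each treated as in (c); this yields the product $e^{-\lambda[\fn_1-1]}e^{-\lambda[\fn_2-1]}$ and the factor $\abs{X_1\cap\bL}^2$. When both $\fr_1,\fr_2\in R_\Gamma$, we take $\fR=\{\fr_1,\fr_2\}$ in (a).

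The main obstacle is the localization term in (c) and (d). The inductive derivative bound contributes no spatial decay by itself, so the decay in $\fn_i(X_1)$ must come from the geometric fact that $\fT_i\subset c(X_1)$ forces $\rmd_\bL(\fT_i,X_1)\lesssim\abs{X_1\cap\bL}$. It must then be paid for from the Peierls constant, and we must check that enough remains for $\Peierls{eqn: bound on the first-order derivative of the finite-volume contour-polymer factor}>0$.
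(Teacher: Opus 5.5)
Your argument follows the paper's proof closely in (a), (c) and (d).
\begin{itemize}
\item In (a): factorization over the connected components of $X_1$, the Peierls lemma, the truncation cap, the one-sided self-potential bound, and the count $3^{|X_1\cap\bL|}$ of profiles.
\item In (c) and (d): the logarithmic-derivative identities, the two contour-weight derivative lemmas, (a) with $\fR=\{\fr_1\}$ or $\{\fr_1,\fr_2\}$, and trading part of the Peierls rate for spatial decay when $\fr_i\in R(c(X_1))$.
\end{itemize}
The paper uses the sharper geometric fact $|X_1\cap\bL|\ge 8\,\rmd_\bL(\fT_1,X_1)$. This lets it take half the Peierls rate $\kappa$ of (a) and obtain the spatial rate $\max\{e^{-8\kappa},\eps\}$. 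Your cruder inequality $\fn_1(X_1)-1\le|X_1\cap\bL|$ gives worse constants but suffices for the statement.

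The step that fails as written is the telescoping estimate in (b). You bound $\bigl|\prod_\gamma e^{a_\gamma}-\prod_\gamma e^{b_\gamma}\bigr|$ by $e^{\sum_\gamma\max\{|a_\gamma|,|b_\gamma|\}}\sum_\gamma|a_\gamma-b_\gamma|$ and then treat the exponential ``as in (a)''. But the self-potential estimate you used in (a) is one-sided. It bounds $a_\gamma=-\sum_M\varphi(M)\chi_\gamma(M)\bz_\Delta(M)$ only from above, because its proof discards the one-rectangle clusters.

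Those clusters contribute $-\sum\bz(r)$ over the $q$-oriented rectangles $r$ anchored in $\ext(\gamma)$ that overlap $R_\gamma$. For densely packed $R_\gamma$ this is of order $z\ell^2$ per tile along the outer boundary of $\supp{\gamma}$. You bound the exponential uniformly in $R_\gamma$ before summing with the Peierls lemma. So $e^{|a_\gamma|}$ costs far more than the Peierls gain, which is only about $10^{-3}z\ell^2$ per tile with the paper's parameters.

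The repair is to drop the absolute values. For real $A,B$ one has $|e^A-e^B|\le e^{\max\{A,B\}}|A-B|$, and the one-sided bounds (finite and infinite volume) control $\max\{A,B\}$.

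The paper does the equivalent thing:
\begin{itemize}
\item It writes $\hat{\zeta}_q^{(\Lambda)}(\gamma;\bz_\Delta)-\hat{\zeta}_q(\gamma;\bz)=\hat{\zeta}_q(\gamma;\bz)\,(e^{x_\gamma}-1)$, with $x_\gamma$ the difference of the two self-potentials.
\item It bounds $|e^{x_\gamma}-1|\le|x_\gamma|e^{|x_\gamma|}$ using the two-sided finite-volume correction estimate. That estimate is a sum of $|\varphi(M)|$ over clusters not contained in $\Lambda\setminus\bdin[2\bL]{\Lambda}$.
\item It controls $|\hat{\zeta}_q(\gamma;\bz)|$ itself through the one-sided bound and the Peierls lemma, telescoping over the connected components of $X_1$.
\end{itemize}
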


\begin{proof}[Proof of~\ref{itm: bound on the finite- and infinite-volume contour-polymer factors}]
    Denote the connected components of $X_1$ by $Z_1,\dots,Z_n$.
    For $1\le i\le n$, denote by $\fR_{i}$ the set of rectangles in $\fR$ anchored in $Z_i$.
    We begin by interchanging the summation and product:
    \begin{equation}
        \label{eqn: finite-volume contour-polymer factor after interchanging summation and product}
        \sum_{\substack{\Gamma\in\cC(\Lambda,q)\\\supp{\Gamma}=X_1\\R_\Gamma\supset\fR}}\prod_{\substack{\gamma\in\Gamma}}\bars*{\hat{\zeta}_q^{(\Lambda)}(\gamma;\bz_\Delta)}
        =\prod_{i=1}^n\sum_{\substack{\gamma\in C(\Lambda,q)\\\supp{\gamma}=Z_i\\R_\gamma\supset\fR_i}}\bars*{\hat{\zeta}_q^{(\Lambda)}(\gamma;\bz_\Delta)}.
    \end{equation}
    Recalling the definition of $\hat{\zeta}_q^{(\Lambda)}(\gamma;\bz_\Delta)$ from~\eqref{eqn: finite-volume contour weight}, combining Prop.~\ref{itm: bounds on the finite- and infinite-volume contour self-potentials} and Lem.~\ref{lem: Peierls bound for contour weight} (noting that, for $\Gamma\in\cC(\Lambda,q)$, $\supp{\Gamma}\subset\Lambda\setminus\bdin[6\bL]{\Lambda}$ on which $\bz_\Delta\equiv\bz$) yields
    \begin{equation}
        \sum_{\substack{\gamma\in C(\Lambda,q)\\\supp{\gamma}=Z_i}}\bars*{\hat{\zeta}_q^{(\Lambda)}(\gamma;\bz_\Delta)}
        \le z^{\abs{\fR_i}} 3^{\abs{Z_i\cap\bL}} 
        e^{-\Peierls{eqn: Peierls bound for contour weight}\abs{Z_i\cap\bL}}
        e^{3e^{1+\nu}(z\ell^2)\eps\abs{Z_i\cap\bL}}
        e^{\tau\abs{Z_i\cap\bL}},
    \end{equation}
    where we used that the $\bL$-external boundaries of the holes of $\gamma$ are disjoint subsets of $Z_i$, and $3^{\abs{Z_i\cap\bL}}$ bounds the number of possible $\psi_\gamma$.
    Taking the product over $i$ completes the proof for the finite-volume truncated contour-polymer factor.
    The same argument applies to the infinite-volume truncated contour-polymer factor.
\end{proof}

\begin{proof}[Proof of~\ref{itm: bound on the finite-volume correction to the contour-polymer factor}]
    We must have that $X_1\subset\Lambda\setminus\bdin[6\bL]{\Lambda}$ by the definition of $q$-boundary conditions for profiles, which requires that they be constant and equal to $q$ for all $v\in\bL$ whose enclosing smoothing square $S_u\supset T_v$ satisfies $\rmd_{6\bL}(u,\Lambda^c)\le 2$.
    Denote the connected components of $X_1$ by $Z_1,\dots,Z_n$.
    Interchanging the summation and product as in~\eqref{eqn: finite-volume contour-polymer factor after interchanging summation and product} and constructing a telescoping sum yields
    \begin{equation}
        \label{eqn: telescoping sum in the finite-volume correction to the contour-polymer factor}
        \begin{multlined}
            \sum_{\substack{\Gamma\in\cC(\Lambda,q)\\\supp{\Gamma}=X_1}}\brackets[\Bigg]{\prod_{\substack{\gamma\in\Gamma}}\hat{\zeta}_q^{(\Lambda)}(\gamma;\bz_\Delta)-\prod_{\substack{\gamma\in\Gamma}}\hat{\zeta}_q(\gamma;\bz)}
            =\prod_{i=1}^n\sum_{\substack{\gamma\in C(\Lambda,q)\\\supp{\gamma}=Z_i}}\hat{\zeta}_q^{(\Lambda)}(\gamma;\bz_\Delta)-\prod_{i=1}^n\sum_{\substack{\gamma\in C(\Lambda,q)\\\supp{\gamma}=Z_i}}\hat{\zeta}_q(\gamma;\bz)
            \\
            =\sum_{i=1}^n
            \prod_{j=1}^{i-1}\brackets[\Bigg]{\sum_{\substack{\gamma\in C(\Lambda,q)\\\supp{\gamma}=Z_j}}\hat{\zeta}_q(\gamma;\bz)}
            \brackets[\Bigg]{\sum_{\substack{\gamma\in C(\Lambda,q)\\\supp{\gamma}=Z_i}}\left(\hat{\zeta}_q^{(\Lambda)}(\gamma;\bz_\Delta)-\hat{\zeta}_q(\gamma;\bz)\right)}
            \prod_{j=i+1}^{n}\brackets[\Bigg]{\sum_{\substack{\gamma\in C(\Lambda,q)\\\supp{\gamma}=Z_j}}\hat{\zeta}_q^{(\Lambda)}(\gamma;\bz_\Delta)}.
        \end{multlined}
    \end{equation}
    Using that $\hat{\zeta}^0_q(\gamma;\bz)$ depends only on the restriction of $\bz$ to $\Lambda\setminus\bdin[6\bL]{\Lambda}$ for $\gamma\in C(\Lambda,q)$, we factor
    \begin{equation}
        \begin{multlined}
            \hat{\zeta}_q^{(\Lambda)}(\gamma;\bz_\Delta)-\hat{\zeta}_q(\gamma;\bz)
            =\hat{\zeta}_q(\gamma;\bz)\Bigg[\exp\Bigg\{\sum_{M\in\cM^q(\Z^2)\setminus\cM^q(\Lambda)}\varphi(M)\chi_\gamma(M)\bz(M)
            \\
            +\sum_{M\in\cM^q(\Lambda)\setminus\cM^q(\Lambda\setminus\bdin[2\bL]{\Lambda})}\varphi(M)\chi_\gamma(M)\brackets*{\bz(M)-\bz_\Delta(M)}\Bigg\}-1\Bigg].
        \end{multlined}
    \end{equation}
    The exponent is bounded in absolute value by the RHS of~\eqref{eqn: bound on the finite-volume correction to the contour self-potential}.
    Thus, we may apply $\abs{e^x-1}\le\abs{x}e^{\abs{x}}$ and Prop.~\ref{itm: bound on the finite-volume correction to the contour self-potential} to get that
    \begin{align}
        {}&\abs{\hat{\zeta}_q^{(\Lambda)}(\gamma;\bz_\Delta)-\hat{\zeta}_q(\gamma;\bz)}
        \nonumber
        \\
        \le{}& 3e^{1+\nu}z\abs{\supp{\gamma}}\eps^{\ceil{\rmd_\bL(\supp{\gamma},\bdin[2\bL]{\Lambda})/2}-1}
        e^{3e^{1+\nu}z\abs{\supp{\gamma}}\eps^{\ceil{\rmd_\bL(\supp{\gamma},\bdin[2\bL]{\Lambda})/2}-1}}\bars*{\hat{\zeta}_q(\gamma;\bz)}
        \nonumber
        \\
        \le{}& 3e^{1+\nu}(z\ell^2)\abs{Z_i\cap\bL}\eps^{\ceil{\rmd_\bL(X_1,\bdin[2\bL]{\Lambda})/2}-1}
        e^{3e^{1+\nu}(z\ell^2)\eps^2\abs{X_1\cap\bL}}\bars*{\hat{\zeta}_q(\gamma;\bz)},
        \label{eqn: telescoping term in the finite-volume correction to the contour-polymer factor}
    \end{align}
    having used $\rmd_\bL(\supp{\gamma},\bdin[2\bL]{\Lambda})\ge 5$ in the last step.
    Inserting~\eqref{eqn: telescoping term in the finite-volume correction to the contour-polymer factor} into~\eqref{eqn: telescoping sum in the finite-volume correction to the contour-polymer factor} and using Lem.~\ref{lem: Peierls bound for contour weight} yields~\eqref{eqn: bound on the finite-volume correction to the contour-polymer factor}.
\end{proof}

\begin{proof}[Proof of~\ref{itm: bound on the first-order derivative of the finite-volume contour-polymer factor}]
    We write
    \begin{equation}
        \cD_1\prod_{\substack{\gamma\in\Gamma}}\zeta_q^{(\Lambda)}(\gamma;\bz_\Delta)
        =\prod_{\substack{\gamma\in\Gamma}}\zeta_q^{(\Lambda)}(\gamma;\bz_\Delta)
        \Bigg[\sum_{\substack{\gamma\in\Gamma}}\cD_1\log\zeta_q^{(\Lambda)}(\gamma;\bz_\Delta)\Bigg].
    \end{equation}
    By assumption, we can use Lem.~\ref{lem: bound on the first-order derivative of the contour weight} and Prop.~\ref{itm: bound on the finite- and infinite-volume contour-polymer factors} to bound
    \begin{equation}
        \begin{multlined}
            \bars[\Bigg]{\cD_1\sum_{\substack{\Gamma\in\cC(\Lambda,q)\\\supp{\Gamma}=X_1}}\prod_{\substack{\gamma\in\Gamma}}\zeta_q^{(\Lambda)}(\gamma;\bz_\Delta)}
            \\
            \le \indicator{R(X_1)}(\fr_1)z e^{-\Peierls{eqn: bound on the truncated finite-volume contour-polymer factor}\abs{X_1\cap\bL}}
            +\brackets[\Big]{\indicator{R(c(X_1))}(\fr_1)\prefactor{eqn: bound on the first-order derivative of the contour weight - localization}
            +\indicator{R^q(\Delta)}(\fr_1)\prefactor{eqn: bound on the first-order derivative of the contour weight - decay}\eps^{\fn_1(X_1)-1}}\abs{X_1\cap\bL} e^{-\Peierls{eqn: bound on the truncated finite-volume contour-polymer factor}\abs{X_1\cap\bL}}.
        \end{multlined}
    \end{equation}
    Observe that $\abs{X_1\cap\bL}\ge 8\rmd_\bL(\fT_1,X_1)$ whenever $\fr_1\in R(c(X_1))$.
    Hence, we can trade part of the Peierls penalty for a spatial decay factor by bounding
    \begin{equation}
        \label{eqn: bound on the first-order derivative of the finite-volume contour-polymer factor - trading for spatial decay}
        \indicator{R(c(X_1))}(\fr_1) e^{-\Peierls{eqn: bound on the truncated finite-volume contour-polymer factor}\abs{X_1\cap\bL}/2}
        \le\indicator{R(c(X_1))}(\fr_1) e^{-4\Peierls{eqn: bound on the truncated finite-volume contour-polymer factor}\rmd_\bL(\fT_1,X_1)}
        \le \indicator{R(\Delta)}(\fr_1) e^{-8\Peierls{eqn: bound on the truncated finite-volume contour-polymer factor}[\fn_1(X_1)-1]}.
    \end{equation}
    Combining the above estimates yields~\eqref{eqn: bound on the first-order derivative of the finite-volume contour-polymer factor}.
\end{proof}

\begin{proof}[Proof of~\ref{itm: bound on the second-order derivative of the finite-volume contour-polymer factor}]
    We write
    \begin{equation}
        \begin{multlined}
            \cD_1\cD_2\prod_{\substack{\gamma\in\Gamma}}\zeta_q^{(\Lambda)}(\gamma;\bz_\Delta)
            =\prod_{\substack{\gamma\in\Gamma}}\zeta_q^{(\Lambda)}(\gamma;\bz_\Delta)
            \Bigg[\sum_{\substack{\gamma\in\Gamma}}\cD_1\cD_2\log\zeta_q^{(\Lambda)}(\gamma;\bz_\Delta)
            \\
            +\Bigg(\sum_{\substack{\gamma\in\Gamma}}\cD_1\log\zeta_q^{(\Lambda)}(\gamma;\bz_\Delta)\Bigg)\Bigg(\sum_{\substack{\gamma\in\Gamma}}\cD_2\log\zeta_q^{(\Lambda)}(\gamma;\bz_\Delta)\Bigg)\Bigg].
        \end{multlined}
    \end{equation} 
    By assumption, we can use Lem.~\ref{lem: bound on the second-order derivative of the contour weight} and Prop.~\ref{itm: bound on the finite- and infinite-volume contour-polymer factors} to bound
    \begin{align}
        {}&\bars[\Bigg]{\sum_{\substack{\Gamma\in\cC(\Lambda,q)\\\supp{\Gamma}=X_1}}
        \brackets[\Bigg]{\prod_{\substack{\gamma\in\Gamma}}\zeta_q^{(\Lambda)}(\gamma;\bz_\Delta)}
        \brackets[\Bigg]{\sum_{\substack{\gamma\in\Gamma}}\cD_1\cD_2\log\zeta_q^{(\Lambda)}(\gamma;\bz_\Delta)}
        }
        \nonumber
        \\
        \le{}& \prefactor{eqn: bound on the second-order derivative of the contour weight} \abs{X_1\cap\bL} e^{-\Peierls{eqn: bound on the truncated finite-volume contour-polymer factor}\abs{X_1\cap\bL}}
        \brackets*{\indicator{R(c(X_1))}(\fr_1,\fr_2)
        +\indicator{R^q(\Delta)}(\fr_1,\fr_2)\eps^{\fn_{12}(X_1)-2}}
        \nonumber
        \\
        \le{}& \indicator{R(\Delta)}(\fr_1,\fr_2) \prefactor{eqn: bound on the second-order derivative of the contour weight} \abs{X_1\cap\bL} e^{-\Peierls{eqn: bound on the truncated finite-volume contour-polymer factor}\abs{X_1\cap\bL}/2}
        \brackets*{e^{-4\Peierls{eqn: bound on the truncated finite-volume contour-polymer factor}[\fn_1(X_1)+\fn_2(X_1)-2]}+\eps^{\fn_{12}(X_1)-2}},
        \label{eqn: bound on the second-order derivative of the finite-volume contour-polymer factor - second-order contribution}
    \end{align}
    having traded part of the Peierls penalty for a spatial decay factor as in~\eqref{eqn: bound on the first-order derivative of the finite-volume contour-polymer factor - trading for spatial decay} in the last step.
    Similarly, we can use Lem.~\ref{lem: bound on the first-order derivative of the contour weight}, Prop.~\ref{itm: bound on the finite- and infinite-volume contour-polymer factors}, and~\eqref{eqn: bound on the first-order derivative of the finite-volume contour-polymer factor - trading for spatial decay} to bound
    \begin{align*}
        {}&\bars[\Bigg]{\sum_{\substack{\Gamma\in\cC(\Lambda,q)\\\supp{\Gamma}=X_1}}
        \brackets[\Bigg]{\prod_{\substack{\gamma\in\Gamma}}\zeta_q^{(\Lambda)}(\gamma;\bz_\Delta)}
        \Bigg(\sum_{\substack{\gamma\in\Gamma}}\cD_1\log\zeta_q^{(\Lambda)}(\gamma;\bz_\Delta)\Bigg)\Bigg(\sum_{\substack{\gamma\in\Gamma}}\cD_2\log\zeta_q^{(\Lambda)}(\gamma;\bz_\Delta)\Bigg)}
        \\
        \le{}& \indicator{R(X_1)}(\fr_1,\fr_2)z^2 e^{-\Peierls{eqn: bound on the truncated finite-volume contour-polymer factor}\abs{X_1\cap\bL}}
        \\
        +{}& \sum_{i=1,2}\indicator{R(X_1)}(\fr_{3-i}) \brackets[\Big]{\indicator{R(c(X_1))}(\fr_i)\prefactor{eqn: bound on the first-order derivative of the contour weight - localization}
        +\indicator{R^q(\Delta)}(\fr_i)\prefactor{eqn: bound on the first-order derivative of the contour weight - decay}\eps^{\fn_i(X_1)-1}}
        z \abs{X_1\cap\bL} e^{-\Peierls{eqn: bound on the truncated finite-volume contour-polymer factor}\abs{X_1\cap\bL}}
        \\
        +{}&\prod_{i=1,2}\brackets[\Big]{\indicator{R(c(X_1))}(\fr_i)\prefactor{eqn: bound on the first-order derivative of the contour weight - localization}
        +\indicator{R^q(\Delta)}(\fr_i)\prefactor{eqn: bound on the first-order derivative of the contour weight - decay}\eps^{\fn_i(X_1)-1}}\abs{X_1\cap\bL}^2 e^{-\Peierls{eqn: bound on the truncated finite-volume contour-polymer factor}\abs{X_1\cap\bL}}
        \\
        \le{}& \indicator{R(\Delta)}(\fr_1,\fr_2) e^{-\Peierls{eqn: bound on the truncated finite-volume contour-polymer factor}\abs{X_1\cap\bL}/2} \cbraces[\Big]{z^2 e^{-4\Peierls{eqn: bound on the truncated finite-volume contour-polymer factor}[\fn_1(X_1)+\fn_2(X_1)-2]}
        \\
        +{}& z \abs{X_1\cap\bL}\sum_{i=1,2}e^{-4\Peierls{eqn: bound on the truncated finite-volume contour-polymer factor}[\fn_{3-i}(X_1)-1]} \brackets[\Big]{\prefactor{eqn: bound on the first-order derivative of the contour weight - localization}e^{-4\Peierls{eqn: bound on the truncated finite-volume contour-polymer factor}[\fn_i(X_1)-1]}
        +\prefactor{eqn: bound on the first-order derivative of the contour weight - decay}\eps^{\fn_i(X_1)-1}}
        \\
        +{}&\abs{X_1\cap\bL}^2 \prod_{i=1,2}\brackets[\Big]{\prefactor{eqn: bound on the first-order derivative of the contour weight - localization}e^{-4\Peierls{eqn: bound on the truncated finite-volume contour-polymer factor}[\fn_i(X_1)-1]}
        +\prefactor{eqn: bound on the first-order derivative of the contour weight - decay}\eps^{\fn_i(X_1)-1}}}
        .
        \tag{\stepcounter{equation}\theequation}
    \end{align*}
    Combining the above estimates yields~\eqref{eqn: bound on the second-order derivative of the finite-volume contour-polymer factor}.
\end{proof}

\subsubsection{Estimates for the polymer weight}
\label{sec: bounds on the polymer weights}

\begin{proposition}
    \label{prop: bounds on the polymer weights}
    The following estimates hold for the polymer weight.
    Except in~\ref{itm: bound on the finite- and infinite-volume polymer weights}, we assume that $X\in\cT_\ast(\Lambda)$.
    \begin{enumerate}[label=(\alph*),ref=\theproposition(\alph*)]
        \item \label{itm: bound on the finite- and infinite-volume polymer weights} 
        \declarePeierls{eqn: bound on the finite-volume polymer weight}
        For $X\in\cT_\ast(\Lambda)$, the finite-volume truncated polymer weight satisfies
        \begin{equation}
            \label{eqn: bound on the finite-volume polymer weight}
            \abs{\hat{K}_q^{(\Lambda)}(X;\bz_\Delta)}\le e^{-\Peierls{eqn: bound on the finite-volume polymer weight}\abs{X\cap\bL}}.
        \end{equation}
        For $X\in\cT_\ast(\Z^2)$, the infinite-volume truncated polymer weight satisfies
        \begin{equation}
            \label{eqn: bound on the infinite-volume polymer weight}
            \abs{\hat{K}_q(X;\bz)}\le e^{-\Peierls{eqn: bound on the finite-volume polymer weight}\abs{X\cap\bL}}.
        \end{equation}
        \item \label{itm: bound on the finite-volume correction to the polymer weight}
        \declareprefactor{eqn: bound on the finite-volume correction to the polymer weight}
		\declarePeierls{eqn: bound on the finite-volume correction to the polymer weight}
		\declarespatial{eqn: bound on the finite-volume correction to the polymer weight}
        The finite-volume correction to the truncated polymer weight satisfies
        \begin{equation}
            \label{eqn: bound on the finite-volume correction to the polymer weight}
            \abs{\hat{K}_q^{(\Lambda)}(X;\bz_\Delta)-\hat{K}_q(X;\bz)}
            \le \prefactor{eqn: bound on the finite-volume correction to the polymer weight}\abs{X\cap\bL}e^{-\Peierls{eqn: bound on the finite-volume correction to the polymer weight}(\abs{X\cap\bL}-1)}
            e^{-\spatial{eqn: bound on the finite-volume correction to the polymer weight}\rmd_\bL(X,\bdin[2\bL]{\Lambda})}.
        \end{equation}
        \item \label{itm: bound on the first-order derivative of the finite-volume polymer weight}
        \declareprefactor{eqn: bound on the first-order derivative of the finite-volume polymer weight}
		\declarePeierls{eqn: bound on the first-order derivative of the finite-volume polymer weight}
        If~\eqref{eqn: inductive condition for the bound on the first-order derivative of the contour weight} holds, then the first-order derivative of the true finite-volume polymer weight satisfies
        \begin{equation}
            \label{eqn: bound on the first-order derivative of the finite-volume polymer weight}
            \bars*{\cD_1K_q^{(\Lambda)}(X;\bz_\Delta)}
            \le\indicator{R(\Delta)}(\fr_1)\prefactor{eqn: bound on the first-order derivative of the finite-volume polymer weight}\abs{X\cap\bL}e^{-\Peierls{eqn: bound on the first-order derivative of the finite-volume polymer weight}(\abs{X\cap\bL}-1)}e^{-\spatial{eqn: bound on the first-order derivative of the finite-volume contour-polymer factor}[\fn_1(X)-1]}.
        \end{equation}
        \item \label{itm: bound on the second-order derivative of the finite-volume polymer weight}
        \declareprefactor{eqn: bound on the second-order derivative of the finite-volume polymer weight}
		\declareloss{eqn: bound on the second-order derivative of the finite-volume polymer weight}
		\declarePeierls{eqn: bound on the second-order derivative of the finite-volume polymer weight}
		\declarespatial{eqn: bound on the second-order derivative of the finite-volume polymer weight}
        If~\eqref{eqn: inductive condition for the bound on the first-order derivative of the contour weight} and~\eqref{eqn: inductive condition for the bound on the second-order derivative of the contour weight} hold, then the second-order derivative of the true finite-volume polymer weight satisfies
        \begin{equation}
            \label{eqn: bound on the second-order derivative of the finite-volume polymer weight}
            \begin{multlined}
                \bars*{\cD_1\cD_2K_q^{(\Lambda)}(X;\bz_\Delta)}
                \le\indicator{R(\Delta)}(\fr_1,\fr_2)
                \\
                \times\prefactor{eqn: bound on the second-order derivative of the finite-volume polymer weight}\abs{X\cap\bL}^2 e^{-\Peierls{eqn: bound on the second-order derivative of the finite-volume polymer weight}(\abs{X\cap\bL}-2)}\brackets[\Big]{e^{-\spatial{eqn: bound on the second-order derivative of the finite-volume polymer weight}[\fn_{12}(X)-2]}+e^{-\spatial{eqn: bound on the second-order derivative of the finite-volume polymer weight}[\fn_1(X)-1]}e^{-\spatial{eqn: bound on the second-order derivative of the finite-volume polymer weight}[\fn_2(X)-1]}}.
            \end{multlined}
        \end{equation}
    \end{enumerate}
\end{proposition}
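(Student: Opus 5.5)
Each item follows by inserting the component estimates of Props.~\ref{prop: bounds on the contour-polymer factors} and~\ref{prop: bounds on the interaction-polymer factors} into the definition~\eqref{eqn: finite-volume polymer weight} (resp.~\eqref{eqn: infinite-volume polymer weight}) and summing over coverings $X_1\cup X_2=X$ with Lem.~\ref{lem: covering sums}. For (a), apply the triangle inequality. For each fixed $X_1$, bound the contour sum by~\eqref{eqn: bound on the truncated finite-volume contour-polymer factor} with $\fR=\emptyset$ and the interaction factor by~\eqref{eqn: bound on the finite-volume interaction-polymer factor}. The summand is then at most $e^{-(\Peierls{eqn: bound on the truncated finite-volume contour-polymer factor}-\loss{eqn: bound on the finite-volume interaction-polymer factor})\abs{X_1\cap\bL}}e^{-\Peierls{eqn: bound on the finite-volume interaction-polymer factor}\abs{X_2\cap\bL}}$, and Lem.~\ref{itm: covering sum zeroth moment} gives $(\vareps_1+\vareps_2+\vareps_1\vareps_2)^{\abs{X\cap\bL}}$ with $\vareps_1=e^{-(\Peierls{eqn: bound on the truncated finite-volume contour-polymer factor}-\loss{eqn: bound on the finite-volume interaction-polymer factor})}$ and $\vareps_2=e^{-\Peierls{eqn: bound on the finite-volume interaction-polymer factor}}$. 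We define $\Peierls{eqn: bound on the finite-volume polymer weight}:=-\log(\vareps_1+\vareps_2+\vareps_1\vareps_2)$ and add its positivity to the closure conditions. The infinite-volume bound is identical.

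For (b), write the difference of products $AB-A'B'$ as $(A-A')B+A'(B-B')$, where $A$ is the contour factor and $B$ the interaction factor. The first term uses~\eqref{eqn: bound on the finite-volume correction to the contour-polymer factor} together with~\eqref{eqn: bound on the finite-volume interaction-polymer factor}. The second uses~\eqref{eqn: bound on the truncated infinite-volume contour-polymer factor} together with~\eqref{eqn: bound on the finite-volume correction to the interaction-polymer factors}. Both carry a factor $\abs{X_1\cap\bL}$ and spatial decay in $\rmd_\bL(X_1,\bdin[2\bL]{\Lambda})$. The main obstacle here is converting decay in $\rmd_\bL(X_1,\cdot)$ into decay in $\rmd_\bL(X,\cdot)$ when $X_2\not\subset X_1$. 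I would use $\rmd_\bL(X_1,\bdin[2\bL]{\Lambda})\ge\rmd_\bL(X,\bdin[2\bL]{\Lambda})$, which holds because $X_1\subset X$. The ceiling term $\ceil{d/2}-1\ge d/2-1$ costs one factor of $\eps^{-1}$, and this is absorbed into the prefactor or into the shift $\abs{X\cap\bL}-1$. Lem.~\ref{itm: covering sum first moment} then produces $\abs{X\cap\bL}\vareps_1(1+\vareps_2)(\cdots)^{\abs{X\cap\bL}-1}$, which accounts for the shift $-1$ in the exponent. For $\abs{X_1\cap\bL}=0$ the first term vanishes, and the second can be handled by noting that $V_\Gamma$ vanishes when $\Gamma=\emptyset$.

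For (c), the recovery Lem.~\ref{lem: recovery of true weights} shows that the true and truncated weights coincide, so the bounds for $\hat K$ apply. Differentiate with the Leibniz rule: $\cD_1$ acts on either the contour factor, bounded by~\eqref{eqn: bound on the first-order derivative of the finite-volume contour-polymer factor}, or on the interaction factor, bounded by~\eqref{eqn: bound on the first-order derivative of the finite-volume interaction-polymer factor} times the undifferentiated contour bound. In the second case the spatial decay $e^{-\spatial{eqn: bound on the first-order derivative of the finite-volume contour-polymer factor}[\fn_1(X)-1]}$ comes from $\fr_1\in R(X_2)\subset R(X)$, so $\fn_1(X)=1$. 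In the first case it comes from $\fn_1(X_1)\ge\fn_1(X)$. Summing again with Lem.~\ref{itm: covering sum first moment} yields the factor $\abs{X\cap\bL}$ and the Peierls exponent with $\abs{X\cap\bL}-1$.

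For (d), expand $\cD_1\cD_2$ into four terms: both derivatives on the contour factor, both on the interaction factor, and the two mixed terms. The case where both fall on the interaction factor needs $\fr_1,\fr_2\in R(X)$, hence $\fn_{12}(X)\le 2$, so the decay bracket is trivially at least $1$. In each mixed term the indicator localizes one rectangle in $X_2$ and the decay for the other comes from $\fn_i(X_1)\ge\fn_i(X)$, producing the product term. Monotonicity $\fn_{12}(X_1)\ge\fn_{12}(X)$ gives the other term. The sum over coverings uses Lem.~\ref{itm: covering sum second moment}, whose hypothesis $\vareps_1+\vareps_2+\vareps_1\vareps_2\le1$ follows from positivity of $\Peierls{eqn: bound on the finite-volume polymer weight}$. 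This gives $\abs{X\cap\bL}^2$ and the shift $-2$. The loss $\loss{eqn: bound on the second-order derivative of the finite-volume polymer weight}$ is defined to absorb the mixed Peierls rates $\Peierls{eqn: bound on the second-order derivative of the finite-volume interaction-polymer factor}$ versus $\Peierls{eqn: bound on the finite-volume interaction-polymer factor}$, by taking minima of the rates.
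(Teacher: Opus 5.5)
Your treatment of (a), (c) and (d) follows the paper's route: insert the component bounds, convert indicators and use $X_1,X_2\subset X$ to get the spatial decay, then sum over coverings with Lem.~\ref{lem: covering sums}. Part (b), however, has a genuine gap.

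\textbf{The gap in (b).} The splitting $AB-A'B'=(A-A')B+A'(B-B')$ assumes that $\hat{K}_q^{(\Lambda)}(X;\bz_\Delta)$ and $\hat{K}_q(X;\bz)$ run over the same contour families. They do not. The former sums over $\Gamma\in\cC(\Lambda,q)$ and the latter over $\Gamma\in\cC(\Z^2,q)$. Even with $\supp{\Gamma}\subset X\in\cT_\ast(\Lambda)$ there are families in $\cC(\Z^2,q)\setminus\cC(\Lambda,q)$, namely those whose profile is not identically $q$ in the layer where $\Psi_\Lambda^q$ forces $\psi=q$. Their contribution appears only in $\hat{K}_q(X;\bz)$, and none of the estimates you cite controls it:
\begin{itemize}
\item the contour correction bound, Prop.~\ref{itm: bound on the finite-volume correction to the contour-polymer factor}, compares $\prod\hat{\zeta}_q^{(\Lambda)}$ with $\prod\hat{\zeta}_q$ only for $\Gamma\in\cC(\Lambda,q)$;
\item the interaction correction bound, Prop.~\ref{itm: bound on the finite-volume correction to the interaction-polymer factors}, also assumes $\Gamma\in\cC(\Lambda,q)$.
\end{itemize}
The paper handles these families as a separate third term. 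It relaxes the sum to all of $\cC(\Z^2,q)$, at the cost of the indicator $\indicator{X_1\cap\bdin[2\bL]{\Lambda}\ne\emptyset}\le\indicator{X\cap\bdin[2\bL]{\Lambda}\ne\emptyset}\le e^{-\lambda\rmd_\bL(X,\bdin[2\bL]{\Lambda})}$. The remainder is bounded by the zeroth-moment covering sum exactly as in (a), i.e.\ by $e^{-\kappa\abs{X\cap\bL}}$ with $\kappa$ the rate of (a). This term is why the prefactor in (b) has the extra summand $e^{-\kappa}$, and why the rate in (b) is a maximum that includes $e^{-\kappa}$ (Appx.~\ref{appx: derived quantities}). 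Without it your bound on the difference is incomplete.

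\textbf{Smaller points.}
\begin{itemize}
\item In (b) you need not lose a factor $\eps^{-1}$ from the ceiling. For $\Gamma\in\cC(\Lambda,q)$ one has $X_1\subset\Lambda\setminus\bdin[6\bL]{\Lambda}$, so $d:=\rmd_\bL(X_1,\bdin[2\bL]{\Lambda})\ge5$ and $\ceil{d/2}-1\ge d/3$ with no loss. Putting $\eps^{-1}$ into the prefactor would make it exponentially large in the regime of interest and break the closure condition for the flipping term. Absorbing it into the shift $\abs{X\cap\bL}-1$ does not suffice either, since that only gains $e^{\kappa}$. Absorbing it into the Peierls factor of $X_1$ works, because contour supports contain at least nine smoothing squares, but it is unnecessary.
\item In (d), $\fr_1,\fr_2\in R(X)$ does not give $\fn_{12}(X)\le2$, since $\fn_{12}\ge\fd_{12}\ge6$. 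The bracket is nonetheless at least $1$, because the product term equals $1$ when $\fn_1(X)=\fn_2(X)=1$.
\item Also in (d), the hypothesis of Lem.~\ref{itm: covering sum second moment} involves rates built from the halved contour rate of Prop.~\ref{itm: bound on the first-order derivative of the finite-volume contour-polymer factor}, which are weaker than those in (a). It is therefore a separate condition (C7 in Appx.~\ref{appx: convergence and closure conditions}), not a consequence of the positivity of the rate in (a).
\end{itemize}
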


\begin{proof}[Proof of~\ref{itm: bound on the finite- and infinite-volume polymer weights}]
    Combining Props.~\ref{itm: bound on the finite- and infinite-volume interaction-polymer factors} and~\ref{itm: bound on the finite- and infinite-volume contour-polymer factors} and Lem.~\ref{itm: covering sum zeroth moment}, we get that
    \begin{equation}
        \begin{multlined}
            \abs{\hat{K}_q^{(\Lambda)}(X;\bz_\Delta)}
            \le\sum_{\substack{X_1,X_2\in\cT(\Lambda)\\X_1\cup X_2=X}}
            \sum_{\substack{\Gamma\in\cC(\Lambda,q)\\\supp{\Gamma}=X_1}}
            \brackets[\Bigg]{\prod_{\substack{\gamma\in\Gamma}}\hat{\zeta}_q^{(\Lambda)}(\gamma;\bz_\Delta)}
            \brackets[\Bigg]{\sum_{\substack{\cY\in\cB_\ast(\Lambda)\\\cup_{Y\in\cY}Y=X_2}}\prod_{Y\in\cY}\abs{e^{V_\Gamma^{(\Lambda)}(Y;\bz_\Delta)}-1}}
            \\
            \le\sum_{\substack{X_1,X_2\in\cT(\Z^2)\\X_1\cup X_2=X}} e^{-(\Peierls{eqn: bound on the truncated finite-volume contour-polymer factor}-\loss{eqn: bound on the finite-volume interaction-polymer factor})\abs{X_1\cap\bL}}e^{-\Peierls{eqn: bound on the finite-volume interaction-polymer factor}\abs{X_2\cap\bL}}
            =e^{-\Peierls{eqn: bound on the finite-volume polymer weight}\abs{X\cap\bL}}.
        \end{multlined}
    \end{equation}
    An analogous computation applies to the infinite-volume truncated polymer weight.
\end{proof}

\begin{proof}[Proof of~\ref{itm: bound on the finite-volume correction to the polymer weight}]
    Observe that we can let the sums over $X_1,X_2\in\cT(\Z^2)$ and $\cY\in\cB_\ast(\Z^2)$ in the definition of the infinite-volume polymer weight~\eqref{eqn: infinite-volume polymer weight} (and its truncated version) run over $X_1,X_2\in\cT(\Lambda)$ and $\cY\in\cB_\ast(\Lambda)$ instead by the assumption that $X\in\cT_\ast(\Lambda)$.
    Hence, we may write
    \begin{align*}
        {}&\hat{K}_q^{(\Lambda)}(X;\bz_\Delta)-\hat{K}_q(X;\bz)
        \\
        ={}&\sum_{\substack{X_1,X_2\in\cT(\Lambda)\\X_1\cup X_2=X}}
        \sum_{\substack{\Gamma\in\cC(\Lambda,q)\\\supp{\Gamma}=X_1}}
        \brackets[\Bigg]{\prod_{\substack{\gamma\in\Gamma}}\hat{\zeta}_q^{(\Lambda)}(\gamma;\bz_\Delta)}
        \sum_{\substack{\cY\in\cB_\ast(\Lambda)\\\cup_{Y\in\cY}Y=X_2}}\brackets[\Bigg]{\prod_{Y\in\cY}\left(e^{V_\Gamma^{(\Lambda)}(Y;\bz_\Delta)}-1\right)-\prod_{Y\in\cY}\left(e^{V_\Gamma(Y;\bz)}-1\right)}
        \\
        +{}&\sum_{\substack{X_1,X_2\in\cT(\Lambda)\\X_1\cup X_2=X}}
        \sum_{\substack{\Gamma\in\cC(\Lambda,q)\\\supp{\Gamma}=X_1}}
        \brackets[\Bigg]{\prod_{\substack{\gamma\in\Gamma}}\hat{\zeta}_q^{(\Lambda)}(\gamma;\bz_\Delta)-\prod_{\substack{\gamma\in\Gamma}}\hat{\zeta}_q(\gamma;\bz)}
        \brackets[\Bigg]{\sum_{\substack{\cY\in\cB_\ast(\Lambda)\\\cup_{Y\in\cY}Y=X_2}}\prod_{Y\in\cY}\left(e^{V_\Gamma(Y;\bz)}-1\right)}
        \\
        -{}&\sum_{\substack{X_1,X_2\in\cT(\Lambda)\\X_1\cup X_2=X}}
        \sum_{\substack{\Gamma\in\cC(\Z^2,q)\setminus\cC(\Lambda,q)\\\supp{\Gamma}=X_1}}
        \brackets[\Bigg]{\prod_{\substack{\gamma\in\Gamma}}\hat{\zeta}_q(\gamma;\bz)}
        \brackets[\Bigg]{\sum_{\substack{\cY\in\cB_\ast(\Lambda)\\\cup_{Y\in\cY}Y=X_2}}\prod_{Y\in\cY}\left(e^{V_\Gamma(Y;\bz)}-1\right)}.
        \tag{\stepcounter{equation}\theequation}
    \end{align*}
    Combining Props.~\ref{itm: bound on the finite- and infinite-volume interaction-polymer factors},~\ref{itm: bound on the finite-volume correction to the interaction-polymer factors},~\ref{itm: bound on the finite- and infinite-volume contour-polymer factors}, and~\ref{itm: bound on the finite-volume correction to the contour-polymer factor}, we bound the first two terms in the above in absolute value by
    \begin{equation}
        \begin{multlined}
            \prefactor{eqn: bound on the finite-volume correction to the interaction-polymer factors}\sum_{\substack{X_1,X_2\in\cT(\Z^2)\\X_1\cup X_2=X}}
            \abs{X_1\cap\bL}
            e^{-(\Peierls{eqn: bound on the truncated finite-volume contour-polymer factor}-\loss{eqn: bound on the finite-volume correction to the interaction-polymer factors})\abs{X_1\cap\bL}}e^{-\Peierls{eqn: bound on the finite-volume correction to the interaction-polymer factors}\abs{X_2\cap\bL}}
            e^{-\spatial{eqn: bound on the finite-volume correction to the interaction-polymer factors}\rmd_\bL(X_1,\bdin[2\bL]{\Lambda})}
            \\
            +\prefactor{eqn: bound on the finite-volume correction to the contour-polymer factor}\sum_{\substack{X_1,X_2\in\cT(\Z^2)\\X_1\cup X_2=X}}
            \indicator{X_1\subset\Lambda\setminus\bdin[6\bL]{\Lambda}}
            \abs{X_1\cap\bL}
            e^{-(\Peierls{eqn: bound on the finite-volume correction to the contour-polymer factor}-\loss{eqn: bound on the finite-volume interaction-polymer factor})\abs{X_1\cap\bL}}e^{-\Peierls{eqn: bound on the finite-volume interaction-polymer factor}\abs{X_2\cap\bL}}
            \eps^{\ceil{\rmd_\bL(X_1,\bdin[2\bL]{\Lambda})/2}-1}.
        \end{multlined}
    \end{equation}
    Using that $\ceil{\rmd_\bL(X_1,\bdin[2\bL]{\Lambda})/2}-1\ge \rmd_\bL(X_1,\bdin[2\bL]{\Lambda})/3$ for $X_1\subset\Lambda\setminus\bdin[6\bL]{\Lambda}$, we  extract a uniform decay factor of $e^{-\spatial{eqn: bound on the finite-volume correction to the polymer weight}\rmd_\bL(X,\bdin[2\bL]{\Lambda})}$.
    The rest is estimated using Lem.~\ref{itm: covering sum first moment}.
    As for the third term, we relax the sum over $\Gamma\in\cC(\Z^2,q)\setminus\cC(\Lambda,q)$ to all $\Gamma\in\cC(\Z^2,q)$, thereby gaining an indicator function $\indicator{X_1\cap\bdin[2\bL]{\Lambda}\ne\emptyset}$, to which we apply
    \begin{equation}
        \indicator{X_1\cap\bdin[2\bL]{\Lambda}\ne\emptyset}
        \le \indicator{X\cap\bdin[2\bL]{\Lambda}\ne\emptyset}
        \le e^{-\spatial{eqn: bound on the finite-volume correction to the polymer weight}\rmd_\bL(X,\bdin[2\bL]{\Lambda})},
    \end{equation}
    and then the rest is bounded using Lem.~\ref{itm: covering sum zeroth moment}.
    Combining the estimates yields~\eqref{eqn: bound on the finite-volume correction to the polymer weight}.
\end{proof}

\begin{proof}[Proof of~\ref{itm: bound on the first-order derivative of the finite-volume polymer weight}]
    Using Props.~\ref{itm: bound on the finite- and infinite-volume interaction-polymer factors},~\ref{itm: bound on the first-order derivative of the finite-volume interaction-polymer factor},~\ref{itm: bound on the finite- and infinite-volume contour-polymer factors} and~\ref{itm: bound on the first-order derivative of the finite-volume contour-polymer factor}, we bound 
    \begin{equation}
        \begin{multlined}
            \bars{\cD_1K_q^{(\Lambda)}(X;\bz_\Delta)}
            \le\indicator{R(\Delta)}(\fr_1)
            \sum_{\substack{X_1,X_2\in\cT(\Z^2)\\X_1\cup X_2=X}}\Big[
            \prefactor{eqn: bound on the first-order derivative of the finite-volume contour-polymer factor}\abs{X_1\cap\bL}e^{-(\Peierls{eqn: bound on the first-order derivative of the finite-volume contour-polymer factor}-\loss{eqn: bound on the finite-volume interaction-polymer factor})\abs{X_1\cap\bL}}e^{-\spatial{eqn: bound on the first-order derivative of the finite-volume contour-polymer factor}[\fn_1(X_1)-1]}\eps^{-\Peierls{eqn: bound on the finite-volume interaction-polymer factor}\abs{X_2\cap\bL}}
            \\
            +\indicator{R(X_2)}(\fr_1)\prefactor{eqn: bound on the first-order derivative of the finite-volume interaction-polymer factor}\abs{X_1\cap\bL}e^{-(\Peierls{eqn: bound on the truncated finite-volume contour-polymer factor}-\loss{eqn: bound on the finite-volume interaction-polymer factor})\abs{X_1\cap\bL}}e^{-\Peierls{eqn: bound on the finite-volume interaction-polymer factor}\abs{X_2\cap\bL}}
            \Big].
        \end{multlined}
    \end{equation}
    We extract a uniform spatial decay factor by bounding 
    \begin{equation}
        \max\set*{e^{-\spatial{eqn: bound on the first-order derivative of the finite-volume contour-polymer factor}[\fn_1(X_1)-1]},\indicator{R(X_2)}(\fr_1)}\le e^{-\spatial{eqn: bound on the first-order derivative of the finite-volume contour-polymer factor}[\fn_1(X)-1]}.
    \end{equation}
    The rest is now explicitly estimated using Lem.~\ref{itm: covering sum first moment}, which yields~\eqref{eqn: bound on the first-order derivative of the finite-volume polymer weight}.
\end{proof}

\begin{proof}[Proof of~\ref{itm: bound on the second-order derivative of the finite-volume polymer weight}]
    Using Props.~\ref{itm: bound on the finite- and infinite-volume interaction-polymer factors},~\ref{itm: bound on the first-order derivative of the finite-volume interaction-polymer factor},~\ref{itm: bound on the second-order derivative of the finite-volume interaction-polymer factor},~\ref{itm: bound on the finite- and infinite-volume contour-polymer factors},~\ref{itm: bound on the first-order derivative of the finite-volume contour-polymer factor} and~\ref{itm: bound on the second-order derivative of the finite-volume contour-polymer factor}, we bound
    \begin{align*}
        \Big\lvert\cD_1{}&\cD_2K_q^{(\Lambda)}(X;\bz_\Delta)\Big\rvert
        \le \indicator{R(\Delta)}(\fr_1,\fr_2)\sum_{\substack{X_1,X_2\in\cT(\Z^2)\\X_1\cup X_2=X}}\abs{X_1\cap\bL}^2
        \\
        \times\Big\{{}& \prefactor{eqn: bound on the second-order derivative of the finite-volume contour-polymer factor} e^{-(\Peierls{eqn: bound on the first-order derivative of the finite-volume contour-polymer factor}-\loss{eqn: bound on the finite-volume interaction-polymer factor})\abs{X_1\cap\bL}}e^{-\Peierls{eqn: bound on the finite-volume interaction-polymer factor}\abs{X_2\cap\bL}}\brackets[\Big]{e^{-\spatial{eqn: bound on the second-order derivative of the finite-volume contour-polymer factor}[\fn_{12}(X_1)-2]}+e^{-\spatial{eqn: bound on the second-order derivative of the finite-volume contour-polymer factor}[\fn_1(X_1)-1]}e^{-\spatial{eqn: bound on the second-order derivative of the finite-volume contour-polymer factor}[\fn_2(X_1)-1]}}
        \\
        {}&+\prefactor{eqn: bound on the first-order derivative of the finite-volume interaction-polymer factor}\prefactor{eqn: bound on the first-order derivative of the finite-volume contour-polymer factor}e^{-(\Peierls{eqn: bound on the first-order derivative of the finite-volume contour-polymer factor}-\loss{eqn: bound on the finite-volume interaction-polymer factor})\abs{X_1\cap\bL}}e^{-\Peierls{eqn: bound on the finite-volume interaction-polymer factor}\abs{X_2\cap\bL}}\brackets[\Big]{\indicator{R(X_2)}(\fr_2)e^{-\spatial{eqn: bound on the first-order derivative of the finite-volume contour-polymer factor}[\fn_1(X_1)-1]}
        +\indicator{R(X_2)}(\fr_1)e^{-\spatial{eqn: bound on the first-order derivative of the finite-volume contour-polymer factor}[\fn_2(X_1)-1]}}
        \\
        {}&+\indicator{R(X_2)}(\fr_1,\fr_2)\prefactor{eqn: bound on the second-order derivative of the finite-volume interaction-polymer factor} e^{-(\Peierls{eqn: bound on the truncated finite-volume contour-polymer factor}-\loss{eqn: bound on the finite-volume interaction-polymer factor})\abs{X_1\cap\bL}}e^{-\Peierls{eqn: bound on the second-order derivative of the finite-volume interaction-polymer factor}\abs{X_2\cap\bL}} \Big\}.
        \tag{\stepcounter{equation}\theequation}
    \end{align*}
    We convert the indicators to spatial decays by bounding $\indicator{R(X_2)}(\fr_i)\le e^{-\spatial{eqn: bound on the second-order derivative of the finite-volume polymer weight}[\fn_i(X_2)-1]}$.
    By unifying the spatial decay rates and bounding $\fn_i(X_j)\ge \fn_i(X)$, we extract all the spatial decay factors.
    The rest is explicitly estimated using Lem.~\ref{itm: covering sum second moment}, which yields~\eqref{eqn: bound on the second-order derivative of the finite-volume polymer weight}.
\end{proof}

\subsection{Proof of Proposition~\ref{prop: main}}
\label{sec: proof of the main proposition}

\begin{lemma}
    \label{lem: the truncated polymer model can be cluster expanded}
    Let $\alpha>0$ be such that $e^{-(\Peierls{eqn: bound on the finite-volume polymer weight}-2\alpha)}<\frac{1}{64}$ and
    \begin{equation}
        \label{eqn: condition on the Peierls constant for polymer weights}
        9c_0(e^{-(\Peierls{eqn: bound on the finite-volume polymer weight}-2\alpha)})\le \alpha.
    \end{equation}
    Then, the cluster expansion applies to the truncated polymer model with either finite- or infinite-volume weights and the choices of auxiliary functions $a(X)=b(X):=\alpha\abs{X\cap\bL}$ (in the context of Thm.~\ref{thm: Ueltschi criterion}) for $X\in\cT_\ast(\Z^2)$.
\end{lemma}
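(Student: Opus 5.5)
The plan is to verify Ueltschi's criterion~\eqref{eqn: Ueltschi criterion} directly, with polymers $\bA=\cT_\ast(\Lambda)$ (resp.\ $\cT_\ast(\Z^2)$), weights $w=\hat{K}_q^{(\Lambda)}(\cdot;\bz_\Delta)$ (resp.\ $\hat{K}_q(\cdot;\bz)$), and compatibility meaning disconnectedness. Fix a polymer $X_\ast$. By Prop.~\ref{itm: bound on the finite- and infinite-volume polymer weights}, both the finite- and infinite-volume truncated weights satisfy $\abs{w(X)}\le e^{-\Peierls{eqn: bound on the finite-volume polymer weight}\abs{X\cap\bL}}$. This bound is uniform in $\Lambda$, $\Delta$ and the fugacity within the admissible range, so one computation handles both cases.

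With $a(X)=b(X)=\alpha\abs{X\cap\bL}$, the left-hand side of~\eqref{eqn: Ueltschi criterion} is bounded by
\begin{equation*}
    \sum_{X\not\sim X_\ast}e^{2\alpha\abs{X\cap\bL}}e^{-\Peierls{eqn: bound on the finite-volume polymer weight}\abs{X\cap\bL}}
    =\sum_{X\not\sim X_\ast}\big(e^{-(\Peierls{eqn: bound on the finite-volume polymer weight}-2\alpha)}\big)^{\abs{X\cap\bL}}.
\end{equation*}
Here the sum runs over $X\in\cT_\ast(\Z^2)$, which is an upper bound for the finite-volume sum. Since $\vareps:=e^{-(\Peierls{eqn: bound on the finite-volume polymer weight}-2\alpha)}<\frac{1}{64}$ by hypothesis, I apply Lem.~\ref{itm: incompatible polymer sum} with $n=0$ and $Y=X_\ast$. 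This gives the bound $9c_0(\vareps)\abs{X_\ast\cap\bL}$, and by~\eqref{eqn: condition on the Peierls constant for polymer weights} this is at most $\alpha\abs{X_\ast\cap\bL}=a(X_\ast)$. That is exactly~\eqref{eqn: Ueltschi criterion}.

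For the additional summability condition of Thm.~\ref{thm: Ueltschi criterion}, the finite-volume case is immediate, because $\cT_\ast(\Lambda)$ is finite when $\Lambda$ is finite; recall that $\cT$ consists of finite unions of tiles in $\Lambda\cup\bdex[\bL]{\Lambda}$. In infinite volume the total sum diverges by translation invariance, so only the criterion itself and its local consequences~\eqref{eqn: cluster bound with one anchor}--\eqref{eqn: cluster bound with two distinct anchors} are claimed. Alternatively, local summability follows from Lem.~\ref{itm: anchored polymer sum}: for each fixed tile $T_v$, $\sum_{X\supset T_v}e^{a(X)}\abs{w(X)}\le c_0(e^{-(\Peierls{eqn: bound on the finite-volume polymer weight}-\alpha)})<\infty$.

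The only delicate point, which I expect to be the main obstacle, is making sure the weight bound of Prop.~\ref{itm: bound on the finite- and infinite-volume polymer weights} really holds for all polymers without circularity. The truncated weights are defined inductively in the order, and the bound $e^{\tau\abs{\cdot}}$ built into the truncation is what makes Prop.~\ref{itm: bound on the finite- and infinite-volume contour-polymer factors} unconditional. So I would note explicitly that no inductive hypothesis is required here. The conclusion then holds uniformly for both the finite- and infinite-volume truncated models.
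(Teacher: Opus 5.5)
Your proposal is correct and is essentially the paper's own proof. You bound the truncated weights uniformly via Prop.~\ref{itm: bound on the finite- and infinite-volume polymer weights}, absorb $e^{a(X)+b(X)}$ into the Peierls factor, and apply Lem.~\ref{itm: incompatible polymer sum} with $n=0$ followed by~\eqref{eqn: condition on the Peierls constant for polymer weights}; your extra remarks on the summability condition (finiteness of $\cT_\ast(\Lambda)$ in finite volume, and only the local consequences being used in infinite volume) are correct and make explicit what the paper leaves implicit.
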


\begin{proof}
    Using Prop.~\ref{itm: bound on the finite- and infinite-volume polymer weights}, Lem.~\ref{itm: incompatible polymer sum}, and~\eqref{eqn: condition on the Peierls constant for polymer weights}, we estimate
    \begin{equation}
        \sum_{\substack{X\in\cT_\ast(\Lambda)\\X\not\sim X_1}}\abs{\hat{K}_q^{(\Lambda)}(X;\bz_\Delta)}e^{a(X)+b(X)}
        \le 9c_0(e^{-(\Peierls{eqn: bound on the finite-volume polymer weight}-2\alpha)})\abs{X_1\cap\bL}
        \le a(X_1).
    \end{equation}
    The same argument applies to the infinite-volume truncated polymer weights.
\end{proof}

We proceed with the proof of Prop.~\ref{prop: main}.

\begin{proof}[Proof of Prop.~\ref{prop: main}]
    We argue by induction on the order of $\Lambda$.

    \paragraph{Base case}
    If $\Lambda$ is of order $0$, then $\hat{\Xi}_{\bz_\Delta}(\Lambda\mid q)=\Xi_{\bz_\Delta}(\Lambda\mid q)=\Xi_{\bz_\Delta}^q(\Lambda)=\Xi_\bz^q(\Delta)$.
    The bounds~\eqref{eqn: inductive bound on the first-order derivative of the log polymer partition function} and~\eqref{eqn: inductive bound on the second-order derivative of the log polymer partition function} hold trivially since the ratio of partition functions is identically $1$.
    For~\eqref{eqn: inductive bound on the log flipping term}, we start by changing $\bz$ to $z$ by noting that 
    \begin{equation}
        \abs{\log\frac{\Xi_{\bz}^q(\Delta)}{\Xi_{z}^{q}(\Delta)}}\le\nu.
    \end{equation}
    We now treat $\log\Xi_{z}^{q}(\Delta)$ using the cluster expansion.
    For $M\in\cM^q(\Delta)$, we write $n(M)$ for the number of distinct rectangles in $M$, and note the identity $1=\frac{1}{n(M)}\sum_{r\in R^q(\Delta)}\indicator{M}(r)$.
    Thus, we can expand
    \begin{equation}
        \begin{multlined}
            \log \Xi_{z}^q(\Delta)
            =\sum_{r\in R^q(\Delta)}\sum_{\substack{M\in\cM^q(\Delta)\\M\ni r}}\frac{\varphi(M)}{n(M)}z^{\abs{M}}
            \\
            =\sum_{r\in R^q(\Delta)}\brackets[\Bigg]{\sum_{\substack{M\in\cM^q(\Z^2)\\M\ni r}}\frac{\varphi(M)}{n(M)}z^{\abs{M}}-\sum_{\substack{M\in\cM^q(\Z^2)\setminus\cM^q(\Delta)\\M\ni r}}\frac{\varphi(M)}{n(M)}z^{\abs{M}}}.
        \end{multlined}
    \end{equation}
    The first sum in the brackets is independent of $q$ and $r$ and cancels with the corresponding sum in the expansion of $\log\Xi^{-q}_z(\Delta)$.
    The second double sum leads to a boundary term: any contributing cluster must intersect $R^q(\bdin[2\bL]{\Lambda})$ and contain at least one rectangle other than $r$, so we can bound using~\eqref{eqn: rectangle cluster tail estimate}
    \begin{equation}
        \sum_{r\in R^q(\Delta)}\sum_{\substack{M\in\cM^q(\Z^2)\setminus\cM^q(\Delta)\\M\ni r}}\frac{\abs{\varphi(M)}}{n(M)}z^{\abs{M}}
        \le \sum_{r\in R^q(\bdin[2\bL]{\Lambda})}\sum_{\substack{M\in\cM^q(\Z^2)\\M\ni r\\\abs{M}\ge 2}}\abs{\varphi(M)}z^{\abs{M}}
        \le ez\eps\bars{\bdin[2\bL]{\Lambda}}.
    \end{equation}
    Since $\bars{\bdin[2\bL]{\Lambda}}\le 25\bars{\bdex[\bL]{\Lambda}}$, the proof of the base case is complete provided that
    \begin{equation}
        \label{eqn: closure condition for the base case}
        2\nu+50e(z\ell^2)\eps\le \tau/2.
    \end{equation}

    \paragraph{Inductive step}
    Suppose now that the bound holds for all regions of order at most $n$.
    Let $\Lambda$ be a region of order $n+1$.
    The maximum order of contours in $\cC(\Lambda,q)$ is $n+1$.
    Since the truncated weights of contours of order $\le n+1$ depend only on the truncated partition functions in regions of order at most $n$, the recovery condition of Lem.~\ref{lem: recovery of true weights} applies, so we can remove the truncation on the LHS of~\eqref{eqn: inductive bound on the log flipping term}.

    To differentiate clusters of polymers as we will deal with now from the clusters of rectangles introduced earlier in Sec.~\ref{sec: cluster expansion of the q-oriented partition function}, we introduce the notation $\cU(\cdot)$ to denote the collection of clusters of polymers in $\cT_\ast(\cdot)$.
    The notation $n_X(U)$ retains its meaning as the multiplicity of $X\in\cT_\ast(\cdot)$ in $U\in\cU(\cdot)$ as specified in Sec.~\ref{sec: cluster expansion tools}.
    Lastly, we write $\supp{U}:=\cup_{X\in U}X$.

    \subparagraph{Proof of~\eqref{eqn: inductive bound on the first-order derivative of the log polymer partition function}}

    Using~\eqref{eqn: cluster bound with one anchor} and Prop.~\ref{itm: bound on the first-order derivative of the finite-volume polymer weight}, we bound
    \begin{equation}
        \label{eqn: bound on the first-order derivative of the log polymer partition function}
        \begin{multlined}
            \bars*{\cD_1\log\frac{\Xi_{\bz_\Delta}(\Lambda\mid q)}{\Xi_{\bz_\Delta}^q(\Lambda)}}
            \le\sum_{X_1\in\cT_\ast(\Lambda)}\abs{\cD_1K_q^{(\Lambda)}(X_1;\bz_\Delta)}\sum_{\substack{U\in\cU(\Lambda)\\U\ni X_1}}\abs{\varphi(U)}n_{X_1}(U)\prod_{X\in U\setminus\mset{X_1}}\abs{K_q^{(\Lambda)}(X;\bz_\Delta)}
            \\
            \le\indicator{R(\Delta)}(\fr_1)\prefactor{eqn: bound on the first-order derivative of the finite-volume polymer weight}\sum_{X_1\in\cT_\ast(\Lambda)}\abs{X_1\cap\bL}e^{-\Peierls{eqn: bound on the first-order derivative of the finite-volume polymer weight}(\abs{X_1\cap\bL}-1)}e^{a(X_1)}e^{-\spatial{eqn: bound on the first-order derivative of the finite-volume contour-polymer factor}[\fn_1(X_1)-1]}.
        \end{multlined}
    \end{equation}
    Applying Lem.~\ref{itm: polymer sum with distance decay from one point} completes the proof provided that
    \begin{equation}
        \label{eqn: closure condition for the bound on the first-order derivative of the log polymer partition function}
        \prefactor{eqn: bound on the first-order derivative of the finite-volume polymer weight}e^{\Peierls{eqn: bound on the first-order derivative of the finite-volume polymer weight}}c_1(e^{-(\Peierls{eqn: bound on the first-order derivative of the finite-volume polymer weight}-\alpha)})s_1(e^{-\spatial{eqn: bound on the first-order derivative of the finite-volume contour-polymer factor}})\le \prefactor{eqn: inductive bound on the first-order derivative of the log polymer partition function}.
    \end{equation}

    \subparagraph{Proof of~\eqref{eqn: inductive bound on the second-order derivative of the log polymer partition function}}

    We bound $\bars{\cD_1\cD_2\log\frac{\Xi_{\bz_\Delta}(\Lambda\mid q)}{\Xi_{\bz_\Delta}^q(\Lambda)}}$ by
    \begin{align*}
        {}&
        \sum_{X_1\in\cT_\ast(\Lambda)}\abs{\cD_1\cD_2K_q^{(\Lambda)}(X_1;\bz_\Delta)}\sum_{\substack{U\in\cU(\Lambda)\\U\ni X_1}}\abs{\varphi(U)}n_{X_1}(U)\prod_{X\in U\setminus\mset{X_1}}\abs{K_q^{(\Lambda)}(X;\bz_\Delta)}
        \\
        {}&+
        \sum_{X_1\in\cT_\ast(\Lambda)}\prod_{i=1,2}\abs{\cD_i K_q^{(\Lambda)}(X_1;\bz_\Delta)} \sum_{\substack{U\in\cU(\Lambda)\\U\supset\mset{X_1,X_1}}}\abs{\varphi(U)}n_{X_1}(U)[n_{X_1}(U)-1]\prod_{X\in U\setminus\mset{X_1,X_1}}\abs{K_q^{(\Lambda)}(X;\bz_\Delta)}
        \\
        {}&+\sum_{(X_1,X_2)\subset\cT_\ast(\Lambda)}\prod_{i=1,2}\abs{\cD_i K_q^{(\Lambda)}(X_i;\bz_\Delta)}\sum_{\substack{U\in\cU(\Lambda)\\U\supset\mset{X_1,X_2}}}\abs{\varphi(U)}n_{X_1}(U)n_{X_2}(U)\prod_{X\in U\setminus\mset{X_1,X_2}}\abs{K_q^{(\Lambda)}(X;\bz_\Delta)}.
        \tag{\stepcounter{equation}\theequation}
    \end{align*}
    To extract a penalty factor for the separation between $X_1$ and $X_2$, we use the identity
    \begin{equation}
        \begin{multlined}
            \prod_{X\in U\setminus\mset{X_1,X_2}}\abs{K_q^{(\Lambda)}(X;\bz_\Delta)}
            =e^{-\sum_{X\in U\setminus\mset{X_1,X_2}}b(X)} \prod_{X\in U\setminus\mset{X_1,X_2}}e^{b(X)}\abs{K_q^{(\Lambda)}(X;\bz_\Delta)}
        \end{multlined}
    \end{equation}
    and the observation that, provided that $\varphi(U)\ne 0$,
    \begin{equation}
        \sum_{X\in U\setminus\mset{X_1,X_2}}b(X)
        \ge \alpha\{\rmd_\bL(X_1,X_2)-1\}_+.
    \end{equation}
    Hence, by~\eqref{eqn: cluster bound with one anchor}, \eqref{eqn: cluster bound with two identical anchors}, \eqref{eqn: cluster bound with two distinct anchors}, and Props.~\ref{itm: bound on the first-order derivative of the finite-volume polymer weight} and~\ref{itm: bound on the second-order derivative of the finite-volume polymer weight}, we can bound the above by $\indicator{R(\Delta)}(\fr_1,\fr_2)$ times
    \begin{align}
        {}& \prefactor{eqn: bound on the second-order derivative of the finite-volume polymer weight}\sum_{X_1\in\cT_\ast(\Lambda)}\abs{X_1\cap\bL}^2 e^{-\Peierls{eqn: bound on the second-order derivative of the finite-volume polymer weight}(\abs{X_1\cap\bL}-2)}e^{a(X_1)}\brackets[\Big]{e^{-\spatial{eqn: bound on the second-order derivative of the finite-volume polymer weight}[\fn_{12}(X_1)-2]}+e^{-\spatial{eqn: bound on the second-order derivative of the finite-volume polymer weight}[\fn_1(X_1)+\fn_2(X_1)-2]}}
        \nonumber
        \\
        {}&+ 2\prefactor{eqn: bound on the first-order derivative of the finite-volume polymer weight}^2 \sum_{X_1\in\cT_\ast(\Lambda)}
        \abs{X_1\cap\bL}^2 e^{-2\Peierls{eqn: bound on the first-order derivative of the finite-volume polymer weight}(\abs{X_1\cap\bL}-1)}e^{3a(X_1)} e^{-\spatial{eqn: bound on the first-order derivative of the finite-volume contour-polymer factor}[\fn_1(X_1)+\fn_2(X_1)-2]}
        \nonumber
        \\
        {}&+2\prefactor{eqn: bound on the first-order derivative of the finite-volume polymer weight}^2\sum_{(X_1,X_2)\subset\cT_\ast(\Lambda)}e^{-\alpha\{\rmd_\bL(X_1,X_2)-1\}_+}
        \prod_{i=1,2}\abs{X_i\cap\bL}e^{-\Peierls{eqn: bound on the first-order derivative of the finite-volume polymer weight}(\abs{X_i\cap\bL}-1)}
        e^{\frac{3}{2}a(X_i)}
        e^{-\spatial{eqn: bound on the first-order derivative of the finite-volume contour-polymer factor}[\fn_i(X_i)-1]}.
        \label{eqn: bound on the second-order derivative of the log polymer partition function}
    \end{align}
    Using Lems.~\ref{itm: polymer sum with combined distance decay from two points},~\ref{itm: polymer sum with separate distance decay from two points} and~\ref{itm: double polymer sum with distance decay from two points}, we bound the above by
    \begin{align*}
        {}& \prefactor{eqn: bound on the second-order derivative of the finite-volume polymer weight}e^{2\Peierls{eqn: bound on the second-order derivative of the finite-volume polymer weight}}\max\set{e^{-(\Peierls{eqn: bound on the second-order derivative of the finite-volume polymer weight}-\alpha)},e^{-\spatial{eqn: bound on the second-order derivative of the finite-volume polymer weight}}}^{(1-\free{itm: polymer sum with combined distance decay from two points})(\fd_{12}-2)}c_2(e^{-(\Peierls{eqn: bound on the second-order derivative of the finite-volume polymer weight}-\alpha)})s_2(e^{-\free{itm: polymer sum with combined distance decay from two points}\spatial{eqn: bound on the second-order derivative of the finite-volume polymer weight}})
        \\
        {}&+\prefactor{eqn: bound on the second-order derivative of the finite-volume polymer weight}e^{2\Peierls{eqn: bound on the second-order derivative of the finite-volume polymer weight}}
        \max\set{e^{-(\Peierls{eqn: bound on the second-order derivative of the finite-volume polymer weight}-\alpha)},e^{-\spatial{eqn: bound on the second-order derivative of the finite-volume polymer weight}}}^{(1-\free{itm: polymer sum with separate distance decay from two points})(\fd_{12}-2)}c_2(e^{-\free{itm: polymer sum with separate distance decay from two points}(\Peierls{eqn: bound on the second-order derivative of the finite-volume polymer weight}-\alpha)})s_1(e^{-\free{itm: polymer sum with separate distance decay from two points}\spatial{eqn: bound on the second-order derivative of the finite-volume polymer weight}})
        \\
        {}&+2\prefactor{eqn: bound on the first-order derivative of the finite-volume polymer weight}^2 e^{2\Peierls{eqn: bound on the first-order derivative of the finite-volume polymer weight}}
        \max\set{e^{-(2\Peierls{eqn: bound on the first-order derivative of the finite-volume polymer weight}-3\alpha)},e^{-\spatial{eqn: bound on the first-order derivative of the finite-volume contour-polymer factor}}}^{(1-\free{itm: polymer sum with separate distance decay from two points})(\fd_{12}-2)}c_2(e^{-\free{itm: polymer sum with separate distance decay from two points}(2\Peierls{eqn: bound on the first-order derivative of the finite-volume polymer weight}-3\alpha)})s_1(e^{-\free{itm: polymer sum with separate distance decay from two points}\spatial{eqn: bound on the first-order derivative of the finite-volume contour-polymer factor}})
        \\
        {}&+2\prefactor{eqn: bound on the first-order derivative of the finite-volume polymer weight}^2 e^{2\Peierls{eqn: bound on the first-order derivative of the finite-volume polymer weight}+\alpha}
        \max\set{e^{-(\Peierls{eqn: bound on the first-order derivative of the finite-volume polymer weight}-\frac{3}{2}\alpha)},e^{-\spatial{eqn: bound on the first-order derivative of the finite-volume contour-polymer factor}},e^{-\alpha}}^{(1-\free{itm: double polymer sum with distance decay from two points})(\fd_{12}-2)}c_1(e^{-\free{itm: double polymer sum with distance decay from two points}(\Peierls{eqn: bound on the first-order derivative of the finite-volume polymer weight}-\frac{3}{2}\alpha)})^2 s_1(e^{-\free{itm: double polymer sum with distance decay from two points}\spatial{eqn: bound on the first-order derivative of the finite-volume contour-polymer factor}})^2,
        \tag{\stepcounter{equation}\theequation}
    \end{align*}
    which completes the proof provided that
    \begin{equation}
        \label{eqn: closure condition for the prefactor in the bound on the second-order derivative of the log polymer partition function}
        \begin{multlined}
            \prefactor{eqn: bound on the second-order derivative of the finite-volume polymer weight}e^{2\Peierls{eqn: bound on the second-order derivative of the finite-volume polymer weight}}
            \brackets[\Big]{c_2(e^{-(\Peierls{eqn: bound on the second-order derivative of the finite-volume polymer weight}-\alpha)})s_2(e^{-\free{itm: polymer sum with combined distance decay from two points}\spatial{eqn: bound on the second-order derivative of the finite-volume polymer weight}})
            +c_2(e^{-\free{itm: polymer sum with separate distance decay from two points}(\Peierls{eqn: bound on the second-order derivative of the finite-volume polymer weight}-\alpha)})s_1(e^{-\free{itm: polymer sum with separate distance decay from two points}\spatial{eqn: bound on the second-order derivative of the finite-volume polymer weight}})}
            \\
            +2\prefactor{eqn: bound on the first-order derivative of the finite-volume polymer weight}^2 e^{2\Peierls{eqn: bound on the first-order derivative of the finite-volume polymer weight}}
            \brackets[\Big]{c_2(e^{-\free{itm: polymer sum with separate distance decay from two points}(2\Peierls{eqn: bound on the first-order derivative of the finite-volume polymer weight}-3\alpha)})s_1(e^{-\free{itm: polymer sum with separate distance decay from two points}\spatial{eqn: bound on the first-order derivative of the finite-volume contour-polymer factor}})
            +e^{\alpha}c_1(e^{-\free{itm: double polymer sum with distance decay from two points}(\Peierls{eqn: bound on the first-order derivative of the finite-volume polymer weight}-\frac{3}{2}\alpha)})^2 s_1(e^{-\free{itm: double polymer sum with distance decay from two points}\spatial{eqn: bound on the first-order derivative of the finite-volume contour-polymer factor}})^2}
            \le \prefactor{eqn: inductive bound on the second-order derivative of the log polymer partition function}
        \end{multlined}
    \end{equation}
    and
    \begin{equation}
        \label{eqn: closure condition for the spatial decay rate in the bound on the second-order derivative of the log polymer partition function}
        \begin{multlined}
            \max\Big\{
                e^{-(1-\free{itm: polymer sum with combined distance decay from two points})(\Peierls{eqn: bound on the second-order derivative of the finite-volume polymer weight}-\alpha)},
                e^{-(1-\free{itm: polymer sum with combined distance decay from two points})\spatial{eqn: bound on the second-order derivative of the finite-volume polymer weight}},
                e^{-(1-\free{itm: polymer sum with separate distance decay from two points})(\Peierls{eqn: bound on the second-order derivative of the finite-volume polymer weight}-\alpha)},
                e^{-(1-\free{itm: polymer sum with separate distance decay from two points})\spatial{eqn: bound on the second-order derivative of the finite-volume polymer weight}},
                \\
                e^{-(1-\free{itm: polymer sum with separate distance decay from two points})(2\Peierls{eqn: bound on the first-order derivative of the finite-volume polymer weight}-3\alpha)},
                e^{-(1-\free{itm: polymer sum with separate distance decay from two points})\spatial{eqn: bound on the first-order derivative of the finite-volume contour-polymer factor}},
                e^{-(1-\free{itm: double polymer sum with distance decay from two points})(\Peierls{eqn: bound on the first-order derivative of the finite-volume polymer weight}-\frac{3}{2}\alpha)},
                e^{-(1-\free{itm: double polymer sum with distance decay from two points})\spatial{eqn: bound on the first-order derivative of the finite-volume contour-polymer factor}},
                e^{-(1-\free{itm: double polymer sum with distance decay from two points})\alpha}
            \Big\}
            \le e^{-\spatial{eqn: inductive bound on the second-order derivative of the log polymer partition function}}.
        \end{multlined}
    \end{equation}

    \subparagraph{Proof of~\eqref{eqn: inductive bound on the log flipping term}}

    We start by changing $\bz$ to $z$.
    By Taylor's theorem, there exists $\tilde{\bz}$ such that
    \begin{equation}
        \log\frac{\Xi_{\bz_\Delta}(\Lambda\mid -q)}{\Xi_{\bz_\Delta}(\Lambda\mid q)}
        =\log\frac{\Xi_{z_\Delta}(\Lambda\mid -q)}{\Xi_{z_\Delta}(\Lambda\mid q)}
        +\sum_{i=1}^n\indicator{\Delta}(\fr_i)\eval{\partial_i\log\frac{\Xi_{\bz_\Delta}(\Lambda\mid -q)}{\Xi_{\bz_\Delta}(\Lambda\mid q)}}_{\tilde{\bz}}(\bz_i-z).
    \end{equation}
    We bound
    \begin{equation}
        \bars*{\partial_i\log\frac{\Xi_{\bz_\Delta}(\Lambda\mid -q)}{\Xi_{\bz_\Delta}(\Lambda\mid q)}}
        \le \bars*{\partial_i\log\frac{\Xi_{\bz_\Delta}(\Lambda\mid -q)}{\Xi_{\bz_\Delta}^{-q}(\Lambda)}}
        +\bars*{\partial_i\log\frac{\Xi_{\bz_\Delta}(\Lambda\mid q)}{\Xi^{q}_{\bz_\Delta}(\Lambda)}}
        +\bars*{\partial_i\log\frac{\Xi_{\bz_\Delta}^{-q}(\Lambda)}{\Xi_{\bz_\Delta}^q(\Lambda)}}.
    \end{equation}
    The first two terms are bounded by~\eqref{eqn: inductive bound on the first-order derivative of the log polymer partition function}.
    The last term is bounded using Lem.~\ref{itm: expansion of the first-order derivative of the log q-oriented partition function}.
    Combining these estimates, we conclude that
    \begin{equation}
        \label{eqn: bound on the error of fugacity homogenization in the log flipping term}
        \bars*{\log\frac{\Xi_{\bz_\Delta}(\Lambda\mid -q)}{\Xi_{\bz_\Delta}(\Lambda\mid q)}
        -\log\frac{\Xi_{z_\Delta}(\Lambda\mid -q)}{\Xi_{z_\Delta}(\Lambda\mid q)}}
        \le n(e^{\nu/n}-1)\brackets[\Big]{2\prefactor{eqn: inductive bound on the first-order derivative of the log polymer partition function}+(1+e\eps)z}.
    \end{equation}
    Hence, from here on, we deal exclusively with constant $z$.

    By Lem.~\ref{lem: the truncated polymer model can be cluster expanded}, we can apply the cluster expansion to the polymer representation of the model in $\Lambda$ to get
    \begin{equation}
        \log\frac{\hat{\Xi}_{z_\Delta}(\Lambda\mid q)}{\Xi_{z_\Delta}^q(\Lambda)}
        =\sum_{U\in\cU(\Lambda)}\varphi(U)\prod_{X\in U}\hat{K}_q^{(\Lambda)}(X;z_\Delta),
    \end{equation}
    so that
    \begin{equation}
        \log\frac{\hat{\Xi}_{z_\Delta}(\Lambda\mid -q)}{\hat{\Xi}_{z_\Delta}(\Lambda\mid q)}
        =\log\frac{\Xi_{z_\Delta}^{-q}(\Lambda)}{\Xi_{z_\Delta}^q(\Lambda)}
        +\sum_{U\in\cU(\Lambda)}\varphi(U)\left[\prod_{X\in U}\hat{K}_{-q}^{(\Lambda)}(X;z_\Delta)-\prod_{X\in U}\hat{K}_q^{(\Lambda)}(X;z_\Delta)\right].
    \end{equation}

    By the same argument as in the base case, we bound the first term by
    \begin{equation}
        \label{eqn: difference of q-oriented partition functions is a boundary term}
        \bars[\bigg]{\log\frac{\Xi_{z_\Delta}^{-q}(\Lambda)}{\Xi_{z_\Delta}^q(\Lambda)}}
        \le 50e(z\ell^2)\eps\abs{\bdex[\bL]{\Lambda}\cap\bL}.
    \end{equation}
    Next, we write the second term as the sum of
    \begin{equation}
        \label{eqn: difference of bulk terms in the cluster expansion of the polymer partition functions}
        \sum_{U\in\cU(\Lambda)}\varphi(U)\left[\prod_{X\in U}\hat{K}_{-q}(X;z)-\prod_{X\in U}\hat{K}_q(X;z)\right]
    \end{equation}
    and
    \begin{equation}
        \label{eqn: finite-volume corrections to the bulk terms in the cluster expansion of the polymer partition functions}
        \sum_{\sigma\in\set{-1,1}}\sigma\sum_{U\in\cU(\Lambda)}\varphi(U)\left[\prod_{X\in U}\hat{K}_{\sigma q}(X;z)-\prod_{X\in U}\hat{K}_{\sigma q}^{(\Lambda)}(X;z_\Delta)\right].
    \end{equation}
    To bound~\eqref{eqn: difference of bulk terms in the cluster expansion of the polymer partition functions}, we start by writing
    \begin{equation}
        \begin{multlined}
            \sum_{U\in\cU(\Lambda)}\varphi(U)\prod_{X\in U}\hat{K}_q(X;z)
            =\sum_{v\in(\Lambda\cup\bdex[\bL]{\Lambda})\cap\bL}\sum_{\substack{U\in\cU(\Z^2)\\v\in\supp{U}}}\frac{\varphi(U)}{\abs{\supp{U}\cap\bL}}\prod_{X\in U}\hat{K}_q(X;z)
            \\
            -\sum_{v\in(\Lambda\cup\bdex[\bL]{\Lambda})\cap\bL}\sum_{\substack{U\in\cU(\Z^2)\setminus\cU(\Lambda)\\v\in\supp{U}}}\frac{\varphi(U)}{\abs{\supp{U}\cap\bL}}\prod_{X\in U}\hat{K}_q(X;z).
        \end{multlined}
    \end{equation}
    By symmetry, the first inner sum is independent of $v\in(\Lambda\cup\bdex[\bL]{\Lambda})\cap\bL$ and $q$ and thus cancels.
    The second double sum results in a boundary term as follows:
    \begin{align}
        {}&\sum_{v\in(\Lambda\cup\bdex[\bL]{\Lambda})\cap\bL}\sum_{\substack{U\in\cU(\Z^2)\setminus\cU(\Lambda)\\v\in\supp{U}}}\frac{\abs{\varphi(U)}}{\abs{\supp{U}\cap\bL}}\prod_{X\in U}\abs{\hat{K}_q(X;z)}
        \nonumber
        \\
        \le{}& \sum_{v\in\bdex[\bL]{\Lambda}\cap\bL}
        \sum_{\substack{X_1\in\cT_\ast(\Z^2)\\X_1\supset T_v}}\abs{\hat{K}_q(X_1;z)}
        \sum_{\substack{U\in\cU(\Z^2)\\U\ni X_1}}\abs{\varphi(U)}\prod_{X\in U\setminus\mset{X_1}}\abs{\hat{K}_q(X;z)}
        \nonumber
        \\
        \le{}& c_0(e^{-(\Peierls{eqn: bound on the finite-volume polymer weight}-\alpha)})\bars{\bdex[\bL]{\Lambda}\cap\bL},
        \label{eqn: boundary error after bulk cancellation}
    \end{align}
    having used~\eqref{eqn: cluster bound with one anchor} and Lem.~\ref{itm: anchored polymer sum} in the last step.
    It remains to bound~\eqref{eqn: finite-volume corrections to the bulk terms in the cluster expansion of the polymer partition functions}.
    By Props.~\ref{itm: bound on the finite- and infinite-volume polymer weights} and~\ref{itm: bound on the finite-volume correction to the polymer weight}, it suffices to estimate
    \begin{align*}
        {}&\sum_{U\in\cU(\Lambda)}\abs{\varphi(U)}\abs{\prod_{X\in U}\hat{K}_q^{(\Lambda)}(X;z_\Delta)-\prod_{X\in U}\hat{K}_q(X;z)}
        \\
        \le{}&\sum_{U\in\cU(\Lambda)}\abs{\varphi(U)}\sum_{i=1}^n\prod_{j=1}^{i-1}\abs{\hat{K}_q(X_j;z)}\abs{\hat{K}_q^{(\Lambda)}(X_i;z_\Delta)-\hat{K}_q(X_i;z)}\prod_{j=i+1}^n\abs{\hat{K}_q^{(\Lambda)}(X_j;z_\Delta)}
        \\
        \le{}&\prefactor{eqn: bound on the finite-volume correction to the polymer weight}\sum_{U\in\cU(\Lambda)}\abs{\varphi(U)}\sum_{X_1\in U}\abs{X_1\cap\bL}e^{-\Peierls{eqn: bound on the finite-volume correction to the polymer weight}(\abs{X_1\cap\bL}-1)}e^{-\spatial{eqn: bound on the finite-volume correction to the polymer weight}\rmd_\bL(X_1,\bdin[2\bL]{\Lambda})}\prod_{X\in U\setminus\mset{X_1}}e^{-\Peierls{eqn: bound on the finite-volume polymer weight}\abs{X\cap\bL}}.
        \tag{\stepcounter{equation}\theequation}
    \end{align*}
    We rewrite the above bound as
    \begin{equation}
        \label{eqn: boundary error from finite-volume corrections to the bulk terms}
        \begin{multlined}
            \prefactor{eqn: bound on the finite-volume correction to the polymer weight}\sum_{X_1\in\cT_\ast(\Lambda)}\abs{X_1\cap\bL}e^{-\Peierls{eqn: bound on the finite-volume correction to the polymer weight}(\abs{X_1\cap\bL}-1)}e^{-\spatial{eqn: bound on the finite-volume correction to the polymer weight}\rmd_\bL(X_1,\bdin[2\bL]{\Lambda})}\sum_{\substack{U\in\cU(\Lambda)\\U\ni X_1}}\abs{\varphi(U)}n_{X_1}(U)\prod_{X\in U\setminus\mset{X_1}}e^{-\Peierls{eqn: bound on the finite-volume polymer weight}\abs{X\cap\bL}}
            \\
            \le 25\prefactor{eqn: bound on the finite-volume correction to the polymer weight} e^{\Peierls{eqn: bound on the finite-volume correction to the polymer weight}} c_1(e^{-(\Peierls{eqn: bound on the finite-volume correction to the polymer weight}-\alpha)})s(e^{-\spatial{eqn: bound on the finite-volume correction to the polymer weight}})\abs{\bdex[\bL]{\Lambda}\cap\bL},
        \end{multlined}
    \end{equation}
    having used~\eqref{eqn: cluster bound with one anchor}, Lem.~\ref{itm: polymer sum with distance decay from region}, and the bound $\abs{\bdin[2\bL]{\Lambda}\cap\bL}\le 25\abs{\bdex[\bL]{\Lambda}\cap\bL}$ in the inequality.

    Combining~\eqref{eqn: bound on the error of fugacity homogenization in the log flipping term}, \eqref{eqn: difference of q-oriented partition functions is a boundary term}, \eqref{eqn: boundary error after bulk cancellation}, and~\eqref{eqn: boundary error from finite-volume corrections to the bulk terms}, we conclude the proof of~\eqref{eqn: inductive bound on the log flipping term} provided that
    \begin{equation}
        \label{eqn: closure condition for the bound on the log flipping term}
        \begin{multlined}
            n(e^{\nu/n}-1)\brackets[\Big]{2\prefactor{eqn: inductive bound on the first-order derivative of the log polymer partition function}+(1+e\eps)z}
            +50e^{1+\nu}(z\ell^2)\eps
            +2c_0(e^{-(\Peierls{eqn: bound on the finite-volume polymer weight}-\alpha)})
            \\
            +50\prefactor{eqn: bound on the finite-volume correction to the polymer weight} e^{\Peierls{eqn: bound on the finite-volume correction to the polymer weight}} c_1(e^{-(\Peierls{eqn: bound on the finite-volume correction to the polymer weight}-\alpha)})s(e^{-\spatial{eqn: bound on the finite-volume correction to the polymer weight}})
            \le \tau/2.
        \end{multlined}
    \end{equation}

    The conditions on the parameters are summarized in Appx.~\ref{appx: convergence and closure conditions}; we will check in Sec.~\ref{sec: explicit choices of parameters} that they can be simultaneously satisfied.
    Pending this verification, the proposition is proven.
\end{proof}

\section{Nematic order}

\subsection{Explicit choices of parameters}
\label{sec: explicit choices of parameters}

In this subsection, we estimate the aspect ratio $\fk$ required for the formation of a nematic phase in the hard rectangle model by analyzing the constrained optimization problem derived in the previous sections and summarized in Appx.~\ref{appx: constants and parameters}.
We will be intentionally crude in our calculations, since the point is not to obtain the best possible bound but rather to give one that is both explicit and easily checked.
A full, computer-based analysis is likely to yield better bounds; see Rem.~\ref{rem: reuse of free parameters} for another possible point of improvement.

\begin{proposition}
    \label{prop: explicit choices of parameters}
    Let $K:=\frac{z\ell^2}{1000}$, $K_0:=150$, and $\nu=0$.
    Suppose that $K>K_0$ and $e^{-2K}<\eps<e^{-K}$.
    Then, the conditions in Appx.~\ref{appx: convergence and closure conditions} are simultaneously satisfied with
    \begin{equation}
        \begin{multlined}
            n\text{ arbitrary},\;
            \free{itm: lower bound on m(delta)-1}=\free{itm: lower bound on m_12(delta)-2}=\free{itm: lower bound on m_Gamma^(Lambda)(delta)-1}=\frac{3}{50},\;
            \free{itm: polymer sum with combined distance decay from two points}=\free{itm: polymer sum with separate distance decay from two points}=\free{itm: double polymer sum with distance decay from two points}=\frac{1}{2},\;
            \tau=\frac{K}{10},
            \\
            \spatial{eqn: inductive bound on the second-order derivative of the log polymer partition function}=\frac{K}{3000},\;
            \prefactor{eqn: inductive bound on the first-order derivative of the log polymer partition function}=\frac{z}{4},\;
            \prefactor{eqn: inductive bound on the second-order derivative of the log polymer partition function}=z^2,\;
            \text{and }
            \alpha=\frac{K}{1000}.
        \end{multlined}
    \end{equation}
\end{proposition}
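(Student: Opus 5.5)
The verification is bookkeeping. I will go through the conditions in Appx.~\ref{appx: convergence and closure conditions} one by one, in the order the constants were generated in Sec.~\ref{sec: bounds on weights and potentials}. Each implicit constant ($\kappa$, $\eta$, $\phi$, $\lambda$) will be written in terms of $K$, $\eps$ and the chosen $\theta$'s, and each condition will be reduced to an elementary inequality that holds for $K>150$.

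Two reductions make this uniform. First, $\nu=0$ makes every factor $e^{\nu}$, $e^{\nu/n}$ equal to $1$ and kills the Taylor term $n(e^{\nu/n}-1)[\cdots]$ in~\eqref{eqn: closure condition for the bound on the log flipping term}, so $n$ is arbitrary. Second, the assumption $e^{-2K}<\eps<e^{-K}$ turns every power $\eps^{c}$ into something at most $e^{-cK}$. We also have $z\ell^2=1000K$ and $z\ell^2\eps\le 1000Ke^{-K}$, which is astronomically small for $K>150$.

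The key step is the Peierls chain, which I would carry out in this order.
\begin{itemize}
\item From Lem.~\ref{lem: Peierls bound for contour weight}, using $n_0+\frac43 n_\pm\ge|\cdot|/324$, we get $\Peierls{eqn: Peierls bound for contour weight}\approx \frac{3}{8}\cdot\frac{1000K}{324}-O(1)\ge K$ roughly.
\item Subtracting $\log 3$, $\tau=K/10$ and the self-potential term $3e(z\ell^2)\eps$ gives $\Peierls{eqn: bound on the truncated finite-volume contour-polymer factor}\ge 0.85K$, say.
\item Next, the interaction-polymer constants. With $\theta=3/50$ one has $\Peierls{eqn: bound on the finite-volume interaction-polymer factor}=\frac{3}{50}\cdot(\text{exponent of }\eps)\ge \frac{3K}{50}$. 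The loss $\loss{eqn: bound on the finite-volume interaction-polymer factor}$ comes from Lem.~\ref{itm: incompatible polymer sum with piecewise linear decay} with $C=2$, $a=\frac15-\frac3{50}$; it is $O(Ke^{-K})$, since $2\eps^{a}\le 2e^{-0.14K}<1/64$.
\item By Lem.~\ref{itm: covering sum zeroth moment}, $\Peierls{eqn: bound on the finite-volume polymer weight}=-\log(e^{-(0.85K-\eta)}+e^{-3K/50}+\cdots)\ge \frac{3K}{50}-1$.
\end{itemize}

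With $\alpha=K/1000$, Lem.~\ref{lem: the truncated polymer model can be cluster expanded} reduces to $9c_0(e^{-0.058K})\le K/1000$. This is immediate, because $c_0(x)=O(x)$ for small $x$ (walk counting gives $c_0(x)\le\sum_n 16^n x^n$).

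The remaining closure conditions involve the derivative and finite-volume-correction prefactors. The closure condition~\eqref{eqn: closure condition for the bound on the first-order derivative of the log polymer partition function} requires $\prefactor{eqn: bound on the first-order derivative of the finite-volume polymer weight}e^{\kappa}c_1(\cdot)s_1(\cdot)\le z/4$. Each prefactor is $z$ times a polynomial in $K$ and $|X\cap\bL|$, multiplied by exponentially small Peierls factors. I will factor out $z$ from each prefactor, which is possible because the derivative bounds carry $z$ or $z^2$ from $\cD_i$ acting on fugacities. I will then check that the remaining sums are at most $1/4$, resp.\ $1$, once multiplied by $e^{\kappa}c_n(e^{-(\kappa-\alpha)})$. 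The $e^{\kappa}$ is harmless because $c_n(x)\lesssim x$ cancels it, leaving $e^{\alpha}=e^{K/1000}$ times polynomial factors against the geometric sums $s_k(e^{-\lambda})$.

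Here the main obstacle appears. The spatial rates, e.g.\ $\spatial{eqn: bound on the first-order derivative of the finite-volume contour-polymer factor}=8\Peierls{}$ from the trade in~\eqref{eqn: bound on the first-order derivative of the finite-volume contour-polymer factor - trading for spatial decay}, together with the final rate $K/3000$, must make $s_1(e^{-\lambda/2})=O(1)$. Meanwhile the terms with $\theta=1/2$ must still beat the polynomial prefactors; this is delicate because the localization prefactor $\prefactor{eqn: bound on the first-order derivative of the contour weight - localization}$ contains the inductive $\phi=z/4$ itself. I would handle it by showing that the contribution is $(z/4)\cdot 2\cdot O(K e^{-cK})$, so the induction closes with room to spare.

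For~\eqref{eqn: closure condition for the spatial decay rate in the bound on the second-order derivative of the log polymer partition function}, every entry is at most $e^{-\min\{\alpha/2,\ldots\}}$. The binding one is $e^{-(1-\theta)\alpha}=e^{-K/2000}\le e^{-K/3000}$, which fixes $\spatial{eqn: inductive bound on the second-order derivative of the log polymer partition function}=K/3000$.

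Finally, for~\eqref{eqn: closure condition for the bound on the log flipping term} and~\eqref{eqn: closure condition for the base case}, every term is $O(K^{2}e^{-cK})$ or $O(c_0(e^{-0.05K}))$, which is far below $\tau/2=K/20$ when $K>150$.
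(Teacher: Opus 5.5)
Your route is the paper's: condition-by-condition bookkeeping, and your handling of $\nu=0$, the Peierls chain and the spatial closure via $e^{-K/2000}\le e^{-K/3000}$ all match it. However, two load-bearing claims are wrong. Write $\kappa$ for any of the four polymer-weight rates, $\phi^{(1)},\phi^{(2)}$ for the first- and second-derivative polymer-weight prefactors, and $\kappa_{\mathrm{tr}}>1.04K$ for the truncated contour-polymer rate.

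\textbf{The lower bound $\eps>e^{-2K}$ is needed and you never use it.} Your device for absorbing $e^{\kappa}$ is to cancel it against the leading term $64x$ of $c_n(x)$. That works only where the weight bound carries $e^{-\kappa(\abs{X\cap\bL}-1)}$, i.e.\ in C12 and C20. In C17 the bound of Prop.~\ref{itm: bound on the second-order derivative of the finite-volume polymer weight} carries $e^{-\kappa(\abs{X\cap\bL}-2)}$. You therefore face $\phi^{(2)}e^{2\kappa}c_2(e^{-(\kappa-\alpha)})$, $\phi^{(2)}e^{2\kappa}c_2(e^{-\frac12(\kappa-\alpha)})$ and $(\phi^{(1)})^2e^{2\kappa}e^{\alpha}c_1(e^{-\frac12(\kappa-\frac32\alpha)})^2$. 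Their one-tile terms leave uncancelled factors as large as $e^{3\kappa/2}$. Here $\kappa\approx\frac{3}{50}\log(1/\eps)$ grows as $\eps\downarrow0$, while $\phi^{(1)},\phi^{(2)}$ decay only like $e^{-\kappa_{\mathrm{tr}}/2}\approx e^{-0.52K}$, independently of $\eps$. Consequently C17 actually fails for, e.g., $\eps\le e^{-6K}$, so no argument using only $\eps<e^{-K}$ can close it. The paper closes it with $e^{\kappa}\le\eps^{-3/50}<e^{3K/25}$, which is exactly where the lower bound enters, and obtains $4\cdot10^4z^2e^{-0.24K}+10^6z^2e^{-0.74K}<z^2$.

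\textbf{Several conditions are tight at $K_0=150$, and your estimates do not certify them.}
\begin{itemize}
\item C8 is the condition that actually fixes $K_0$. Because $\bL$ has eight-neighbour adjacency, $c_0(x)=\sum_{s\ge1}64^sx^s$, not $\sum_n16^nx^n$. You need the sharp rate $e^{-\kappa}<1.01\eps^{3/50}$ (the contour contributions are negligible next to $\eps^{3/50}$) to get $9c_0(e^{-(\kappa-2\alpha)})<600e^{-29K/500}<K/1000$. For $\eps$ near $e^{-K}$ this already fails at $K=140$, so it is far from ``immediate.''
\item With your bullet $\kappa\ge\frac{3K}{50}-1$ (which you silently drop when you write $e^{-0.058K}$), the C8 bound you obtain at $K=150$ is about $0.27>0.15$.
\item The same lost factor $e$ breaks C14 and C16. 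These need $\frac12(\kappa-\alpha)>\log64\approx4.16$, which is true with the sharp rate (about $0.0295\cdot150\approx4.42$) and false with yours.
\item It also breaks C20. Its last term is $50\phi_{\mathrm{corr}}e^{\kappa}c_1(\cdot)s(\cdot)$, with $\phi_{\mathrm{corr}}e^{\kappa}\approx1$ because $\phi_{\mathrm{corr}}$ contains $e^{-\kappa}$ itself. So the term is $\approx50\cdot64\cdot9\,e^{-0.059K}$, about $4$ at $K=150$. That is not ``far below'' $\tau/2=7.5$, and with the extra $e$ it exceeds $7.5$.
\end{itemize}

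To finish you need three things:
\begin{itemize}
\item the two-sided bound $\eps^{3/50}<e^{-\kappa}<1.01\eps^{3/50}$ for all four rates;
\item $e^{\kappa}<e^{3K/25}$, obtained from $\eps>e^{-2K}$;
\item explicit numerics at the threshold for C8, C14, C16, C17 and C20.
\end{itemize}
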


\begin{remark}
    The choice of $\nu=0$ apparently runs counter to the requirement that $\nu>0$.
    However, it does not cause any issues since the resulting identities and \textit{strict} inequalities are preserved by taking $\nu$ to be sufficiently small with respect to a fixed $n$.
\end{remark}

\begin{proof}[Proof of Prop.~\ref{prop: explicit choices of parameters}]
    The assumptions and choices imply the following identities and bounds:
    \begin{equation}
        e^{-\Peierls{eqn: bound on the finite-volume interaction-polymer factor}}
        =e^{-\Peierls{eqn: bound on the finite-volume correction to the interaction-polymer factors}}
        =e^{-\Peierls{eqn: bound on the second-order derivative of the finite-volume interaction-polymer factor}}
        =\eps^{3/50},
    \end{equation}
    \begin{equation}
        e^{-\spatial{eqn: bound on the finite-volume correction to the interaction-polymer factors}}=e^{-\spatial{eqn: bound on the finite-volume correction to the polymer weight}}=\eps^{1/150},
    \end{equation}
    \begin{equation}
        \Peierls{eqn: Peierls bound for contour weight}>1.15K,\quad
        \Peierls{eqn: bound on the truncated finite-volume contour-polymer factor},\Peierls{eqn: bound on the finite-volume correction to the contour-polymer factor}>1.04K,\quad
        \Peierls{eqn: bound on the first-order derivative of the finite-volume contour-polymer factor}>0.52K,
    \end{equation}
    \begin{equation}
        e^{-\spatial{eqn: bound on the first-order derivative of the finite-volume contour-polymer factor}}
        =e^{-\spatial{eqn: bound on the second-order derivative of the finite-volume contour-polymer factor}}
        =e^{-\spatial{eqn: bound on the second-order derivative of the finite-volume polymer weight}}
        =\eps,
    \end{equation}
    \begin{equation}
        c\bigg(\eps;\frac{7}{50},\frac{11}{5},2\bigg)<5\cdot 10^{32},\quad
        c\bigg(\eps;\frac{103}{750},\frac{824}{375},2\bigg)<5\cdot 10^{34},
    \end{equation}
    \begin{equation}
        c\bigg(\eps;\frac{7}{50},\frac{21}{5},2\bigg)<2\cdot 10^{64},\quad
        c\bigg(\eps;\frac{7}{25},\frac{22}{5},4\bigg)<2\cdot 10^{37},
    \end{equation}
    \begin{equation}
        \loss{eqn: bound on the finite-volume interaction-polymer factor},
        \loss{eqn: bound on the finite-volume correction to the interaction-polymer factors}
        <10^{39}K\eps,\quad
        \prefactor{eqn: bound on the finite-volume correction to the interaction-polymer factors}<10^{41}K\eps,\quad
        \prefactor{eqn: bound on the first-order derivative of the finite-volume interaction-polymer factor}<10^{35}z\eps,\quad
        \prefactor{eqn: bound on the second-order derivative of the finite-volume interaction-polymer factor}<10^{69}z^2\eps^2,
    \end{equation}
    \begin{equation}
        \prefactor{eqn: bound on the first-order derivative of the contour weight - localization},\prefactor{eqn: bound on the first-order derivative of the contour weight - decay}<3z,\quad
        \prefactor{eqn: bound on the second-order derivative of the contour weight}<64z^2,\quad
        \prefactor{eqn: bound on the finite-volume correction to the contour-polymer factor}<10^4K,\quad
        \prefactor{eqn: bound on the first-order derivative of the finite-volume contour-polymer factor}<8z,\quad
        \prefactor{eqn: bound on the second-order derivative of the finite-volume contour-polymer factor}<128z^2,
    \end{equation}
    \begin{equation}
        \eps^{3/50}<
        e^{-\Peierls{eqn: bound on the finite-volume polymer weight}},\;
        e^{-\Peierls{eqn: bound on the finite-volume correction to the polymer weight}},\;
        e^{-\Peierls{eqn: bound on the first-order derivative of the finite-volume polymer weight}},\;
        e^{-\Peierls{eqn: bound on the second-order derivative of the finite-volume polymer weight}}
        <1.01\eps^{3/50}.
    \end{equation}

    We now verify the conditions in Appx.~\ref{appx: convergence and closure conditions}.
    Conds.~\ref{cond:1}, \ref{cond:2}, \ref{cond:3}, \ref{cond:4}, \ref{cond:5}, \ref{cond:6}, and~\ref{cond:7} are immediate.
    Conds.~\ref{cond:9}, \ref{cond:11}, \ref{cond:13}, \ref{cond:14}, \ref{cond:15}, \ref{cond:16}, \ref{cond:18}, \ref{cond:19}, and the second part of Cond.~\ref{cond:17} follow from the additional bounds
    \begin{equation}
        \Peierls{eqn: bound on the finite-volume polymer weight}-2\alpha,\;
        \Peierls{eqn: bound on the finite-volume correction to the polymer weight}-\alpha,\;
        \Peierls{eqn: bound on the first-order derivative of the finite-volume polymer weight}-\alpha,\;
        \Peierls{eqn: bound on the second-order derivative of the finite-volume polymer weight}-\alpha,\;
        \frac{1}{2}(2\Peierls{eqn: bound on the first-order derivative of the finite-volume polymer weight}-3\alpha)
        >\frac{K}{20},\quad
        \frac{1}{2}(\Peierls{eqn: bound on the second-order derivative of the finite-volume polymer weight}-\alpha),\;
        \frac{1}{2}\bigg(\Peierls{eqn: bound on the first-order derivative of the finite-volume polymer weight}-\frac{3}{2}\alpha\bigg)
        >\frac{K}{35}.
    \end{equation}
    For Cond.~\ref{cond:8}, 
    we estimate
    \begin{equation}
        9c_0(e^{-(\Peierls{eqn: bound on the finite-volume polymer weight}-2\alpha)})
        < 600 e^{-29K/500}
        < \frac{K}{1000}=\alpha.
    \end{equation}
    For Cond.~\ref{cond:10}, we have
    \begin{equation}
        50e(z\ell^2)\eps
        \le 50e(1000K)e^{-K}
        <\frac{K}{20}=\frac{\tau}{2}.
    \end{equation}
    For Cond.~\ref{cond:12}, using
    \begin{equation}
        \label{eqn: bound on phi13}
        \prefactor{eqn: bound on the first-order derivative of the finite-volume polymer weight}
        <z(16e^{-0.519K}+2\cdot 10^{35}e^{-2.03K})
        <ze^{-0.49K}
    \end{equation}
    and the auxiliary bounds
    \begin{equation}
        e^{\Peierls{eqn: bound on the first-order derivative of the finite-volume polymer weight}}\le\eps^{-3/50}<e^{3K/25},\quad
        c_1(e^{-(\Peierls{eqn: bound on the first-order derivative of the finite-volume polymer weight}-\alpha)})<1,\quad
        s_1(e^{-\spatial{eqn: bound on the first-order derivative of the finite-volume contour-polymer factor}})<30,
    \end{equation}
    we have
    \begin{equation}
        \prefactor{eqn: bound on the first-order derivative of the finite-volume polymer weight}e^{\Peierls{eqn: bound on the first-order derivative of the finite-volume polymer weight}}c_1(e^{-(\Peierls{eqn: bound on the first-order derivative of the finite-volume polymer weight}-\alpha)})s_1(e^{-\spatial{eqn: bound on the first-order derivative of the finite-volume contour-polymer factor}})
        <30z e^{-0.37K}
        <\frac{z}{4}
        =\prefactor{eqn: inductive bound on the first-order derivative of the log polymer partition function}.
    \end{equation}
    For the first part of Cond.~\ref{cond:17}, using~\eqref{eqn: bound on phi13},
    \begin{equation}
        \prefactor{eqn: bound on the second-order derivative of the finite-volume polymer weight}
        <z^2e^{-0.48K},
    \end{equation}
    and the auxiliary bounds
    \begin{equation}
        c_2(e^{-(\Peierls{eqn: bound on the second-order derivative of the finite-volume polymer weight}-\alpha)})<1,\quad
        c_2(e^{-\frac{1}{2}(\Peierls{eqn: bound on the second-order derivative of the finite-volume polymer weight}-\alpha)})<120,\quad
        c_2(e^{-\frac{1}{2}(2\Peierls{eqn: bound on the first-order derivative of the finite-volume polymer weight}-3\alpha)})<1,\quad
        e^{\alpha}c_1(e^{-\frac{1}{2}(\Peierls{eqn: bound on the first-order derivative of the finite-volume polymer weight}-\frac{3}{2}\alpha)})^2<500,
    \end{equation}
    \begin{equation}
        s_1(e^{-\frac{1}{2}\spatial{eqn: bound on the first-order derivative of the finite-volume contour-polymer factor}})<30,\quad
        s_2(e^{-\frac{1}{2}\spatial{eqn: bound on the second-order derivative of the finite-volume polymer weight}})<90,
    \end{equation}
    we bound the LHS of~\eqref{eqn: closure condition for the prefactor in the bound on the second-order derivative of the log polymer partition function} by
    \begin{equation}
        4\cdot 10^4 z^2 e^{-0.24K}+10^6 z^2 e^{-0.74K}
        <z^2
        =\prefactor{eqn: inductive bound on the second-order derivative of the log polymer partition function}.
    \end{equation}
    Finally, for Cond.~\ref{cond:20}, we have
    \begin{equation}
        \prefactor{eqn: bound on the finite-volume correction to the polymer weight} e^{\Peierls{eqn: bound on the finite-volume correction to the polymer weight}} <1.01,\quad
        2c_0(e^{-(\Peierls{eqn: bound on the finite-volume polymer weight}-\alpha)})<e^{-K/40},\quad
        c_1(e^{-(\Peierls{eqn: bound on the finite-volume correction to the polymer weight}-\alpha)})<0.01,\quad
        s(e^{-\spatial{eqn: bound on the finite-volume correction to the polymer weight}})<9,
    \end{equation}
    so the LHS of~\eqref{eqn: closure condition for the bound on the log flipping term} is bounded by
    \begin{equation}
        2\cdot 10^5Ke^{-K}+e^{-K/40}+5 < \frac{K}{20} = \frac{\tau}{2}.
    \end{equation}
    Thus, all the conditions in Appx.~\ref{appx: convergence and closure conditions} are satisfied.
\end{proof}

\subsection{Proof of Theorem~\ref{thm: main}}

In this subsection, we prove Theorem~\ref{thm: main}.
We work in the setting common to Props.~\ref{prop: main} and~\ref{prop: explicit choices of parameters}.

\paragraph{Estimate for the aspect ratio}

It is straightforward to show that the assumption $e^{-2K}<\eps<e^{-K}$ of Prop.~\ref{prop: explicit choices of parameters} is equivalent to the following condition on the range of $z$:
\begin{equation}
    \frac{500}{\ell^2}W\bigg(\frac{\ell^2}{500e(2\fw-1)(2\fl-1)}\bigg)
    <z
    <\frac{1000}{\ell^2}W\bigg(\frac{\ell^2}{1000e(2\fw-1)(2\fl-1)}\bigg),
\end{equation}
where $W$ denotes the principal branch of the Lambert $W$-function.
Again, as is easily checked, this condition is consistent with the assumption $K>K_0$ of Prop.~\ref{prop: explicit choices of parameters} if and only if
\begin{equation}
    W\bigg(\frac{\ell^2}{1000e(2\fw-1)(2\fl-1)}\bigg)>150,
\end{equation}
or, equivalently,
\begin{equation}
    \frac{\ell^2}{1000e(2\fw-1)(2\fl-1)}>150e^{150}.
\end{equation}
Consequently, since
\begin{equation}
    \frac{\ell^2}{(2\fw-1)(2\fl-1)}
    \ge \frac{\fl^2}{4(2\fw-1)(2\fl-1)}
    > \frac{\fl^2}{16\fw\fl}
    = \frac{\fk}{16},
\end{equation}
it suffices that
\begin{equation}
    \fk>(16)(1000e)(150e^{150})
    \approx 9.09\cdot 10^{71}
\end{equation}
to ensure a nonempty range of $z$ satisfying the assumptions of Prop.~\ref{prop: explicit choices of parameters}.

\paragraph{One-point correlation function}

We write the one-point correlation function of the hard rectangle model in $\Lambda$ with $q$-boundary conditions as
\begin{equation}
    \rho^q_{z,\Lambda;1}(\fr_1)
    =\eval{\cD_1\log \Xi_\bz^q(\Lambda)}_z
    +\eval{\cD_1\log\frac{\Xi_\bz(\Lambda\mid q)}{\Xi_\bz^q(\Lambda)}}_z.
\end{equation}
Combining Lem.~\ref{itm: expansion of the first-order derivative of the log q-oriented partition function},~\eqref{eqn: inductive bound on the first-order derivative of the log polymer partition function} of Prop.~\ref{prop: main}, and Prop.~\ref{prop: explicit choices of parameters}, we find that, uniformly in $q$ and $\Lambda$,
\begin{equation}
    \abs{\rho^q_{z,\Lambda;1}(\fr_1)-\indicator{R^q(\Lambda)}(\fr_1)z}
    \le \indicator{R(\Lambda)}(\fr_1)\bigg(\frac{1}{4}+e\eps\bigg)z.
\end{equation}

\paragraph{Two-point correlation function}

We write the two-point correlation function of the hard rectangle model in $\Lambda$ with $q$-boundary conditions as
\begin{equation}
    \rho^q_{z,\Lambda;2}(\fr_1,\fr_2)-\rho^q_{z,\Lambda;1}(\fr_1)\rho^q_{z,\Lambda;1}(\fr_2)
    =\eval{\cD_1\cD_2\log \Xi_\bz^q(\Lambda)}_z
    +\eval{\cD_1\cD_2\log\frac{\Xi_\bz(\Lambda\mid q)}{\Xi_\bz^q(\Lambda)}}_z.
\end{equation}
Combining Lem.~\ref{itm: expansion of the second-order derivative of the log q-oriented partition function},~\eqref{eqn: inductive bound on the second-order derivative of the log polymer partition function} of Prop.~\ref{prop: main}, and Prop.~\ref{prop: explicit choices of parameters}, we find that, uniformly in $q$ and $\Lambda$,
\begin{equation}
    \abs{\rho^q_{z,\Lambda;2}(\fr_1,\fr_2)-\rho^q_{z,\Lambda;1}(\fr_1)\rho^q_{z,\Lambda;1}(\fr_2)}
    \le \indicator{R(\Lambda)}(\fr_1,\fr_2)\brackets[\Big]{e^3\eps^{\fd_{12}-2}+z^2 e^{-(z\ell^2/3000000)(\fd_{12}-2)}}.
\end{equation}

\section*{Acknowledgments}

We thank Deepak Dhar and Ian Jauslin for introducing us to the relevant literature, and Minhao Bai and Joel Lebowitz for helpful comments.
The convenient set of parameter choices in Prop.~\ref{prop: explicit choices of parameters} was found with the help of GPT 5.5 Pro.
The author was supported by an SAS Fellowship and in part by an SGS Conference and Research Travel Award at Rutgers University.

\bibliographystyle{plain}
\bibliography{bibliography.bib}

\appendix

\section{Constants, parameters, and conditions}
\label{appx: constants and parameters}

In this appendix, we collect the constants, parameters, and convergence and closure conditions appearing in the proofs.
Together, they form a sufficient set of conditions for the existence of a nematic phase in the hard rectangle model.

\subsection{Fundamental parameters}
\label{appx: fundamental parameters}

The fundamental parameters are as follows.
\begin{longtable}{l l l}
    \toprule
    \textbf{Parameter(s)} & \textbf{Domain} & \textbf{Location} \\
    \midrule
    \endfirsthead
    
    \toprule
    \textbf{Parameter(s)} & \textbf{Domain} & \textbf{Location} \\
    \midrule
    \endhead
    
    \midrule
    \endfoot
    
    \bottomrule
    \endlastfoot
    
    $\fw$ & $\Z_{\ge 1}$ & \\
    $\fl$ & $\Z_{\ge \max\set{3,2\fw-1}}$ & \\
    $z,\nu$ & $(0,\infty)$ &  \\
    $n$ & $\Z_{\ge 1}$ &  \\
    $\tau,\prefactor{eqn: inductive bound on the first-order derivative of the log polymer partition function},\prefactor{eqn: inductive bound on the second-order derivative of the log polymer partition function},\spatial{eqn: inductive bound on the second-order derivative of the log polymer partition function}$ & $(0,\infty)$ & Prop.~\ref{prop: main} \\
    $\free{itm: lower bound on m(delta)-1}$ & $(0,\frac{1}{16})$ & Lem.~\ref{itm: lower bound on m(delta)-1} \\
    $\free{itm: lower bound on m_12(delta)-2}$ & $(0,\frac{2}{31})$ & Lem.~\ref{itm: lower bound on m_12(delta)-2} \\
    $\free{itm: lower bound on m_Gamma^(Lambda)(delta)-1}$ & $(0,\frac{1}{16})$ & Lem.~\ref{itm: lower bound on m_Gamma^(Lambda)(delta)-1} \\
    $\free{itm: polymer sum with combined distance decay from two points}$ & $(0,1)$ & Lem.~\ref{itm: polymer sum with combined distance decay from two points} \\
    $\free{itm: polymer sum with separate distance decay from two points}$ & $(0,1)$ & Lem.~\ref{itm: polymer sum with separate distance decay from two points} \\
    $\free{itm: double polymer sum with distance decay from two points}$ & $(0,1)$ & Lem.~\ref{itm: double polymer sum with distance decay from two points} \\
    $\alpha$ & $(0,\infty)$ & Lem.~\ref{lem: the truncated polymer model can be cluster expanded} \\
\end{longtable}

\subsection{Derived quantities}
\label{appx: derived quantities}

Recall the shorthand notation $\ell=\ceil{\fl/2}$ and $\eps=e^{1+\nu}(2\fw-1)(2\fl-1)z$.
The constants and derived parameters are as follows.

\begin{longtable}{l l l}
    \toprule
     & \textbf{Expression} & \textbf{Location} \\
    \midrule
    \endfirsthead
    
    \toprule
     & \textbf{Expression} & \textbf{Location} \\
    \midrule
    \endhead
    
    \midrule
    \endfoot
    
    \bottomrule
    \endlastfoot
        
    \multicolumn{3}{l}{\textbf{Combinatorial constants}} \\
    \midrule
    $c_n(\vareps)$ & $\sum_{s=1}^\infty 8^{2s}s^n\vareps^s$ & Lem.~\ref{itm: anchored polymer sum} \\
    $c(\vareps;a,b,C)$ & $9\cdot \sum_{s=1}^\infty 8^{2s}C^{s}\vareps^{\{as-b\}_+}$ & Lem.~\ref{itm: incompatible polymer sum with piecewise linear decay} \\
    $s_k(\delta),k\ge 1$ & $\sum_{v\in\bL}\delta^{\{\ceil{\rmd_\bL(T_v,T_0)/2}-k\}_+}$ & Lem.~\ref{itm: polymer sum with distance decay from one point} \\
    $s(\delta)$ & $\sum_{v\in\bL}\delta^{\rmd_\bL(T_v,T_0)}$ & Lem.~\ref{itm: polymer sum with distance decay from region} \\
    \midrule

    \multicolumn{3}{l}{\textbf{Minimum cluster cardinalities}} \\
    \midrule
    $\fd_{12}$ & $\ceil{\rmd_\bL(\fT_1,\fT_2)/2}+1$ & Lem.~\ref{itm: expansion of the second-order derivative of the log q-oriented partition function} \\
    $\fn_1(\cdot)$ & $\max\set{\ceil{\rmd_\bL(\fT_1,\cdot)/2},1}$ & Prop.~\ref{itm: bound on the first-order derivative of the finite-volume contour self-potential} \\
    $\fn_{12}(\cdot)$ & $\max\set{\fd_{12},\fn_1(\cdot),\fn_2(\cdot)}$ & Prop.~\ref{itm: bound on the second-order derivative of the finite-volume contour self-potential} \\
    $m(\cdot)$ & $\max\set{3,\ceil{(\abs{\cdot\cap\bL}-1)/5}}$ & Lem.~\ref{lem: contour-interaction potential auxiliary lemma} \\
    $m_\Gamma^{(\Lambda)}(\cdot)$ & $\max\set{\ceil{\rmd_\bL(\supp{\Gamma},\bdin[2\bL]{\Lambda})/2},m(\cdot)}$ & Prop.~\ref{itm: bound on the finite-volume correction to the contour-interaction potential} \\
    $\fm_{12}(\cdot)$ & $\max\set{\fd_{12},m(\cdot)}$ & Prop.~\ref{itm: bound on the second-order derivative of the finite-volume contour-interaction potential} \\
    \midrule

    \multicolumn{3}{l}{\textbf{Peierls decay rates}} \\
    \midrule

    $e^{-\Peierls{eqn: bound on the finite-volume interaction-polymer factor}}$ & $\eps^{\free{itm: lower bound on m(delta)-1}}$ & Prop.~\ref{itm: bound on the finite- and infinite-volume interaction-polymer factors} \\

    $e^{-\Peierls{eqn: bound on the finite-volume correction to the interaction-polymer factors}}$ & $\max\set{\eps^{\free{itm: lower bound on m(delta)-1}},\eps^{\free{itm: lower bound on m_Gamma^(Lambda)(delta)-1}}}$ & Prop.~\ref{itm: bound on the finite-volume correction to the interaction-polymer factors} \\

    $e^{-\Peierls{eqn: bound on the second-order derivative of the finite-volume interaction-polymer factor}}$ & $\max\set{\eps^{\free{itm: lower bound on m(delta)-1}},\eps^{\free{itm: lower bound on m_12(delta)-2}}}$ & Prop.~\ref{itm: bound on the second-order derivative of the finite-volume interaction-polymer factor} \\

    $\Peierls{eqn: Peierls bound for contour weight}$ & $\frac{1}{324}\brackets[\big]{\frac{3}{8}(1-16e\eps)(z\ell^2)-\frac{3}{4}\log 3}-2e(z\ell^2)\eps-2\nu$
    & Lem.~\ref{lem: Peierls bound for contour weight}  \\

    $\Peierls{eqn: bound on the truncated finite-volume contour-polymer factor}$ & $\Peierls{eqn: Peierls bound for contour weight}-3e^{1+\nu}(z\ell^2)\eps-\tau-\log3$ & Prop.~\ref{itm: bound on the finite- and infinite-volume contour-polymer factors}  \\

    $\Peierls{eqn: bound on the finite-volume correction to the contour-polymer factor}$ & $\Peierls{eqn: bound on the truncated finite-volume contour-polymer factor}-3e^{1+\nu}(z\ell^2)\eps^2$ & Prop.~\ref{itm: bound on the finite-volume correction to the contour-polymer factor} \\

    $\Peierls{eqn: bound on the first-order derivative of the finite-volume contour-polymer factor}$ & $\Peierls{eqn: bound on the truncated finite-volume contour-polymer factor}/2$ & Prop.~\ref{itm: bound on the first-order derivative of the finite-volume contour-polymer factor} \\
    
    $e^{-\Peierls{eqn: bound on the finite-volume polymer weight}}$ & $e^{-(\Peierls{eqn: bound on the truncated finite-volume contour-polymer factor}-\loss{eqn: bound on the finite-volume interaction-polymer factor})}+e^{-\Peierls{eqn: bound on the finite-volume interaction-polymer factor}}+e^{-(\Peierls{eqn: bound on the truncated finite-volume contour-polymer factor}-\loss{eqn: bound on the finite-volume interaction-polymer factor})}e^{-\Peierls{eqn: bound on the finite-volume interaction-polymer factor}}$ & Prop.~\ref{itm: bound on the finite- and infinite-volume polymer weights}  \\

    $e^{-\Peierls{eqn: bound on the finite-volume correction to the polymer weight}}$ & $\begin{aligned}[t]
        \max\{&e^{-(\Peierls{eqn: bound on the truncated finite-volume contour-polymer factor}-\loss{eqn: bound on the finite-volume correction to the interaction-polymer factors})}+e^{-\Peierls{eqn: bound on the finite-volume correction to the interaction-polymer factors}}+e^{-(\Peierls{eqn: bound on the truncated finite-volume contour-polymer factor}-\loss{eqn: bound on the finite-volume correction to the interaction-polymer factors})}e^{-\Peierls{eqn: bound on the finite-volume correction to the interaction-polymer factors}},\\
        &e^{-(\Peierls{eqn: bound on the finite-volume correction to the contour-polymer factor}-\loss{eqn: bound on the finite-volume interaction-polymer factor})}+e^{-\Peierls{eqn: bound on the finite-volume interaction-polymer factor}}+e^{-(\Peierls{eqn: bound on the finite-volume correction to the contour-polymer factor}-\loss{eqn: bound on the finite-volume interaction-polymer factor})}e^{-\Peierls{eqn: bound on the finite-volume interaction-polymer factor}},
        e^{-\Peierls{eqn: bound on the finite-volume polymer weight}}\} \end{aligned}$ 
    & Prop.~\ref{itm: bound on the finite-volume correction to the polymer weight} \\

    $e^{-\Peierls{eqn: bound on the first-order derivative of the finite-volume polymer weight}}$ & $
    \begin{aligned}[t]
        \max\{&e^{-(\Peierls{eqn: bound on the first-order derivative of the finite-volume contour-polymer factor}-\loss{eqn: bound on the finite-volume interaction-polymer factor})}+e^{-\Peierls{eqn: bound on the finite-volume interaction-polymer factor}}+e^{-(\Peierls{eqn: bound on the first-order derivative of the finite-volume contour-polymer factor}-\loss{eqn: bound on the finite-volume interaction-polymer factor})}e^{-\Peierls{eqn: bound on the finite-volume interaction-polymer factor}},
        \\
        &e^{-(\Peierls{eqn: bound on the truncated finite-volume contour-polymer factor}-\loss{eqn: bound on the finite-volume interaction-polymer factor})}+e^{-\Peierls{eqn: bound on the finite-volume interaction-polymer factor}}+e^{-(\Peierls{eqn: bound on the truncated finite-volume contour-polymer factor}-\loss{eqn: bound on the finite-volume interaction-polymer factor})}e^{-\Peierls{eqn: bound on the finite-volume interaction-polymer factor}}\}\end{aligned}
    $ & Prop.~\ref{itm: bound on the first-order derivative of the finite-volume polymer weight} \\

    $e^{-\Peierls{eqn: bound on the second-order derivative of the finite-volume polymer weight}}$ & $
    \begin{aligned}[t]
        \max\{&e^{-(\Peierls{eqn: bound on the first-order derivative of the finite-volume contour-polymer factor}-\loss{eqn: bound on the finite-volume interaction-polymer factor})}+e^{-\Peierls{eqn: bound on the finite-volume interaction-polymer factor}}+e^{-(\Peierls{eqn: bound on the first-order derivative of the finite-volume contour-polymer factor}-\loss{eqn: bound on the finite-volume interaction-polymer factor})}e^{-\Peierls{eqn: bound on the finite-volume interaction-polymer factor}},
        \\
        &e^{-(\Peierls{eqn: bound on the truncated finite-volume contour-polymer factor}-\loss{eqn: bound on the finite-volume interaction-polymer factor})}+e^{-\Peierls{eqn: bound on the second-order derivative of the finite-volume interaction-polymer factor}}+e^{-(\Peierls{eqn: bound on the truncated finite-volume contour-polymer factor}-\loss{eqn: bound on the finite-volume interaction-polymer factor})}e^{-\Peierls{eqn: bound on the second-order derivative of the finite-volume interaction-polymer factor}}\}
    \end{aligned}
    $ & Prop.~\ref{itm: bound on the second-order derivative of the finite-volume polymer weight} \\
    \midrule

    \multicolumn{3}{l}{\textbf{Spatial convergence rates}} \\
    \midrule
    $e^{-\spatial{eqn: bound on the finite-volume correction to the interaction-polymer factors}}$ & $\eps^{\frac{1}{6}(1-16\free{itm: lower bound on m_Gamma^(Lambda)(delta)-1})}$ & Prop.~\ref{itm: bound on the finite-volume correction to the interaction-polymer factors} \\

    $e^{-\spatial{eqn: bound on the first-order derivative of the finite-volume contour-polymer factor}}$ & $\max\set{e^{-8\Peierls{eqn: bound on the truncated finite-volume contour-polymer factor}},\eps}$ & Prop.~\ref{itm: bound on the first-order derivative of the finite-volume contour-polymer factor} \\

    $e^{-\spatial{eqn: bound on the second-order derivative of the finite-volume contour-polymer factor}}$ & $\max\set{e^{-4\Peierls{eqn: bound on the truncated finite-volume contour-polymer factor}},\eps}$ & Prop.~\ref{itm: bound on the second-order derivative of the finite-volume contour-polymer factor} \\

    $e^{-\spatial{eqn: bound on the finite-volume correction to the polymer weight}}$ & $\max\set{e^{-\spatial{eqn: bound on the finite-volume correction to the interaction-polymer factors}},\eps^{1/3}}$ & Prop.~\ref{itm: bound on the finite-volume correction to the polymer weight} \\

    $e^{-\spatial{eqn: bound on the second-order derivative of the finite-volume polymer weight}}$ & $\max\set{
        e^{-\spatial{eqn: bound on the first-order derivative of the finite-volume contour-polymer factor}},e^{-\spatial{eqn: bound on the second-order derivative of the finite-volume contour-polymer factor}}}$ & Prop.~\ref{itm: bound on the second-order derivative of the finite-volume polymer weight} \\
    \midrule
    
    \multicolumn{3}{l}{\textbf{Rate losses}} \\
    \midrule
    $\loss{eqn: bound on the finite-volume interaction-polymer factor}$ & $162c(\eps;\frac{1}{5}-\free{itm: lower bound on m(delta)-1},\frac{11}{5},2)e^{1+\nu}(z\ell^2)\eps$ & Prop.~\ref{itm: bound on the finite- and infinite-volume interaction-polymer factors}  \\

    $\loss{eqn: bound on the finite-volume correction to the interaction-polymer factors}$ & $243c(\eps;\frac{1}{5}-\free{itm: lower bound on m(delta)-1},\frac{11}{5},2)e^{1+\nu}(z\ell^2)\eps$ & Prop.~\ref{itm: bound on the finite-volume correction to the interaction-polymer factors}  \\

    \midrule

    \multicolumn{3}{l}{\textbf{Prefactors}} \\
    \midrule
    $\prefactor{eqn: bound on the finite-volume correction to the interaction-polymer factors}$ & $81c(\eps;\frac{1}{15}(2+\free{itm: lower bound on m_Gamma^(Lambda)(delta)-1}),\frac{16}{15}(2+\free{itm: lower bound on m_Gamma^(Lambda)(delta)-1}),2)e^{1+\nu}(z\ell^2)\eps$ & Prop.~\ref{itm: bound on the finite-volume correction to the interaction-polymer factors} \\

    $\prefactor{eqn: bound on the first-order derivative of the finite-volume interaction-polymer factor}$ & $9c(\eps;\frac{1}{5}-\free{itm: lower bound on m(delta)-1},\frac{11}{5},2)e^{1+\nu}z\eps$ & Prop.~\ref{itm: bound on the first-order derivative of the finite-volume interaction-polymer factor}  \\

    $\prefactor{eqn: bound on the second-order derivative of the finite-volume interaction-polymer factor}$ & $\begin{aligned}[t]
        &\big[9e^3c(\eps;\tfrac{1}{5}-\free{itm: lower bound on m_12(delta)-2},\tfrac{21}{5},2)+9e^2c(\eps;\tfrac{2}{5}-2\free{itm: lower bound on m(delta)-1},\tfrac{22}{5},4)
        \\
        &+81e^2c(\eps;\tfrac{1}{5}-\free{itm: lower bound on m(delta)-1},\tfrac{11}{5},2)^2\big]e^{2\nu}z^2\eps^2
    \end{aligned}$ & Prop.~\ref{itm: bound on the second-order derivative of the finite-volume interaction-polymer factor}  \\

    $\prefactor{eqn: bound on the first-order derivative of the contour weight - localization}$ & $2\prefactor{eqn: inductive bound on the first-order derivative of the log polymer partition function}+2(1+e\eps)e^{\nu/n}z$ & Lem.~\ref{lem: bound on the first-order derivative of the contour weight} \\

    $\prefactor{eqn: bound on the first-order derivative of the contour weight - decay}$ & $e^{1+\nu/n}z$ & Lem.~\ref{lem: bound on the first-order derivative of the contour weight} \\

    $\prefactor{eqn: bound on the second-order derivative of the contour weight}$ & $2\prefactor{eqn: inductive bound on the second-order derivative of the log polymer partition function}+3e^{3+2\nu/n}z^2$ & Lem.~\ref{lem: bound on the second-order derivative of the contour weight} \\

    $\prefactor{eqn: bound on the finite-volume correction to the contour-polymer factor}$ & $3e^{1+\nu}(z\ell^2)$ & Prop.~\ref{itm: bound on the finite-volume correction to the contour-polymer factor} \\

    $\prefactor{eqn: bound on the first-order derivative of the finite-volume contour-polymer factor}$ & $z+\prefactor{eqn: bound on the first-order derivative of the contour weight - localization}+\prefactor{eqn: bound on the first-order derivative of the contour weight - decay}$ & Prop.~\ref{itm: bound on the first-order derivative of the finite-volume contour-polymer factor} \\

    $\prefactor{eqn: bound on the second-order derivative of the finite-volume contour-polymer factor}$ & $\prefactor{eqn: bound on the second-order derivative of the contour weight}+(z+\prefactor{eqn: bound on the first-order derivative of the contour weight - localization}+\prefactor{eqn: bound on the first-order derivative of the contour weight - decay})^2$ & Prop.~\ref{itm: bound on the second-order derivative of the finite-volume contour-polymer factor} \\

    $\prefactor{eqn: bound on the finite-volume correction to the polymer weight}$ & $\prefactor{eqn: bound on the finite-volume correction to the interaction-polymer factors}e^{-(\Peierls{eqn: bound on the truncated finite-volume contour-polymer factor}-\loss{eqn: bound on the finite-volume correction to the interaction-polymer factors})}(1+e^{-\Peierls{eqn: bound on the finite-volume correction to the interaction-polymer factors}})
    +\prefactor{eqn: bound on the finite-volume correction to the contour-polymer factor}e^{-(\Peierls{eqn: bound on the finite-volume correction to the contour-polymer factor}-\loss{eqn: bound on the finite-volume interaction-polymer factor})}(1+e^{-\Peierls{eqn: bound on the finite-volume interaction-polymer factor}})+e^{-\Peierls{eqn: bound on the finite-volume polymer weight}}$ & Prop.~\ref{itm: bound on the finite-volume correction to the polymer weight} \\

    $\prefactor{eqn: bound on the first-order derivative of the finite-volume polymer weight}$ & $\prefactor{eqn: bound on the first-order derivative of the finite-volume contour-polymer factor}e^{-(\Peierls{eqn: bound on the first-order derivative of the finite-volume contour-polymer factor}-\loss{eqn: bound on the finite-volume interaction-polymer factor})}(1+e^{-\Peierls{eqn: bound on the finite-volume interaction-polymer factor}})+\prefactor{eqn: bound on the first-order derivative of the finite-volume interaction-polymer factor}e^{-(\Peierls{eqn: bound on the truncated finite-volume contour-polymer factor}-\loss{eqn: bound on the finite-volume interaction-polymer factor})}(1+e^{-\Peierls{eqn: bound on the finite-volume interaction-polymer factor}})$ & Prop.~\ref{itm: bound on the first-order derivative of the finite-volume polymer weight} \\

    $\prefactor{eqn: bound on the second-order derivative of the finite-volume polymer weight}$ & $
    \begin{aligned}[t]
        &\prefactor{eqn: bound on the second-order derivative of the finite-volume contour-polymer factor}e^{-(\Peierls{eqn: bound on the first-order derivative of the finite-volume contour-polymer factor}-\loss{eqn: bound on the finite-volume interaction-polymer factor})}(1+e^{-\Peierls{eqn: bound on the finite-volume interaction-polymer factor}})
        +2\prefactor{eqn: bound on the first-order derivative of the finite-volume interaction-polymer factor}\prefactor{eqn: bound on the first-order derivative of the finite-volume contour-polymer factor}e^{-(\Peierls{eqn: bound on the first-order derivative of the finite-volume contour-polymer factor}-\loss{eqn: bound on the finite-volume interaction-polymer factor})}(1+e^{-\Peierls{eqn: bound on the finite-volume interaction-polymer factor}})
        \\
        &+\prefactor{eqn: bound on the second-order derivative of the finite-volume interaction-polymer factor}e^{-(\Peierls{eqn: bound on the truncated finite-volume contour-polymer factor}-\loss{eqn: bound on the finite-volume interaction-polymer factor})}(1+e^{-\Peierls{eqn: bound on the second-order derivative of the finite-volume interaction-polymer factor}})
    \end{aligned}
    $ & Prop.~\ref{itm: bound on the second-order derivative of the finite-volume polymer weight} \\
\end{longtable}

\subsection{Convergence and closure conditions}
\label{appx: convergence and closure conditions}

The convergence and closure conditions are as follows.

\setcounter{condrow}{0}
\begin{longtable}{>{\raggedright\arraybackslash}p{0.04\columnwidth} >{\raggedright\arraybackslash}p{0.27\columnwidth} >{\raggedright\arraybackslash}p{0.12\columnwidth} >{\raggedright\arraybackslash}p{0.46\columnwidth}}
    \toprule
     & \textbf{Condition(s)} & \textbf{Location} & \textbf{Description} \\
    \midrule
    \endfirsthead
    
    \toprule
     & \textbf{Condition(s)} & \textbf{Location} & \textbf{Description} \\
    \midrule
    \endhead
    
    \midrule
    \endfoot
    
    \bottomrule
    \endlastfoot
    
    \condrowlabel{1} & \eqref{eqn: convergence condition for expansion of q-oriented partition function} & Sec.~\ref{sec: cluster expansion of the q-oriented partition function} & Closure; Ueltschi's criterion~\cite{ueltschi2004cluster} for the convergence of the cluster expansion applied to the $q$-oriented hard rectangle model \\

    \condrowlabel{2} & $2\eps^{\frac{1}{5}-\free{itm: lower bound on m(delta)-1}}<\frac{1}{64}$ & Prop.~\ref{prop: bounds on the interaction-polymer factors} & Convergence; to apply Lem.~\ref{itm: incompatible polymer sum with piecewise linear decay} \\

    \condrowlabel{3} & $2\eps^{\frac{1}{15}(2+\free{itm: lower bound on m_Gamma^(Lambda)(delta)-1})}<\frac{1}{64}$ & Prop.~\ref{itm: bound on the finite-volume correction to the interaction-polymer factors} & Convergence; to apply Lem.~\ref{itm: incompatible polymer sum with piecewise linear decay} \\

    \condrowlabel{4} & $2\eps^{\frac{1}{5}-\free{itm: lower bound on m_12(delta)-2}}<\frac{1}{64}$ & Prop.~\ref{itm: bound on the second-order derivative of the finite-volume interaction-polymer factor} & Convergence; to apply Lem.~\ref{itm: incompatible polymer sum with piecewise linear decay} \\

    \condrowlabel{5} & $2^2\eps^{\frac{2}{5}-2\free{itm: lower bound on m(delta)-1}}<\frac{1}{64}$ & Prop.~\ref{itm: bound on the second-order derivative of the finite-volume interaction-polymer factor} & Convergence; to apply Lem.~\ref{itm: incompatible polymer sum with piecewise linear decay} \\

    \condrowlabel{6} & $\Peierls{eqn: Peierls bound for contour weight}>0$ & Lem.~\ref{lem: Peierls bound for contour weight} & \\

    \condrowlabel{7} & $e^{-\Peierls{eqn: bound on the second-order derivative of the finite-volume polymer weight}}\le 1$ & Prop.~\ref{itm: bound on the second-order derivative of the finite-volume polymer weight} & Convergence; to apply Lem.~\ref{itm: covering sum second moment} \\

    \condrowlabel{8} & \eqref{eqn: condition on the Peierls constant for polymer weights} & Lem.~\ref{lem: the truncated polymer model can be cluster expanded} & Closure; Ueltschi's criterion applied to the truncated polymer model \\

    \condrowlabel{9} & $e^{-(\Peierls{eqn: bound on the finite-volume polymer weight}-2\alpha)}<\frac{1}{64}$ & Lem.~\ref{lem: the truncated polymer model can be cluster expanded} & Convergence; to apply Lem.~\ref{itm: incompatible polymer sum} \\

    \condrowlabel{10} & \eqref{eqn: closure condition for the base case} & Prop.~\ref{prop: main} & Closure; base case for~\eqref{eqn: inductive bound on the log flipping term} \\

    \condrowlabel{11} & $e^{-(\Peierls{eqn: bound on the first-order derivative of the finite-volume polymer weight}-\alpha)}<\frac{1}{64}$, $e^{-\spatial{eqn: bound on the first-order derivative of the finite-volume contour-polymer factor}}<1$ & Prop.~\ref{prop: main} & Convergence; to apply Lem.~\ref{itm: polymer sum with distance decay from one point} in~\eqref{eqn: bound on the first-order derivative of the log polymer partition function} \\

    \condrowlabel{12} & \eqref{eqn: closure condition for the bound on the first-order derivative of the log polymer partition function} & Prop.~\ref{prop: main} & Closure; inductive step for~\eqref{eqn: inductive bound on the first-order derivative of the log polymer partition function} \\

    \condrowlabel{13} & $e^{-(\Peierls{eqn: bound on the second-order derivative of the finite-volume polymer weight}-\alpha)}<\frac{1}{64}$, $e^{-\spatial{eqn: bound on the second-order derivative of the finite-volume polymer weight}}<1$ & Prop.~\ref{prop: main} & Convergence; to apply Lem.~\ref{itm: polymer sum with combined distance decay from two points} in~\eqref{eqn: bound on the second-order derivative of the log polymer partition function} \\

    \condrowlabel{14} & $e^{-\free{itm: polymer sum with separate distance decay from two points}(\Peierls{eqn: bound on the second-order derivative of the finite-volume polymer weight}-\alpha)}<\frac{1}{64}$, $e^{-\spatial{eqn: bound on the second-order derivative of the finite-volume polymer weight}}<1$ & Prop.~\ref{prop: main} & Convergence; to apply Lem.~\ref{itm: polymer sum with separate distance decay from two points} in~\eqref{eqn: bound on the second-order derivative of the log polymer partition function} \\

    \condrowlabel{15} & $e^{-\free{itm: polymer sum with separate distance decay from two points}(2\Peierls{eqn: bound on the first-order derivative of the finite-volume polymer weight}-3\alpha)}<\frac{1}{64}$, $e^{-\spatial{eqn: bound on the first-order derivative of the finite-volume contour-polymer factor}}<1$ & Prop.~\ref{prop: main} & Convergence; to apply Lem.~\ref{itm: polymer sum with separate distance decay from two points} in~\eqref{eqn: bound on the second-order derivative of the log polymer partition function} \\

    \condrowlabel{16} & $e^{-\free{itm: double polymer sum with distance decay from two points}(\Peierls{eqn: bound on the first-order derivative of the finite-volume polymer weight}-\frac{3}{2}\alpha)}<\frac{1}{64}$, $e^{-\spatial{eqn: bound on the first-order derivative of the finite-volume contour-polymer factor}}<1$ & Prop.~\ref{prop: main} & Convergence; to apply Lem.~\ref{itm: double polymer sum with distance decay from two points} in~\eqref{eqn: bound on the second-order derivative of the log polymer partition function} \\

    \condrowlabel{17} & \eqref{eqn: closure condition for the prefactor in the bound on the second-order derivative of the log polymer partition function},~\eqref{eqn: closure condition for the spatial decay rate in the bound on the second-order derivative of the log polymer partition function} & Prop.~\ref{prop: main} & Closure; inductive step for~\eqref{eqn: inductive bound on the second-order derivative of the log polymer partition function} \\

    \condrowlabel{18} & $e^{-(\Peierls{eqn: bound on the finite-volume polymer weight}-\alpha)}<\frac{1}{64}$ & Prop.~\ref{prop: main} & Convergence; to apply Lem.~\ref{itm: anchored polymer sum} in~\eqref{eqn: boundary error after bulk cancellation} \\

    \condrowlabel{19} & $e^{-(\Peierls{eqn: bound on the finite-volume correction to the polymer weight}-\alpha)}<\frac{1}{64}$, $e^{-\spatial{eqn: bound on the finite-volume correction to the polymer weight}}<1$ & Prop.~\ref{prop: main} & Convergence; to apply Lem.~\ref{itm: polymer sum with distance decay from region} in~\eqref{eqn: boundary error from finite-volume corrections to the bulk terms} \\

    \condrowlabel{20} & \eqref{eqn: closure condition for the bound on the log flipping term} & Prop.~\ref{prop: main} & Closure; inductive step for~\eqref{eqn: inductive bound on the log flipping term} \\
\end{longtable}

\subsection{Summary of the optimization program}

The problem of deriving a numerical upper bound for the aspect ratio required for the formation of a nematic phase in a hard rectangle model can thus be reduced to the following optimization program.
Given $\fw$, determine the least $\fl\ge\max\set{3,2\fw-1}$ for which there exists an admissible choice of the remaining fundamental parameters, in the domains specified in Appx.~\ref{appx: fundamental parameters} and with all derived quantities defined as in Appx.~\ref{appx: derived quantities}, such that all the convergence and closure conditions in Appx.~\ref{appx: convergence and closure conditions} are satisfied.
In addition, the solution $\fl_0=\fl_0(\fw)$ should enjoy a monotonicity property, namely, every $\fl\ge\fl_0$ also lies in the projection of the feasible region (with the specified value of $\fw$) onto the $\fl$-axis.

\end{document}